\DeclareMathAlphabet{\mathpzc}{OT1}{pzc}{m}{it}
\newtheorem{thm}{Theorem}[section]
\newtheorem{lem}[thm]{Lemma}
\newtheorem{prop}[thm]{Proposition}
\newtheorem{cor}[thm]{Corollary}
\theoremstyle{definition}
\newtheorem{defn}[thm]{Definition}
\newtheorem{ex}[thm]{Example}
\newtheorem*{aknow}{Acknowledgments}
\theoremstyle{remark}
\newtheorem{rem}[thm]{Remark}
\newcommand{\spk}{\mathfrak{sp}}
\newcommand{\gl}{\mathfrak{gl}}
\newcommand{\slk}{\mathfrak{sl}}
\newcommand{\ok}{\mathfrak{o}}
\newcommand{\ospk}{\mathfrak{osp}}
\newcommand{\zk}{\mathfrak{z}}
\newcommand{\g}{\mathfrak{g}}
\newcommand{\hk}{\mathfrak{h}}
\newcommand{\lk}{\mathfrak{l}}
\newcommand{\qk}{\mathfrak{q}}
\newcommand{\tk}{\mathfrak{t}}
\newcommand{\jk}{\mathfrak{j}}
\newcommand{\uk}{\mathfrak{u}}
\newcommand\Bc{\mathcal{B}}
\newcommand\Lc{\mathcal L}
\newcommand\Dc{\mathcal D}
\newcommand\Ec{\mathcal E}
\newcommand\Ic{\mathcal I}
\newcommand\Jc{\mathcal J}
\newcommand\Pc{\mathcal P}
\newcommand\Oc{\mathcal O}
\newcommand{\Sb}{\mathsf{S}}
\newcommand{\Zb}{\mathsf{Z}}
\newcommand\CC{\mathbb C}
\newcommand\NN{\mathbb N}
\newcommand\ZZ{\mathbb Z}
\newcommand\PP{\mathbb P}
\newcommand{\Cs}{\mathscr C}
\newcommand{\Ds}{\mathscr D}
\newcommand{\Es}{\mathscr E}
\newcommand{\Rs}{\mathscr R}
\newcommand{\Ss}{\mathscr S}
\newcommand{\Ns}{\mathscr N}
\newcommand{\Vs}{\mathscr V}
\newcommand{\Ws}{\mathscr W}
\newcommand{\Zs}{\mathscr Z}
\newcommand{\io}{\operatorname{\iota}}
\newcommand{\ad}{\operatorname{ad}}
\newcommand{\Ad}{\operatorname{Ad}}
\newcommand{\im}{\operatorname{Im}}
\renewcommand\dfrac{\displaystyle \frac}
\newcommand{\End}{\operatorname{End}}
\newcommand{\sym}{\operatorname{Sym}}
\newcommand{\Id}{\operatorname{Id}}
\newcommand{\dup}{\operatorname{dup}}
\renewcommand\hat\widehat
\renewcommand\tilde\widetilde 
\newcommand{\spa}{\operatorname{span}}
\newcommand{\GL}{\operatorname{GL}}
\newcommand{\OO}{\operatorname{O}}
\newcommand{\Sp}{\operatorname{Sp}}
\newcommand{\SO}{\operatorname{SO}}
\newcommand\adpo{\ad_{\tt P}}
\newcommand{\degz}{\operatorname{deg}_{\ZZ}}
\newcommand{\zero}{\overline{0}}
\newcommand{\um}{\overline{1}}
\newcommand{\ze}{{\scriptscriptstyle{\overline{0}}}}
\newcommand{\un}{{\scriptscriptstyle{\overline{1}}}}
\newcommand{\gO}{\g_{\ze}}
\newcommand{\gI}{\g_{\un}}
\newcommand{\gd}{\g^*}
\newcommand{\oplusp}{{ \ \overset{\perp}{\mathop{\oplus}} \ }}
\newcommand{\iiso}{\overset{\mathrm{i}}{\simeq}}
\newcommand{\cb}{{\overline{C}}}
\newcommand{\cpb}{{\overline{C'}}}
\newcommand{\diag}{\operatorname{diag}}
\newcommand{\Nsh}{\widehat{{\Ns}}}
\newcommand{\Dsh}{\widehat{{\Ds}}}
\newcommand{\Si}{\Ss_{\mathrm{inv}}}
\newcommand{\Sih}{\widehat{\Si}}
\newcommand{\Ich}{\widetilde{{\Ic}}}
\newcommand{\ps}{\PP^1}
\newcommand{\ta}{{ \ \mathop{\times}\limits_{\mathrm{a}} \ }}
\newcommand{\Alt}{\operatorname{Alt}}
\begin{document}

\title{Singular quadratic Lie superalgebras}

\author{Minh Thanh Duong, Rosane Ushirobira}

\address{Minh Thanh Duong, Department of Physics, University of
  Education of Ho Chi Minh city, 280 An Duong Vuong, Ho Chi Minh city,
  Vietnam.\[\]\hspace{0.3cm} Rosane Ushirobira, Institut de
  Math\'ematiques de Bourgogne, Universit\'e de Bourgogne, B.P. 47870,
  F-21078 Dijon Cedex \& Non-A Inria Lille - Nord Europe, France}

\email{thanhdmi@hcmup.edu.vn}
\email{Rosane.Ushirobira@u-bourgogne.fr}

\keywords{Quadratic Lie superalgebras.  Super Poisson
  bracket. Invariant. Double extensions. Generalized double
  extensions. Adjoint orbits.}

\subjclass[2000]{15A63, 17B05, 17B30, 17B70}

\date{\today}

\begin{abstract}
  In this paper, we give a generalization of results in \cite{PU07}
  and \cite{DPU10} by applying the tools of graded Lie algebras to
  quadratic Lie superalgebras. In this way, we obtain a numerical
  invariant of quadratic Lie superalgebras and a classification of
  singular quadratic Lie superalgebras, i.e. those with a nonzero
  invariant. Finally, we study a class of quadratic Lie superalgebras
  obtained by the method of generalized double extensions.
\end{abstract}

\maketitle

\section{Introduction}

Throughout the paper, the base field is $\CC$ and all vector spaces
are complex and finite-dimensional.  We denote the ring $\ZZ/2\ZZ$ by
$\ZZ_2$ as in superalgebra theory.

Let us begin with a $\ZZ_2$-graded vector space
$\g=\g_\ze\oplus\g_\un$. We denote by $\Alt(\g_\ze)$ the Grassmann
algebra of $\g_\ze$, that is, the algebra of alternating multilinear
forms on $\g_\ze$ equipped with the wedge product and by
$\sym(\g_\un)$ the algebra of symmetric multilinear forms on $\g_\un$.

We say that $\g$ is a \textit{quadratic} $\ZZ_2$-graded vector space
if it is endowed with a non-degenerate even supersymmetric bilinear
form $B$ (that is, $B$ is symmetric on $\g_\ze$, skew-symmetric on
$\g_\un$ and $B(\g_\ze,\g_\un)=0$). In addition, if there is a Lie
superalgebra structure $[\cdot, \cdot]$ on $\g$ such that $B$ is
invariant, i.e. $B([X,Y],Z) = B(X,[Y,Z])$ for all $X,Y,Z\in\g$, then
$\g$ is called a \textit{quadratic} (or \textit{orthogonal} or
\textit{metrised}) Lie superalgebra.

Algebras endowed with an invariant bilinear form appear in many areas
of Mathematics and Physics and they are a remarkable algebraic
object. A structural theory of quadratic Lie algebras, based on the
notion of a double extension (a combination of a central extension and
a semi-direct product), was introduced by V. Kac \cite{Kac85} in the
solvable case and by A. Medina and P. Revoy \cite{MR85} in the general
case. Another interesting construction, the $T^*$-extension, based on
the notion of a generalized semi-direct product of a Lie algebra and
its dual space was given by M. Bordemann \cite{Bor97} for solvable
quadratic Lie algebras. Both notions have been generalized for
quadratic Lie superalgebras in papers by H. Benamor and S. Benayadi
\cite{BB99} and by I. Bajo, S. Benayadi and M. Bordemann \cite{BBB}.

A third approach, based on the concept of super Poisson bracket, was
introduced in \cite{PU07}, providing several interesting properties of
quadratic Lie algebras: the authors consider $(\g,[ \cdot, \cdot], B)$
a non-Abelian quadratic Lie algebra and define a 3-form $I$ on $\g$ by
\[I(X,Y,Z) = B([X,Y],Z), \ \forall\ X,Y,Z\in\g.
\]
Then $I$ is nonzero and $\{I,I\} = 0$, where $\{ \cdot, \cdot \}$ is
the super Poisson bracket defined on the ($\ZZ$-graded) Grassmann
algebra $\Alt(\g)$ of $\g$, by
\[\{\Omega, \Omega '\} = (-1)^{\degz(\Omega)+1} \sum_{j=1}^n
\io_{X_j}(\Omega) \wedge \io_{X_j}(\Omega'), \ \forall \ \Omega, \
\Omega' \in \Alt(\g) \] with $\{X_1, \dots, X_n\}$ a fixed orthonormal
basis of $\g$.

Conversely, given a quadratic vector space $(\g,B)$ and a nonzero
3-form $I\in\Alt^3(\g)$ satisfying $\{I,I\} = 0$, then there is a
non-Abelian Lie algebra structure on $\g$ such that $B$ is invariant.

The element $I$ carries some useful information about corresponding
quadratic Lie algebras. For instance, when $I$ is decomposable and
nonzero, corresponding quadratic Lie algebras are called {\em
  elementary} quadratic Lie algebras and they are exhaustively
classified in \cite{PU07}. This classification is based on basic
properties of quadratic forms and the super Poisson bracket. In this
case, $\dim([\g,\g])=3$ and coadjoint orbits have dimension at most
2. In \cite{DPU10}, the authors consider further a notion that is
called the $\dup$-\textit{number} of a non-Abelian quadratic Lie
algebra $\g$. It is defined by $\dup(\g)=\dim\left(\{\alpha \in \g^*
  \mid\ \alpha \wedge I = 0\}\right)$ where $\g^*$ is the dual space
of $\g$. The dup-number receives values 0, 1 or 3 and it measures the
decomposability of $I$. For instance, $I$ is decomposable if and only
if $\dup(\g)=3$. Moreover, it is also a numerical invariant of
quadratic Lie algebras under Lie algebra isomorphisms, meaning that if
$\g$ and $\g'$ are isomorphic quadratic Lie algebras then
$\dup(\g)=\dup(\g')$. Its proof is rather non-trivial. It is obtained
through a description of the space generated by invariant symmetric
bilinear forms on a quadratic Lie algebra with nonzero
dup-number. Such quadratic Lie algebra is called a {\em singular}. An
unexpected property is that there are many non-degenerate invariant
symmetric bilinear forms on a singular quadratic Lie algebra. Though
they can be linearly independent all of them are equivalent in the
solvable case, i.e. two solvable singular quadratic Lie algebras with
same Lie algebra structure are isometrically isomorphic (or {\em
  i-isomorphic}, for short). In other words, isomorphic and
i-isomorphic notions are equivalent on solvable singular quadratic Lie
algebras. Another remarkable result is that all singular quadratic Lie
algebras are classified up to isomorphism by $O(n)$-adjoint orbits of
the Lie algebra $\ok(n)$.

The purpose of this paper is to give a interpretation of this last
approach for quadratic Lie superalgebras. We combine it with the
notion of double extension as it was done for quadratic Lie algebras
\cite{DPU10}. Further, we use the notion of generalized double
extension. In result, we obtain a rather colorful picture of quadratic
Lie superalgebras.

Let us give some details of our main results. First, let $\g$ be a
quadratic $\ZZ_2$-graded vector space. Recall that $\Alt(\g_\ze)$ and
$\sym(\g_\un)$ are $\ZZ$-graded algebras and this gradation can be
used to define a $\ZZ \times \ZZ_2$-gradation on each
algebra. Consider then the $\ZZ\times\ZZ_2$-graded
\textit{super-exterior algebra} of $\g^*$ defined by
 \[\Es(\g)= \Alt(\g_\ze)\
 \underset{\ZZ\times\ZZ_2}{\otimes} \ \sym(\g_\un)\] with the natural
 \textit{super-exterior product}
 \[ (\Omega \otimes F) \wedge (\Omega' \otimes F') =
 (-1)^{\degz(F)\degz(\Omega')} (\Omega \wedge \Omega') \otimes F
 F', \] for all $\Omega,\Omega'$ in $\Alt(\g_\ze)$ and $F,F'$ in
 $\sym(\g_\un)$. It is clear that $\Es$ is commutative and
 associative. For more details of the algebra $\Es(\g)$ the reader
 should refer to \cite{Sch79}, \cite{BP89} or \cite{MPU09}.

 In \cite{MPU09}, the authors use the \textit{super $\ZZ \times
   \ZZ_2$-Poisson bracket} $\{\cdot,\cdot\}$ on the super-exterior
 algebra $\Es(\g)$ defined as follows:
 \[ \{ \Omega \otimes F, \Omega' \otimes F' \} =
 (-1)^{\degz(F)\degz(\Omega')} \left( \{ \Omega, \Omega'\} \otimes F
   F' + (\Omega \wedge \Omega') \otimes \{F,F' \} \right),\] for all
 $\Omega$, $\Omega' \in \Alt(\g_\ze)$, $F$, $F' \in \sym(\g_\un)$. 

 In Section 1, we will recall some simple properties of the super $\ZZ
 \times \ZZ_2$-Poisson bracket that are necessary for our purpose.

 It is easy to check that for a quadratic Lie superalgebra $(\g,B)$,
 if we define a trilinear form $I$ on $\g$ by
 \[I(X,Y,Z) = B([X,Y], Z),\ \forall\ X,Y,Z\in \g\] then
 $I\in\Es^{(3,\zero)}(\g)$. Therefore, it seems to be natural to ask:
 when does $\{I,I\} = 0$? We shall give an affirmative answer to this
 in Proposition \ref{3.1.17}. Moreover, similarly to the Lie algebra
 case, we show that non-Abelian quadratic Lie superalgebra structures
 on a quadratic $\ZZ_2$-graded vector space $(\g,B)$ are in one-to-one
 correspondence with nonzero elements $I$ in $\Es^{(3,\zero)}(\g)$
 satisfying $\{I,I\} = 0$.

 In Section 2, we introduce the notion of $\dup$-\textit{number} for a
 non-Abelian quadratic Lie superalgebra $\g$:
 \[\dup(\g)=\dim\left(\{\alpha \in \g^* \mid\ \alpha \wedge I = 0
   \}\right)\] and consider the set of quadratic Lie superalgebras
 with nonzero $\dup$-number: the set of singular quadratic Lie
 superalgebras. Similarly to quadratic Lie algebras, if $\g$ is
 non-Abelian then $\dup(\g)\in\{0,1,3\}$. Thanks to Lemma \ref{3.2.1},
 if $\dup(\g)=3$ then $\g_\un$ is a central ideal of $\g$ and $\g_\ze$
 is an elementary quadratic Lie algebra. Therefore, we focus on
 singular quadratic Lie superalgebras $\g$ with $\dup(\g) =1$. We call
 them \textit{singular quadratic Lie superalgebras of type
   $\Sb_1$}. However, differently than the Lie algebra case, the
 element $I$ may be decomposable.

 We list in Section 3 all non-Abelian reduced quadratic Lie
 superalgebras with $I$ decomposable (see Definition \ref{3.2.5} for
 the definition of a reduced quadratic Lie superalgebra). We call them
 {\em elementary} quadratic Lie superalgebras. In this case,
 $\dup(\g)$ is nonzero. In particular, if $\dup(\g)=3$ then $\g$ is an
 elementary quadratic Lie algebra. If $\dup(\g)=1$ then we obtain
 three quadratic Lie superalgebras with 2-dimensional even
 part. Actually, we prove in Proposition \ref{3.4.1} that if $\g$ is a
 non-Abelian quadratic Lie superalgebra with 2-dimensional even part
 then $\dup(\g)=1$.

 Section 4 details a study of quadratic Lie superalgebras with
 2-dimensional even part. We apply the concept of double extension as
 in \cite{DPU10} with a little change by replacing a quadratic vector
 space by a symplectic vector space and keeping the other conditions
 (see Definition \ref{3.4.6}). Then we obtain a structure that we
 still call a {\em double extension} and one has (Proposition
 \ref{3.4.8}):

\bigskip
{\sc{Theorem 1:}}

{\em A quadratic Lie superalgebra has a 2-dimensional even part if and
  only if it is a double extension}.  \bigskip

By a very similar process as in \cite{DPU10} for solvable singular
quadratic Lie algebras, a classification of quadratic Lie
superalgebras with 2-dimensional even part up to isomorphism is given
as follows. Let $\Ss(2+2n)$ be the set of such structures on
$\CC^{2+2n}$. We call an algebra $\g\in\Ss(2+2n)$ {\em diagonalizable}
(resp. {\em invertible}) if it is the double extension by a
diagonalizable (resp. invertible) map. Denote the subsets of nilpotent
elements, diagonalizable elements and invertible elements in
$\Ss(2+2n)$, respectively by $\Ns(2+2n)$, $\Ds(2+2n)$ and by
$\Si(2+2n)$. Denote by $\Nsh(2+2n)$, $\Dsh(2+2n)$, $\Sih(2+2n)$ the
sets of isomorphic classes in $\Ns(2+2n)$, $\Ds(2+2n)$, $\Si(2+2n)$
respectively and by $\Dsh_\mathrm{red}(2+2n)$ the subset of
$\Dsh(2+2n)$ including reduced ones. Also, we denote by
$\ps(\spk(2n))$ the projective space of $\spk(2n)$ with the action
induced by the $\Sp(2n)$-adjoint action on $\spk(2n)$. Then we have
the classification result (Propositions \ref{3.4.13} and \ref{3.4.14}
and Appendix):

\bigskip
{\sc{Theorem 2:}} 
{\em
\begin{itemize}
\item[(i)] Let $\g$ and $\g'$ be elements in $\Ss(2+2n)$. Then $\g$
  and $\g'$ are isometrically isomorphic and if and only if they are
  isomorphic.
\item[(ii)] There is a bijection between $\Nsh(2+2n)$ and the set of
  nilpotent $\Sp(2n)$-adjoint orbits of $\spk(2n)$. It induces a
  bijection between $\Nsh(2+2n)$ and the set of partitions
  $\Pc_{-1}(2n)$ of $2n$ in which odd parts occur with even
  multiplicity.
\item[(iii)] There is a bijection between $\Dsh(2+2n)$ and the set of
  semisimple $\Sp(2n)$-orbits of $\ \ps(\spk(2n))$. 
\item[(iv)] There is a bijection between $\Sih(2+2n)$ and the set of
  invertible $\Sp(2n)$-orbits of $\ \ps(\spk(2n))$. 
\item[(v)] There is a bijection between $\widehat{\Ss}(2+2n)$ and the
  set of $\Sp(2n)$-orbits of
  $\ps(\spk(2n))$. 
\end{itemize}}

\bigskip

As for quadratic Lie algebras, we have the notion of quadratic
dimension of a quadratic Lie superalgebra. In the case $\g$ is a
quadratic Lie superalgebra having a 2-dimensional even part, we can
compute its quadratic dimension as follows:
\[ d_q(\g) = 2 + \dfrac{(\dim(\Zs(\g) - 1)) (\dim(\Zs(\g) - 2)}{2}. \]
where $\Zs(\g)$ is the center of $\g$. It indicates that there are
many non-degenerate invariant even supersymmetric bilinear forms on a
quadratic Lie superalgebra with 2-dimensional even part but by Theorem
2 (i), all of them are equivalent.

Section 5 contains more results on a singular quadratic Lie
superalgebra $(\g,B)$ of type $\Sb_1$, that is, those with 1-valued
$\dup$-number. The first result is that $\g_\ze$
is solvable and so $\g$ is solvable. Moreover, by Definition \ref{3.5.3} and Lemma
\ref{3.5.5}, the Lie superalgebra $\g$ can be realized as the double
extension of a quadratic $\ZZ_2$-graded vector space
$\qk=\qk_\ze\oplus\qk_\un$ by a map $\cb=\cb_0+\cb_1
\in\ok(\qk_\ze)\oplus\spk(\qk_\un)$. Denote by $\Lc(\qk_\ze)$
(resp. $\Lc(\qk_\un)$) the set of endomorphisms of $\qk_\ze$
(resp. $\qk_\un$). We give isomorphic and i-isomorphic
characterizations of two singular quadratic Lie superalgebras of type
$\Sb_1$ as follows (Proposition \ref{3.5.7}).

\bigskip
{\sc{Theorem 3:}}

{\em Let $\g$ and $\g'$ be two double extensions of $\qk$ by $\cb
  =\cb_0+\cb_1$ and $\overline{C'} =\overline{C_0'}+\overline{C_1'}$,
  respectively. Assume that $\cb_1$ is nonzero. Then
\begin{enumerate}
\item there exists a Lie superalgebra isomorphism between $\g$ and
  $\g'$ if and only if there exist invertible maps $P\in\Lc(\qk_\ze)$,
  $Q\in\Lc(\qk_\un)$ and a nonzero $\lambda\in\CC$ such that
	
\begin{itemize}
	\item[(i)] $\overline{C_0'}=\lambda P\cb_0 P^{-1}$ and $P^*P\cb_0 = \cb_0$.
	\item[(ii)] $\overline{C_1'}=\lambda Q\cb_1 Q^{-1}$ and $Q^*Q\cb_1 = \cb_1$.
\end{itemize}
where $P^*$ and $Q^*$ are the adjoint maps of $P$ and $Q$ with respect to $B|_{\qk_\ze\times\qk_\ze}$ and $B|_{\qk_\un\times\qk_\un}$.
\item there exists an i-isomorphism between $\g$ and $\g'$ if and only
  if there is a nonzero $\lambda\in\CC$ such that $\overline{C_0'}$ is
  in the $\OO(\qk_\ze)$-adjoint orbit through $\lambda \cb_0$ and
  $\overline{C_1'}$ is in the $\Sp(\qk_\un)$-adjoint orbit through
  $\lambda \cb_1$.
\end{enumerate}}

We recall a remarkable result in \cite{DPU10} that two solvable
singular quadratic Lie algebras are i-isomorphic if and only if they
are isomorphic. A similar situation occurs for two quadratic Lie
superalgebras with 2-dimensional even part as in Theorem 2. Therefore,
there is a very natural question: is this result also true for two
singular quadratic Lie superalgebras? We have an affirmative answer as
follows (Proposition \ref{3.5.15} for $\dup(\g)=1$ and \cite{DPU10}
for $\dup(\g)=3$):

\bigskip
{\sc{Theorem 4:}}

{\em Let $\g$ and $\g'$ be two solvable singular quadratic Lie
  superalgebras. Then $\g$ and $\g'$ are i-isomorphic if and only if
  they are isomorphic}.  \bigskip

We close the problem on singular quadratic Lie superalgebras with an
assertion that (Proposition \ref{3.5.16}):

\bigskip
{\sc{Theorem 5:}}

{\em The $\dup$-number is invariant under Lie superalgebra
  isomorphism}.

\bigskip

As a consequence of its proof, we obtain a formula for the quadratic
dimension of reduced singular quadratic Lie superalgebras of type
$\Sb_1$ having $[\g_\un,\g_\un]\neq\{0\}$.

In the last Section, we study the structure of a quadratic Lie
superalgebra $\g$ such that its element $I$ has the form:
\[I = J\wedge p\] where $p\in \g_{\un}^*$ is nonzero and $J\in
\Alt{}^1(\g_{\ze})\otimes \sym{}^1(\g_{\un})$ is indecomposable. We
call $\g$ a {\em quasi-singular quadratic Lie superalgebra}. With the
notion of {\em generalized double extension} given by I. Bajo,
S. Benayadi and M. Bordemann in \cite{BBB}, we prove that (Corollary
\ref{3.6.5} and Proposition \ref{3.6.8}):

\bigskip

{\sc{Theorem 6:}}

{\em A quasi-singular quadratic Lie superalgebra is a generalized
  double extension of a quadratic $\ZZ_2$-graded vector space. This
  superalgebra is 2-nilpotent.}  \bigskip

In the Appendix, we recall fundamental results in the classification
of $\OO(m)$-adjoint orbits of $\ok(m)$ and $\Sp(2n)$-adjoint orbits of
$\spk(2n)$. The classification of nilpotent and semisimple orbits is
well-known. We further give here the classification of {\em
  invertible} orbits, i.e. orbits of isomorphisms in $\ok(m)$ and
$\spk(2n)$. By the Fitting decomposition, we obtain a complete
classification in the general case.

Many concepts used in this paper are generalizations of the quadratic
Lie algebra case. We do not recall their original definitions
here. For more details the reader can refer to \cite{PU07} and
\cite{DPU10}.

\begin{aknow}
  We would like to thank D. Arnal and R. Yu for many valuable remarks
  and stimulating discussions concerning the first version of this
  article. Moreover, we would like to thank S. Benayadi for very
  interesting suggestions for the improvement of Section 5.

  This article is dedicated to our admirable mentor Georges Pinczon
  (1948 -- 2010). He suggested the main idea in Section 1 and
  discussed results in Sections 2 and 3.
\end{aknow}

\section{Applications of graded Lie algebras to quadratic Lie
  superalgebras}

Let $\g=\g_\ze\oplus \g_\un$ be a $\ZZ_2$-graded vector space. We call
$\g_\ze$ and $\g_\un$ respectively the {\em even} and the {\em odd}
part of $\g$.  We begin by reviewing the construction of the
super-exterior algebra of the dual space $\g^*$ of $\g$. Then we
define the super $\ZZ \times \ZZ_2$-Poisson bracket on $\g^*$ (for
more details, see \cite{MPU09} and \cite{Sch79}). 

\subsection{The super-exterior algebra of $\gd$} Denote by
$\Alt(\g_\ze)$ the algebra of alternating multilinear forms on
$\g_\ze$ and by $\sym(\g_\un)$ the algebra of symmetric multilinear
forms on $\g_\un$. Recall that $\Alt(\g_\ze)$ is the exterior algebra
of $\g_\ze^*$ and $\sym(\g_\un)$ is the symmetric algebra of
$\g_\un^*$. These algebras are $\ZZ$-graded algebras. We define a $\ZZ
\times \ZZ_2$-gradation on $\Alt(\gO)$ and on $\sym(\gI)$ by
\[\Alt^{(i,\zero)}(\gO) = \Alt^i(\gO), \quad
\Alt^{(i,\um)}(\gO) = \{0\}\]
\[\text{and}
\quad \sym^{(i,\overline{i})}(\gI) = \sym^i(\gI),\quad
\sym^{(i,\overline{j})}(\gI) = \{0\} \quad \text{if} \quad
\overline{i} \neq \overline{j},\] where $i,j\in\ZZ$ and $\overline{i},
\overline{j}$ are respectively the residue classes modulo 2 of $i$ and
$j$.

The {\it super-exterior algebra} of $\gd$ is the $\ZZ \times \ZZ_2$-graded
algebra defined by: \[\Es(\g)= \Alt(\gO) \underset{\ZZ   \times
\ZZ_2}{\otimes} \sym(\gI)\] endowed with the {\em   super-exterior product} on
$\Es(\g)$: \[ (\Omega \otimes F) \wedge (\Omega' \otimes F') = (-1)^{f\omega'}
(\Omega \wedge \Omega') \otimes F F', \] for all $\Omega \in \Alt(\gO)$,
$\Omega' \in \Alt^{\omega'}(\gO)$, $F \in \sym^f(\gI)$, $F' \in \sym(\gI)$.
Remark that the $\ZZ \times \ZZ_2$-gradation on $\Es(\g)$ is given by: \[
\text{if } A = \Omega \otimes F\in \Alt^{\omega}(\gO) \otimes \sym^{f}(\gI) \
\text{ with } \omega,  \ f \in \ZZ, \ \text{ then } A \in \Es^{(\omega + f,
\overline{f})}(\g).\] So, in terms of the $\ZZ$-gradations of $\Alt(\gO)$ and
$\sym(\gI)$, we have: \[\Es^n(\g) = \bigoplus^{n}_{m=0}\left(\Alt^m(\gO)
\otimes \sym^{n-m}(\gI)\right)\] and in terms of the $\ZZ_2$-gradations,
\[\Es_{\ze}(\g) = \Alt(\g_\ze)
\otimes\left(\underset{j\geq0}{\oplus}\sym^{2j}(\g_\un) \right)\ \text{ and }\
\Es_{\un}(\g) = \Alt(\g_\ze)
\otimes\left(\underset{j\geq0}{\oplus}\sym^{2j+1}(\g_\un) \right).\] Notice
that the graded vector space $\Es(\g)$ endowed with this product is a
commutative and associative graded algebra.

Another equivalent construction is given in \cite{BP89}: $\Es(\g)$ is
the graded algebra of super-antisymmetric multilinear forms on
$\g$. The algebras $\Alt(\gO)$ and $\sym(\gI)$ are regarded as
subalgebras of $\Es(\g)$ by identifying $\Omega: = \Omega\otimes 1$,
$F: = 1\otimes F$, and the tensor product $\Omega \otimes F =
(\Omega\otimes 1)\wedge (1\otimes F)$ for all $\Omega\in \Alt(\gO)$,
$F\in \sym(\gI)$.

\subsection{The super $\ZZ\times \ZZ_2$-Poisson bracket on $\Es(\gd)$}
Let us assume that the vector space $\g$ is equipped with a
non-degenerate even supersymmetric bilinear form $B$. That
means \[B(X,Y) = (-1)^{xy}B(Y,X)\] for all homogeneous $X\in\g_x$,
$Y\in\g_y$ and $B(\g_\ze,\g_\un) = 0$. In this case, $\dim(\g_\un)$
must be even and $\g$ is also called a {\em quadratic $\ZZ_2$-graded
  vector space.}

The Poisson bracket on $\sym(\gI)$ and the super Poisson bracket on
$\Alt(\gO)$ are defined as follows. Let $\Bc = \{X_1, ..., X_n, Y_1,
..., Y_n\}$ be a Darboux basis of $\g_\un$, meaning that $B(X_i,X_j) =
B(Y_i,Y_j) = 0$ and $B(X_i,Y_j) = \delta_{ij}$, for all $1 \leq i,j
\leq n$. Let $\{p_1, ..., p_n, q_1, ..., q_n\}$ be its dual
basis. Then the algebra $\sym(\gI)$ regarded as the polynomial algebra
$\CC[p_1, ..., p_n, q_1, ..., q_n]$ is equipped with the {\em Poisson
bracket}:
\[ \{F,G\} = \sum_{i=1}^n \left( \dfrac{\partial F}{\partial p_i}
  \dfrac{\partial G}{\partial q_i} - \dfrac{\partial F}{\partial q_i}
  \dfrac{\partial G}{\partial p_i} \right), \ \text{for all} \ F, G
\in \sym(\gI).
\]
It is well-known that the algebra $\left( \sym(\gI), \{ \cdot, \cdot \}
\right)$ is a Lie algebra.  Now, let $X \in \g_\ze$ and denote by
$\iota_X$ the derivation of $\Alt(\gO)$ defined by:
\[
\iota_X (\Omega) (Z_1, \dots, Z_k) = \Omega (X, Z_1, \dots, Z_k),\
\forall \ \Omega \in \Alt^{k+1}(\gO),\ X, Z_1, \dots, Z_k \in \g_\ze \ (k
\geq 0),
\]
and $\iota_X(1) = 0$. Let $\{Z_1, \dots, Z_m\}$ be a fixed orthonormal basis of
$\g_\ze$. The {\em super Poisson bracket} on the algebra
$\Alt(\gO)$ is defined by (see \cite{PU07} for details):
\[\{\Omega, \Omega '\} = (-1)^{k+1} \sum_{j=1}^m \io_{Z_j}(\Omega)
\wedge \io_{Z_j}(\Omega'), \ \forall \ \Omega \in \Alt^k(\gO),\ \Omega'
\in \Alt(\gO).\] Remark that the definitions above do not depend on the choice of the basis.

Next, for any $\Omega \in \Alt^k(\gO)$, we define the map $\adpo(\Omega)$ by \[
\adpo(\Omega)\left(\Omega'\right)  = \{ \Omega, \Omega'\}, \ \forall
\ \Omega' \in \Alt(\gO).\] It is easy to check that $\adpo(\Omega)$ is a super-derivation
of degree $k-2$ of the algebra $\Alt(\gO)$. One
has: \[\adpo(\Omega)\left(\{\Omega', \Omega''\}\right) = \{
\adpo(\Omega)(\Omega'), \Omega''\} + (-1)^{kk'} \{ \Omega',
\adpo(\Omega)(\Omega'')\}, \] for all $\Omega' \in \Alt^{k'}(\gO)$,
$\Omega '' \in \Alt(\gO)$. Therefore $\Alt(\gO)$ is a graded
Lie algebra for the super-Poisson bracket.

\begin{defn}\cite{MPU09}\label{3.1.4}

The {\em super $\ZZ \times \ZZ_2$-Poisson bracket} on $\Es(\g)$ is given by:
\[ \{ \Omega \otimes F, \Omega' \otimes F' \} = (-1)^{f \omega'}
\left( \{ \Omega, \Omega'\} \otimes F F' + (\Omega \wedge \Omega')
  \otimes \{F,F' \} \right),\] for all $\Omega \in \Alt(\gO)$, $\Omega'
\in \Alt^{\omega'}(\gO)$, $F \in \sym(\gI){}^f$, $F' \in \sym(\gI)$.
\end{defn}

By a straightforward computation, it is easy to obtain the following result:
\begin{prop}\label{3.1.5}
The algebra $\Es(\g)$ is a graded Lie algebra with the super $\ZZ \times \ZZ_2$-Poisson bracket. More precisely, for all $A  \in \Es^{(a,b)}(\g)$, $A' \in \Es^{(a',b')}(\g)$ and $A'' \in \Es^{(a'',b'')}(\g)$:

\begin{enumerate}
	\item $\{A',A\} = -(-1)^{aa'+ bb'}\{A,A'\}.$
	\item $\{\{A,A'\},A''\} = \{A,\{A', A''\}\} - (-1)^{aa'+ bb'}\{A',\{A, A''\}\}.$
\end{enumerate}
Moreover, one has $\{A,A'\wedge A''\} = \{A,A'\}\wedge A'' + (-1)^{aa'+ bb'}A'\wedge\{A,A''\}.$
\end{prop}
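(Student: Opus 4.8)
The plan is to reduce each identity to homogeneous tensors $A=\Omega\otimes F$, $A'=\Omega'\otimes F'$, $A''=\Omega''\otimes F''$ with $\Omega\in\Alt^{\omega}(\gO)$, $F\in\sym^{f}(\gI)$, etc., by multilinearity, and then play the antisymmetry, Leibniz and Jacobi identities of the two component brackets --- the super Poisson bracket on $(\Alt(\gO),\wedge)$ and the ordinary Poisson bracket on the polynomial algebra $(\sym(\gI),\cdot)$ --- against the Koszul signs built into the super-exterior product and into Definition~\ref{3.1.4}. The bookkeeping fact that makes everything close up is that such an $A$ lies in $\Es^{(\omega+f,\overline{f})}(\g)$, so in the notation of the statement $a=\omega+f$, $b=\overline{f}$, whence $aa'+bb'\equiv\omega\omega'+\omega f'+f\omega'\pmod 2$; likewise $\{A,A'\}\in\Es^{(a+a'-2,\,b+b')}(\g)$ and $\ad_A:=\{A,-\}$ acts with bidegree $(a-2,b)$.

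For (1) I would expand $\{A,A'\}$ and $\{A',A\}$ directly from Definition~\ref{3.1.4} and match them term by term, the inputs being: the graded antisymmetry $\{\Omega',\Omega\}=-(-1)^{\omega\omega'}\{\Omega,\Omega'\}$ on $\Alt(\gO)$ (itself read off from the defining formula of the super Poisson bracket together with $\io_{Z_j}(\Omega')\wedge\io_{Z_j}(\Omega)=(-1)^{(\omega-1)(\omega'-1)}\io_{Z_j}(\Omega)\wedge\io_{Z_j}(\Omega')$), the honest antisymmetry $\{F',F\}=-\{F,F'\}$ on $\sym(\gI)$, the commutativity $F'F=FF'$, and $\Omega'\wedge\Omega=(-1)^{\omega\omega'}\Omega\wedge\Omega'$; collecting exponents yields exactly the factor $-(-1)^{aa'+bb'}$. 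For the ``Moreover'' (Leibniz) statement I would expand $\{A,A'\wedge A''\}$, pull the two Koszul signs (one from $A'\wedge A''$, one from Definition~\ref{3.1.4}) to the front, and use that $\adpo(\Omega)$ is a super-derivation of degree $\omega-2$ of $(\Alt(\gO),\wedge)$ while $\{F,-\}$ is an ordinary derivation of the commutative algebra $\sym(\gI)$; two of the resulting four summands assemble into $\{A,A'\}\wedge A''$, and in the other two the extra $(-1)^{\omega\omega'}$ from $\Omega'\wedge\Omega=(-1)^{\omega\omega'}\Omega\wedge\Omega'$ is precisely what promotes the accumulated sign to $(-1)^{aa'+bb'}$, giving $(-1)^{aa'+bb'}A'\wedge\{A,A''\}$.

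The only point needing care is the Jacobi identity (2), for which I would argue structurally rather than expand a sixfold sum of signs. Let $J(A,A',A'')$ denote the difference of the two sides of (2). Read as an operator identity in the last slot, (2) says that $X\mapsto J(A,A',X)$ equals $\ad_{\{A,A'\}}-[\ad_A,\ad_{A'}]$, the graded commutator of operators taken with sign $(-1)^{aa'+bb'}$ (which is consistent with the bidegrees recorded above, since $(-1)^{(a-2)(a'-2)+bb'}=(-1)^{aa'+bb'}$). By the Leibniz rule just proved, every $\ad_Y$ is a graded derivation of $(\Es(\g),\wedge)$; a graded commutator of graded derivations is a graded derivation, and $\ad_{\{A,A'\}}$ is a graded derivation; hence $X\mapsto J(A,A',X)$ is a graded derivation of $(\Es(\g),\wedge)$. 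By the graded antisymmetry (1), $J$ is totally graded-antisymmetric in its three arguments, so it is a graded derivation in each slot separately. Therefore $J\equiv 0$ as soon as it vanishes on algebra generators, i.e. on triples $\alpha,\beta,\gamma$ taken from $\g_\ze^{*}\subset\Alt^{1}(\gO)$ and $\g_\un^{*}\subset\sym^{1}(\gI)$. But on such degree-one elements every bracket that appears is either a scalar or zero: $\{\Omega,\Omega'\}$ for $\Omega,\Omega'\in\Alt^{1}(\gO)$ and $\{F,F'\}$ for $F,F'\in\sym^{1}(\gI)$ lie in degree $0$, the mixed bracket $\{\g_\ze^{*},\g_\un^{*}\}$ is $0$, and the bracket of a scalar (in particular of the unit $1\otimes1$) with anything vanishes; so all three terms of $J(\alpha,\beta,\gamma)$ vanish. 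Hence $J\equiv 0$. (One could instead verify (2) by the same direct expansion as (1), using the Jacobi identities of the two component brackets; this also works but is noticeably sign-heavy.)

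The main obstacle I anticipate is exactly the sign accounting: checking that the Koszul factor $(-1)^{f\omega'}$ in Definition~\ref{3.1.4}, the matching factor in the super-exterior product, the graded commutativity of $\Alt(\gO)$, and the degree shift by $2$ of the super Poisson bracket all conspire to reproduce $(-1)^{aa'+bb'}$ with $a=\omega+f$, $b=\overline{f}$. The structural argument for the Jacobi identity is what keeps this manageable.
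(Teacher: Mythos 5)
Your proposal is correct. For comparison: the paper dispatches this proposition with the single sentence ``By a straightforward computation, it is easy to obtain the following result,'' i.e.\ it implicitly relies on a direct term-by-term expansion of Definition~\ref{3.1.4} for all three identities. Your treatment of (1) and of the Leibniz rule is exactly that computation, and your sign bookkeeping checks out: with $a=\omega+f$, $b=\overline{f}$ one indeed has $aa'+bb'\equiv\omega\omega'+\omega f'+f\omega'\pmod 2$, and the two Koszul factors $(-1)^{(f+f')\omega''}$ and $(-1)^{(f+f'-2)\omega''}$ arising in $\{A,A'\}\wedge A''$ agree, which is what makes the four summands reassemble as claimed. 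Where you genuinely depart from the paper is the Jacobi identity: instead of expanding a sixfold sum of signed terms, you observe that $\ad_{\{A,A'\}}-[\ad_A,\ad_{A'}]$ is a graded derivation of $(\Es(\g),\wedge)$ of bidegree $(a+a'-4,b+b')$ (using the Leibniz rule already proved and $(a-2)(a'-2)\equiv aa'$), that the Jacobiator is graded-alternating as a formal consequence of (1) alone, and that it therefore suffices to evaluate on the algebra generators $\g_\ze^*\subset\Alt^1(\gO)$ and $\g_\un^*\subset\sym^1(\gI)$, where every iterated bracket degenerates to a bracket with a scalar (Corollary~\ref{3.1.13}(2) in the paper confirms $\{\alpha,\alpha'\}$ is the scalar $B(\phi^{-1}(\alpha),\phi^{-1}(\alpha'))$) and hence vanishes. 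This buys a proof that is essentially sign-free at its hardest point and makes transparent \emph{why} the identity holds, at the cost of invoking the standard (but unproved in your sketch) fact that the Jacobiator of a graded-antisymmetric bracket transforms by the Koszul sign under permutations; the paper's implicit brute-force route avoids that lemma but buries the content in sign arithmetic.
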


\subsection{Super-derivations} Denote by $\Lc(\Es(\g))$ the vector space of
endomorphisms of $\Es(\g)$. Let $\adpo(A) := \{A,.\}$, for all $A \in
\Es(\g)$. Then $\adpo(A)\in \Lc(\Es(\g))$ and: \[\adpo(\{A,A'\}) =
\adpo(A)\circ\adpo(A') - (-1)^{aa'+ bb'}\adpo(A')\circ\adpo(A)\] for all $A$,
$A' \in \Es^{(a',b')}(\g)$. The space $\Lc(\Es(\g))$ is naturally $\ZZ \times
\ZZ_2$-graded as follows: \[ \deg(F) = (n,d),\ n\in\ZZ,\ d\in\ZZ_2 \text{ if }
\deg(F(A)) = (n + a, d + b),\ \text{where}\ A\in \Es^{(a,b)}(\g).\]

We denote by $\End^n_f(\Es(\g))$ the subspace of endomorphisms of degree
$(n,f)$ of $\Lc(\Es(\g))$. It is clear that if $A\in \Es^{(a,b)}(\g)$ then
$\adpo(A)$ has degree $(a-2,b)$. Moreover, it is known that $\Lc(\Es(\g))$ is
also a graded Lie algebra, frequently denoted by $\gl(\Es(\g))$ and equipped
with the Lie super-bracket: \[ [F,G] = F\circ G - (-1)^{np+fg}G\circ F,\ \
\forall\ F\in \End^n_f(\Es(\g)),\ G\in \End^p_g(\Es(\g)).\] Therefore, by
Proposition \ref{3.1.5}, we obtain that $\adpo$ is a graded Lie algebra
homomorphism from $\Es(\g)$ onto $\gl(\Es(\g))$. In other words, one has:
\[\adpo(\{A,A'\}) = [\adpo(A), \adpo(A')], \ \ \forall \ A, A'\in \Es(\g).\]

\begin{defn}
A homogeneous endomorphism $D\in \gl(\Es(\g))$ of degree $(n,d)$ is called a {\em super-derivation} of degree $(n,d)$ of $\Es(\g)$ (for the super-exterior product) if it satisfies the following condition:
\[D(A\wedge A') = D(A)\wedge A' + (-1)^{na+db}A\wedge D(A'),\ \forall\ A\in \Es^{(a,b)}(\g),\ A'\in \Es(\g).
\]
\end{defn}

Denote by $\Ds^n_d(\Es(\g))$ the space of super-derivations of degree $(n,d)$
of $\Es(\g)$ then we obtain a $\ZZ \times \ZZ_2$-gradation of the space of
super-derivations $\Ds(\Es(\g))$ of $\Es(\g)$: \[\Ds(\Es(\g)) =
\bigoplus_{(n,d) \in\ZZ \times \ZZ_2}\Ds^n_d(\Es(\g)) \] and $\Ds(\Es(\g))$
becomes a graded subalgebra of $\gl(\Es(\g))$ \cite{NR66}. Moreover, the last
formula in Proposition \ref{3.1.5} affirms that $\adpo(A)\in\Ds(\Es(\g))$, for
all $A\in \Es(\g)$.

Another example of a super-derivation in $\Ds(\Es(\g))$ is given in
\cite{BP89} as follows. Let $X\in \g_x$ be a homogeneous element in $\g$ of
degree $x$ and define the endomorphism $\io_X$ of $\Es(\g)$ by \[\io_X(A)(X_1,
..., X_{a-1}) = (-1)^{xb}A(X,X_1, ..., X_{a-1}) \] for all
$A\in\Es^{(a,b)}(\g),\ X_1, \dots, X_{a-1}\in V$. Then one has \[\io_X(A\wedge
A') = \io_X(A)\wedge A' + (-1)^{-a+xb}A\wedge\io_X(A') \] holds for all
$A\in\Es^{(a,b)}(\g),\ A'\in \Es(\g)$. It means that $\io_X$ is a super-
derivation of $\Es(\g)$ of degree $(-1,x)$. The proof of the following Lemma
is straightforward:

\begin{lem}\label{3.1.10} Let $X_\ze\in \g_\ze$ and $X_\un\in \g_\un$. Then,
for all $\Omega\otimes F\in \Alt^{\omega}(\gO)\otimes \sym^f(\gI)$:

\begin{enumerate}
	\item $\io_{X_\ze} (\Omega\otimes F)=\io_{X_\ze} (\Omega)\otimes F $,
	\item $\io_{X_\un} (\Omega\otimes F)=(-1)^{\omega}\Omega\otimes\io_{X_\un} (F)$.
\end{enumerate}
\end{lem}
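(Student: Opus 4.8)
The plan is to verify the two derivation identities in Lemma~\ref{3.1.10} directly from the definitions of $\io_{X_\ze}$, $\io_{X_\un}$ on $\Es(\g)$ and from the identification $\Omega\otimes F = (\Omega\otimes 1)\wedge(1\otimes F)$. First I would record that on a monomial $A=\Omega\otimes F\in\Alt^\omega(\gO)\otimes\sym^f(\gI)$ the $\ZZ\times\ZZ_2$-degree is $(a,b)=(\omega+f,\overline f)$. For part (1), take $X_\ze\in\g_\ze$, which has parity $x=\zero$, so the super-derivation $\io_{X_\ze}$ has degree $(-1,\zero)$; since $b=\overline f$ plays no role (as $x=\zero$), the Leibniz rule of $\io_{X_\ze}$ reduces on $A\wedge A'$ to $\io_{X_\ze}(A)\wedge A' + (-1)^{-a}A\wedge\io_{X_\ze}(A')$. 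Applying this to $A=\Omega\otimes 1$ (degree $(\omega,\zero)$) and $A'=1\otimes F$ gives
\[
\io_{X_\ze}(\Omega\otimes F) = \io_{X_\ze}(\Omega)\otimes F + (-1)^{-\omega}\,\Omega\wedge\io_{X_\ze}(1\otimes F).
\]
It remains to check $\io_{X_\ze}(1\otimes F)=0$: evaluating the defining formula for $\io_{X_\ze}$ on $1\otimes F$ one feeds the even argument $X_\ze$ into a form that vanishes on $\g_\ze$ (as $1\otimes F\in\Alt^0(\gO)\otimes\sym^f(\gI)$ only sees $\g_\un$-arguments), hence it is zero; I would also note $\io_{X_\ze}(\Omega)=\io_{X_\ze}(\Omega\otimes 1)$ agrees with the classical contraction on $\Alt(\gO)$, so no sign mismatch with the $\g_\un$-part survives. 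That yields (1).

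For part (2), take $X_\un\in\g_\un$, with parity $x=\um$, so $\io_{X_\un}$ has degree $(-1,\um)$ and its Leibniz rule on $A\wedge A'$ with $A$ of degree $(a,b)$ reads $\io_{X_\un}(A)\wedge A' + (-1)^{-a+b}A\wedge\io_{X_\un}(A')$. Apply it to $A=\Omega\otimes 1$ of degree $(\omega,\zero)$ and $A'=1\otimes F$: the cross term again needs $\io_{X_\un}(\Omega\otimes 1)$, and evaluating the defining formula shows feeding the odd argument $X_\un$ into a purely even form $\Omega\otimes 1$ gives zero, so
\[
\io_{X_\un}(\Omega\otimes F) = (-1)^{-\omega+\zero}\,(\Omega\otimes 1)\wedge\io_{X_\un}(1\otimes F) = (-1)^{\omega}\,\Omega\otimes\io_{X_\un}(F),
\]
where at the last step one identifies $\io_{X_\un}(1\otimes F)$ with the contraction $\io_{X_\un}(F)$ on $\sym(\gI)$ (this is the substitution operator dual to $X_\un$, i.e.\ a partial derivative in the Darboux coordinates) and $(\Omega\otimes 1)\wedge(1\otimes\io_{X_\un}(F)) = \Omega\otimes\io_{X_\un}(F)$ carries the sign $(-1)^{\degz(1)\degz(\Omega)} = 1$. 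This gives (2).

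The only genuinely delicate point is bookkeeping of the two independent gradings: one must be careful that the exponent in the super-exterior product $(\Omega\otimes F)\wedge(\Omega'\otimes F')=(-1)^{f\omega'}(\Omega\wedge\Omega')\otimes FF'$ and the exponent $(-1)^{-a+xb}$ in the Leibniz rule for $\io_X$ interact correctly, and that $a=\omega+f$ (not $\omega$) when $X$ has odd parity — but since $\Omega\otimes 1$ has $f=0$ the relevant intermediate factors have $b=\zero$, so every potentially troublesome sign collapses. I expect this sign-tracking to be the main (and essentially the only) obstacle; once the reduction to the two factors $\Omega\otimes 1$ and $1\otimes F$ is made and the vanishing of the mixed-parity contractions is observed, the identities follow immediately, which is why the statement can be called straightforward.
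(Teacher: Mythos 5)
Your argument is correct: the paper gives no proof (it declares the lemma ``straightforward''), and your route --- writing $\Omega\otimes F=(\Omega\otimes 1)\wedge(1\otimes F)$, applying the Leibniz rule for the super-derivation $\io_X$ of degree $(-1,x)$ stated just before the lemma, and observing that $\io_{X_\ze}(1\otimes F)=0$ and $\io_{X_\un}(\Omega\otimes 1)=0$ because a form in $\Alt^0(\gO)\otimes\sym^f(\gI)$ (resp. $\Alt^\omega(\gO)\otimes\sym^0(\gI)$) vanishes on an even (resp. odd) argument --- is a clean and complete way to carry out that verification, with all signs collapsing as you say since the factor $\Omega\otimes 1$ has $f=0$. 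The only alternative the authors likely had in mind is direct evaluation of $\Omega\otimes F$ on a tuple with $X_\ze$ or $X_\un$ inserted first; your version is equivalent and arguably tidier.
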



\begin{rem}\label{3.1.11}\hfill
\begin{enumerate}
	
  \item If $\Omega\in \Alt^{\omega}(\gO)$ then $\io_{X}(\Omega)(X_1, \dots,
X_{\omega-1}) = \Omega(X,X_1, \dots, X_{\omega-1})$ for all $ X, X_1, \dots,
X_{\omega-1}\in \g_\ze$. That coincides with the previous definition of
$\io_X$ on $\Alt(\gO)$.

  \item Let $X$ be an element in a fixed Darboux basis of $\g_\un$ and $p\in
\g_\un^*$ be its dual form. By the Corollary II.1.52 in \cite{Gie04} one has:
\[\io_X(p^n)(X^{n-1}) = (-1)^n p^n(X^n) = (-1)^n(-1)^{n(n-1)/2}n!.\] Moreover,
$\dfrac{\partial p^n}{\partial p}(X^{n-1}) =
n(p^{n-1})(X^{n-1})=(-1)^{(n-1)(n-2)/2}n!$. It implies that
\[\io_X(p^n)(X^{n-1}) = -\dfrac{\partial p^n}{\partial p}(X^{n-1}).\] Since
each $F\in \sym^f(\gI)$ can be regarded as a polynomial in the variable $p$,
one has the following property: \[\io_X(F) = -\dfrac{\partial F}{\partial p},\
\forall\ F\in \sym(\gI).   \]

\end{enumerate}
\end{rem}

The super-derivations $\io_X$ play an important role in the description of
the space $\Ds(\Es(\g))$ (for details, see \cite{Gie04}). For instance, they
can used to express the super-derivation $\adpo(A)$ defined above:

\begin{prop}\label{3.1.12}
Fix an orthonormal basis $\{X_\ze^1, \dots, X_\ze^m\}$ of $\g_\ze$ and a Darboux basis $\Bc=\{X_\un^1, \dots, X_\un^n, Y_\un^1, \dots, Y_\un^n\}$ of $\g_\un$. Then the super $\ZZ \times \ZZ_2$-Poisson bracket on $\Es(\g)$ is given by:
\begin{eqnarray*}
\{A,A'\}& = & (-1)^{\omega+f+1}\sum^{m}_{j=1}\io_{X_\ze^j}(A)\wedge\io_{X_\ze^j}(A')
\\&&+(-1)^{\omega}\sum^{n}_{k=1}\left(\io_{X_\un^k}(A)\wedge\io_{Y_\un^k}(A') -\io_{Y_\un^k}(A)\wedge\io_{X_\un^k}(A')\right)
\end{eqnarray*}
for all $A\in \Alt^{\omega}(\gO) \otimes \sym^f(\gI)$ and $A' \in \Es(\g)$. 

\end{prop}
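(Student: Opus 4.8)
The plan is to reduce the identity to homogeneous decomposable arguments and then chase signs. Both sides of the asserted formula are bilinear in the pair $(A,A')$, and every element of $\Alt^{\omega}(\gO)\otimes\sym^{f}(\gI)$ is a sum of decomposables $\Omega\otimes F$ with $\Omega\in\Alt^{\omega}(\gO)$, $F\in\sym^{f}(\gI)$; moreover, using the $\ZZ\times\ZZ_2$-gradation of $\Es(\g)$, one may assume $A'$ is of the form $\Omega'\otimes F'$ with $\Omega'\in\Alt^{\omega'}(\gO)$, $F'\in\sym^{f'}(\gI)$. For such elements Definition \ref{3.1.4} reads
\[\{A,A'\}=(-1)^{f\omega'}\bigl(\{\Omega,\Omega'\}\otimes FF'+(\Omega\wedge\Omega')\otimes\{F,F'\}\bigr),\]
so it suffices to check that the first sum in the statement equals $(-1)^{f\omega'}\{\Omega,\Omega'\}\otimes FF'$ and the second sum equals $(-1)^{f\omega'}(\Omega\wedge\Omega')\otimes\{F,F'\}$.

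For the first sum, I would apply Lemma \ref{3.1.10}(1) to write $\io_{X_\ze^j}(A)=\io_{X_\ze^j}(\Omega)\otimes F$ and $\io_{X_\ze^j}(A')=\io_{X_\ze^j}(\Omega')\otimes F'$, then expand the super-exterior product of these two tensors; since $\io_{X_\ze^j}(\Omega')$ has $\Alt$-degree $\omega'-1$, this produces the sign $(-1)^{f(\omega'-1)}$. Summing over $j$ and using the basis-independent formula for the super Poisson bracket on $\Alt(\gO)$, namely $\sum_{j}\io_{X_\ze^j}(\Omega)\wedge\io_{X_\ze^j}(\Omega')=(-1)^{\omega+1}\{\Omega,\Omega'\}$, the prefactor $(-1)^{\omega+f+1}$ combines with $(-1)^{\omega+1}$ and with $(-1)^{f(\omega'-1)}$ to give exactly $(-1)^{f\omega'}$, as required.

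For the second sum, I would use Lemma \ref{3.1.10}(2) to write $\io_{X_\un^k}(A)=(-1)^{\omega}\Omega\otimes\io_{X_\un^k}(F)$, and similarly for the three other contractions, then expand the two super-exterior products in the summand. The $\Alt$-slots contribute $(-1)^{2\omega}(\Omega\wedge\Omega')=(\Omega\wedge\Omega')$ once the super-exterior product sign $(-1)^{(f-1)\omega'}$ is absorbed together with the $(-1)^{\omega'}$ from Lemma \ref{3.1.10}(2) applied to $A'$, leaving the overall factor $(-1)^{f\omega'}$ in front of the outer $(-1)^{\omega}$ of the statement. The $\sym$-slot contributes $\sum_{k}\bigl(\io_{X_\un^k}(F)\,\io_{Y_\un^k}(F')-\io_{Y_\un^k}(F)\,\io_{X_\un^k}(F')\bigr)$, and by Remark \ref{3.1.11}(2) we have $\io_{X_\un^k}(F)=-\partial F/\partial p_k$ and $\io_{Y_\un^k}(F)=-\partial F/\partial q_k$, so the two minus signs cancel in each product and this sum is precisely the Poisson bracket $\{F,F'\}$ on $\sym(\gI)$ computed in the Darboux basis $\Bc$. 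Adding the two contributions recovers Definition \ref{3.1.4} on decomposable homogeneous elements, and bilinearity then gives the identity for all $A\in\Alt^{\omega}(\gO)\otimes\sym^{f}(\gI)$ and $A'\in\Es(\g)$.

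The argument is entirely mechanical; the only step that demands care is the sign bookkeeping in the last two paragraphs, where one must track how the $\Alt$- and $\sym$-degrees change under each $\io$, and make the super-exterior product signs, the $(-1)^{\omega}$ from Lemma \ref{3.1.10}(2), and the $(-1)^{\omega+1}$ from the super Poisson bracket conspire to collapse to the single clean factor $(-1)^{f\omega'}$ of Definition \ref{3.1.4}. No conceptual obstruction is anticipated.
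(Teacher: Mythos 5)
Your proposal is correct and follows essentially the same route as the paper's proof: reduce to decomposable $A=\Omega\otimes F$, $A'=\Omega'\otimes F'$, convert the contractions of $A$ and $A'$ into contractions of $\Omega$, $F$ via Lemma \ref{3.1.10} and Remark \ref{3.1.11}(2), and check that the accumulated signs collapse to the factor $(-1)^{f\omega'}$ of Definition \ref{3.1.4}. The sign bookkeeping you outline checks out.
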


\begin{proof} Let $A = \Omega \otimes F\in \Alt^{\omega}(\gO) \otimes
\sym^f(\gI)$ and $A' = \Omega' \otimes F'\in \Alt^{\omega'}(\gO) \otimes
\sym^{f'}(\gI)$. The super $\ZZ \times \ZZ_2$-Poisson bracket of $A$ and $A'$
is defined by: \[ \{ A, A' \} = (-1)^{f \omega'} \left( \{ \Omega, \Omega'\}
\otimes F F' + (\Omega \wedge \Omega')   \otimes \{F,F' \} \right).\]

By the definition of the super Poisson bracket on $\Alt(\gO)$ combined with Lemma \ref{3.1.10} (1), one has
\begin{eqnarray*}
&\{ \Omega, \Omega'\}\otimes FF'& = (-1)^{\omega+1}\sum^{m}_{j=1}\left(\io_{X_\ze^j}(\Omega)\wedge\io_{X_\ze^j}(\Omega')\right)\otimes FF'
\\&& = (-1)^{f(\omega'-1)+\omega+1}\sum^{m}_{j=1}\left(\io_{X_\ze^j}(\Omega)\otimes F\right)\wedge\left(\io_{X_\ze^j}(\Omega')\otimes F'\right)
\\ &&= (-1)^{f\omega'+\omega+f+1}\sum^{m}_{j=1}\io_{X_\ze^j}(A)\wedge\io_{X_\ze^j}(A').
\end{eqnarray*}

Let $\{p_1, \dots, p_n, q_1, \dots, q_n\}$ be the dual basis of $\Bc$. By Remark \ref{3.1.11} (2), the Poisson bracket on $\sym(\gI)$ can be expressed by:
\[ \{F,F'\} = \sum_{k=1}^n \left( \dfrac{\partial F}{\partial p_k}
\dfrac{\partial F'}{\partial q_k} - \dfrac{\partial F}{\partial q_k}
\dfrac{\partial F'}{\partial p_k} \right)=\sum^{n}_{k=1}\left(\io_{X_\un^k}(F)\io_{Y_\un^k}(F')-\io_{Y_\un^k}(F)\io_{X_\un^k}(F')\right).
\]
Combined with Lemma \ref{3.1.10} (2), we obtain
\[ (\Omega \wedge \Omega')   \otimes \{F,F' \} = (\Omega \wedge
\Omega')\otimes \sum^{n}_{k=1}\left(\io_{X_\un^k}(F)\io_{Y_\un^k}(F')-\io_{Y_\un^k}(F)\io_{X_\un^k}(F')\right)\] \[= (-1)^{(f-1)\omega'}\sum^{n}_{k=1}\left(
\left(\Omega\otimes\io_{X_\un^k}(F)\right)\wedge\left(\Omega'\otimes\io_{Y_\un
^k}(F')\right) - \left(\Omega\otimes\io_{Y_\un^k}(F)\right)\wedge\left(\Omega'
\otimes\io_{X_\un^k}(F')\right)\right)\]   \[=(-1)^{f\omega'+\omega}\sum^{n}_{
k=1}\left(\io_{X_\un^k}(A)\wedge\io_{Y_\un^k}(A')-\io_{Y_\un^k}(A)\wedge\io_{X
_\un^k}(A')\right). \]  The result follows.

\end{proof}

Since the bilinear form $B$ is non-
degenerate and even, then there is an (even) isomorphism $\phi$ from $\g$ onto
$\g^*$ defined by $\phi(X)(Y) = B(X,Y)$, for all $X$, $Y\in\g$.

\begin{cor}\label{3.1.13}

The following expressions:

\begin{enumerate}

  \item $ \{ \alpha, A \} = \io_{\phi^{-1}(\alpha)}(A)$,

	\item $ \{ \alpha, \alpha' \} = B(\phi^{-1}(\alpha), \phi^{-1}(\alpha'))$, 

\end{enumerate}

hold for all $\alpha, \alpha'\in \g^*,\ A\in \Es(\g)$.
\end{cor}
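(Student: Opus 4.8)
The plan is to derive part (1) directly from the explicit formula in Proposition~\ref{3.1.12}, taking $A=\alpha\in\g^*$, and then to read off part (2) as the special case $A'=\alpha'\in\g^*$ of (1). Since both sides of (1) are linear in $\alpha$, I may assume $\alpha$ homogeneous, i.e. either $\alpha\in\g_\ze^*=\Alt^1(\gO)$ or $\alpha\in\g_\un^*=\sym^1(\gI)$; in either case the vector $\phi^{-1}(\alpha)$ will automatically turn out to lie in $\gO$, resp. in $\gI$. Fix once and for all the orthonormal basis $\{X_\ze^1,\dots,X_\ze^m\}$ of $\gO$ and the Darboux basis $\{X_\un^1,\dots,X_\un^n,Y_\un^1,\dots,Y_\un^n\}$ of $\gI$ used in Proposition~\ref{3.1.12}.

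For $\alpha\in\Alt^1(\gO)$ we have $\omega=1$, $f=0$, so the $\gO$-sum in Proposition~\ref{3.1.12} carries the coefficient $(-1)^{\omega+f+1}=+1$, while Lemma~\ref{3.1.10}(2) together with $\io_{X_\un}(1)=0$ makes every term of the $\gI$-sum vanish. As $\io_{X_\ze^j}(\alpha)=\alpha(X_\ze^j)\in\CC$ by Remark~\ref{3.1.11}(1), Proposition~\ref{3.1.12} collapses to $\{\alpha,A'\}=\sum_{j=1}^m\alpha(X_\ze^j)\,\io_{X_\ze^j}(A')$. Orthonormality gives $\phi^{-1}(\alpha)=\sum_j\alpha(X_\ze^j)X_\ze^j$, and the linearity of $X\mapsto\io_X$ rewrites the right-hand side as $\io_{\phi^{-1}(\alpha)}(A')$. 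This is (1) in the even case.

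For $\alpha\in\sym^1(\gI)$ we have $\omega=0$, $f=1$. Then Lemma~\ref{3.1.10}(1) and $\io_{X_\ze}(1)=0$ kill the $\gO$-sum, while applying the identity $\io_X(A)(\dots)=(-1)^{xb}A(X,\dots)$ to $A=\alpha\in\Es^{(1,\um)}(\g)$ with $X\in\gI$ gives the scalars $\io_{X_\un^k}(\alpha)=-\alpha(X_\un^k)$ and $\io_{Y_\un^k}(\alpha)=-\alpha(Y_\un^k)$. Here the $\gI$-sum comes with coefficient $(-1)^\omega=+1$, so Proposition~\ref{3.1.12} gives $\{\alpha,A'\}=\sum_{k=1}^n\bigl(\alpha(Y_\un^k)\,\io_{X_\un^k}(A')-\alpha(X_\un^k)\,\io_{Y_\un^k}(A')\bigr)=\io_Z(A')$, where $Z=\sum_k\bigl(\alpha(Y_\un^k)X_\un^k-\alpha(X_\un^k)Y_\un^k\bigr)$. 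A one-line verification using $B(X_\un^i,X_\un^j)=B(Y_\un^i,Y_\un^j)=0$ and $B(X_\un^i,Y_\un^j)=\delta_{ij}$ shows $B(Z,Y)=\alpha(Y)$ for every $Y\in\gI$, i.e. $Z=\phi^{-1}(\alpha)$. This finishes the proof of (1).

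To get (2), put $A'=\alpha'\in\g^*$ in (1). Then $\io_{\phi^{-1}(\alpha)}(\alpha')$ is a homogeneous element of $\Es^{(0,*)}(\g)$, hence $0$ or a scalar, and evaluating the interior product on the empty argument list yields $\io_{\phi^{-1}(\alpha)}(\alpha')=(-1)^{zb'}\,\alpha'(\phi^{-1}(\alpha))$, where $z$ and $b'$ are the $\ZZ_2$-degrees of $\phi^{-1}(\alpha)$ and $\alpha'$. By definition of $\phi$ one has $\alpha'(\phi^{-1}(\alpha))=B(\phi^{-1}(\alpha'),\phi^{-1}(\alpha))$, and a comparison with $B(\phi^{-1}(\alpha),\phi^{-1}(\alpha'))$ case by case does the job: both sides vanish when $z\neq b'$ since $B(\gO,\gI)=0$; when $z=b'=\zero$ the sign is $+1$ and $B$ is symmetric on $\gO$; when $z=b'=\um$ the sign is $-1$ and $B$ is skew-symmetric on $\gI$. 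The computations are mechanical throughout; the only point demanding attention is the odd case of (1), where the sign $(-1)^{xb}$ in the definition of $\io_X$, the sign $(-1)^\omega$ on the $\gI$-part of Proposition~\ref{3.1.12}, and the skew-symmetry of $B$ on $\gI$ must be tracked simultaneously --- but every sign is forced. (Once (2) is available, one can alternatively obtain (1) by noting that $\{\alpha,\cdot\}$ and $\io_{\phi^{-1}(\alpha)}$ are super-derivations of $\Es(\g)$ of the same degree $(-1,b)$, hence equal as soon as they agree on the algebra generators $\g_\ze^*\oplus\g_\un^*$, which is exactly (2).)
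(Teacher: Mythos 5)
Your proposal is correct and follows essentially the same route as the paper: part (1) is obtained by specializing Proposition \ref{3.1.12} to a degree-one first argument (the paper checks this on the dual basis forms $\phi(X_\ze^i)$, $\phi(X_\un^l)$, $\phi(Y_\un^l)$ and extends by linearity, while you treat a general homogeneous $\alpha$ directly — the same computation), and part (2) is the evaluation $\io_{\phi^{-1}(\alpha)}(\alpha')=(-1)^{xx'}\alpha'(\phi^{-1}(\alpha))$ combined with the supersymmetry of $B$, exactly as in the paper. All signs check out, including the odd case where $\io_{X_\un^k}(\alpha)=-\alpha(X_\un^k)$ and the identification $Z=\phi^{-1}(\alpha)$.
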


\begin{proof}\hfill
\begin{enumerate}
	\item We apply Proposition \ref{3.1.12}, respectively for $\alpha = (X_\ze^i)^* = \phi(X_\ze^i)$, $i=1,\dots, m$, $\alpha = (Y_\un^l)^* = \phi(X_\un^l)$ and $\alpha = (-X_\un^l)^* = \phi(Y_\un^l)$, $l=1,\dots, n$ to obtain the result.
	\item Let $\alpha\in\g_x^*,\ \alpha'\in\g_{x'}^*$ be homogeneous forms in $\g^*$, one has
	\begin{eqnarray*}
	\hspace{1.0cm} \{\alpha,\alpha'\} = \io_{\phi^{-1}(\alpha)}(\alpha')=(-1)^{xx'}\alpha'(\phi^{-1}(\alpha)) 
	= (-1)^{xx'}B(\phi^{-1}(\alpha'),\phi^{-1}(\alpha)) \\ = B(\phi^{-1}(\alpha), \phi^{-1}(\alpha')).
	\end{eqnarray*}
\end{enumerate}
\end{proof}

In this section, Proposition \ref{3.1.12} and Corollary \ref{3.1.13} are enough for our purpose. But as a consequence of Lemma 6.9 in \cite{PU07}, one has a more general result of Proposition \ref{3.1.12} as follows:

\begin{prop}\label{3.1.14} 
Let $\{X_\ze^1, \dots, X_\ze^m\}$ be a basis of
$\g_\ze$ and $\{\alpha_1, \dots, \alpha_m\}$ its dual basis. Let $\{Y_\ze^1,
\dots, Y_\ze^m\}$ be the basis of $\g_\ze$ defined by $Y_\ze^i =
\phi^{-1}(\alpha_i)$. Set $\Bc=\{X_\un^1, \dots, X_\un^n, Y_\un^1, \dots,
Y_\un^n\}$ be a Darboux basis of $\g_\un$. Then the super $\ZZ \times
\ZZ_2$-Poisson bracket on $\Es(\g)$ is given by
\begin{eqnarray*}
\{A,A'\} &=& (-1)^{\omega+f+1}\sum^{m}_{i,j=1}B(Y_\ze^i,Y_\ze^j)\io_{X_\ze^i}(A)\wedge\io_{X_\ze^j}(A') \\ &&+(-1)^{\omega}\sum^{n}_{k=1}\left(\io_{X_\un^k}(A)\wedge\io_{Y_\un^k}(A')-\io_{Y_\un^k}(A)\wedge\io_{X_\un^k}(A')\right)
\end{eqnarray*}
for all $A \in \Alt^{\omega}(\gO) \otimes \sym^f(\gI)$ and $A' \in \Es(\g)$. 
\end{prop}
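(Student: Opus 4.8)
The plan is to bootstrap from Proposition~\ref{3.1.12}. First observe that the second sum in the asserted formula (the one over the Darboux basis $\Bc$ of $\g_\un$) is literally the sum occurring in Proposition~\ref{3.1.12}: it makes no reference to the basis of $\g_\ze$ at all. So the whole task reduces to showing that, when the orthonormal basis of $\gO$ in Proposition~\ref{3.1.12} is replaced by an arbitrary basis $\{X_\ze^1,\dots,X_\ze^m\}$, the first sum acquires exactly the Gram-type weights $B(Y_\ze^i,Y_\ze^j)$ with $Y_\ze^i=\phi^{-1}(\alpha_i)$. Equivalently, it suffices to establish the corresponding change-of-basis formula for the super Poisson bracket on $\Alt(\gO)$ alone:
\[
\{\Omega,\Omega'\}=(-1)^{\omega+1}\sum_{i,j=1}^m B(Y_\ze^i,Y_\ze^j)\,\io_{X_\ze^i}(\Omega)\wedge\io_{X_\ze^j}(\Omega'),\qquad\Omega\in\Alt^{\omega}(\gO),\ \Omega'\in\Alt(\gO),
\]
which is precisely Lemma~6.9 of \cite{PU07}.

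For completeness I would re-derive this last formula by a one-line change of basis. Fix an orthonormal basis $\{Z_1,\dots,Z_m\}$ of $\gO$ and write $Z_k=\sum_i a_{ik}X_\ze^i$; since $X\mapsto\io_X$ is linear, $\io_{Z_k}=\sum_i a_{ik}\io_{X_\ze^i}$ on $\Alt(\gO)$. Substituting this into the definition $\{\Omega,\Omega'\}=(-1)^{\omega+1}\sum_k\io_{Z_k}(\Omega)\wedge\io_{Z_k}(\Omega')$ produces the weight $\sum_k a_{ik}a_{jk}=(A\transp{A})_{ij}$, where $A=(a_{ik})$. Orthonormality of $\{Z_k\}$ reads $\transp{A}GA=\Id$ with $G=\big(B(X_\ze^i,X_\ze^j)\big)_{i,j}$, hence $A\transp{A}=G^{-1}$; on the other hand $\phi(Y_\ze^i)=\alpha_i$ gives $B(Y_\ze^i,X_\ze^l)=\alpha_i(X_\ze^l)=\delta_{il}$, which forces the matrix of $\{Y_\ze^i\}$ expressed in $\{X_\ze^l\}$ to be $G^{-1}$, whence $B(Y_\ze^i,Y_\ze^j)=(G^{-1})_{ij}$. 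Thus the weight equals $B(Y_\ze^i,Y_\ze^j)$, as needed.

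With this identity in hand, I would copy the proof of Proposition~\ref{3.1.12} almost verbatim. Write $A=\Omega\otimes F\in\Alt^{\omega}(\gO)\otimes\sym^f(\gI)$, $A'=\Omega'\otimes F'\in\Alt^{\omega'}(\gO)\otimes\sym^{f'}(\gI)$, and expand $\{A,A'\}=(-1)^{f\omega'}\big(\{\Omega,\Omega'\}\otimes FF'+(\Omega\wedge\Omega')\otimes\{F,F'\}\big)$ by Definition~\ref{3.1.4}. For the first summand, insert the displayed $\Alt(\gO)$-formula and use Lemma~\ref{3.1.10}(1) to recombine $\io_{X_\ze^i}(\Omega)\otimes F=\io_{X_\ze^i}(A)$; because the scalars $B(Y_\ze^i,Y_\ze^j)$ pass freely through $\otimes$ and $\wedge$, the sign bookkeeping is identical to that in Proposition~\ref{3.1.12} and the summand becomes $(-1)^{\omega+f+1}\sum_{i,j}B(Y_\ze^i,Y_\ze^j)\io_{X_\ze^i}(A)\wedge\io_{X_\ze^j}(A')$. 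The second summand involves only $\g_\un$, so it is unchanged: Remark~\ref{3.1.11}(2) rewrites $\{F,F'\}$ via $\io_{X_\un^k},\io_{Y_\un^k}$, Lemma~\ref{3.1.10}(2) reinserts $\Omega\wedge\Omega'$, and one recovers the $\g_\un$-sum with sign $(-1)^{\omega}$ exactly as before. Summing the two contributions gives the claim.

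The only genuinely new ingredient relative to Proposition~\ref{3.1.12} is the identification $A\transp{A}=\big(B(Y_\ze^i,Y_\ze^j)\big)_{i,j}$ above (equivalently, Lemma~6.9 of \cite{PU07}); every sign manipulation is an unmodified copy of the earlier one, since attaching constants $B(Y_\ze^i,Y_\ze^j)$ and summing over a second index does not interact with the $\ZZ\times\ZZ_2$-degrees. Hence I do not expect a real obstacle: granting the change-of-basis lemma for $\Alt(\gO)$, the proposition follows by a line-by-line repetition of the proof of Proposition~\ref{3.1.12}, with the single index $j$ replaced by a double index $(i,j)$ weighted by $B(Y_\ze^i,Y_\ze^j)$.
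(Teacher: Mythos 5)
Your proposal is correct and follows exactly the route the paper intends: the paper states Proposition \ref{3.1.14} as a consequence of Lemma 6.9 in \cite{PU07} together with the argument of Proposition \ref{3.1.12}, and your proof simply supplies the details, including a correct derivation of the change-of-basis identity $\sum_k a_{ik}a_{jk}=(G^{-1})_{ij}=B(Y_\ze^i,Y_\ze^j)$. Nothing further is needed.
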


\subsection{Super-antisymmetric linear maps}
Consider the vector space
\[\Ec= \bigoplus_{n\in\ZZ}\limits\Ec^n,\]
where $\Ec^n = \{0\}$ if $n\leq -2$, $\Ec^{-1} = \g$ and $\Ec^n$ is the space of super-antisymmetric $n+1$-linear mappings from $\g^{n+1}$ onto $\g$.  Each of the subspaces $\Ec^n$ is $\ZZ_2$-graded then the space $\Ec$ is $\ZZ\times\ZZ_2$-graded by 
\[\Ec = \bigoplus^{n\in\ZZ}_{f\in\ZZ_2}\limits\Ec^n_f.\]

There is a natural isomorphism between the spaces $\Ec$ and
$\Es(\g)\otimes\g$. Moreover, $\Ec$ is a graded Lie algebra, called
the {\em graded Lie algebra} of $\g$. It is isomorphic to
$\Ds(\Es(\g))$ by the graded Lie algebra isomorphism $D$ such that if
$F=\Omega\otimes X\in \Ec^n_{\omega +x}$ then $D_F =
-(-1)^{x\omega}\Omega\wedge \iota_X\in \Ds^n_{\omega
  +x}(\Es(\g))$. For more details on the Lemma below, see for
instance, \cite{BP89} and \cite{Gie04}.

\begin{lem}\label{3.1.15}\hfill

Fix $F\in \Ec^1_\ze$, denote by $d=D_F$ and define the product $[X,Y]
= F(X,Y)$, for all $X,Y\in \g$. Then one has
\begin{enumerate}
	\item $d(\phi)(X,Y) = -\phi([X,Y])$, for all $X,Y\in \g,\ \phi\in \g^*.$

	\item The product $[~,~]$ becomes a Lie super-bracket if and only
       if $d^2 = 0$. In this case, $d$ is called a {\em super-exterior
         differential} of $\Es(\g)$.
\end{enumerate}
\end{lem}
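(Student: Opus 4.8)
The plan is to prove Lemma \ref{3.1.15} by transporting the standard Chevalley--Eilenberg picture through the graded Lie algebra isomorphism $\Ec \cong \Ds(\Es(\g))$, $F \mapsto D_F$, that was just recalled.

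\medskip

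\noindent\textbf{Part (1).} First I would compute $d(\phi)$ directly for $\phi \in \g^* = \Es^{(1,\zero)}(\g)$ (or more precisely for $\phi$ of pure $\ZZ_2$-degree, the two parities being handled the same way). Writing $F = \sum \Omega_i \otimes X_i$ and using the explicit formula $D_F = -(-1)^{x\omega}\,\Omega \wedge \iota_X$ on each homogeneous summand, together with Lemma \ref{3.1.10} and Remark \ref{3.1.11}(1) to evaluate $\iota_{X_i}(\phi) = \pm\,\phi(X_i)$ as a scalar, one gets that $d(\phi)$ is a $2$-form (an element of $\Es^{(2,\zero)}(\g) \oplus \Es^{(2,\um)}(\g)$, i.e. the space of super-antisymmetric bilinear forms). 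Evaluating this $2$-form on a pair $X,Y$ and matching terms against the definition $F(X,Y) = [X,Y]$, the sign bookkeeping collapses to $d(\phi)(X,Y) = -\phi(F(X,Y)) = -\phi([X,Y])$. This is a pure computation; the only care needed is the parity signs, which are already pinned down by the degree conventions stated before the Lemma, so I would keep it short.

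\medskip

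\noindent\textbf{Part (2).} Here the strategy is: extend the formula in (1) to all of $\Es(\g)$ and recognise $d$ as the Chevalley--Eilenberg differential, whose square is the Jacobiator. Concretely, since $d = D_F$ is a super-derivation of degree $(1,\zero)$ of the algebra $(\Es(\g),\wedge)$, it is determined by its values on generators, namely on $\g^* = \Es^{(1,\zero)}(\g)$ together with the (degree zero) scalars; so (1) already characterises $d$. Then $d^2 = \tfrac12[d,d]$ is again a super-derivation (of degree $(2,\zero)$), so $d^2 = 0$ if and only if $d^2$ vanishes on $\g^*$. A direct expansion of $d^2(\phi)(X,Y,Z)$ using (1) twice — $d^2(\phi)$ is a super-antisymmetric trilinear form — yields, up to an overall sign, the expression
\[
\phi\big([X,[Y,Z]] - [[X,Y],Z] - (-1)^{xy}[Y,[X,Z]]\big),
\]
i.e. $d^2(\phi)(X,Y,Z) = \pm\,\phi\big(\mathrm{Jac}(X,Y,Z)\big)$ where $\mathrm{Jac}$ is the graded Jacobiator of the bracket $F$. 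Since $\phi$ ranges over all of $\g^*$ and $B$ (hence $\phi$) separates points, $d^2 = 0$ is equivalent to $\mathrm{Jac} \equiv 0$, which together with the super-antisymmetry of $F$ (built into $\Ec^1$) is exactly the statement that $[\cdot,\cdot]$ is a Lie super-bracket.

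\medskip

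\noindent\textbf{Main obstacle.} The conceptual content is routine once the dictionary $\Ec \cong \Ds(\Es(\g))$ is in hand; the real work is the sign/parity accounting, in two places: (a) checking that $d^2$ restricted to $\g^*$ reproduces the graded Jacobiator with the correct $(-1)^{xy}$ in the last term (and not, say, a wrong sign that would make $d^2=0$ equivalent to a different identity), and (b) justifying the reduction "a super-derivation vanishing on generators vanishes", which for $\Es(\g) = \Alt(\gO) \otimes \sym(\gI)$ requires noting that $\g_\ze^*$ and $\g_\un^*$ together generate $\Es(\g)$ as an algebra, so a derivation is determined by its restriction to $\g^* = \g_\ze^* \oplus \g_\un^*$. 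I would handle (b) by this one-line generation remark and reference \cite{NR66}/\cite{BP89}, and (a) by carefully applying the degree-$(1,\zero)$ super-Leibniz rule to $d(\phi \wedge \psi)$ when re-deriving (1) in the form needed for the second application of $d$. Everything else is bookkeeping that I would not write out in full.
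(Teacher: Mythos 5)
Your argument is correct; note that the paper itself states Lemma \ref{3.1.15} without proof, delegating it to \cite{BP89} and \cite{Gie04}, so there is no in-text proof to compare against — what you have written is a faithful reconstruction of the standard Chevalley--Eilenberg argument those references contain. Part (1) is exactly the direct computation one expects from $D_F=-(-1)^{x\omega}\Omega\wedge\iota_X$. For part (2), your route (extend $d$ to all of $\Es(\g)$ by the Leibniz rule, expand $d^2(\phi)(X,Y,Z)$, identify the Jacobiator) works, but it forces you to re-derive a Cartan-type formula for $d$ on $2$-forms before applying $d$ the second time, which is where all the delicate sign accounting lives. A shorter path, using only what the paper has already set up, is to exploit that $D:\Ec\to\Ds(\Es(\g))$ is an injective graded Lie algebra homomorphism: then $2d^2=[d,d]=[D_F,D_F]=D_{[F,F]}$, so $d^2=0$ if and only if $[F,F]=0$ in $\Ec$, and the vanishing of the Nijenhuis--Richardson self-bracket of an even super-antisymmetric bilinear map is precisely the graded Jacobi identity. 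This trades your trilinear-form expansion for one evaluation of the bracket on $\Ec$, and confines the sign issues to a single already-standard formula. Your reduction in (b) — a super-derivation vanishing on $\g_\ze^*\cup\g_\un^*$ and on scalars vanishes on $\Es(\g)$ — is sound and is indeed needed in either version.
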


\subsection{Quadratic Lie superalgebras} The construction of graded Lie
algebras and the super $\ZZ \times \ZZ_2$-Poisson bracket above can be applied
to the theory of quadratic Lie superalgebras. This later is regarded as a
graded version of the quadratic Lie algebra case and we obtain then similar
results.

\begin{defn}
A {\em quadratic} Lie superalgebra $(\g,B)$ is a $\ZZ_2$-graded vector
space $\g$ equipped with a non-degenerate even supersymmetric bilinear
form $B$ and a Lie superalgebra structure such that $B$ is invariant,
i.e. $B([X,Y],Z) = B(X,[Y,Z])$, for all $X$, $Y$, $Z \in \g$.
\end{defn}

\begin{prop}\label{3.1.17}
Let $(\g,B)$ be a quadratic Lie superalgebra and define a trilinear form $I$ on $\g$ by
\[I(X,Y,Z) = B([X,Y], Z),\ \forall\ X,Y,Z\in \g.\]
Then one has
\begin{enumerate}
	\item $I\in \Es^{(3,\zero)}(\g) = \Alt^3(\g_\ze)\oplus\left(\Alt^1(\gO) \otimes 
  \sym^2(\gI) \right).$
	\item $d = -\adpo(I).$
	\item $\{I,I\} = 0.$
\end{enumerate}
\end{prop}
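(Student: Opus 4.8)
The plan is to prove the three items in order, since (2) feeds directly into (3). For item (1), I would decompose $I$ according to the $\ZZ_2$-grading. Using that $B$ is even and $B(\g_\ze,\g_\un)=0$, together with the fact that $[\g_\ze,\g_\ze]\subseteq\g_\ze$, $[\g_\ze,\g_\un]\subseteq\g_\un$, $[\g_\un,\g_\un]\subseteq\g_\ze$, one checks that $I(X,Y,Z)=B([X,Y],Z)$ vanishes unless the number of odd arguments among $X,Y,Z$ is $0$ or $2$. The case of zero odd arguments gives a form in $\Alt^3(\g_\ze)$ (it is alternating by skew-symmetry of the bracket on $\g_\ze$ and invariance of $B$), and the case of exactly two odd arguments, say $Y,Z\in\g_\un$, gives a tensor that is linear in the even slot and, using that the bracket $\g_\un\times\g_\un\to\g_\ze$ is \emph{symmetric} and $B$ is skew on $\g_\un$, lands in $\Alt^1(\g_\ze)\otimes\sym^2(\g_\un)$. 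One must check the symmetry/antisymmetry bookkeeping carefully: the super-antisymmetry of an element of $\Es^{(3,\zero)}(\g)$ is precisely what the invariance of $B$ plus the super-Jacobi-free bracket symmetries deliver. This is the ``which components survive'' computation and is routine once the parities are tracked.

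For item (2), I would use Lemma \ref{3.1.15}(1): for the super-exterior differential $d=D_F$ associated to the bracket $F(X,Y)=[X,Y]$, one has $d(\phi)(X,Y)=-\phi([X,Y])$ for $\phi\in\g^*$. On the other hand, by Corollary \ref{3.1.13}(1), $\adpo(I)(\phi)=\{I,\phi\}=\pm\{\phi,I\}=\pm\io_{\phi^{-1}(\phi)}(I)$, and evaluating $\io_{\phi^{-1}(\phi)}(I)$ on a pair $X,Y$ recovers $I(\phi^{-1}(\phi),X,Y)=B([\phi^{-1}(\phi),X],Y)$, which by invariance of $B$ equals $B(\phi^{-1}(\phi),[X,Y])=\phi([X,Y])$. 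Comparing signs gives $d=-\adpo(I)$ on $\g^*$. Since both $d$ and $\adpo(I)$ are super-derivations of $\Es(\g)$ (the former by Lemma \ref{3.1.15}, the latter by Proposition \ref{3.1.5}) and $\Es(\g)$ is generated as an algebra by $\g^*=\Es^1(\g)$ together with scalars, an agreement on generators forces agreement everywhere. The only subtlety is to confirm that $d$ and $\adpo(I)$ have the same degree, namely $(1,\zero)$: indeed $I\in\Es^{(3,\zero)}(\g)$ so $\adpo(I)$ has degree $(3-2,\zero)=(1,\zero)$, matching $d$.

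Item (3) is then nearly immediate. By Lemma \ref{3.1.15}(2), the bracket being a genuine Lie super-bracket is equivalent to $d^2=0$. Using (2) and the fact that $\adpo$ is a graded Lie algebra homomorphism from $\Es(\g)$ to $\gl(\Es(\g))$ (established just before Definition 1.6 in the excerpt: $\adpo(\{A,A'\})=[\adpo(A),\adpo(A')]$), we get
\[
0 = d^2 = (\adpo(I))^2 = \tfrac12\,[\adpo(I),\adpo(I)] = \tfrac12\,\adpo(\{I,I\}),
\]
where $[\adpo(I),\adpo(I)]=2(\adpo(I))^2$ because $I$ has total degree $3+\zero$, which is odd, so the graded commutator of $\adpo(I)$ with itself is twice its square. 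Hence $\adpo(\{I,I\})=0$, i.e. $\{I,I\}$ lies in the center of the Lie algebra action. Since $\{I,I\}\in\Es^{(4,\zero)}(\g)$ and $\adpo$ is injective on $\Es(\g)$ in the relevant degree range — more precisely, if $\adpo(A)=0$ then $\{A,\alpha\}=0$ for all $\alpha\in\g^*$, and by Corollary \ref{3.1.13}(1) this says $\io_{\phi^{-1}(\alpha)}(A)=0$ for all $\alpha$, forcing $A=0$ when $\deg(A)\ge1$ — we conclude $\{I,I\}=0$.

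I expect the main obstacle to be item (1): getting the parity bookkeeping exactly right so that the surviving components of $I$ are correctly identified as living in $\Alt^3(\g_\ze)\oplus(\Alt^1(\g_\ze)\otimes\sym^2(\g_\un))$ rather than, say, in a piece involving $\sym^2$ of the wrong space or with the wrong symmetry type. Once (1) is in place, (2) is a generators-and-derivations argument and (3) is a one-line consequence of $d^2=0$ together with the homomorphism property of $\adpo$ and its injectivity in positive degree.
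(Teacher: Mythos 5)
Your proposal is correct and follows essentially the same route as the paper: (1) by tracking parities of $B$ and the bracket, (2) by checking that $d$ and $-\adpo(I)$ agree on $\g^*$ (the paper does this via the explicit orthonormal/Darboux basis formula of Proposition \ref{3.1.12}, you do it basis-free via Corollary \ref{3.1.13}, which amounts to the same computation) and invoking that both are super-derivations of the same degree, and (3) by $\adpo(\{I,I\})=[d,d]=2d^2=0$ together with faithfulness of $\adpo$ on positive-degree elements via contraction with all of $\g$. The sign bookkeeping you flag as the main risk does check out.
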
 

\begin{proof} 
The assertion (1) follows clearly from the properties of $B$. Note that 
$B([\g_\ze, \g_\ze],\g_\un) = B([\g_\un, \g_\un],\g_\un)=0$.

For (2), fix an orthonormal basis $\{X_\ze^1, \dots, X_\ze^m\}$ of $\g_\ze$ and a Darboux basis $\{X_\un^1, \dots, X_\un^n,\linebreak Y_\un^1, \dots, Y_\un^n\}$ of $\g_\un$. Let $\{\alpha_1,\dots,\alpha_m\}$ and $\{\beta_1,\dots,\beta_n,\gamma_1,\dots,\gamma_n\}$ be their dual basis, respectively. Then for all $X,Y\in \g, \ i=1,...,m, \ l=1,...,n$ we have:
\begin{eqnarray*}
& &\adpo(I)(\alpha_i)(X,Y)\\ &&=\left( \sum^{m}_{j=1}\limits\io_{X_\ze^j}(I)\wedge\io_{X_\ze^j}(\alpha_i)-\sum^{n}_{k=1}\limits\left(\io_{X_\un^k}(I)\wedge\io_{Y_\un^k}(\alpha_i)-\io_{Y_\un^k}(I)\wedge\io_{X_\un^k}(\alpha_i)\right)\right)(X,Y)
	\\ &&=\left( \sum^{m}_{j=1}\io_{X_\ze^j}(I)\wedge\io_{X_\ze^j}(\alpha_i)\right)(X,Y)=\left(\io_{X_\ze^i}(I)\wedge\io_{X_\ze^i}(\alpha_i)\right)(X,Y)
	\\ &&= B(X_\ze^i,[X,Y]) = \alpha_i([X,Y]) = -d(\alpha_i)(X,Y),
\end{eqnarray*}

\begin{eqnarray*}
& &\adpo(I)(\beta_l)(X,Y)\\ &&=\left( \sum^{m}_{j=1}\limits\io_{X_\ze^j}(I)\wedge\io_{X_\ze^j}(\beta_l)-\sum^{n}_{k=1}\limits\left(\io_{X_\un^k}(I)\wedge\io_{Y_\un^k}(\beta_l)-\io_{Y_\un^k}(I)\wedge\io_{X_\un^k}(\beta_l)\right)\right)(X,Y)
	\\ &&= \left(\sum^{n}_{k=1}\io_{Y_\un^k}(I)\wedge\io_{X_\un^k}(\beta_l)\right)(X,Y)= \left(\io_{Y_\un^l}(I)\wedge\io_{X_\un^l}(\beta_l)\right)(X,Y)
	\\ &&= -\io_{Y_\un^l}(I)(X,Y) = -B(Y_\un^l,[X,Y]) = \beta_l([X,Y]) = -d(\beta_l)(X,Y).
\end{eqnarray*}


Similarly, $\adpo(I)(\gamma_l) = -d (\gamma_l)$ for $1 \leq l \leq n$.
Therefore, $d = -\adpo(I)$.
	
  Moreover, $\adpo(\{I,I\}) = [\adpo(I),\adpo(I)]=[d,d] = 2d^2 = 0$.
  Therefore, for all $1\leq i\leq m$, $1\leq j,k\leq n$ one has
  $\{\alpha_i,\{I,I\}\}$ = $\{\beta_j,\{I,I\}\}$ = $\{\gamma_k,\{I,I\}\} = 0$.
  Those imply $\iota_X\left(\{I,I\}\right) = 0$ for all $X\in\g$ and hence, we
  obtain $\{I,I\} = 0$.

  \end{proof}

Conversely, let $\g$ be a quadratic $\ZZ_2$-graded vector space equipped with
a bilinear form $B$ and $I$ be an element in $ \Es^{(3,\zero)}(\g)$. Define $d
= -\adpo(I)$ then $d\in \Ds^1_\ze(\Es(\g))$. Therefore, $d^2 = 0$ if and only
if $\{I,I\}=0$. Let $F$ be the structure in $\g$ corresponding to $d$ by the
isomorphism $D$ in Lemma \ref{3.1.15}, one has

\begin{prop}\label{3.1.18}
$F$ becomes a Lie superalgebra structure if and only if $\{I,I\}=0$. In this case, with the notation $[X,Y]: = F(X,Y)$ one has:
\[I(X,Y,Z) = B([X,Y],Z),\ \forall\ X,Y,Z\in\g.\]
Moreover, the bilinear form $B$ is invariant.
\end{prop}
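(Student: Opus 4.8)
The plan is to run the argument of Proposition \ref{3.1.17} in reverse, exploiting the fact that every statement there was an equivalence at the level of the operator $d$ rather than a one-way implication. First I would record that $d = -\adpo(I)$ with $I \in \Es^{(3,\zero)}(\g)$ is a super-derivation of degree $(1,\zero)$ of $\Es(\g)$; this is immediate from the remark following Definition~2, since $\adpo(A)$ has degree $(a-2,b)$ for $A \in \Es^{(a,b)}(\g)$. Then $\adpo(\{I,I\}) = [\adpo(I),\adpo(I)] = [{-d},{-d}] = 2d^2$ by the graded Lie algebra homomorphism property of $\adpo$. Now the key point: since $\adpo(\alpha) = \io_{\phi^{-1}(\alpha)}$ by Corollary~\ref{3.1.13}(1) and the forms $\io_X$ for $X$ ranging over a homogeneous basis of $\g$ separate points of $\Es(\g)$ (an element annihilated by every $\io_X$ lies in $\Es^{(0,\ze)}(\g) = \CC$, and $\{I,I\} \in \Es^{(4,\ze)}(\g)$ has no constant term), we get that $\{I,I\} = 0$ if and only if $\adpo(\{I,I\}) = 0$, i.e. if and only if $d^2 = 0$. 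By Lemma~\ref{3.1.15}(2), $d^2 = 0$ is exactly the condition for the bracket $[X,Y] := F(X,Y)$ to satisfy the super-Jacobi identity, so $F$ is a Lie superalgebra structure if and only if $\{I,I\} = 0$.

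For the second assertion, assume $\{I,I\} = 0$ so that $[\cdot,\cdot]$ is a Lie super-bracket. By Lemma~\ref{3.1.15}(1) applied to this $d = D_F$, one has $d(\alpha)(X,Y) = -\alpha([X,Y])$ for all $\alpha \in \g^*$ and $X,Y \in \g$. On the other hand $d = -\adpo(I)$, so $\alpha([X,Y]) = \adpo(I)(\alpha)(X,Y) = \{I,\alpha\}(X,Y)$. Using Proposition~\ref{3.1.5}(1) to write $\{I,\alpha\} = \{\alpha,I\}$ (the sign is trivial since $\alpha$ has $\ZZ$-degree $1$ and $I$ has $\ZZ$-degree $3$, both even in $\ZZ_2$, giving $(-1)^{3\cdot 1 + \ze\cdot\ze}$... — I would instead simply invoke Corollary~\ref{3.1.13}(1)), we have $\{\alpha,I\} = \io_{\phi^{-1}(\alpha)}(I)$, whence $\alpha([X,Y]) = \pm\io_{\phi^{-1}(\alpha)}(I)(X,Y) = \pm I(\phi^{-1}(\alpha),X,Y)$ with the sign tracked through the definition of $\io_X$ on $\Es(\g)$. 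Specializing $\alpha = \phi(Z)$ for $Z \in \g$ gives $B(Z,[X,Y]) = \pm I(Z,X,Y)$; checking the total-degree bookkeeping on the homogeneous pieces $\Alt^3(\g_\ze)$ and $\Alt^1(\g_\ze)\otimes\sym^2(\g_\un)$ of $I$ (exactly as in the three displayed computations in the proof of Proposition~\ref{3.1.17}, read backwards) pins the sign so that $I(X,Y,Z) = B([X,Y],Z)$.

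Finally, invariance of $B$ is a formal consequence: from $I(X,Y,Z) = B([X,Y],Z)$ together with $I \in \Es^{(3,\zero)}(\g)$, the form $I$ is super-antisymmetric, and the graded cyclic symmetry of a super-antisymmetric $3$-form combined with the super-antisymmetry of the bracket yields $B([X,Y],Z) = (-1)^{x(y+z)}B([Y,Z],X) = B(X,[Y,Z])$ for homogeneous $X,Y,Z$ — this is the same three-line identity used implicitly throughout, and I would either cite it from \cite{PU07}/\cite{MPU09} or spell it out in one display. The main obstacle I anticipate is none of the structural steps but rather the careful sign accounting in passing between $\io_X$ on $\Es(\g)$ (which carries the factor $(-1)^{xb}$ from the definition in Section~1.3), the operator $\adpo(I)$, and the antisymmetrization conventions for $I$; everything else is a direct reversal of Proposition~\ref{3.1.17}'s proof.
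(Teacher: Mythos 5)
Your proposal is correct and follows essentially the same route as the paper: the equivalence $\{I,I\}=0\Leftrightarrow d^2=0$ is obtained exactly as in the proof of Proposition \ref{3.1.17} (via $\adpo(\{I,I\})=2d^2$ and the fact that a form of positive degree killed by all $\io_X$ vanishes), and the identity $I(X,Y,Z)=B([X,Y],Z)$ comes from unwinding $d=-\adpo(I)$ against linear forms $\alpha=\phi(Z)$ — which is precisely what the paper does by writing $F$ explicitly in a basis and evaluating $B([X,Y],\cdot)$ on basis vectors. The sign bookkeeping you defer to the computations of Proposition \ref{3.1.17} does close up as claimed, so there is no gap.
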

\begin{proof}
We need to prove that if $F$ is a Lie superalgebra structure then $I(X,Y,Z) = B([X,Y],Z)$, for all $ X,Y,Z\in\g$. Indeed, let $\{X_\ze^1, \dots, X_\ze^m\}$ be an orthonormal basis of $\g_\ze$ and $\{X_\un^1, \dots, X_\un^n, Y_\un^1, \dots, Y_\un^n\}$ be a Darboux basis of $\g_\un$ then one has
\[d=-\adpo(I) = -\sum^{m}_{j=1}\io_{X_\ze^j}(I)\wedge\io_{X_\ze^j} + \sum^{n}_{k=1}\io_{X_\un^k}(I)\wedge\io_{Y_\un^k} - \sum^{n}_{k=1}\io_{Y_\un^k}(I)\wedge\io_{X_\un^k}.\]
It implies that
\[ F= \sum^{m}_{j=1}\io_{X_\ze^j}(I)\otimes{X_\ze^j} + \sum^{n}_{k=1}\io_{X_\un^k}(I)\otimes{Y_\un^k} - \sum^{n}_{k=1}\io_{Y_\un^k}(I)\otimes{X_\un^k}.\]
Therefore, for all $i$ we obtain
\[B([X,Y], {X_\ze^i}) = \io_{X_\ze^i}(I)(X,Y) = I(X_\ze^i,X,Y) = I(X,Y,X_\ze^i),\]
\[B([X,Y], {X_\un^i}) = -\io_{X_\un^i}(I)(X,Y) = -I(X_\un^i,X,Y) = I(X,Y,X_\un^i),\]
\[\text{ and }B([X,Y], {Y_\un^i}) = -\io_{Y_\un^i}(I)(X,Y) = -I(Y_\un^i,X,Y) = I(X,Y,Y_\un^i).\]
These show that $I(X,Y,Z) = B([X,Y],Z)$, for all $ X,Y,Z\in\g$. Since $I$ is super-antisymmetric and $B$ is supersymmetric, then one can show that $B$ is invariant.
\end{proof}

The two previous propositions show that on a quadratic $\ZZ_2$-graded vector
space $(\g,B)$, quadratic Lie superalgebra structures with the same $B$ are in
one to one correspondence with elements $I\in \Es^{(3,\ze)}(\g)$ satisfying
$\{I,I\}=0$ and such that the  super-exterior  differential of $\Es(\g)$ is
$d=-\adpo(I)$. This correspondence provides an approach to the theory of
quadratic Lie superalgebras through $I$.

\begin{defn}
Given a quadratic Lie superalgebra $(\g,B)$. The element $I$ defined as above is also an invariant of $\g$ since $\Lc_X(I) = 0$, for all $ X\in\g$ where $\Lc_X=D(\ad_\g(X))$ is the Lie super-derivation of $\g$. Therefore, $I$ is called the {\em associated invariant} of $\g$.
\end{defn}

The following Lemma is a simple, yet interesting result.

\begin{lem}\label{3.1.20}
Let $(\g,B)$ be a quadratic Lie superalgebra and $I$ be its associated invariant. Then $\io_X(I) = 0$ if and only if $X\in \Zs(\g)$.
\end{lem}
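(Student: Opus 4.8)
The plan is to prove the two directions of the equivalence separately, using the characterization of $I$ in terms of $B$ and the bracket from Proposition \ref{3.1.17} together with the interior-product description of the Poisson bracket.

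First I would settle the easy direction: if $X \in \Zs(\g)$, then $[X,Y]=0$ for all $Y\in\g$, so for all homogeneous $Y,Z\in\g$ we have $\io_X(I)(Y,Z) = (-1)^{xb}I(X,Y,Z) = (-1)^{xb}B([X,Y],Z) = 0$ (recall $I\in\Es^{(3,\zero)}(\g)$, so in the displayed formula for $\io_X$ the sign depends only on the parity $x$ of $X$). Since $I$ is super-antisymmetric, $\io_X(I)$ is determined by its values on pairs from $\g$, hence $\io_X(I)=0$.

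For the converse, suppose $\io_X(I)=0$. Write $X = X_\ze + X_\un$ with $X_\ze\in\g_\ze$, $X_\un\in\g_\un$. Because $I$ lies in $\Es^{(3,\zero)}(\g) = \Alt^3(\g_\ze)\oplus\bigl(\Alt^1(\g_\ze)\otimes\sym^2(\g_\un)\bigr)$, the two pieces $\io_{X_\ze}(I)$ and $\io_{X_\un}(I)$ land in different bidegree components of $\Es(\g)$ — using Lemma \ref{3.1.10}, $\io_{X_\ze}$ preserves the $\sym$-degree while $\io_{X_\un}$ drops it by one — so $\io_X(I)=0$ forces $\io_{X_\ze}(I)=0$ and $\io_{X_\un}(I)=0$ separately. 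Now for any homogeneous $Y,Z\in\g$, evaluating $\io_X(I)=0$ and using $I(X,Y,Z)=B([X,Y],Z)$ (Proposition \ref{3.1.17}(1) / the defining formula) gives $B([X,Y],Z)=0$ for all $Z\in\g$; since $B$ is non-degenerate this yields $[X,Y]=0$ for all $Y\in\g$, i.e. $X\in\Zs(\g)$. A cleaner way to package the converse is via Corollary \ref{3.1.13}(1): $\io_X(I) = \{\phi(X), I\}$, and then $\{\phi(X),I\} = 0$ combined with the Jacobi-type identity (Proposition \ref{3.1.5}(2)) and $\{I,I\}=0$ controls how $\ad(X)$ acts, but the direct evaluation argument above is shorter.

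The only point requiring care — and the place I would be most careful about sign bookkeeping — is making sure that, for an \emph{inhomogeneous} $X = X_\ze + X_\un$, the vanishing of $\io_X(I)$ really does split into the vanishing of each homogeneous component; this is where I invoke that $I$ has a single pure $\ZZ$-degree $3$ but mixed structure, so that $\io_{X_\ze}(I)\in\Es^{(2,\zero)}(\g)$ lies in $\Alt^2(\g_\ze)\oplus(\Alt^0(\g_\ze)\otimes\sym^2(\g_\un))$ whereas $\io_{X_\un}(I)\in\Es^{(2,\um)}(\g)$ lies in $\Alt^1(\g_\ze)\otimes\sym^1(\g_\un)$, and these subspaces intersect trivially. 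Everything else is a routine unwinding of definitions, so I expect the proof to be short.
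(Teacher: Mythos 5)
Your proof is correct and is essentially the paper's argument: both reduce to the identity $\io_X(I)(Y,Z)=I(X,Y,Z)=B([X,Y],Z)$ (equivalently $B(X,[Y,Z])$ by invariance) and then use non-degeneracy of $B$, the paper phrasing the converse as $X\in[\g,\g]^\perp=\Zs(\g)$ while you conclude $[X,Y]=0$ directly. The careful digression about splitting $\io_X(I)$ into the $\Es^{(2,\zero)}$ and $\Es^{(2,\um)}$ components is valid but unnecessary, since $I$ has $\ZZ_2$-degree $\zero$ and the direct evaluation $\io_X(I)(Y,Z)=I(X,Y,Z)$ already holds, by linearity, for inhomogeneous $X$.
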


\begin{proof}
Since $\io_X(I)(\g,\g) = B(X,[\g,\g])$ and $\Zs(\g) = [\g,\g]^\bot$ where
 $[\g,\g]^\bot$ denotes the orthogonal subspace of $[\g,\g]$. We have then 
 $\io_X(I) = 0$ if and only if $X\in \Zs(\g)$.
\end{proof}
\begin{defn}
  Let $(\g,B)$ and $(\g',B')$ be two quadratic Lie superalgebras.  We say
  that $(\g,B)$ and $(\g',B')$ are {\em isometrically isomorphic} (or
  {\em i-isomorphic}) if there exists a Lie superalgebra isomorphism $A$
  from $\g$ onto $\g'$ satisfying \[ B'(A(X), A(Y)) = B(X,Y), \
  \forall \ X, Y \in \g.\] In other words, $A$ is an i-isomorphism if
  it is a (necessarily even) Lie superalgebra isomorphism and an isometry. We write $\g \iiso
  \g'$.
\end{defn}

Note that two isomorphic quadratic Lie superalgebras $(\g,B)$ and
$(\g',B')$ are not necessarily i-isomorphic by the example below:
\begin{ex}
  Let $\g = \ospk(1,2)$ and $B$ its Killing form. Recall that $\g_\ze
  = \ok(3)$. Consider another bilinear form $B' = \lambda B$, $\lambda
  \in \CC$, $\lambda \neq 0$. In this case, $(\g,B)$ and $(\g,\lambda
  B)$ can not be i-isomorphic if $\lambda \neq 1$ since
  $(\g_{\zero},B)$ and $(\g_{\zero},\lambda B)$ are not i-isomorphic.
\end{ex}

\section{The dup-number of quadratic Lie superalgebras}

Let $(\g,B)$ be a quadratic Lie superalgebra and $I$ be its associated
invariant. Then by Proposition \ref{3.1.17} we have a decomposition
\[I=I_0+I_1 \] where $I_0\in \Alt^3 (\g_\ze)$ and $I_1\in \Alt^1
(\g_\ze)\otimes \sym^2(\g_\un)$. Since $\{I,I\}=0$, then $\{I_0,I_0\}=0$. It
means that $\g_\ze$ is a quadratic Lie algebra with the associated 3-form
$I_0$, a rather obvious result. It is easy to see that $\g_\ze$ is Abelian
(resp. $[\g_\un,\g_\un] = \{0\}$) if and only if $I_0 = 0$ (resp. $I_1 =0$).
These cases will be fully studied in the sequel. Define the following
subspaces of $\g^*$:

\begin{eqnarray*}
  \Vs_I &=& \{\alpha \in \g^* \mid\ \alpha \wedge I = 0 \},\\
  \Vs_{I_0} &=& \{\alpha \in \g_\ze^* \mid\ \alpha \wedge I_0 = 0 \},\\
  \Vs_{I_1} &=& \{\alpha \in \g_\ze^* \mid\ \alpha \wedge I_1 = 0 \}.
\end{eqnarray*}

\begin{lem}\label{3.2.1}
Let $\g$ be a non-Abelian quadratic Lie superalgebra then one has
\begin{enumerate}
\item $\dim(\Vs_I)\in \{0,
  1, 3 \}$,
\item $\dim(\Vs_I)=3$ if and only if $I_1=0$, $\g_\ze$ is non-Abelian and $I_0$ is decomposable in $\Alt^3 (\gO)$.
\end{enumerate}
\end{lem}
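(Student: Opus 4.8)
The plan is to analyze the wedge-annihilator of $I = I_0 + I_1$ by exploiting the bigrading $\Es^{(\bullet,\bullet)}(\g)$, so that conditions on $\alpha \wedge I$ split into components. First I would observe that $\g^* = \g_\ze^* \oplus \g_\un^*$, where $\g_\ze^* \subset \Es^{(1,\zero)}(\g)$ and $\g_\un^* \subset \Es^{(1,\un)}(\g)$. For $\alpha = \alpha_0 + \alpha_1$ with $\alpha_0 \in \g_\ze^*$, $\alpha_1 \in \g_\un^*$, the wedge $\alpha \wedge I$ decomposes using that $I_0 \in \Es^{(3,\zero)}$, $I_1 \in \Es^{(3,\zero)}$ too but sitting in $\Alt^1(\g_\ze) \otimes \sym^2(\g_\un)$: we get $\alpha_0 \wedge I_0 \in \Alt^4(\g_\ze)$, $\alpha_0 \wedge I_1 \in \Alt^2(\g_\ze)\otimes\sym^2(\g_\un)$, $\alpha_1 \wedge I_0 \in \Alt^3(\g_\ze)\otimes\sym^1(\g_\un)$, and $\alpha_1 \wedge I_1 \in \Alt^1(\g_\ze)\otimes\sym^3(\g_\un)$. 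These four pieces lie in distinct bihomogeneous summands of $\Es(\g)$ (degrees $(4,\zero)$, $(4,\zero)$ — wait, the first two share total degree $4$ but differ in the $\Alt$ vs.\ $\sym$ bidegree refinement $\Alt^4(\g_\ze)$ versus $\Alt^2(\g_\ze)\otimes\sym^2(\g_\un)$, so they are still independent), hence $\alpha \wedge I = 0$ is equivalent to all four vanishing simultaneously.

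Next I would handle the cross terms. The term $\alpha_1 \wedge I_0 = 0$: since $I_0 \in \Alt^3(\g_\ze)$ and $\alpha_1 \in \g_\un^*$ involves an independent tensor factor, this forces $\alpha_1 = 0$ whenever $I_0 \neq 0$; similarly $\alpha_0 \wedge I_1 = 0$ with $I_1 = \sum_k \xi_k \otimes \eta_k$ (finite sum, $\xi_k \in \g_\ze^*$, $\eta_k \in \sym^2(\g_\un)$) forces, after choosing $\eta_k$ linearly independent, that $\alpha_0 \wedge \xi_k = 0$ for all $k$. Combining the cases: if both $I_0 \neq 0$ and $I_1 \neq 0$, then $\alpha_1 = 0$ and $\alpha_0$ must simultaneously annihilate $I_0$ (in $\Alt^3(\g_\ze)$) and every $\xi_k$; a short argument shows $\dim(\Vs_I) \le 1$ in that situation (any two such $\alpha_0$ would be proportional since $\alpha_0 \wedge \xi_k = 0$ for a nonzero $\xi_k$ appearing in the indecomposable-in-$\g_\ze^*$ part of $I_1$). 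If $I_1 = 0$ but $\g_\ze$ non-Abelian ($I_0 \neq 0$), then $\Vs_I = \Vs_{I_0}$ (still $\alpha_1 = 0$), and by the quadratic-Lie-algebra theory recalled in the introduction (the dup-number of a non-Abelian quadratic Lie algebra lies in $\{0,1,3\}$, with value $3$ exactly when $I_0$ is decomposable) we get $\dim(\Vs_{I_0}) \in \{0,1,3\}$ with the value $3$ characterized by decomposability of $I_0$ in $\Alt^3(\g_\ze)$. If $I_0 = 0$ ($\g_\ze$ Abelian) but $I_1 \neq 0$, then $\alpha_1$ is unconstrained by the cross term $\alpha_1 \wedge I_0$, but now $\alpha_1 \wedge I_1 \in \Alt^1(\g_\ze)\otimes\sym^3(\g_\un)$ must vanish, which (since $I_1 \in \Alt^1(\g_\ze)\otimes\sym^2(\g_\un)$, a wedge in the $\Alt$ factor times product in $\sym$) again pins down $\alpha_1$ to lie in a small space; one checks $\dim \le 1$ here too. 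The degenerate case $I_0 = I_1 = 0$ is excluded since $\g$ is non-Abelian.

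For part (2): the forward direction is the contrapositive of the above case analysis — $\dim(\Vs_I) = 3$ is incompatible with $I_1 \neq 0$ (which forces $\le 1$) and with $\g_\ze$ Abelian (handled in the last case, $\le 1$), and is incompatible with $I_0$ indecomposable by the Lie algebra fact. For the converse, if $I_1 = 0$, $\g_\ze$ non-Abelian, and $I_0 = \beta_1 \wedge \beta_2 \wedge \beta_3$ is decomposable in $\Alt^3(\g_\ze)$, then $\Vs_I = \Vs_{I_0} \supseteq \spa\{\beta_1,\beta_2,\beta_3\}$, which is $3$-dimensional (the $\beta_i$ independent since $I_0 \neq 0$), and one checks no $\alpha$ outside this span annihilates $I_0$, so $\dim(\Vs_I) = 3$.

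The main obstacle I anticipate is the cross-term bookkeeping: carefully verifying that the four bihomogeneous components of $\alpha \wedge I$ really do live in independent summands of $\Es(\g)$ — in particular that $\Alt^4(\g_\ze)\otimes\sym^0$ and $\Alt^2(\g_\ze)\otimes\sym^2(\g_\un)$ do not interfere — and then extracting the dimension bound $\le 1$ in the mixed case $I_0, I_1 \neq 0$ from the simultaneous conditions "$\alpha_0 \wedge I_0 = 0$ in $\Alt^3(\g_\ze)$" and "$\alpha_0 \wedge \xi_k = 0$ for all $k$". The cleanest route is probably: pick one nonzero $\xi_k$; then $\alpha_0 \in \Vs_I$ forces $\alpha_0 \in \CC\xi_k$ (two independent $1$-forms cannot both be killed by wedging with the same nonzero $1$-form unless the ambient space is $2$-dimensional, but even then a separate argument with a second $\xi_{k'}$ or with $I_0$ closes it), giving $\dim \le 1$. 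I would lean on Lemma 6.9 / the elementary quadratic Lie algebra results of \cite{PU07} and the dup-number computations of \cite{DPU10} to avoid redoing the pure Lie algebra analysis of $\Vs_{I_0}$.
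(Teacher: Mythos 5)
Your proposal is correct and follows essentially the same route as the paper: decompose $\alpha\wedge I$ into its four bihomogeneous components, deduce $\alpha_1=0$ and $\Vs_I=\Vs_{I_0}\cap\Vs_{I_1}$, then combine $\dim(\Vs_{I_0})\in\{0,1,3\}$ (the known quadratic Lie algebra fact) with $\dim(\Vs_{I_1})\le 1$ when $I_1\neq 0$. The only spot worth tightening is the case $I_0=0$: there $\alpha_1\wedge I_1=0$ forces $\alpha_1=0$ outright, because $\sym(\g_\un)$ is a polynomial algebra and hence an integral domain, so your claimed bound $\dim(\Vs_I)\le 1$ does hold and no two-dimensional contribution from $\alpha_1$ can arise.
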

\begin{proof}
Let $\alpha = \alpha_0 + \alpha_1\in\g_\ze^*\oplus\g_\un^*$ then one has
\[\alpha\wedge I = \alpha_0\wedge I_0 + \alpha_0\wedge I_1 + \alpha_1\wedge I_0 +\alpha_1\wedge I_1,
\]

where $\alpha_0\wedge I_0\in \Alt^4 (\g_ {\zero})$, $\alpha_0\wedge I_1\in
\Alt^2 (\g_{\zero}) \otimes \sym^2(\gI)$, $\alpha_1\wedge I_0\in \Alt^3
(\gO)\otimes \sym^1(\gI)$ and $\alpha_1\wedge I_1\in \Alt^1 (\gO)\otimes
\sym^3(\gI)$.

Hence, $\alpha\wedge I = 0$ if and only if $ \alpha_1 = 0$ and $\alpha_0\wedge
I_0 = \alpha_0\wedge I_1=0$. It means that $\Vs_I = \Vs_{I_0}\cap\Vs_{I_1}$.
If $I_0\neq 0$ then $\dim(\Vs_{I_0})\in \{0,1,3\}$ and if $I_1\neq 0$ then
$\dim(\Vs_{I_1})\in \{0,1\}$. Therefore, $\dim(\Vs_I)\in \{0, 1, 3 \}$ and
$\dim(\Vs_I)=3$ if and only if $I_1=0$ and $\dim(\Vs_{I_0})=3$. \end{proof}

The previous Lemma allows us to introduce the notion of dup-{\em number} for
quadratic Lie superalgebras as we did for quadratic Lie algebras.

\begin{defn} Let $(\g,B)$ be a non-Abelian quadratic Lie superalgebra
  and $I$ be its associated invariant. The {\em $\dup$-number}
  $\dup(\g)$ is defined by
    \[ \dup(\g) = \dim(\Vs_I).\]
\end{defn}

Given a subspace $W$ of $\g$, if $W$ is {\em non-degenerate} (with
respect to the bilinear form $B$), that is, if the restriction of $B$
on $W\times W$ is non-degenerate, then the orthogonal subspace
$W^\bot$ of $W$ is also non-degenerate. In this case, we use the
notation
\[\g = W \oplusp W^\bot.
\]

The decomposition result below is a generalization of the quadratic
Lie algebra case. Its proof can be found in \cite{PU07} and
\cite{DPU10}.

\begin{prop} \label{3.2.3}  
  Let $(\g,B)$ be a non-Abelian quadratic Lie superalgebra. Then there
  are a central ideal $\zk$ and an ideal $\lk \neq \{0\}$ such
  that:

\begin{enumerate}

\item $\g = \zk \oplusp \lk$ where $\left( \zk, B|_{\zk \times \zk} \right)$ and $\left(\lk,
    B|_{\lk \times \lk} \right)$ are quadratic Lie superalgebras. Moreover,
  $\lk$ is non-Abelian.

\item The center $\Zs(\lk)$ is totally isotropic, i.e. $\Zs(\lk)
  \subset [\lk, \lk]$.

\item Let $\g'$ be a quadratic Lie superalgebra and $A : \g \to \g'$ be a
  Lie superalgebra isomorphism. Then \[ \g' = \zk' \oplusp \lk'\] where
  $\zk' = A(\zk)$ is central, $\lk' = A(\zk)^\perp$, $\Zs(\lk')$ is
  totally isotropic and $\lk$ and $\lk'$ are isomorphic. Moreover if
  $A$ is an i-isomorphism, then $\lk$ and $\lk'$ are i-isomorphic.

\end{enumerate}

\end{prop}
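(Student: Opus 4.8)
The plan is to split off, orthogonally, the maximal non-degenerate central piece. Throughout I use $\Zs(\g)=[\g,\g]^{\bot}$ (equivalently $[\g,\g]=\Zs(\g)^{\bot}$, since $B$ is non-degenerate and $\g$ is finite-dimensional), which is exactly the observation behind Lemma \ref{3.1.20}. Hence the radical of the restriction $B|_{\Zs(\g)}$ is $\Zs(\g)\cap\Zs(\g)^{\bot}=\Zs(\g)\cap[\g,\g]$. First I would pick a graded complement $\zk$ of $\Zs(\g)\cap[\g,\g]$ inside $\Zs(\g)$, so that $\Zs(\g)=\zk\oplus(\Zs(\g)\cap[\g,\g])$.

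Next I would verify that $B|_{\zk\times\zk}$ is non-degenerate, for any such choice of $\zk$: if $z\in\zk$ and $B(z,\zk)=0$, then since every element of $\Zs(\g)$ is the sum of an element of $\zk$ and an element of the radical of $B|_{\Zs(\g)}$, we get $B(z,\Zs(\g))=0$, so $z$ lies in that radical; but $\zk$ meets the radical only in $\{0\}$, so $z=0$. Therefore $\g=\zk\oplusp\lk$ with $\lk:=\zk^{\bot}$. Since $\zk\subseteq\Zs(\g)$ is a central ideal, for $X\in\g$, $Y\in\lk$, $Z\in\zk$ invariance gives $B([X,Y],Z)=B(X,[Y,Z])=0$, so $[X,Y]\in\zk^{\bot}=\lk$ and $\lk$ is an ideal; also $[\zk,\lk]=\{0\}$, so $\g=\zk\oplus\lk$ is a direct sum of ideals and $[\lk,\lk]=[\g,\g]$. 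If $\lk=\{0\}$ then $\g=\zk$ would be Abelian, so $\lk\neq\{0\}$; and $[\lk,\lk]=[\g,\g]\neq\{0\}$ makes $\lk$ non-Abelian. This settles (1).

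For (2), observe that $Y\in\lk$ is central in $\g$ if and only if $[Y,\lk]=\{0\}$ (automatically $[Y,\zk]=\{0\}$), i.e. $\Zs(\lk)=\Zs(\g)\cap\lk$. The radical $\Zs(\g)\cap[\g,\g]$ of $B|_{\Zs(\g)}$ pairs trivially with all of $\Zs(\g)\supseteq\zk$, hence lies in $\zk^{\bot}=\lk$; conversely an element of $\Zs(\g)\cap\lk$, written as $z+w$ with $z\in\zk$ and $w\in\Zs(\g)\cap[\g,\g]\subseteq\lk$, forces $z\in\zk\cap\lk=\{0\}$. Thus $\Zs(\lk)=\Zs(\g)\cap[\g,\g]\subseteq[\g,\g]=[\lk,\lk]$, so $\Zs(\lk)$ is totally isotropic.

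For (3), given a Lie superalgebra isomorphism $A:\g\to\g'$, set $\zk':=A(\zk)$, a central ideal of $\g'$ with $\zk'\cap[\g',\g']=A(\zk\cap[\g,\g])=\{0\}$. The key point — and the only step needing real care, since $A$ need not be an isometry — is that $\zk'$ is again a valid complement: from $\dim\zk'=\dim\Zs(\g)-\dim(\Zs(\g)\cap[\g,\g])=\dim\Zs(\g')-\dim(\Zs(\g')\cap[\g',\g'])$ together with $\zk'\cap(\Zs(\g')\cap[\g',\g'])=\{0\}$ we obtain $\Zs(\g')=\zk'\oplus(\Zs(\g')\cap[\g',\g'])$, so the non-degeneracy argument above applies verbatim and $B'|_{\zk'\times\zk'}$ is non-degenerate. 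Then (1) and (2) for $\g'$ with this choice give that $\zk'$ is central, $\g'=\zk'\oplusp\lk'$ with $\lk':=(\zk')^{\bot}$, and $\Zs(\lk')$ is totally isotropic. Finally, the projections along the central summands are Lie superalgebra isomorphisms $\lk\cong\g/\zk$ and $\lk'\cong\g'/\zk'$; since $A(\zk)=\zk'$, $A$ descends to an isomorphism $\g/\zk\to\g'/\zk'$, whence $\lk\cong\lk'$. If in addition $A$ is an isometry, then $A(\zk^{\bot})=A(\zk)^{\bot}=\lk'$, so $A$ restricts to an isometric isomorphism $\lk\to\lk'$, i.e. an i-isomorphism. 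The only genuinely substantive ingredient here is the non-degeneracy claim and its transport to $\g'$ in (3); everything else is orthogonality bookkeeping.
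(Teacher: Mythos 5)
Your proof is correct, and it is essentially the standard argument: the paper itself does not prove Proposition \ref{3.2.3} but defers to \cite{PU07} and \cite{DPU10}, where exactly this decomposition is used — take $\zk$ a (graded) complement of $\Zs(\g)\cap[\g,\g]$ in $\Zs(\g)=[\g,\g]^{\bot}$, check $B|_{\zk\times\zk}$ is non-degenerate, and set $\lk=\zk^{\bot}$. Your adaptation to the $\ZZ_2$-graded setting, including the transport of the complement under an isomorphism in part (3), is sound.
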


The Lemma below shows that the previous decomposition has a good behavior with
respect to the $\dup$-number.

\begin{lem}\label{3.2.4}
  Let $\g$ be a non-Abelian quadratic Lie superalgebra. Write $\g =
  \zk \oplusp \lk$ as in Proposition \ref{3.2.3} then $\dup(\g) =
  \dup(\lk)$.
\end{lem}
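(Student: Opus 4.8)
The plan is to compute the associated invariant of $\g$ directly from the decomposition $\g = \zk \oplusp \lk$ and compare the wedge ideals. Fix an orthonormal basis $\{Z_1,\dots,Z_r\}$ of $\zk_\ze$ and a Darboux basis of $\zk_\un$; since $\zk$ is a central ideal, every bracket $[X,Y]$ with $X$ or $Y$ in $\zk$ vanishes, so the associated invariant $I$ of $\g$ satisfies $\io_{Z}(I)=0$ for all $Z\in\zk$ by Lemma \ref{3.1.20} (as $\zk\subset\Zs(\g)$). Consequently $I$ lies in $\Es^{(3,\zero)}(\lk)\subset\Es^{(3,\zero)}(\g)$ and in fact equals the associated invariant $I_\lk$ of the quadratic Lie superalgebra $\lk$, because for $X,Y,Z\in\lk$ one has $B([X,Y],Z)=B|_{\lk\times\lk}([X,Y]_\lk,Z)$ and both brackets agree on $\lk$ by $\g=\zk\oplusp\lk$ being a direct sum of ideals.

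Next I would analyze $\Vs_I$. Write $\g^* = \zk^*\oplus\lk^*$ using the $B$-orthogonal decomposition (so $\zk^*$ is identified with the forms vanishing on $\lk$, and likewise for $\lk^*$). For $\alpha=\alpha_\zk+\alpha_\lk$ with $\alpha_\zk\in\zk^*$, $\alpha_\lk\in\lk^*$, we have $\alpha\wedge I = \alpha_\zk\wedge I + \alpha_\lk\wedge I$, and since $I\in\Es(\lk)$ involves only $\lk^*$-factors while $\alpha_\zk$ involves only $\zk^*$-factors, the term $\alpha_\zk\wedge I$ can vanish only if $\alpha_\zk=0$ or $I=0$; as $\lk$ is non-Abelian, $I=I_\lk\neq 0$ by Proposition \ref{3.1.17} applied to $\lk$, hence $\alpha_\zk=0$. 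Therefore $\Vs_I\subset\lk^*$, and for $\alpha\in\lk^*$ the condition $\alpha\wedge I=0$ read inside $\Es(\g)$ is the same as inside $\Es(\lk)$ (the super-exterior product of elements of $\Es(\lk)$ is computed identically in either algebra, $\Es(\lk)$ being a subalgebra of $\Es(\g)$). This gives $\Vs_I = \Vs_{I_\lk}$, whence $\dup(\g)=\dim(\Vs_I)=\dim(\Vs_{I_\lk})=\dup(\lk)$.

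The only mildly delicate point is the bookkeeping identifying $\Es(\lk)$ as a subalgebra of $\Es(\g)$ compatibly with wedge products and with the embedding $\lk^*\hookrightarrow\g^*$; this is routine once one uses the $B$-orthogonal splitting to identify $\lk^*$ with the annihilator of $\zk$ in $\g^*$. I expect the main obstacle—such as it is—to be making precise that the associated invariant really does live in the sub-superalgebra attached to $\lk$, i.e. the vanishing $\io_Z(I)=0$ for $Z\in\zk$ together with the analogous statement in the $\g_\un$ directions; but this follows immediately from Lemma \ref{3.1.20} and the hypothesis $\zk\subset\Zs(\g)$, so no genuine difficulty arises.
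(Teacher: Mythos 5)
Your proposal is correct and follows essentially the same route as the paper: both show that the centrality of $\zk$ forces $I$ to lie in $\Es^{(3,\zero)}(\lk)$, then split $\alpha$ along the orthogonal decomposition $\g^*=\zk^*\oplus\lk^*$ and use the tensor-grading of $\Es(\g)\cong\Es(\zk)\otimes\Es(\lk)$ together with $I\neq 0$ to kill the $\zk^*$-component, giving $\Vs_I=\Vs_{I_\lk}$. Your invocation of Lemma \ref{3.1.20} to justify $I\in\Es(\lk)$ is a slightly more explicit version of the paper's one-line appeal to $[\zk,\g]=\{0\}$, but the argument is the same.
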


\begin{proof}
  Since $[\zk,\g] = \{0\}$ then $I\in\Es^{(3,\zero)}(\lk)$. Let
  $\alpha\in \g^*$ such that $\alpha\wedge I = 0$, we show that
  $\alpha\in\lk^*$. Assume that $\alpha = \alpha_1+\alpha_2$, where
  $\alpha_1\in\zk^*$ and $\alpha_2\in\lk^*$. Since $\alpha\wedge I =
  0$, $\alpha_1\wedge I \in\Es(\zk)\otimes\Es(\lk)$ and
  $\alpha_2\wedge I \in\Es(\lk)$ then one has $\alpha_1\wedge I =
  0$. Therefore, $\alpha_1 = 0$ since $I$ is nonzero in
  $\Es^{(3,\zero)}(\lk)$. That means $\alpha \in \lk^*$ and then
  $\dup(\g) = \dup(\lk)$.
\end{proof}

Clearly, $\zk=\{0\}$ if and only if $\Zs(\g)$ is totally isotropic. By the
above Lemma, it is enough to restrict our study on the $\dup$-number of non-
Abelian quadratic Lie superalgebras with totally isotropic center.

\begin{defn}\label{3.2.5}
A quadratic Lie superalgebra $\g$ is {\em reduced} if it satisfies:

\begin{enumerate}

\item $\g \neq \{0\}$

\item $\Zs(\g)$ is totally isotropic.

\end{enumerate}

\end{defn}

Notice that a reduced quadratic Lie superalgebra is necessarily non-Abelian.

\begin{defn} 

Let $\g$ be a non-Abelian quadratic Lie superalgebra. We say that:

\begin{enumerate}

\item $\g$ is an {\em ordinary} quadratic Lie superalgebra if $\dup(\g) =
  0$.

\item $\g$ is a {\em singular} quadratic Lie superalgebra if $\dup(\g) \geq
  1$. 

\begin{itemize}

\item[(i)] $\g$ is a {\em singular} quadratic Lie superalgebra of {\em type
    $\Sb_1$} if $\dup(\g) = 1$.

\item[(ii)] $\g$ is a {\em singular} quadratic Lie superalgebra of {\em type
    $\Sb_3$} if $\dup(\g) = 3$.

\end{itemize}

\end{enumerate}

\end{defn}

By Lemma \ref{3.2.1}, if $\g$ is a singular quadratic Lie superalgebra
of type $\Sb_3$ then $I= I_0$ is decomposable in $\Alt^3 (\gO)$. One
has $I(\g_\ze,\g_\un,\g_\un) = B([\g_\ze,\g_\un],\g_\un) = 0$. It
implies $[\g_\un,\g] = \{0\}$ since $B$ is non-degenerate. Hence in
this case, $\g_{\un}$ is a central ideal, $\g_{\zero}$ is a singular
quadratic Lie algebra of type $\Sb_3$ and then the classification is
known in \cite{PU07}. Therefore, we are mainly interested in singular
quadratic Lie superalgebras of type $\Sb_1$.

Before proceeding, we give other simple properties of singular quadratic Lie superalgebras:

\begin{prop}\label{3.2.7}
Let $(\g,B)$ be a singular quadratic Lie superalgebra. If $\g_{\zero}$ is non-Abelian then $\g_{\zero}$ is a singular quadratic Lie algebra.
\end{prop}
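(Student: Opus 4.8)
The plan is to reduce everything to the identity $\Vs_I = \Vs_{I_0}\cap\Vs_{I_1}$ that was established inside the proof of Lemma \ref{3.2.1}. Since $(\g,B)$ is singular, by definition $\dim(\Vs_I) = \dup(\g) \ge 1$, so we may choose a nonzero form $\alpha\in\Vs_I$. First I would recall from the computation in Lemma \ref{3.2.1} that, writing $\alpha = \alpha_0+\alpha_1$ with $\alpha_0\in\g_\ze^*$ and $\alpha_1\in\g_\un^*$, the condition $\alpha\wedge I = 0$ forces $\alpha_1 = 0$: the four homogeneous components $\alpha_0\wedge I_0$, $\alpha_0\wedge I_1$, $\alpha_1\wedge I_0$, $\alpha_1\wedge I_1$ lie in pairwise distinct graded pieces of $\Es(\g)$, so each of them vanishes separately, and $\alpha_1\wedge I_0 = \alpha_1\wedge I_1 = 0$ together with nondegeneracy arguments give $\alpha_1 = 0$. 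Hence $\alpha = \alpha_0\in\g_\ze^*$ is a nonzero form with $\alpha_0\wedge I_0 = 0$, i.e. $0\neq\alpha_0\in\Vs_{I_0}$.

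Next, since $\g_\ze$ is assumed non-Abelian, its associated $3$-form $I_0$ is nonzero (as observed right after Lemma \ref{3.2.1}, $\g_\ze$ is Abelian if and only if $I_0 = 0$), so $\g_\ze$ is a genuine non-Abelian quadratic Lie algebra whose $\dup$-number is defined and equals $\dim(\Vs_{I_0})$. From the previous step $\dim(\Vs_{I_0})\ge 1$, hence $\dup(\g_\ze)\ge 1$, which is exactly the statement that $\g_\ze$ is a singular quadratic Lie algebra.

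There is essentially no real obstacle here: the entire content is the bookkeeping of the $\ZZ\times\ZZ_2$-gradation already carried out in the proof of Lemma \ref{3.2.1}, combined with the remark that $\g_\ze$ non-Abelian guarantees $I_0\neq 0$, so that the notion ``singular'' is meaningful for $\g_\ze$. The only point that deserves a line of care is checking that the chosen $\alpha$ remains nonzero after discarding its odd component, and this is automatic precisely because the argument shows that odd component was zero to begin with.
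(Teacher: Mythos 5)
Your proof is correct and follows essentially the same route as the paper: the paper's own argument is precisely to invoke $\Vs_I = \Vs_{I_0}\cap\Vs_{I_1}$ from the proof of Lemma \ref{3.2.1} to get $\dim(\Vs_{I_0})\ge 1$, and you have merely expanded the gradation bookkeeping and made explicit the (correct and worthwhile) remark that $\g_\ze$ non-Abelian guarantees $I_0\neq 0$ so that singularity of $\g_\ze$ is a meaningful notion.
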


\begin{proof}
By the proof of Lemma \ref{3.2.1}, one has $\Vs_I = \Vs_{I_0}\cap\Vs_{I_1}$. Therefore, $\dim( \Vs_{I_0})\geq 1$. It means that $\g_{\zero}$ is a singular quadratic Lie algebra.
\end{proof}

Given $(\g,B)$ a singular quadratic Lie superalgebra of type $\Sb_1$. Fix
$\alpha\in \Vs_I$ and choose $\Omega_0\in \Alt^2 (\gO)$, $\Omega_1\in
\sym^2(\gI)$ such that $I = \alpha\wedge\Omega_0+\alpha\otimes\Omega_1$. Then
one has  \[ \{I,I\} = \{\alpha\wedge\Omega_0,\alpha\wedge\Omega_0\}+
2\{\alpha\wedge\Omega_0,\alpha\}\otimes\Omega_1+ \{\alpha,\alpha\}\otimes
\Omega_1\Omega_1.\]   By the equality $\{I,I\} = 0$, one has
$\{\alpha\wedge\Omega_0,\alpha\wedge\Omega_0\} = 0$, $\{\alpha,\alpha\} = 0$
and $\{\alpha,\alpha\wedge\Omega_0\}=0$. These imply that $\{\alpha,I\} = 0$.
Hence, if we set $X_0 = \phi^{-1}(\alpha)$ then $X_0\in \Zs(\g)$ and
$B(X_0,X_0) = 0$ (Corollary \ref{3.1.13} and Lemma \ref{3.1.20}).

\begin{prop} \label{3.2.8}
Let $(\g,B)$ be a singular quadratic Lie superalgebra. If $\g$ is reduced then $\g_\ze$ is reduced.
\end{prop}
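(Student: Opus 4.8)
The plan is to prove the contrapositive, or rather to show directly that if $\g$ is reduced (so $\Zs(\g)$ is totally isotropic) and singular, then $\Zs(\g_\ze)$ is totally isotropic. The key structural fact I would exploit is the one already established right before the statement: since $\g$ is singular of type $\Sb_1$ (the type $\Sb_3$ case is handled separately, where $\g_\un$ is central and $\g_\ze$ is a reduced quadratic Lie algebra — there one invokes \cite{PU07} or checks directly that $\Zs(\g_\ze)=\Zs(\g)\cap\g_\ze$ inherits isotropy), there is an element $\alpha\in\Vs_I$ with $X_0=\phi^{-1}(\alpha)\in\Zs(\g)$ and $B(X_0,X_0)=0$; moreover $X_0\in\g_\ze$ since $\alpha\in\g_\ze^*$ by the proof of Lemma \ref{3.2.1}. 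The guiding principle is that the center of $\g_\ze$ is controlled by the center of $\g$ together with the $I_1$ part of the bracket.

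First I would establish the relation between $\Zs(\g_\ze)$ and $\Zs(\g)$. One has $\Zs(\g_\ze)\supseteq \Zs(\g)\cap\g_\ze$, but in general this can be strict: an element $Z\in\g_\ze$ commuting with all of $\g_\ze$ need not commute with $\g_\un$. Concretely, $Z\in\Zs(\g_\ze)$ lies in $\Zs(\g)$ iff $[Z,\g_\un]=0$, i.e. iff $B([Z,\g_\un],\g_\un)=0$, i.e. iff $\io_Z(I_1)=0$. So the obstruction is measured by $I_1$. The second step is to take an arbitrary $Z\in\Zs(\g_\ze)$ and show $B(Z,Z)=0$. Using invariance of $B$ and $Z\in\Zs(\g_\ze)$, for any decomposition coming from $I=\alpha\wedge\Omega_0+\alpha\otimes\Omega_1$ one can analyze $\io_Z(I)=\io_Z(\alpha)\,\Omega_0-\alpha\wedge\io_Z(\Omega_0)+\io_Z(\alpha)\,\Omega_1$ (signs aside), which since $Z\in\g_\ze$ reduces to $\alpha(Z)\Omega_0-\alpha\wedge\io_Z(\Omega_0)+\alpha(Z)\Omega_1$, an element of $\Es^{(2,\zero)}(\g)$ whose $\Alt^2(\g_\ze)$-component detects $[Z,\g_\ze]$ and whose $\sym^2(\g_\un)$-component detects $[Z,\g_\un]$. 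Since $Z\in\Zs(\g_\ze)$, the $\Alt^2(\g_\ze)$ piece $\io_Z(I_0)$ vanishes.

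The heart of the argument is then: from $Z\in\Zs(\g_\ze)$ and $\alpha\in\Vs_I$ with $\phi^{-1}(\alpha)=X_0\in\Zs(\g)$, I want to produce a null vector argument for $Z$. The natural move is to pair $Z$ against $X_0$: by Corollary \ref{3.1.13}, $B(Z,X_0)=\{\phi(Z),\alpha\}$, and since $X_0\in\Zs(\g)$ we have $\io_{X_0}(I)=0$ (Lemma \ref{3.1.20}), which forces relations on $\alpha(Z)$ and on the $\Omega_0,\Omega_1$ contracted at $X_0$. I expect the computation to show that $\g_\ze$ decomposes, via the reduced quadratic Lie algebra $\g_\ze$ being itself singular (Proposition \ref{3.2.7} when $\g_\ze$ is non-Abelian, and the trivial case $I_0=0$ otherwise), as $\g_\ze=\zk_\ze\oplusp\lk_\ze$ with $\zk_\ze$ the non-totally-isotropic part of $\Zs(\g_\ze)$; one must then rule out $\zk_\ze\neq\{0\}$. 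If $\zk_\ze\neq\{0\}$ pick $Z\in\zk_\ze$ with $B(Z,Z)\neq 0$; then $Z\notin[\g_\ze,\g_\ze]$ and $Z\notin\Zs(\g)$ (else it would contradict isotropy of $\Zs(\g)$), so $[Z,\g_\un]\neq\{0\}$. The contradiction should come from writing $I_1$ in a Darboux basis and using $\{I,I\}=0$: the bracket $[\,\cdot\,,\,\cdot\,]$ restricted so that $Z$ appears would force $Z\in[\g,\g]$ after all (because the $\sym^2(\g_\un)$ part $\Omega_1$ contributing to $[Z,\g_\un]$ also contributes to $[\g_\un,\g_\un]\ni$ something pairing nontrivially with $Z$), contradicting $B(Z,Z)\neq 0$ together with $[\g,\g]^\bot=\Zs(\g)$ totally isotropic. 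The main obstacle I anticipate is exactly this last step: carefully tracking, through the identity $\{I,I\}=0$ and the decomposition $I=\alpha\wedge\Omega_0+\alpha\otimes\Omega_1$, that a non-isotropic central element of $\g_\ze$ cannot exist — this requires a clean bookkeeping of which components of $I$ encode $[Z,\g_\ze]$, $[Z,\g_\un]$ and $[\g_\un,\g_\un]$, and is the place where one genuinely uses that $\g$ (not just $\g_\ze$) is reduced. Everything else is formal manipulation with $\io_Z$, $\wedge$, and Corollary \ref{3.1.13}.
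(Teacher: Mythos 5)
Your reduction to ruling out a non-isotropic $Z\in\Zs(\g_\ze)$ is the right first move and matches the paper, and your treatment of the type $\Sb_3$ case is fine. The gap is in what you call the heart of the argument. Having observed that such a $Z$ satisfies $Z\notin\Zs(\g)$, hence $[Z,\g_\un]\neq\{0\}$, hence $\alpha(Z)=B(X_0,Z)\neq 0$, you hope to extract a contradiction from $\{I,I\}=0$ and the fact that $Z$ pairs non-trivially with $[\g_\un,\g_\un]=\CC X_0$, concluding that ``$Z\in[\g,\g]$ after all.'' But $Z\in[\g,\g]$ together with $B(Z,Z)\neq 0$ is not a contradiction: only $\Zs(\g)=[\g,\g]^\perp$ is totally isotropic, not $[\g,\g]$ itself. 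Worse, the configuration you are trying to exclude by formal bookkeeping genuinely occurs: in $\g_{4,1}^s$ of Proposition \ref{3.3.3} the even part is a $2$-dimensional Abelian hyperbolic plane, so $\Zs(\g_\ze)=\g_\ze$ contains non-isotropic vectors $Z$ with $[Z,\g_\un]\neq\{0\}$, while $\{I,I\}=0$ of course holds. Hence no manipulation of $I_1$ and $\{I,I\}=0$ alone can close your argument; one must use $I_0\neq 0$, and your appeal to $\io_{X_0}(I)=0$ gives no information about $B(X_0,Z)$ (it only reproves $X_0\in\Zs(\g)$ and $B(X_0,X_0)=0$).

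The paper's proof runs the implication in the opposite direction and supplies exactly the missing ingredient. Writing $\g_\ze=\zk\oplusp\lk$ with $\zk$ a non-degenerate central ideal of $\g_\ze$ and picking $X\in\zk$ with $B(X,X)=1$, it uses that $\g_\ze$ is a (non-Abelian) singular quadratic Lie algebra, so that $I_0=\alpha\wedge\Omega_0$ is a nonzero element of $\Alt^3(\lk)$; since $\alpha\wedge I_0=0$, this forces $\alpha\in\lk^*$, i.e.\ $X_0=\phi^{-1}(\alpha)\in\lk$, and therefore $B(X,X_0)=0$. From this one computes $\io_X(I)=\{\phi(X),\alpha\wedge\Omega_0+\alpha\otimes\Omega_1\}=0$, so $X\in\Zs(\g)$ by Lemma \ref{3.1.20}, contradicting the total isotropy of $\Zs(\g)$. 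In other words, the contradiction is that $X$ \emph{is} central in $\g$, not that $[X,\g_\un]\neq\{0\}$ leads to trouble, and the decisive step is locating $X_0$ inside $\lk$ --- a step your sketch never identifies. Note also that both arguments tacitly assume $\g_\ze$ non-Abelian (this is where Proposition \ref{3.2.7} enters); the Abelian even part case is treated separately in Proposition \ref{3.4.2}.
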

\begin{proof}

As above, if $\g$ is a singular quadratic Lie superalgebra of type $\Sb_3$ then $\g_\un$ is central. By $\g$ reduced and $\g_\un$ non-degenerate, $\g_\un$ must be zero and then the result follows. 

If $\g$ is a singular quadratic Lie superalgebra of type $\Sb_1$. Assume that $\g_{\ze}$ is not reduced, i.e. $\g_{\ze}= \zk \oplusp \lk$ where $\zk$ is a non-trivial central ideal of $\g_{\ze}$, there is $X\in \zk$ such that $B(X,X)=1$. Since $\g$ is singular of type $\Sb_1$ then $\g_{\ze}$ is also singular. Hence, the element $X_0$ defined as above must be in $\lk$ and $I_0 = \alpha\wedge \Omega_0\in \Alt^3 (\lk)$ (see \cite{DPU10} for details). We also have $B(X,X_0) = 0$.

Let $\beta = \phi(X)$ so $\io_X(I) = \{\beta, I\} = \{\beta, \alpha\wedge\Omega_0+\alpha\otimes\Omega_1\} = 0$. That means $X\in \Zs(\g)$. This is a contradiction since $\g$ is reduced. Hence $\g_{\ze}$ must be reduced.
\end{proof}

\begin{lem} \label{3.2.9} Let $\g_1$ and $\g_2$ be non-Abelian quadratic
  Lie superalgebras. Then $\g_1 \oplusp \g_2$ is an ordinary quadratic Lie
  algebra.
\end{lem}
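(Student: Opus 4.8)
The plan is to compute the associated invariant $I$ of $\g = \g_1 \oplusp \g_2$ directly and show that $\Vs_I = \{0\}$, i.e. that no nonzero $\alpha \in \g^*$ satisfies $\alpha \wedge I = 0$. First I would observe that since $\g_1$ and $\g_2$ are ideals of $\g$ with $[\g_1,\g_2]=\{0\}$ and $B(\g_1,\g_2)=0$, the bracket on $\g$ decomposes as a direct sum, so the invariant $I$ of $\g$ is $I = I^{(1)} + I^{(2)}$, where $I^{(i)} \in \Es^{(3,\zero)}(\g_i) \subset \Es^{(3,\zero)}(\g)$ is the associated invariant of $\g_i$. Since each $\g_i$ is non-Abelian, each $I^{(i)}$ is nonzero by the correspondence in Propositions \ref{3.1.17}--\ref{3.1.18}.

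The key step is then the following linear-algebra fact about the wedge product in $\Es(\g)$: if $\alpha = \alpha^{(1)} + \alpha^{(2)}$ with $\alpha^{(i)} \in \g_i^*$, then
\[
\alpha \wedge I = \alpha^{(1)} \wedge I^{(1)} + \alpha^{(2)} \wedge I^{(2)} + \alpha^{(1)} \wedge I^{(2)} + \alpha^{(2)} \wedge I^{(1)},
\]
and the four terms live in distinct summands of $\Es(\g)$ coming from the tensor decomposition $\Es(\g) \cong \Es(\g_1) \otimes \Es(\g_2)$ according to bidegree in the two factors. In particular, $\alpha^{(1)} \wedge I^{(1)} \in \Es(\g_1)$ (degree $0$ in the $\g_2$-factor), $\alpha^{(2)} \wedge I^{(2)} \in \Es(\g_2)$, while $\alpha^{(1)} \wedge I^{(2)}$ and $\alpha^{(2)} \wedge I^{(1)}$ have degree $1$ in one factor and degree $3$ in the other. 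So $\alpha \wedge I = 0$ forces each term to vanish separately; from $\alpha^{(1)} \wedge I^{(2)} = 0$ with $I^{(2)} \neq 0$ (and $I^{(2)}$ lying entirely in the $\g_2$-factor, where $\alpha^{(1)}$ acts as a scalar-degree-$1$ element of the $\g_1$-factor) we get $\alpha^{(1)} = 0$, and symmetrically $\alpha^{(2)} = 0$. Hence $\Vs_I = \{0\}$ and $\dup(\g) = 0$, so $\g$ is ordinary.

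The main obstacle is making the bidegree bookkeeping in $\Es(\g_1)\otimes\Es(\g_2)$ precise enough to justify that the four displayed terms cannot cancel each other — one must check that, say, $\alpha^{(1)} \wedge I^{(2)}$ and $\alpha^{(2)} \wedge I^{(1)}$ do not land in the same homogeneous component. This is where the $\ZZ \times \ZZ_2$-grading and the fact that $I^{(i)}$ has total $\ZZ$-degree $3$ (with pieces in $\Alt^3(\g_{i,\ze})$ and $\Alt^1(\g_{i,\ze})\otimes\sym^2(\g_{i,\un})$, per Proposition \ref{3.1.17}) must be used carefully; a clean way is to argue separately for each homogeneous component of $I^{(i)}$. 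A minor subtlety worth a sentence: one should also confirm that $\g_1 \oplusp \g_2$ being non-Abelian (so that $\dup$ is even defined) is automatic here since each $\g_i$ is non-Abelian. Everything else is routine, relying only on the multiplicativity of the grading under $\wedge$ recorded in Section 1.
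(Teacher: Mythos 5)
Your proposal is correct and is essentially the paper's own argument: the paper likewise writes $\Es(\g)=\Es(\g_1)\otimes\Es(\g_2)$, decomposes $I=I_1+I_2$ with each $I_i$ a nonzero element of $\Es^3(\g_i)$, and concludes from the bigrading that $\alpha\wedge I=0$ forces $\alpha_1=\alpha_2=0$. Your write-up just makes explicit the bidegree bookkeeping (the four terms landing in the distinct components $\Es^r(\g_1)\otimes\Es^s(\g_2)$) that the paper leaves implicit.
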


\begin{proof}
Set $\g = \g_1 \oplusp \g_2$. Denote by $I$, $I_1$ and $I_2$ their non-trivial associated invariants, respectively. One has $\Es(\g) = \Es(\g_1)\otimes \Es(\g_2)$, $ \Es^k(\g) = \bigoplus_{r+s=k}\limits\Es^r(\g_1)\otimes \Es^s(\g_2)$ and $I = I_1 + I_2$ where $I_1\in \Es^3(\g_1)$, $I_2\in \Es^3(\g_2)$. Therefore, if $\alpha = \alpha_1+\alpha_2\in \g_1^*\oplus\g_2^*$ such that $\alpha\wedge I = 0$ then $\alpha_1 = \alpha_2 = 0$. 
\end{proof}

\begin{defn}
A quadratic Lie superalgebra $\g$ is {\em indecomposable} if $\g =
\g_1 \oplusp \g_2$, with $\g_1$ and $\g_2$ ideals of $\g$, then $\g_1$
or $\g_2 = \{0\}$.
\end{defn}

 The following result shows that indecomposable and reduced notions
 are equivalent for singular quadratic Lie superalgebras.

\begin{prop}\label{3.2.11}
  Let $\g$ be a singular quadratic Lie superalgebra. Then $\g$ is reduced
  if and only if $\g$ is indecomposable.
\end{prop}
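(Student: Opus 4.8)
The plan is to prove the two implications separately. The external inputs I will use are: the canonical decomposition of Proposition \ref{3.2.3}, the elementary fact that a singular quadratic Lie superalgebra is non-Abelian (hence nonzero, so the condition $\g\neq\{0\}$ in the definition of \emph{reduced} is automatic), the observation that $\zk=\{0\}$ precisely when $\Zs(\g)$ is totally isotropic, and Lemma \ref{3.2.9}.

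For the implication ``indecomposable $\Rightarrow$ reduced'' I would argue by contraposition. Assume $\g$ is not reduced. Since $\g$ is non-Abelian, Proposition \ref{3.2.3} gives a decomposition $\g=\zk\oplusp\lk$ with $\zk$ a central ideal and $\lk$ a non-Abelian (in particular nonzero) ideal. As $\g$ is not reduced, $\Zs(\g)$ is not totally isotropic, so $\zk\neq\{0\}$. Thus $\g$ is an orthogonal direct sum of two nonzero ideals, i.e. $\g$ is not indecomposable. (Note this half does not use the singular hypothesis.)

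For the converse, ``reduced $\Rightarrow$ indecomposable'', suppose $\g=\g_1\oplusp\g_2$ with $\g_1,\g_2$ ideals, and I show one of them is $\{0\}$ by a case analysis on whether the factors are Abelian. If, say, $\g_2$ is Abelian, then $[\g_2,\g]=[\g_2,\g_1]+[\g_2,\g_2]\subseteq(\g_1\cap\g_2)+\{0\}=\{0\}$, so $\g_2\subseteq\Zs(\g)$; since $\g$ is reduced, $\Zs(\g)$ is totally isotropic, forcing $B|_{\g_2\times\g_2}=0$, which together with the non-degeneracy of $B$ on $\g_2$ (the meaning of $\oplusp$) gives $\g_2=\{0\}$; symmetrically if $\g_1$ is Abelian. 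If instead both $\g_1$ and $\g_2$ are non-Abelian, Lemma \ref{3.2.9} shows $\g=\g_1\oplusp\g_2$ is ordinary, i.e. $\dup(\g)=0$, contradicting that $\g$ is singular. Hence in every such decomposition one factor vanishes, so $\g$ is indecomposable.

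The argument is short, and the only place needing genuine care is the second implication: one must observe that a non-degenerate Abelian direct summand is necessarily trivial, which is exactly where the hypothesis that $\g$ is \emph{reduced} enters, while the hypothesis that $\g$ is \emph{singular} is used only to rule out the ``both factors non-Abelian'' configuration via Lemma \ref{3.2.9}.
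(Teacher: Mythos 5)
Your proof is correct and follows essentially the same route as the paper: the forward implication via the decomposition of Proposition \ref{3.2.3} (which the paper dismisses as obvious), and the converse by showing any nonzero orthogonal summand of a reduced algebra must be non-Abelian and then invoking Lemma \ref{3.2.9} against singularity. Your Abelian/non-Abelian case split is just a mild rephrasing of the paper's ``each $\g_i$ is reduced or $\{0\}$'' dichotomy.
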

\begin{proof}
If $\g$ is indecomposable then it is obvious that $\g$ is reduced. If
$\g$ is reduced, assume that $\g = \g_1 \oplusp \g_2$, with $\g_1$ and
$\g_2$ ideals of $\g$, then $\Zs(\g_i)\subset[\g_i,\g_i]$ for $i =
1,2$. Therefore, $\g_i$ is reduced or $\g_i = \{0\}$. If $\g_1$ and
$\g_2$ are both reduced, by Lemma \ref{3.2.9}, then $\g$ is
ordinary. Hence $\g_1$ or $\g_2 = \{0\}$.
\end{proof}

\section{Elementary quadratic Lie superalgebras}

In this section, we consider the first non-trivial case of singular quadratic Lie superalgebras: elementary quadratic Lie superalgebras. We begin with the following definition.
\begin{defn}
Let $\g$ be a quadratic Lie superalgebra and $I$ be its associated invariant. We say that $\g$ is an {\em elementary} quadratic Lie superalgebra if $I$ is decomposable.
\end{defn}

Keep notations as in Section 2. If $I=I_0+I_1$ is decomposable, where $I_0\in
\Alt^3 (\g_\ze)$ and $I_1\in \Alt{}^1 (\g_\ze)\otimes \sym^2(\gI)$ then it is
obvious that $I_0$ or $I_1$ is zero. The case $I_1=0$, i.e. $I$ decomposable
in $\Alt^3 (\gO)$, corresponds to singular quadratic Lie superalgebras of
type $\Sb_3$ and then there is nothing to do. Now we assume $I$ is a nonzero
decomposable element in $\Alt^1 (\gO)\otimes \sym^2(\g_{\un})$ then $I$
can be written by:

\[I = \alpha\otimes pq
\]
where $\alpha\in \g_\ze^*$ and $p,q\in \g_\un^*$. It is clear that $\g$ is a singular quadratic Lie superalgebra of type $\Sb_1$.

\begin{lem}\label{3.3.2}
Let $\g$ be a reduced elementary quadratic Lie superalgebra having $I = \alpha\otimes pq$ where $\alpha\in \g_\ze^*$ and $p,q\in \g_\un^*$. Set $X_\ze = \phi^{-1}(\alpha)$ then one has:
\begin{enumerate}

\item $\dim(\g_\ze)=2$ and $\g_\ze\cap \Zs(\g) = \CC X_\ze$.
\item Let $X_\un =\phi^{-1} (p)$, $Y_\un =\phi^{-1} (q)$ and $U=\spa\{X_\un, Y_\un\}$ then
\begin{itemize}

\item[(i)] $\dim (\g_\un)=2$ if $\dim (U)=1$ or $U$ is non-degenerate.

\item[(ii)] $\dim (\g_\un)=4$ if $U$ is totally isotropic.
\end{itemize}
\end{enumerate}
\end{lem}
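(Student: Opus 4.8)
The statement describes a reduced elementary quadratic Lie superalgebra with $I = \alpha \otimes pq$, and the plan is to extract structural information from the equation $\{I,I\}=0$ together with the reducedness hypothesis. First I would record the general principle, established just before Lemma~\ref{3.3.2}, that $X_\ze = \phi^{-1}(\alpha) \in \Zs(\g)$ and $B(X_\ze,X_\ze)=0$; this uses Corollary~\ref{3.1.13} and Lemma~\ref{3.1.20}. Since $I_1 = I$ is nonzero and $I_0 = 0$, the even part $\g_\ze$ is Abelian (as noted at the start of Section~2, $\g_\ze$ Abelian $\iff I_0 = 0$), so $[\g_\ze,\g_\ze]=\{0\}$. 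The bracket of $\g$ is recovered from $I$ via the formula $I(X,Y,Z)=B([X,Y],Z)$, so the only nonzero brackets come from the term $\alpha\otimes pq$: schematically $[\g_\un,\g_\un]\subseteq \CC X_\ze$ and $[X_\ze,\g_\un]$ is controlled by $p,q$, while $[X_\ze,\g_\ze]=\{0\}$ and $[\g_\ze,\g_\un]$ lies in $\g_\un$ with image governed by $\alpha$ contracted against $p,q$.

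For part~(1): I would compute $[\g,\g]$ explicitly from $I = \alpha\otimes pq$. Contracting $I$ shows $[\g,\g]$ is spanned by $X_\ze$ together with the elements $\phi^{-1}(\io_X(pq))$ for $X\in\g$, i.e. by $X_\ze$ and by vectors in $\g_\un$. Now $\Zs(\g) = [\g,\g]^\perp$. Because $\g$ is reduced, $\Zs(\g)$ is totally isotropic, hence so is $\Zs(\g)\cap\g_\ze$; but $\g_\ze$ is Abelian, so $\g_\ze \cap \Zs(\g) = \g_\ze \cap [\g,\g]^\perp = \{Z\in\g_\ze : \alpha(Z)=0\}$, which is a hyperplane in $\g_\ze$. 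This hyperplane is totally isotropic, and since $B|_{\g_\ze}$ is non-degenerate, a totally isotropic subspace has dimension at most $\tfrac12\dim(\g_\ze)$. Combined with the hyperplane being codimension one, this forces $\dim(\g_\ze)\le 2$, and since $\g$ is reduced and non-Abelian with $\alpha\ne0$ we cannot have $\dim(\g_\ze)\le1$; thus $\dim(\g_\ze)=2$ and $\g_\ze\cap\Zs(\g)=\ker\alpha = \CC X_\ze$ (using $\alpha(X_\ze)=B(X_\ze,X_\ze)=0$).

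For part~(2): with $X_\un = \phi^{-1}(p)$, $Y_\un = \phi^{-1}(q)$ and $U = \spa\{X_\un,Y_\un\}$, the idea is that the reduced hypothesis forces $\g_\un$ to be ``as small as possible'' given $U$. From the bracket formula, $[\g_\un,\g_\un]\subseteq\CC X_\ze$ and $[X_\ze,\g_\un]\subseteq U$ (a computation: $[X_\ze, Z] = \pm q(Z) X_\un \mp p(Z) Y_\un$ up to the Darboux/orthonormal normalizations), while elements of $\g_\un$ orthogonal to $U$ and not in any bracket would be central. More precisely, $\Zs(\g)\cap\g_\un = \{Z\in\g_\un : p(Z)=q(Z)=0\} = (\spa\{X_\un,Y_\un\})^{\perp_{\g_\un}} = U^{\perp}\cap\g_\un$. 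Reducedness says this is totally isotropic and contained in $[\g,\g]\cap\g_\un$. When $\dim(U)=1$ or $U$ is non-degenerate, $U^\perp\cap\g_\un$ is a non-degenerate complement issue: if $U$ is non-degenerate then $\g_\un = U \oplusp U^\perp$ and $U^\perp$ is central, hence zero by reducedness, giving $\dim(\g_\un)=2$; if $\dim(U)=1$ (so $X_\un,Y_\un$ proportional, say $q = \lambda p$), a similar argument with the radical of $U$ shows $\dim(\g_\un)=2$. When $U$ is totally isotropic (so $\dim(U)=2$ and $B(X_\un,X_\un)=B(X_\un,Y_\un)=B(Y_\un,Y_\un)=0$), reducedness allows $U$ itself to sit inside $[\g,\g]$, and a minimal non-degenerate superspace containing a totally isotropic plane together with the one needed dual pairing has dimension $4$; one checks $\g_\un$ must contain a dual totally isotropic plane $U'$ pairing with $U$, that $U\oplusp U'$ is non-degenerate, that its orthogonal complement would be central and hence zero, giving $\dim(\g_\un)=4$.

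\textbf{Main obstacle.} The delicate point is case~(2)(ii): showing that when $U$ is totally isotropic one gets exactly dimension $4$ and not more. The upper bound needs a careful argument that any vector in $\g_\un$ outside the span of $U$ and its ``dual'' partners is central, using $\Zs(\g)=[\g,\g]^\perp$ and the totally isotropic condition on $\Zs(\g)$; the lower bound needs that $\g_\un$ genuinely contains partners dual to $X_\un,Y_\un$, which follows from non-degeneracy of $B|_{\g_\un}$ restricted appropriately. Distinguishing this cleanly from case~(2)(i) — where the non-degeneracy or one-dimensionality of $U$ makes $U$ itself already ``large enough'' — is where the bulk of the bookkeeping lives, though each individual verification is routine linear algebra with the explicit bracket formula.
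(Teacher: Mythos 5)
Your proposal is correct and follows essentially the same route as the paper: identify $\Zs(\g)\cap\g_\ze$ and $\Zs(\g)\cap\g_\un$ by contracting $I$ (equivalently, via $\{\cdot,I\}$), use that reducedness makes the center totally isotropic to force $\dim(\g_\ze)=2$ and $\g_\ze\cap\Zs(\g)=\ker\alpha=\CC X_\ze$, and in part (2) split off a non-degenerate orthogonal complement $W$ of $U$ (or of $U\oplus V$ with $V$ a dual isotropic partner), which is central and hence zero. One inessential slip: the displayed bracket $[X_\ze,Z]=\pm q(Z)X_\un\mp p(Z)Y_\un$ should involve a generic $X\in\g_\ze$ with coefficient $\alpha(X)=B(X_\ze,X)$ (it vanishes for $X=X_\ze$, which is central), but this formula plays no role in your actual argument.
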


\begin{proof}\hfill
\begin{enumerate}
\item Let $\beta$ be an element in $\g_\ze^*$. It is easy to see that $\{\beta , \alpha\} = 0$ if and only if $\{\beta , I\} = 0$, equivalently $\phi^{-1}(\beta) \in \Zs(\g)$. Therefore, $(\phi^{-1}(\alpha))^\bot \cap \g_\ze \subset \Zs(\g)$. It means that $\dim(\g_\ze) \leq 2$ since $\g$ is reduced (see \cite{Bou59}). Moreover, $X_\ze = \phi^{-1}(\alpha)$ is isotropic then $\dim(\g_\ze) = 2$. If $\dim(\g_\ze\cap \Zs(\g)) = 2$ then $\g_\ze\subset \Zs(\g)$. Since $B$ is invariant we obtain $\g$ Abelian (a contradiction). Therefore, $\g_\ze\cap \Zs(\g) = \CC X_\ze$.
\item It is obvious that $\dim (\g_\un) \geq 2$. If $\dim (U)=1$ then $U$ is a totally isotropic subspace of $\g_\un$ so there exists a one-dimensional subspace $V$ of $\g_\un$ such that $B$ is non-degenerate on $U \oplus V$ (see \cite{Bou59}). Let $\g_\un = (U \oplus V)\oplusp W$ where $W = (U \oplus V)^\bot$ then for all $f \in \phi(W)$ one has:
\[ \{f, I\} = \{f,  \alpha\otimes pq \} = -\alpha\otimes (\{f, p\} q + p \{f, q\}) = 0.
\]
Therefore, $W \subset \Zs(\g)$. Since $B$ is non-degenerate on $W$ and $\g$ is reduced then $W = \{0\}$.

If $\dim (U)=2$ then $U$ is non-degenerate or totally isotropic. If $U$ is non-degenerate, let $\g_{\un} = U \oplusp W$ where $W = U ^\bot$. If $U$ is totally isotropic, let $\g_{\un} = (U \oplus V)\oplusp W$ where $W = (U \oplus V)^\bot$ in $\g_{\un}$ and $B$ is non-degenerate on $U \oplus V$. In the both cases, similarly as above, one has $W$ a non-degenerate central ideal so $W = \{0\}$. Therefore, $\dim (\g_\un) = \dim (U) = 2$ if $U$ is non-degenerate and  $\dim (\g_\un) = \dim (U\oplus V) = 4$ if $U$ is totally isotropic.
\end{enumerate}
\end{proof}

In the sequel, we obtain the classification result.

\begin{prop} \label{3.3.3}
Let $\g$ be a reduced elementary quadratic Lie superalgebra then $\g$
is i-isomorphic to one of the following Lie superalgebras:

\begin{enumerate}

\item $\g_i$ $(3\leq i\leq 6)$ the reduced singular quadratic Lie
  algebras of type $\Sb_3$ given in \cite{PU07}.

\item $\g_{4,1}^s = (\CC X_{\zero} \oplus \CC Y_{\zero})\oplus (\CC
  X_{\un} \oplus \CC Z_{\un})$ where $\g_\ze=\spa\{X_\ze,Y_\ze\}$,
  $\g_\un=\spa\{X_\un,Z_\un\}$, the bilinear form $B$ is defined
  by \[B(X_{\zero}, Y_{\zero})=B(X_{\un}, Z_{\un})=1,\] the other are
  zero and the Lie super-bracket is given
  by \[[Z_{\un},Z_{\un}]=-2X_{\zero},\ \ [Y_{\zero},Z_{\un}]=-2X_{\un},\]
  the other are trivial.

\item $\g_{4,2}^s = (\CC X_{\zero} \oplus \CC Y_{\zero})\oplus (\CC
  X_{\un} \oplus \CC Y_{\un})$ where $\g_\ze=\spa\{X_\ze,Y_\ze\}$,
  $\g_\un=\spa\{X_\un,Y_\un\}$, the bilinear form $B$ is defined
  by \[B(X_{\zero}, Y_{\zero})=B(X_{\un}, Y_{\un})=1,\] the other are
  zero and the Lie super-bracket is given
  by \[[X_{\un},Y_{\un}]=X_{\zero},\ \ [Y_{\zero},X_{\un}]=X_{\un},
  \ \ [Y_{\zero},Y_{\un}]=-Y_{\un},\] the other are trivial.  

\item $\g_6^s = (\CC X_{\zero} \oplus \CC Y_{\zero})\oplus (\CC
  X_{\un} \oplus \CC Y_{\un}\oplus \CC Z_{\un}\oplus \CC T_{\un})$
  where $\g_\ze=\spa\{X_\ze,Y_\ze\}$, $\g_\un=\spa\{X_\un,Y_\un,
  Z_\un,T_\un\}$, the bilinear form $B$ is defined by \[B(X_{\zero},
  Y_{\zero})=B(X_{\un}, Z_{\un})=B(Y_{\un}, T_{\un})=1,\] the other
  are zero and the Lie super-bracket is given
  by \[[Z_{\un},T_{\un}]=-X_{\zero},
  \ \ [Y_{\zero},Z_{\un}]=-Y_{\un},\ \ [Y_{\zero},T_{\un}]=-X_{\un},\]
  the other are trivial.

\end{enumerate}
\end{prop}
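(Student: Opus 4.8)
The plan is to distinguish the two possibilities for the decomposition $I = I_0 + I_1$, with $I_0 \in \Alt^3(\gO)$ and $I_1 \in \Alt^1(\gO)\otimes\sym^2(\gI)$: since $I$ is decomposable, either $I_1 = 0$ or $I_0 = 0$. If $I_1 = 0$, then $I = I_0$ is decomposable in $\Alt^3(\gO)$, so $\g$ is singular of type $\Sb_3$; as observed in Section 2, $\gI$ is then a central ideal of $\g$, and since $\g$ is reduced and $\gI$ is non-degenerate we get $\gI = \{0\}$. Hence $\g = \gO$ is a reduced, singular of type $\Sb_3$ (i.e.\ elementary) quadratic Lie algebra, and these are exactly the $\g_i$, $3 \leq i \leq 6$, of \cite{PU07}. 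This gives case (1).

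Assume from now on $I_1 \neq 0$, so (as noted before Lemma \ref{3.3.2}) $I$ is a nonzero decomposable element of $\Alt^1(\gO)\otimes\sym^2(\gI)$, say $I = \alpha\otimes pq$ with $\alpha \in \gO^*$, $p, q \in \gI^*$, and $\g$ is of type $\Sb_1$. Put $X_\ze = \phi^{-1}(\alpha)$, $X_\un = \phi^{-1}(p)$, $Y_\un = \phi^{-1}(q)$, $U = \spa\{X_\un, Y_\un\}$. The structural work is done by Lemma \ref{3.3.2}: it forces $\dim(\gO) = 2$ with $X_\ze$ isotropic and $\gO\cap\Zs(\g) = \CC X_\ze$, and it puts us in exactly one of three situations: (a) $\dim U = 1$ and $\dim(\gI) = 2$ with $U$ totally isotropic; (b) $\dim U = 2$, $\dim(\gI) = 2$, $U$ non-degenerate; (c) $\dim U = 2$, $\dim(\gI) = 4$, $U$ totally isotropic. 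In each case I would fix a basis $\{X_\ze, Y_\ze\}$ of $\gO$ with $B(X_\ze, X_\ze) = B(Y_\ze, Y_\ze) = 0$, $B(X_\ze, Y_\ze) = 1$, and a Darboux basis of $\gI$ adapted to $U$ --- in case (c) completed by a totally isotropic complement dually paired with $U$, exactly as in the proof of Lemma \ref{3.3.2} --- with the scalings chosen so that $\alpha = \phi(X_\ze)$ and $p, q$ are the corresponding dual forms (using the freedom $\alpha\otimes pq = (t\alpha)\otimes(t^{-1}pq)$).

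Now the bracket of $\g$ is completely determined by $B$ and $I$ through $I(X,Y,Z) = B([X,Y],Z)$ (Proposition \ref{3.1.18}), and choosing an adapted basis as above is precisely the choice of an i-isomorphism onto a normal form; so it remains to read off the bracket from $I = \alpha\otimes pq$ in the chosen basis. One finds that $[\gO,\gO] = \{0\}$ (consistent with $I_0 = 0$), that $[\gI,\gI]\subseteq\CC X_\ze$ with $[u,v] = (pq)(u,v)\,X_\ze$, and that $\ad(Y_\ze)$ acts on $\gI$ as $\phi^{-1}$ composed with $u\mapsto (pq)(u,\cdot)$, all other brackets vanishing. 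Substituting the adapted basis and rescaling the one remaining free scalar then yields $\g_{4,1}^s$, $\g_{4,2}^s$, $\g_6^s$ in cases (a), (b), (c) respectively. That these are genuine quadratic Lie superalgebras follows since $\{I,I\} = \{\alpha,\alpha\}\otimes(pq)^2 = B(X_\ze, X_\ze)\,(pq)^2 = 0$ (Corollary \ref{3.1.13} together with the computation preceding Proposition \ref{3.2.8}), using that $X_\ze$ is isotropic; and each is visibly reduced.

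I expect the only real difficulty to be the bookkeeping of signs and factors of $2$ coming from the conventions for $pq\in\sym^2(\gI)$, for the super-exterior product, and for $\phi$; these must be absorbed into the rescaling of the basis so that the structure constants come out exactly as stated. If one also wants the list to be irredundant, one notes that the three algebras are pairwise non-i-isomorphic, since $\dim(\gI)$ distinguishes $\g_6^s$ from the others and the rank of the symmetric form $pq$ (equivalently, whether $\ad(Y_\ze)|_{\gI}$ is nilpotent or semisimple) distinguishes $\g_{4,1}^s$ from $\g_{4,2}^s$. The conceptual content, however, is entirely contained in Lemma \ref{3.3.2}, which reduces the classification to the three explicit quadratic $\ZZ_2$-graded vector spaces above.
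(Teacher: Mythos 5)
Your proposal is correct and follows essentially the same route as the paper: case (1) is the $\Sb_3$/decomposable-$I_0$ situation handled by reducedness and the classification in \cite{PU07}, and the remaining cases are exactly the trichotomy of Lemma \ref{3.3.2}, after which one reads off the bracket from $I=\alpha\otimes pq$ via $I(X,Y,Z)=B([X,Y],Z)$ in an adapted basis and normalizes the free scalar. The only content you defer --- the explicit evaluation of $pq$ on $\gI$ giving the stated structure constants (e.g. the factor $-2$ in $[Z_\un,Z_\un]=-2X_\ze$) --- is precisely the computation the paper carries out, so nothing essential is missing.
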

\begin{proof}\hfill
\begin{enumerate}

\item This statement corresponds to the case where $I$ is a decomposable 3-form 
in $\Alt^3 (\gO)$. Therefore, the result is obvious.

Assume that $I = \alpha\otimes pq \in \Alt^1 (\gO)\otimes
\sym^2(\gI)$. By the previous lemma, we can write $\g_{\zero} = \CC X_{\zero}
\oplus \CC Y_{\zero}$ where $X_{\zero} = \phi^{-1}(\alpha)$, $B(X_{\zero},
X_{\zero}) = B(Y_{\zero}, Y_{\zero}) = 0$, $B(X_{\zero}, Y_{\zero}) = 1$. Let
$X_\un =\phi^{-1} (p),\ Y_\un =\phi^{-1} (q)$ and $U=\spa\{X_\un, Y_\un\}$.

\item If $\dim (U) = 1$ then $Y_\un = kX_\un$ with some nonzero $k \in
  \CC$. Therefore, $q = kp$ and $I = k\alpha\otimes p^2$. Replacing
  $X_{\zero}$ by $kX_{\zero}$ and $Y_\ze$ by $\frac{1}{k}Y_\ze$, we
  can assume that $k=1$. Let $Z_\un$ be an element in $\g_{\un}$ such
  that $B(X_{\un}, Z_{\un}) = 1$.

Now, let $X\in\g_\ze,Y,Z\in\g_\un$. By using (1.7) and (1.8) of \cite{BP89}, one has:
\[ B(X,[Y,Z]) = -2\alpha (X)p(Y)p(Z) = -2B(X_{\zero} , X) B(X_{\un}, Y) B(X_{\un}, Z).
\]
Since $B|_{\g_\ze\times\g_\ze}$ is non-degenerate then:
\[ [Y, Z] = -2B(X_{\un}, Y)B(X_{\un}, Z)X_{\zero} ,\ \forall\ Y, Z \in \g_{\un}.
\]
Similarly and by the invariance of $B$, we also obtain:
\[ [X, Y] = -2B(X_{\zero} , X) B(X_{\un}, Y) X_{\un},\ \forall\ X \in \g_{\zero} ,\ Y\in \g_\un
\]
and (2) follows.

\item If $\dim (U) = 2$ and $U$ is non-degenerate then $B(X_{\un}, Y_{\un}) = a \neq 0$. Replacing $X_{\un}$ by $\frac{1}{a} X_{\un}$, $\ X_{\zero}$ by $aX_{\zero}$ and $Y_{\zero}$ by $\frac{1}{a}Y_{\zero}$, we can assume that $a=1$. Then one has $\g_\ze=\spa\{X_{\zero},Y_{\zero}\}$, $\g_\un=\spa\{X_{\un}, Y_{\un}\}$, $B(X_{\zero}, Y_{\zero})=B(X_{\un}, Y_{\un})=1$ and $I = \alpha\otimes pq$. 
Let $X\in\g_\ze,\ Y,Z\in\g_\un$, we have:
\[ B(X,[Y,Z]) = I(X,Y,Z) = -\alpha (X)(p(Y)q(Z) + p(Z)q(Y)).
\]
Therefore, the Lie super-bracket is defined:
\[ [Y, Z] = -(B(X_{\un}, Y)B(Y_{\un}, Z) + B(X_{\un}, Z)B(Y_{\un}, Y))X_{\zero} ,\ \forall\ Y, Z \in \g_{\un},
\]
\[ [X, Y] = -B(X_{\zero}, X)(B(X_{\un}, Y)Y_{\un} + B(Y_{\un}, Y)X_{\un}) ,\ \forall\  X \in \g_{\zero} ,\ Y\in \g_\un
\]
and (3) follows.

\item If $\dim (U) = 2$ and $U$ is totally isotropic: let $V =
  \spa\{Z_{\un}, T_{\un}\}$ be a 2-dimensional totally isotropic
  subspace of $\g_{\un}$ such that $\g_{\un}=U\oplusp V$ and
  $B(X_{\un}, Z_{\un})=B(Y_{\un}, T_{\un})=1$.  If
  $X\in\g_\ze,Y,Z\in\g_\un$ then:
\[ B(X,[Y,Z]) = I(X,Y,Z) = -\alpha (X)(p(Y)q(Z) + p(Z)q(Y)).
\]
We obtain the Lie super-bracket as follows:
\[ [Y, Z] = -(B(X_{\un}, Y)B(Y_{\un}, Z) + B(X_{\un}, Z)B(Y_{\un}, Y))X_{\zero} ,\ \forall\ Y, Z \in \g_{\un},
\]
\[ [X, Y] = -B(X_{\zero}, X)(B(X_{\un}, Y)Y_{\un} + B(Y_{\un}, Y)X_{\un}) ,\ \forall\  X \in \g_{\zero} ,\ Y\in \g_\un.
\]
Thus, $[Z_{\un},T_{\un}]=-X_{\zero}$, $[Y_{\zero},Z_{\un}]=-Y_{\un}$, $[Y_{\zero},T_{\un}]=-X_{\un}$.
\end{enumerate}
\end{proof}

\section{Quadratic Lie superalgebras with 2-dimensional even part}

This section is devoted to study another particular case of singular quadratic Lie
superalgebras: quadratic Lie superalgebras with 2-dimensional even part. As we
shall see, they are can be seen as a symplectic version of solvable singular
quadratic Lie algebras. The first result classifies these algebras with respect to the $\dup$-number.

\begin{prop}\label{3.4.1}
Let $\g$ be a non-Abelian quadratic Lie superalgebra with $\dim(\g_\ze) =2$. Then $\g$ is a singular quadratic Lie superalgebra of type $\Sb_1$.
\end{prop}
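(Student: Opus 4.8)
The plan is to show that the associated invariant $I$ of $\g$ is necessarily a simple tensor $\beta\otimes\Omega$ with $\beta\in\g_\ze^*$ and $\Omega\in\sym^2(\g_\un)$ both nonzero, and then to compute $\Vs_I$ directly and find it one-dimensional.

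First I would use the dimension hypothesis: since $\dim(\g_\ze)=2$ we have $\Alt^3(\g_\ze)=\{0\}$, so in the decomposition $I=I_0+I_1$ of Proposition \ref{3.1.17} one has $I_0=0$ and hence $I=I_1\in\Alt^1(\g_\ze)\otimes\sym^2(\g_\un)$. As $\g$ is non-Abelian and $B$ is non-degenerate, $I\neq0$. Fix an orthonormal basis $\{Z_1,Z_2\}$ of $(\g_\ze,B|_{\g_\ze\times\g_\ze})$ and write $I=\alpha_1\otimes\Omega_1+\alpha_2\otimes\Omega_2$ with $\alpha_i=\phi(Z_i)$ and $\Omega_i\in\sym^2(\g_\un)$. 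Next I would compute $\{I,I\}$ via the explicit formula of Proposition \ref{3.1.12} and isolate its component in $1\otimes\sym^4(\g_\un)=\Es^{(4,\zero)}(\g)$. The terms built from the odd Darboux directions $\io_{X_\un^k},\io_{Y_\un^k}$ retain the $\Alt^1(\g_\ze)$-factors (Lemma \ref{3.1.10}) and so land in $\Alt^2(\g_\ze)\otimes\sym^2(\g_\un)$; since $\io_{Z_j}(\alpha_i)=\delta_{ij}$, the even directions give $\io_{Z_j}(I)=\Omega_j$, and the component of $\{I,I\}$ in $1\otimes\sym^4(\g_\un)$ equals $\pm(\Omega_1^2+\Omega_2^2)$. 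From $\{I,I\}=0$ (Proposition \ref{3.1.17}(3)) we get
\[\Omega_1^2+\Omega_2^2=0 \quad\text{in}\quad \sym(\g_\un).\]

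This is the conceptual heart: $\sym(\g_\un)$ is a polynomial ring $\CC[p_1,\dots,p_n,q_1,\dots,q_n]$, hence an integral domain, so the factorization $\Omega_1^2+\Omega_2^2=(\Omega_1+\sqrt{-1}\,\Omega_2)(\Omega_1-\sqrt{-1}\,\Omega_2)=0$ forces $\Omega_1=\pm\sqrt{-1}\,\Omega_2$. Since $I\neq0$ forces $\Omega_2\neq0$, we may then write $I=\beta\otimes\Omega$ with $\beta:=\pm\sqrt{-1}\,\alpha_1+\alpha_2\neq0$ and $\Omega:=\Omega_2\neq0$. Finally, for $\gamma=\gamma_0+\gamma_1\in\g_\ze^*\oplus\g_\un^*$ one has $\gamma\wedge I=(\gamma_0\wedge\beta)\otimes\Omega-\beta\otimes(\gamma_1\Omega)$, a sum of two elements lying in the distinct graded pieces $\Alt^2(\g_\ze)\otimes\sym^2(\g_\un)$ and $\Alt^1(\g_\ze)\otimes\sym^3(\g_\un)$; since $\Omega\neq0$ in the integral domain $\sym(\g_\un)$ and $\beta\neq0$, this vanishes precisely when $\gamma_1=0$ and $\gamma_0\in\CC\beta$. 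Hence $\Vs_I=\CC\beta$, so $\dup(\g)=\dim(\Vs_I)=1$ and, by Lemma \ref{3.2.1}, $\g$ is singular of type $\Sb_1$.

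I expect the only mildly delicate part to be the bookkeeping of signs and $\ZZ\times\ZZ_2$-degrees in the evaluation of $\{I,I\}$ through Proposition \ref{3.1.12}; but the sign in front of $\Omega_1^2+\Omega_2^2$ is irrelevant for the conclusion, and once that component is identified the argument via the integral-domain property of $\sym(\g_\un)$ is immediate. Everything else reduces to the routine graded-degree separation already used in the proof of Lemma \ref{3.2.1}.
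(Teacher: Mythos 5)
Your proposal is correct and takes essentially the same route as the paper: both reduce to isolating the component of $\{I,I\}$ lying in $\sym^4(\g_\un)$ and using that $\sym(\g_\un)$ is an integral domain to force $I$ to be a simple tensor $\beta\otimes\Omega$, whence $\dup(\g)=1$. The only difference is cosmetic: the paper works in an isotropic basis of $\g_\ze$, so that this component is $2\,\Omega_1\Omega_2$ and the conclusion is immediate, whereas your orthonormal basis yields $\Omega_1^2+\Omega_2^2=0$ and needs the extra factorization over $\CC$; your explicit computation of $\Vs_I$ at the end is a valid substitute for the paper's appeal to Lemma \ref{3.2.1}.
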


\begin{proof}
Let $I$ be the associated invariant of $\g$. By a remark in \cite{PU07}, every non-Abelian quadratic Lie algebra must have the dimension more than 2 so $\g_{\zero}$ is Abelian and as a consequence, $I \in \Alt^1 (\gO) \otimes \sym^2(\gI)$. We choose a basis $\{X_{\zero}, Y_{\zero}\}$ of $\g_{\zero}$ such that $B(X_{\zero}, X_{\zero}) = B(Y_{\zero}, Y_{\zero}) = 0$ and $B(X_{\zero}, Y_{\zero}) = 1$. Let $\alpha = \phi(X_{\zero})$, $\beta = \phi(Y_{\zero})$ and we can assume that
\[ I = \alpha \otimes \Omega_1 + \beta\otimes\Omega_2
\]
where $\Omega_{1}, \Omega_{2} \in \sym^2(\gI)$. Then one has:
\[ \{I, I\} = 2(\Omega_{1} \Omega_{2} + \alpha \wedge \beta \otimes \{\Omega_{1}, \Omega_{2}\}).
\]
Therefore, $\{I, I\} = 0$ implies that $\Omega_{1} \Omega_{2} = 0$. So $\Omega_{1} = 0$ or $\Omega_{2} = 0$. It means that $\g$ is a singular quadratic Lie superalgebra of type $\Sb_1$.
\end{proof}

\begin{prop}\label{3.4.2}
Let $\g$ be a singular quadratic Lie superalgebra with Abelian even part. If $\g$ is reduced then $\dim(\g_{\zero})=2$.
\end{prop}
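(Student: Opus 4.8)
The plan is to first determine the precise shape of the associated invariant $I=I_0+I_1$. Since $\gO$ is Abelian, $I_0=0$ (cf. the remark at the beginning of Section 2), so $I=I_1\in\Alt^1(\gO)\otimes\sym^2(\gI)$; moreover $I\neq 0$ because $\g$, being reduced, is non-Abelian. As $I_0=0$ we have $\Vs_{I_0}=\gO^*$, so by the proof of Lemma \ref{3.2.1}, $\dup(\g)=\dim(\Vs_{I_1})\in\{0,1\}$; singularity then forces $\dup(\g)=1$, i.e. $\g$ is of type $\Sb_1$. Fix a nonzero $\alpha\in\Vs_I$; by the proof of Lemma \ref{3.2.1} we have $\alpha\in\gO^*$ and $\alpha\wedge I=0$. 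Completing $\alpha$ to a basis $\alpha=\alpha_1,\dots,\alpha_m$ of $\gO^*$ and writing $I=\sum_i\alpha_i\otimes\Omega_i$ with $\Omega_i\in\sym^2(\gI)$, the relation $0=\alpha\wedge I=\sum_{i\geq 2}(\alpha\wedge\alpha_i)\otimes\Omega_i$ together with the linear independence of the $\alpha\wedge\alpha_i$ in $\Alt^2(\gO)$ gives $\Omega_i=0$ for all $i\geq 2$. Hence $I=\alpha\otimes\Omega$ with $\alpha\in\gO^*$ and $\Omega:=\Omega_1\in\sym^2(\gI)$, both nonzero.

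Next I would extract information from $\{I,I\}=0$. By Definition \ref{3.1.4}, $\{I,I\}=\{\alpha,\alpha\}\otimes\Omega^2+(\alpha\wedge\alpha)\otimes\{\Omega,\Omega\}=\{\alpha,\alpha\}\,\Omega^2$ because $\alpha\wedge\alpha=0$. Since $\sym(\gI)$ is an integral domain and $\Omega\neq 0$, also $\Omega^2\neq 0$, so $\{\alpha,\alpha\}=0$; by Corollary \ref{3.1.13} this says $B(X_\ze,X_\ze)=0$, where $X_\ze:=\phi^{-1}(\alpha)$. Note $X_\ze\neq 0$. Now for any $X\in\gO$, Lemma \ref{3.1.10} gives $\io_X(I)=\alpha(X)\,\Omega$, so by Lemma \ref{3.1.20} one has $\gO\cap\Zs(\g)=\ker(\alpha)$, a hyperplane of $\gO$. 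Moreover $\alpha(X_\ze)=B(X_\ze,X_\ze)=0$, so $X_\ze\in\ker(\alpha)=\gO\cap\Zs(\g)$; since $X_\ze\neq 0$, this forces $\dim(\gO)\geq 2$.

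To conclude, I would invoke reducedness: as $\g$ is reduced, $\Zs(\g)$ is totally isotropic, hence so is its subspace $\gO\cap\Zs(\g)=\ker(\alpha)$ with respect to the non-degenerate symmetric form $B|_{\gO\times\gO}$. Therefore $\dim(\gO)-1=\dim(\ker\alpha)\leq\frac{1}{2}\dim(\gO)$ (see \cite{Bou59}), whence $\dim(\gO)\leq 2$. Together with the inequality $\dim(\gO)\geq 2$ obtained above, this yields $\dim(\gO)=2$.

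I expect the only mildly delicate step to be the first paragraph, namely the reduction of $I$ to the rank-one form $\alpha\otimes\Omega$; once that is in place, the rest is a short assembly of Lemmas \ref{3.1.10}, \ref{3.1.20} and Corollary \ref{3.1.13} together with the standard bound on totally isotropic subspaces.
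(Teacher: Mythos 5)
Your proof is correct and follows essentially the same route as the paper's, which simply reduces to the argument of Lemma \ref{3.3.2}(1): write $I=\alpha\otimes\Omega$, observe that $\{\beta,I\}=\{\beta,\alpha\}\,\Omega$ so that $(\phi^{-1}(\alpha))^{\bot}\cap\g_{\ze}\subset\Zs(\g)$, and then combine reducedness (total isotropy of the center, hence $\dim(\g_{\ze})\le 2$) with the isotropy of the nonzero vector $\phi^{-1}(\alpha)$ to force $\dim(\g_{\ze})=2$. The only difference is that you spell out the steps the paper leaves implicit, namely the divisibility argument showing $\alpha\wedge I=0$ implies $I=\alpha\otimes\Omega$, and the derivation of $B(X_{\ze},X_{\ze})=0$ from $\{I,I\}=0$.
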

\begin{proof}
Let $I$ be the associated invariant of $\g$. Since $\g$ has the Abelian even part one has $I  \in \Alt^1 (\gO )\otimes \sym^2(\gI)$. Moreover $\g$ is singular then
\[ I = \alpha \otimes \Omega
\]
where $\alpha \in \g_\ze^*,\ \Omega \in \sym^2(\g_\un)$. The proof
follows exactly Lemma \ref{3.3.2}. Let $\beta \in \g_\ze^*$ then
$\{\beta , \alpha\} = 0$ if and only if $\{\beta , I\} = 0$,
equivalently $\phi^{-1}(\beta) \in \Zs(\g)$. Therefore,
$(\phi^{-1}(\alpha))^\bot \cap \g_\ze \subset \Zs(\g)$. It means that
$\dim(\g_\ze) = 2$ since $\g$ is reduced and $\phi^{-1}(\alpha)$ is
isotropic in $\Zs(\g)$.
\end{proof}

Now, let $\g$ be a non-Abelian quadratic Lie superalgebra with 2-dimensional
even part. By Proposition \ref{3.4.1}, $\g$ is singular of type $\Sb_1$. Fix
$\alpha \in\Vs_I$ and choose $\Omega \in \sym^2(\g)$ such that  \[I = \alpha
\otimes \Omega. \] We define $C : \g_\un \to \g_\un$ by $B(C(X), Y) =
\Omega(X,Y)$, for all $X,Y\in\g_\un$ and let $X_{\zero}=\phi^{-1}(\alpha)$.

\begin{lem} \label{3.4.3} The following assertions are equivalent:

\begin{enumerate}

\item $\{I,I\} = 0$,

\item $\{\alpha, \alpha\} = 0$,

\item $B(X_{\zero} , X_{\zero}) =0$.

\end{enumerate}

In this case, one has $X_{\zero} \in \Zs(\g)$.
\end{lem}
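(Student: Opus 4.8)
The plan is to reduce everything to a single explicit computation of $\{I,I\}$, exploiting the very rigid shape $I=\alpha\otimes\Omega$ with $\alpha\in\Alt^1(\gO)$ and $\Omega\in\sym^2(\gI)$. First I would plug both arguments equal to $\alpha\otimes\Omega$ into the defining formula of the super $\ZZ\times\ZZ_2$-Poisson bracket (Definition \ref{3.1.4}). Since the $\ZZ$-degree of $\alpha$ is $1$ and that of $\Omega$ is $2$, the Koszul prefactor $(-1)^{f\omega'}$ equals $(-1)^{2\cdot 1}=1$, and the bracket collapses to $\{I,I\}=\{\alpha,\alpha\}\otimes\Omega^2+(\alpha\wedge\alpha)\otimes\{\Omega,\Omega\}$. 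The second summand dies because $\alpha$ is a $1$-form, so $\alpha\wedge\alpha=0$ (and, redundantly, $\{\Omega,\Omega\}=0$ by the antisymmetry in Proposition \ref{3.1.5}(1)). By Corollary \ref{3.1.13}(2), $\{\alpha,\alpha\}=B(\phi^{-1}(\alpha),\phi^{-1}(\alpha))=B(X_{\zero},X_{\zero})$, so the whole thing reads $\{I,I\}=B(X_{\zero},X_{\zero})\,\Omega^2$.

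From this identity the equivalences are immediate. Since $\g$ is non-Abelian, $I\neq 0$, hence $\Omega\neq 0$; and since $\sym(\gI)$ is a polynomial algebra, hence an integral domain, $\Omega^2\neq 0$. So I can cancel $\Omega^2$ and conclude $\{I,I\}=0\Leftrightarrow B(X_{\zero},X_{\zero})=0\Leftrightarrow\{\alpha,\alpha\}=0$, which is precisely $(1)\Leftrightarrow(3)\Leftrightarrow(2)$.

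For the last assertion, assuming these conditions hold, I would compute $\io_{X_{\zero}}(I)$ and then invoke Lemma \ref{3.1.20}. Using Corollary \ref{3.1.13}(1) to rewrite $\io_{X_{\zero}}(I)=\{\alpha,I\}=\{\alpha,\alpha\wedge\Omega\}$ and then the Leibniz rule of Proposition \ref{3.1.5}, I get $\io_{X_{\zero}}(I)=\{\alpha,\alpha\}\wedge\Omega-\alpha\wedge\{\alpha,\Omega\}$; the first term vanishes by hypothesis, and $\{\alpha,\Omega\}=\io_{X_{\zero}}(\Omega)=0$ by Lemma \ref{3.1.10}(1), since $X_{\zero}\in\gO$ while $\Omega$ involves only $\gI$-variables (its $\Alt(\gO)$-component is $1$). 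Hence $\io_{X_{\zero}}(I)=0$, and Lemma \ref{3.1.20} yields $X_{\zero}\in\Zs(\g)$.

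The computation is short, so I do not expect a real obstacle; the only delicate points are the two sign conventions — the prefactor $(-1)^{f\omega'}$ in Definition \ref{3.1.4} and the $(-1)^{aa'+bb'}$ in the Leibniz rule — and the small but essential observation that $\Omega^2\neq 0$, which is what allows the argument to run as a genuine chain of equivalences rather than a chain of implications.
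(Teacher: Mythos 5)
Your proof is correct and follows essentially the same route as the paper: both reduce the lemma to the identity $\{I,I\}=\{\alpha,\alpha\}\otimes\Omega^2$ (the $(\alpha\wedge\alpha)\otimes\{\Omega,\Omega\}$ term vanishing) together with Corollary \ref{3.1.13}(2), and both derive $X_{\zero}\in\Zs(\g)$ from $\{\alpha,I\}=0$ via Lemma \ref{3.1.20}. You merely make explicit two points the paper leaves tacit, namely the Koszul sign bookkeeping and the fact that $\Omega^2\neq 0$ because $\sym(\gI)$ is an integral domain; both are correct and worth stating.
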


\begin{proof}
  It is easy to see that: 
  \[ \{I, I\} = 0 \Leftrightarrow \{\alpha,
  \alpha \} \otimes \Omega^2 = 0.
  \] 
  Therefore the assertions are equivalent. Moreover, since $\{\alpha, I\}=0$ one has $X_{\zero} \in \Zs(\g)$.
\end{proof}

We keep the notations as in the previous sections. Then there exists an isotropic element $Y_{\zero}\in \g_{\zero}$ such that $B(X_{\zero}, Y_{\zero})=1$ and one has the following proposition:
\begin{prop} \label{3.4.4}
\hfill
\begin{enumerate}

\item The map $C$ is skew-symmetric (with respect to $B$), that is \[B(C(X),Y) = -B(X,C(Y))\] for all $X$, $Y \in \g_\un$.

\smallskip

\item $[X,Y] = B(C(X),Y) X_\ze$, for all $X, Y \in \g_\un$ and $C = \ad(Y_\ze)|_{\g_\un}$.

\smallskip

\item $\Zs(\g) = \ker(C) \oplus \CC X_\ze$ and $[\g,\g] =
 \im(C)\oplus \CC X_\ze$. Therefore, $\g$ is reduced if and only if $\ker(C)\subset \im(C)$.

\smallskip

\item The Lie superalgebra $\g$ is solvable. Moreover, $\g$ is nilpotent
  if and only if $C$ is nilpotent.

\smallskip

\end{enumerate}

\end{prop}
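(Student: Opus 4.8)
The plan is to establish the four assertions in order, using as the main structural input the formula $I = \alpha\otimes\Omega$ together with the identification $B([X,Y],Z) = I(X,Y,Z)$ from Proposition \ref{3.1.18}, and the fact (Lemma \ref{3.4.3}) that $X_\ze = \phi^{-1}(\alpha)$ is a central isotropic element. First, for (1), I would unwind $B(C(X),Y) = \Omega(X,Y)$ and observe that since $\Omega\in\sym^2(\g_\un)$ and $B|_{\g_\un\times\g_\un}$ is skew-symmetric (the odd part of a quadratic $\ZZ_2$-graded space is symplectic), the symmetry of $\Omega$ forces $B(C(X),Y) = \Omega(X,Y) = \Omega(Y,X) = B(C(Y),X) = -B(X,C(Y))$; so $C\in\spk(\g_\un)$ with respect to $B|_{\g_\un}$.

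For (2), I would compute directly from the correspondence in Proposition \ref{3.1.18}: for $X,Y\in\g_\un$ and any $Z\in\g$, $B([X,Y],Z) = I(X,Y,Z) = (\alpha\otimes\Omega)(X,Y,Z)$. Expanding the super-exterior form $\alpha\otimes\Omega$ evaluated on two odd and one even (resp.\ three odd) arguments, the only surviving term pairs $\alpha$ against the even slot, giving $B([X,Y],Z) = \alpha(Z)\,\Omega(X,Y) = B(X_\ze,Z)\,B(C(X),Y)$ for $Z\in\g_\ze$, and $0$ for $Z\in\g_\un$ (since $\alpha$ vanishes on $\g_\un$). Non-degeneracy of $B$ then yields $[X,Y] = B(C(X),Y)\,X_\ze$. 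Likewise, for $X\in\g_\ze$, $Y\in\g_\un$, $Z\in\g_\un$: $B([X,Y],Z) = I(X,Y,Z) = \alpha(X)\,\Omega(Y,Z) = B(X_\ze,X)\,B(C(Y),Z)$, so $[X,Y] = B(X_\ze,X)\,C(Y)$; taking $X = Y_\ze$ (so $B(X_\ze,Y_\ze)=1$) gives $\ad(Y_\ze)|_{\g_\un} = C$. That $[\g_\ze,\g_\ze]=0$ is already known ($\g_\ze$ abelian, as in Proposition \ref{3.4.1}), so these two bracket formulas determine the whole superalgebra structure.

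For (3), I would use the explicit brackets: $Z\in\Zs(\g)$ iff $[Z,\g]=0$. Writing $Z = aX_\ze + bY_\ze + W$ with $W\in\g_\un$, the bracket $[Z,\g_\un]$ contributes $b\,C(\cdot)$ plus the $\g_\ze$-valued term $B(C(W),\cdot)X_\ze$; since $X_\ze$ is central and $C(W)\in\g_\un$, centrality forces $b=0$ and $W\in\ker C$ (using $\Zs(\g)\cap\g_\un = \ker C$, which follows since $W$ central iff $\Omega(W,\g_\un)=0$ iff $C(W)=0$ by non-degeneracy). Hence $\Zs(\g) = \ker(C)\oplus\CC X_\ze$. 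Dually, $[\g,\g]$ is spanned by $[\g_\un,\g_\un]\subset\CC X_\ze$ and $[\g_\ze,\g_\un] = C(\g_\un) = \im(C)$, giving $[\g,\g] = \im(C)\oplus\CC X_\ze$; here the sum is direct because $\im(C)\subset\g_\un$ while $X_\ze\in\g_\ze$. Then $\g$ is reduced iff $\Zs(\g)$ is totally isotropic iff $\Zs(\g)\subset[\g,\g]$ (Proposition \ref{3.2.3}(2) characterization), which, since $X_\ze\in[\g,\g]$ already and $\ker C\subset\g_\un$, amounts exactly to $\ker(C)\subset\im(C)$.

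For (4), I would argue that $[\g,\g]\subset\im(C)\oplus\CC X_\ze$ and $[[\g,\g],[\g,\g]]\subset [\im(C),\im(C)] + [\im(C),\CC X_\ze] + \ldots$, which lands in $\CC X_\ze$ (brackets of two odd elements land in $\CC X_\ze$, and $X_\ze$ is central), so the next derived ideal is contained in $\CC X_\ze$, which is central; hence $\g^{(3)} = 0$ and $\g$ is solvable. For nilpotency: if $C$ is nilpotent then the descending central series of $\g$ is controlled by iterating $\ad(Y_\ze) = C$ on $\g_\un$ together with the fact that everything eventually lands in $\CC X_\ze$ and then dies, so $\g$ is nilpotent; conversely, if $\g$ is nilpotent then $\ad(Y_\ze)$ is a nilpotent operator on $\g$, hence its restriction $C$ to $\g_\un$ is nilpotent. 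The main obstacle is purely bookkeeping: getting the signs and the vanishing of mixed terms in the super-exterior product $\alpha\otimes\Omega$ right when it is evaluated on homogeneous arguments of mixed parity — once that evaluation is pinned down, (1)–(4) are short consequences.
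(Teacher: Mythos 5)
Your proposal is correct and follows essentially the same route as the paper's own proof: part (1) from the symmetry of $\Omega$ versus the skew-symmetry of $B|_{\g_\un\times\g_\un}$, part (2) by evaluating $I=\alpha\otimes\Omega$ against homogeneous arguments and using non-degeneracy of $B$, part (3) as a direct consequence of the bracket formulas, and part (4) via $[[\g,\g],[\g,\g]]\subset\CC X_\ze$ and the inclusion of the lower central series in $\im(C^k)\oplus\CC X_\ze$. The only differences are cosmetic (you derive $[Y_\ze,\cdot]|_{\g_\un}=C$ directly from $I$ where the paper invokes invariance of $B$, and you spell out part (3), which the paper leaves as "follows from (2)").
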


\begin{proof} \hfill

\begin{enumerate}

\item For all $X$, $Y \in \g_\un$, one has
\[B(C(X),Y) = \Omega(X,Y)=\Omega(Y,X)=B(C(Y),X)=-B(X,C(Y)).\]
\item Let $X\in \g_\ze,Y,Z\in\g_\un$ then
\[B(X,[Y,Z]) = (\alpha
  \otimes \Omega) (X,Y,Z) = \alpha(X)\Omega(Y,Z).\]
  Since  $\alpha(X) = B(X_{\zero},X)$ and $\Omega(Y,Z) = B(C(Y),Z)$ so one has
	\[B(X,[Y,Z]) = B(X_{\zero},X)B(C(Y),Z).\] The non-degeneracy of $B$ implies $[Y,Z] = B(C(Y),Z)X_\ze$. Set $X = Y_\ze$ then $B(Y_\ze,[Y,Z]) = B(C(Y),Z)$. By the invariance of $B$,  we obtain $[Y_\ze,Y] = C(Y)$.
	
  \item It follows from the assertion (2).
  \item $\g$ is solvable since $\g_\ze$ is solvable, or since $[[\g,\g],[\g,\g]]\subset\CC X_\ze$. If $\g$ is nilpotent then $C=\ad(Y_\ze)$ is nilpotent obviously. Conversely, if $C$ is nilpotent then it is easy to see that $\g$ is nilpotent since $(\ad(X))^k(\g) \subset \CC X_\ze\oplus \im(C^k)$ for all $X\in\g$. 
\end{enumerate}
\end{proof}

\begin{rem}
  The choice of $C$ is unique up to a nonzero scalar. Indeed, assume
  that $I = \alpha'\otimes \Omega'$ and $C'$ is the map associated to
  $\Omega'$. Since $\Zs(\g)\cap\g_\ze = \CC X_\ze$ and
  $\phi^{-1}(\alpha')\in \Zs(\g)$ one has $\alpha' = \lambda\alpha$
  for some nonzero $\lambda\in \CC$. Therefore,
  $\alpha\otimes(\Omega-\lambda\Omega')=0$. It means that $\Omega =
  \lambda\Omega'$ and then we get $C = \lambda C'$.
\end{rem}
\subsection{Double extension of a symplectic vector space}\hfill

Double extensions are a very useful method initiated by V. Kac to construct
quadratic Lie algebras (see \cite{Kac85} and \cite{MR85}). They are
generalized to many algebras endowed with a non-degenerate invariant
bilinear form, for example quadratic Lie superalgebras (see \cite{BB99} and
\cite{BBB}). In \cite{DPU10}, we consider a particular case that is the double
extension of a quadratic vector space by a skew-symmetric map. From this we
obtain the class of solvable singular quadratic Lie algebras. Here, we use
this notion in yet another context, replacing the quadratic vector space by
a symplectic vector space.

\begin{defn} \label{3.4.6}\hfill

\begin{enumerate}

\item Let $(\qk, B_\qk)$ be a symplectic vector space equipped with a symplectic bilinear form $B_\qk$ and $\cb : \qk
  \to \qk$ be a skew-symmetric map, that is,
  \[B_\qk(\cb(X),Y) = -B_\qk(X,\cb(Y)),\ \forall\ X, Y \in \qk.\]
  Let $(\tk = \spa \{ X_\ze, Y_\ze \}, B_\tk)$ be a 2-dimensional quadratic vector space with the symmetric bilinear form $B_\tk$
  defined by \[ B_\tk(X_\ze, X_\ze) = B_\tk(Y_\ze, Y_\ze) = 0, \ B_\tk(X_\ze,
  Y_\ze) = 1.\] 
  Consider the vector space \[\g = \tk \oplusp \qk\] equipped with the
  bilinear form $B = B_\tk + B_\qk $ and define a bracket on $\g$
  by \[ [\lambda X_\ze + \mu Y_\ze + X, \lambda' X_\ze + \mu' Y_\ze + Y] =
  \mu \cb (Y) - \mu' \cb (X) + B( \cb (X), Y) X_\ze, \] for all $X, Y
  \in \qk, \lambda, \mu, \lambda', \mu' \in \CC$. Then $(\g, B)$ is a
  quadratic solvable Lie superalgebra with $\g_\ze = \tk$ and $\g_\un =\qk$. We say that $\g$ is the {\em double extension of $\qk$ by $\cb$}.

\item Let $\g_i$ be double extensions of symplectic vector spaces
  $(\qk_i, B_i)$ by skew-symmetric maps $\cb_i \in \Lc(\qk_i)$, for $1
  \leq i \leq k$. The {\em amalgamated product} \[ \g = \g_1 \ta \g_2
  \ta \dots \ta \g_k\] is defined as follows:

\begin{itemize}

\item consider $(\qk,B)$ be the symplectic vector space with $\qk =
  \qk_1 \oplus \qk_2 \oplus \dots \oplus \qk_k$ and the bilinear form
  $B$ such that $B(\sum_{i = I}^k X_i, \sum_{i = I}^k Y_i) = \sum_{i =
    I}^k B_i(X_i, Y_i)$, for $X_i, Y_i \in \qk_i$, $1 \leq i \leq k$.

\item the skew-symmetric map $\cb \in \Lc(\qk)$ is defined by $\cb(
  \sum_{i = I}^k X_i) = \sum_{i = I}^k \cb_i(X_i)$, for
  $X_i \in \qk_i$, $1 \leq i \leq k$.

\end{itemize}
 
\medskip

Then $\g$ is the double extension of $\qk$ by $\cb$.

\end{enumerate}

\end{defn}

\begin{lem} \label{3.4.7} We keep the notation above. 

\begin{enumerate}

\item Let $\g$ be the double extension of $\qk$ by
  $\cb$. Then \[ [X, Y] = B(X_{\ze}, X) C (Y) - B(X_{\ze}, Y) C(X) +
  B(C(X), Y) X_\ze, \ \forall \ X, Y \in \g, \] where $C =
  \ad(Y_{\ze})$. Moreover, $X_\ze \in \Zs(\g)$ and $C|_\qk = \cb$.

\item Let $\g'$ be the double extension of $\qk$ by $\cpb =
  \lambda \cb$, $\lambda \in \CC$, $\lambda \neq 0$. Then
  $\g$ and $\g'$ are i-isomorphic.

\end{enumerate}

\end{lem}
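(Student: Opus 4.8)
The plan is to prove both statements by direct computation using Definition \ref{3.4.6}, exploiting the fact that the bracket on $\g$ is given by an explicit formula. For assertion (1), I would start by computing $C = \ad(Y_\ze)$ on a general element $\lambda X_\ze + \mu Y_\ze + X$ with $X \in \qk$. Writing $Y_\ze = 0\cdot X_\ze + 1\cdot Y_\ze + 0$ and applying the bracket formula from Definition \ref{3.4.6}(1), the coefficients become $\lambda' = \mu' = 0$ on the $Y_\ze$-side, so $[Y_\ze, \lambda X_\ze + \mu Y_\ze + X] = 1\cdot\cb(X) - 0 + B(\cb(0),X)X_\ze = \cb(X)$. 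This immediately gives $C(X_\ze) = 0$ (so $X_\ze \in \Zs(\g)$, which also follows from the formula since $X_\ze$ contributes only through $\lambda,\lambda'$ which never appear on the right-hand side), $C(Y_\ze) = 0$, and $C|_\qk = \cb$. Then for two general elements $X = \lambda X_\ze + \mu Y_\ze + X_\qk$ and $Y = \lambda' X_\ze + \mu' Y_\ze + Y_\qk$, I would verify that the claimed formula $[X,Y] = B(X_\ze,X)C(Y) - B(X_\ze,Y)C(X) + B(C(X),Y)X_\ze$ reproduces Definition \ref{3.4.6}(1): note $B(X_\ze, X) = B_\tk(X_\ze, \lambda X_\ze + \mu Y_\ze) = \mu$ and similarly $B(X_\ze,Y) = \mu'$, while $B(C(X),Y) = B(\cb(X_\qk), \lambda' X_\ze + \mu' Y_\ze + Y_\qk) = B_\qk(\cb(X_\qk), Y_\qk)$ since $C(X) = \cb(X_\qk) \in \qk$ is $B$-orthogonal to $\tk$. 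Substituting, the right-hand side becomes $\mu\,\cb(Y_\qk) - \mu'\,\cb(X_\qk) + B_\qk(\cb(X_\qk),Y_\qk)X_\ze$, which matches the defining bracket.

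For assertion (2), the natural approach is to exhibit an explicit i-isomorphism $A : \g \to \g'$. Since $\g$ and $\g'$ share the same underlying space $\tk \oplusp \qk$ and the same bilinear form $B$, and differ only in that $\cb$ is rescaled to $\lambda\cb$, the candidate is $A(X_\ze) = \lambda X_\ze$, $A(Y_\ze) = \lambda^{-1} Y_\ze$, and $A|_\qk = \id_\qk$. First I would check that $A$ is an isometry: $B'(A(X_\ze), A(Y_\ze)) = B'(\lambda X_\ze, \lambda^{-1} Y_\ze) = B(X_\ze, Y_\ze)$, the $\qk$-part is preserved since $A|_\qk$ is the identity, and all mixed and same-type terms vanish as before. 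Then I would verify $A$ is a Lie superalgebra homomorphism by comparing $A([X,Y]_\g)$ with $[A(X), A(Y)]_{\g'}$ on the generators: for $X, Y \in \qk$ one has $[X,Y]_\g = B(\cb(X),Y)X_\ze$, so $A([X,Y]_\g) = \lambda B(\cb(X),Y) X_\ze$, while $[A(X),A(Y)]_{\g'} = [X,Y]_{\g'} = B(\lambda\cb(X),Y)X_\ze = \lambda B(\cb(X),Y)X_\ze$; for $Y_\ze$ and $X \in \qk$, $A([Y_\ze, X]_\g) = A(\cb(X)) = \cb(X)$ whereas $[A(Y_\ze), A(X)]_{\g'} = [\lambda^{-1}Y_\ze, X]_{\g'} = \lambda^{-1}(\lambda\cb)(X) = \cb(X)$; and brackets involving $X_\ze$ vanish on both sides. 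This handles all cases by bilinearity and super-antisymmetry.

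There is no serious obstacle here — both parts are routine verifications once the right isomorphism is written down. The only point requiring a little care is bookkeeping the scalars in part (2): the rescaling factor $\lambda$ on $X_\ze$ must exactly compensate the rescaling of $\cb$ in the $\qk\times\qk$ bracket, while the factor $\lambda^{-1}$ on $Y_\ze$ must compensate it in the $\ad(Y_\ze)$ action, and one should confirm these two requirements are mutually consistent (they are, precisely because $X_\ze$ and $Y_\ze$ are $B$-dual). I would also remark, as a sanity check, that $C' = \ad_{\g'}(Y_\ze') $ where $Y_\ze' = A(Y_\ze) = \lambda^{-1}Y_\ze$ satisfies $C'|_\qk = \lambda^{-1}(\lambda\cb) = \cb = C|_\qk$, consistent with the remark preceding this lemma that $C$ is determined up to a nonzero scalar.
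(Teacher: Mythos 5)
Your proposal is correct and follows essentially the same route as the paper: part (1) is the same direct verification against Definition \ref{3.4.6}, and for part (2) you exhibit exactly the isomorphism the paper uses, namely $A(X_\ze)=\lambda X_\ze$, $A(Y_\ze)=\lambda^{-1}Y_\ze$, $A|_\qk=\Id_\qk$. Your version simply spells out the scalar bookkeeping that the paper leaves implicit.
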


\begin{proof}\hfill

\begin{enumerate}

\item This is a straightforward computation by Definition \ref{3.4.6}.

\item Write $\g = \tk \oplusp \qk = \g'$. Denote by $[\cdot, \cdot]'$
  the Lie super-bracket on $\g'$. Define $A : \g \to \g'$ by $A(X_\ze) =
  \lambda X_\ze$, $A(Y_\ze) = \dfrac{1}{\lambda} Y_\ze$ and $A|_\qk =
  \Id_\qk$. Then $A([Y_\ze,X]) = C(X) = [A(Y_\ze), A(X)]'$ and $A([X,Y]) =
  [A(X), A(Y)]'$, for all $X, Y \in \qk$. So $A$ is an i-isomorphism.

\end{enumerate}

\end{proof}

\begin{prop} \label{3.4.8} \hfill

\begin{enumerate}

\item Let $\g$ be a non-Abelian quadratic Lie superalgebra with 2-dimensional even part. Keep the notations as in Proposition \ref{3.4.4}. Then $\g$ is the double extension of $\qk =(\CC X_\ze \oplus \CC Y_\ze)^\perp = \g_\un$ by $\cb = \ad(Y_\ze)|_\qk$.

\smallskip
\item Let $\g$ be the double extension of a symplectic vector space
  $\qk$ by a map $\cb \neq 0$. Then $\g$ is a singular
  solvable quadratic Lie superalgebra with 2-dimensional even part. Moreover:

\begin{itemize}

\smallskip

\item[(i)] $\g$ is reduced if and only if $\ker(\cb)
  \subset \im(\cb)$.

\smallskip

\item[(ii)] $\g$ is nilpotent if and only if $\cb$ is nilpotent.
\end{itemize}

\item Let $(\g,B)$ be a quadratic Lie superalgebra. Let $\g'$ be the double
  extension of a symplectic vector space $(\qk',B')$ by a map
  $\cpb$. Let $A$ be an i-isomorphism of $\g'$ onto $\g$ and
  write $\qk = A(\qk')$. Then $\g$ is the double extension of $(\qk,
  B|_{\qk \times \qk})$ by the map $\cb = \overline{A} \
  \cpb \ \overline{A}^{-1}$ where $\overline{A} = A|_{\qk'}$.

\smallskip

\end{enumerate}

\end{prop}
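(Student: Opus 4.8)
The proposition is a dictionary between the intrinsic structure of a quadratic Lie superalgebra with $2$-dimensional even part and the concrete construction of Definition~\ref{3.4.6}; in each of the three items the plan is simply to match the two bracket formulas, invoking the structural facts already established in Propositions~\ref{3.4.1} and~\ref{3.4.4} and in Lemma~\ref{3.4.7}.

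\textbf{Part (1).} First I would apply Proposition~\ref{3.4.1}, which tells us that $\g$ is singular of type $\Sb_1$ and, in particular, that $\g_\ze$ is Abelian (a non-Abelian quadratic Lie algebra has dimension $>2$). So Proposition~\ref{3.4.4} applies, giving an isotropic central $X_\ze=\phi^{-1}(\alpha)$, an isotropic $Y_\ze$ with $B(X_\ze,Y_\ze)=1$, the skew-symmetry of $\cb:=\ad(Y_\ze)|_{\g_\un}$, and $[X,Y]=B(\cb(X),Y)X_\ze$ for $X,Y\in\g_\un$. Since $\g_\ze\perp\g_\un$ and $B$ is non-degenerate and skew-symmetric on $\g_\un$, the pair $(\qk:=\g_\un,\,B|_{\qk\times\qk})$ is a symplectic vector space and $(\g_\ze,B|_{\g_\ze})$ is the $2$-dimensional quadratic space $\tk$ of Definition~\ref{3.4.6}. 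It then remains to expand $[\lambda X_\ze+\mu Y_\ze+X,\lambda'X_\ze+\mu'Y_\ze+Y]$ by bilinearity: the $X_\ze$-terms vanish since $X_\ze\in\Zs(\g)$, the term $[Y_\ze,Y_\ze]$ vanishes since $\g_\ze$ is Abelian, $[Y_\ze,Y]=\cb(Y)$ and $[X,Y_\ze]=-\cb(X)$, and $[X,Y]=B(\cb(X),Y)X_\ze$; this reproduces exactly $\mu\cb(Y)-\mu'\cb(X)+B(\cb(X),Y)X_\ze$, so $\g$ is the double extension of $\qk$ by $\cb$.

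\textbf{Part (2).} Conversely, Definition~\ref{3.4.6} already records that the double extension $(\g,B)$ is a quadratic solvable Lie superalgebra with $\g_\ze=\tk$ two-dimensional and $\g_\un=\qk$. Since $\cb\neq0$ the bracket is non-trivial, hence $\g$ is non-Abelian, and Proposition~\ref{3.4.1} makes it singular of type~$\Sb_1$. By Lemma~\ref{3.4.7}(1) one has $X_\ze\in\Zs(\g)$ and $C:=\ad(Y_\ze)$ with $C|_\qk=\cb$, so Proposition~\ref{3.4.4}(3) reads $\Zs(\g)=\ker(\cb)\oplus\CC X_\ze$ and $[\g,\g]=\im(\cb)\oplus\CC X_\ze$; since $\ker(\cb),\im(\cb)\subset\qk$ and $\qk\cap\CC X_\ze=\{0\}$, comparing these shows $\g$ is reduced iff $\ker(\cb)\subset\im(\cb)$. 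The nilpotency criterion is then Proposition~\ref{3.4.4}(4) together with $C|_\qk=\cb$.

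\textbf{Part (3).} An i-isomorphism $A:\g'\to\g$ is necessarily even, so $A(\g'_\ze)=\g_\ze$ and $A(\qk')=\qk:=\g_\un$. Put $X_\ze:=A(X_\ze')$, $Y_\ze:=A(Y_\ze')$ and $\overline{A}:=A|_{\qk'}$. As $A$ is an isometry, $\{X_\ze,Y_\ze\}$ is a basis of $\g_\ze$ of the type prescribed in Definition~\ref{3.4.6} and $(\qk,B|_{\qk\times\qk})$ is symplectic. For $X,Y\in\qk$ set $X'=\overline{A}^{-1}(X)$, $Y'=\overline{A}^{-1}(Y)$; applying $A$ to the bracket formula of $\g'$ and using that $A$ is a Lie superalgebra homomorphism and an isometry turns $A(\cpb(X'))$ into $\overline{A}\,\cpb\,\overline{A}^{-1}(X)=\cb(X)$ and $B'(\cpb(X'),Y')$ into $B(\cb(X),Y)$, so the bracket on $\g$ is the double extension bracket for $\cb=\overline{A}\,\cpb\,\overline{A}^{-1}$; skew-symmetry of $\cb$ with respect to $B|_{\qk\times\qk}$ likewise follows from that of $\cpb$ and the isometry property of $\overline{A}$.

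\textbf{Main obstacle.} None of the steps is genuinely hard: the whole statement is a bookkeeping identification of the abstract double-extension data with the explicit bracket. The only points that demand care are justifying in (1) that $\g_\ze$ is Abelian (so the even--even bracket vanishes and the formula collapses to the required shape), and in (3) tracking the homomorphism and isometry properties of $A$ so that the scalar coefficient $B'(\cpb(X'),Y')X_\ze'$ transports to $B(\cb(X),Y)X_\ze$ — essentially a conjugation- and sign-bookkeeping exercise.
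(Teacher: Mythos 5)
Your proposal is correct and follows essentially the same route as the paper: parts (1) and (2) are exactly the combination of Propositions \ref{3.4.1}, \ref{3.4.4} and Lemma \ref{3.4.7} that the paper invokes (you merely write out the bilinear expansion of the bracket that the paper leaves implicit), and part (3) is the same transport-of-structure computation $[Y_\ze,X]=(\overline{A}\,\cpb\,\overline{A}^{-1})(X)$, $[X,Y]=B(\overline{A}\,\cpb\,\overline{A}^{-1}(X),Y)X_\ze$ that the paper performs.
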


\begin{proof} The assertions (1) and (2) follow Proposition \ref{3.4.4} and Lemma \ref{3.4.7}. For (3), since $A$ is i-isomorphic then $\g$ has also 2-dimensional even part. Write $\g' = (\CC X_\ze' \oplus \CC Y_\ze') \oplusp \qk'$. Let $X_\ze
  = A(X_\ze')$ and $Y_\ze = A (Y_\ze')$. Then $\g = (\CC X_\ze \oplus \CC Y_\ze)
  \oplusp \qk$ and one has: \[ [Y_\ze, X] = (A \cpb A^{-1}) (X), \
  \forall \ X \in \qk, \text{ and } \] \[ [X, Y] = B((A \cpb
  A^{-1}) (X), Y) X_\ze, \ \forall \ X, Y \in \qk.\] This proves
  the result.
\end{proof}
\begin{ex}
From the point of view of double extensions, for reduced elementary quadratic Lie superalgebras with 2-dimensional even part in Section 3 one has
\begin{enumerate}
	\item $\g_{4,1}^s$ is  the double extension of the 2-dimensional symplectic vector space $\qk=\CC^2$ by the map having matrix:
	\[\cb =  \begin{pmatrix} 0 & 1 \\ 0 & 0 \end{pmatrix}\]
	in a Darboux basis $\{E_1,E_2\}$ of $\qk$ where $B(E_1,E_2)=1$.
\item $\g_{4,2}^s$ the double extension of the 2-dimensional symplectic vector space $\qk=\CC^2$ by the map having matrix:
\[\cb =  \begin{pmatrix} 1 & 0 \\ 0 & -1 \end{pmatrix}\]
	in a Darboux basis $\{E_1,E_2\}$ of $\qk$ where $B(E_1,E_2)=1$.
\item $\g_6^s$ is the double extension of the 4-dimensional symplectic vector space $\qk=\CC^4$ by the map having matrix:
\[\cb =  \begin{pmatrix} 0 & 1 & 0 & 0 \\ 0 & 0 & 0 & 0 \\ 0 & 0 & 0 & 0 \\ 0 & 0 & -1 & 0 \end{pmatrix}\]
in a Darboux basis $\{E_1,E_2,E_3,E_4\}$ of $\qk$ where $B(E_1,E_3)=B(E_2,E_4)=1$, the other are zero.
\end{enumerate}
\end{ex}

Let $(\qk,B)$ be a symplectic vector space. We denote by $\Sp(\qk)$
the isometry group of the symplectic form $B$ and by $\spk(\qk)$ its Lie
algebra, i.e. the Lie algebra of skew-symmetric maps with respects to
$B$. The {\em adjoint action} is the action of $\Sp(\qk)$ on
$\spk(\qk)$ by the conjugation (see Appendix). Also, we denote by
$\ps(\spk(2n))$ the projective space of $\spk(2n)$ with the action
induced by $\Sp(2n)$-adjoint action on $\spk(2n)$.

\begin{prop} \label{3.4.10} 
Let $(\qk, B)$ be a symplectic vector space. Let $\g = (\CC X_{\ze}
\oplus \CC Y_{\ze}) \oplusp \qk$ and $\g' = (\CC X_{\ze}' \oplus \CC
Y_{\ze}') \oplusp \qk$ be double extensions of $\qk$, by
skew-symmetric maps $\cb$ and $\cpb$ respectively. Then:

\begin{enumerate}

\item there exists a Lie superalgebra isomorphism between $\g$ and
  $\g'$ if and only if there exist an invertible map $P \in \Lc(\qk)$
  and a nonzero $\lambda \in \CC$ such that $\cpb = \lambda \ P \cb
  P^{-1}$ and $P^* P \cb = \cb$ where $P^*$ is the adjoint map of $P$
  with respect to $B$.

\smallskip

\item there exists an i-isomorphism between $\g$ and $\g'$ if and only
  if $\cpb$ is in the $\Sp(\qk)$-adjoint orbit through $\lambda \cb$
  for some nonzero $\lambda \in \CC$.

\end{enumerate}

\end{prop}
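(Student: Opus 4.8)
The plan is to mimic the proof of the corresponding statement for double extensions of quadratic vector spaces in \cite{DPU10}, with the symplectic form $B$ on $\qk$ playing the role of the symmetric one. We may assume $\cb\neq 0$ and $\cpb\neq 0$: if $\cb=0$ then $\g$ is Abelian, so $\g\cong\g'$ iff $\g'$ is Abelian iff $\cpb=0$, which matches the right-hand sides of (1) and (2) (take $P=\Id$, $\lambda=1$), and symmetrically if $\cpb=0$. Now let $A:\g\to\g'$ be an arbitrary Lie superalgebra isomorphism. Being even, $A$ carries $\g_\un=\qk$ onto $\g'_\un=\qk$; set $\overline{A}:=A|_\qk\in\Lc(\qk)$, an invertible map. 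By Proposition \ref{3.4.4}(3) applied to $\g$ and to $\g'$ (here $\cb,\cpb\neq 0$), one has $\Zs(\g)\cap\g_\ze=\CC X_\ze$ and $\Zs(\g')\cap\g'_\ze=\CC X_\ze'$, so $A(X_\ze)=\mu X_\ze'$ for some nonzero $\mu\in\CC$, and we may write $A(Y_\ze)=aX_\ze'+bY_\ze'+q_0$ with $a,b\in\CC$ and $q_0\in\qk$.

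Next I would substitute the bracket formulas of Lemma \ref{3.4.7}(1) into the identity $A([u,v])=[A(u),A(v)]'$. Taking $u=Y_\ze$, $v=X\in\qk$ and separating components in $\g'=\CC X_\ze'\oplus\CC Y_\ze'\oplus\qk$ yields
\[\overline{A}\,\cb=b\,\cpb\,\overline{A}\qquad\text{and}\qquad \cpb(q_0)=0;\]
in particular $b\neq 0$ (otherwise $\overline{A}\cb=0$, forcing $\cb=0$), hence $\cpb=\tfrac1b\,\overline{A}\,\cb\,\overline{A}^{-1}$. Taking $u=X$, $v=Y$ in $\qk$, then $A([X,Y])=B(\cb X,Y)\mu X_\ze'$ while $[\overline{A}X,\overline{A}Y]'=B(\cpb\overline{A}X,\overline{A}Y)X_\ze'$; combining this with the previous relation and the nondegeneracy of $B$ gives $\overline{A}^{*}\overline{A}\,\cb=\mu b\,\cb$, where $\overline{A}^{*}$ is the adjoint of $\overline{A}$ with respect to $B$.

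Part (1) follows at once. For the forward implication, set $\lambda:=\tfrac1b$ and $P:=t\,\overline{A}$ with $t^{2}=\tfrac1{\mu b}$; then $\cpb=\lambda P\cb P^{-1}$ and $P^{*}P\cb=t^{2}\mu b\,\cb=\cb$. Conversely, given an invertible $P$ and a nonzero $\lambda$ with $\cpb=\lambda P\cb P^{-1}$ and $P^{*}P\cb=\cb$, define $A:\g\to\g'$ by $A(X_\ze)=\lambda X_\ze'$, $A(Y_\ze)=\tfrac1\lambda Y_\ze'$, $A|_\qk=P$; a direct verification against Lemma \ref{3.4.7}(1) shows $A$ is a Lie superalgebra homomorphism — the bracket $[X,Y]=B(\cb X,Y)X_\ze$ is reproduced precisely because $P^{*}P\cb=\cb$ — and it is bijective since $P$ is invertible and $\lambda\neq 0$. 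For part (2): if $P\in\Sp(\qk)$ then $P^{*}P=\Id$, so $P^{*}P\cb=\cb$ holds automatically and the $A$ just built is an isomorphism; one checks it is an isometry on $\{X_\ze,Y_\ze\}$ together with a basis of $\qk$, the essential lines being $B(A(X_\ze),A(Y_\ze))=\lambda\cdot\tfrac1\lambda=1$ and $B(A(X),A(Y))=B(PX,PY)=B(X,Y)$; hence $A$ is an i-isomorphism. Conversely, if $A$ is an i-isomorphism, then in the notation above the isometry condition gives $\mu b=B(A(X_\ze),A(Y_\ze))=1$ and $B(\overline{A}X,\overline{A}Y)=B(X,Y)$ for all $X,Y\in\qk$, i.e. $\overline{A}\in\Sp(\qk)$; with $\lambda=\tfrac1b$ and $P=\overline{A}$, the relation $\cpb=\tfrac1b\,\overline{A}\,\cb\,\overline{A}^{-1}$ exhibits $\cpb$ in the $\Sp(\qk)$-adjoint orbit through $\lambda\cb$.

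The only step that needs care is the opening analysis: invoking the structural fact $\Zs(\g)\cap\g_\ze=\CC X_\ze$ to pin $A(X_\ze)$ down to a multiple of $X_\ze'$, and then carrying out the component bookkeeping in $A([u,v])=[A(u),A(v)]'$ cleanly — in particular noticing that the free parameters $a$ and $q_0$ in $A(Y_\ze)$ do not affect the relation between $\cb$ and $\cpb$ ($q_0$ being merely constrained to lie in $\ker\cpb$). Once the two identities $\overline{A}\cb=b\,\cpb\overline{A}$ and $\overline{A}^{*}\overline{A}\cb=\mu b\,\cb$ are secured, everything else is routine linear algebra.
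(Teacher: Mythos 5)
Your proof is correct and follows essentially the same route as the paper's: pin down $A(X_\ze)$ via $\Zs(\g)\cap\g_\ze=\CC X_\ze$, extract the two relations $\overline{A}\,\cb\,\overline{A}^{-1}=b\,\cpb$ and $\overline{A}^*\overline{A}\,\cb=\mu b\,\cb$, and rescale $\overline{A}$ to produce $P$. The only cosmetic remark is that your term $q_0\in\qk$ in $A(Y_\ze)$ is automatically zero, since $A$ is even and $\qk=\g'_\un$, so the constraint $\cpb(q_0)=0$ never actually arises.
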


\begin{proof}  The assertions are obvious if $\cb=0$. We assume $\cb\neq 0$.
\begin{enumerate}

\item Let $A : \g \to \g'$ be a Lie superalgebra isomorphism then
  $A(\CC X_\ze \oplus \CC Y_\ze)= \CC X_\ze' \oplus \CC Y_\ze'$ and
  $A(\qk)=\qk$. It is obvious that $\cpb\neq 0$. It is easy to see
  that $\CC X_\ze = \Zs(\g)\cap \g_\ze$ and $\CC X'_\ze = \Zs(\g')\cap
  \g'_\ze$ then one has $A(\CC X_\ze) = \CC X_\ze'$. It means
  $A(X_\ze) = \mu X_\ze'$ for some nonzero $\mu \in \CC$. Let $A|_\qk
  = Q$ and assume $A(Y_\ze) = \beta Y_\ze'+ \gamma X_\ze'$. For all
  $X$, $Y \in \qk$, we have $A ([X,Y]) = \mu B(\cb(X), Y)
  X_{\ze}'$. Also,
\[
  A([X,Y]) = [Q(X) , Q(Y)]' = B ( \cpb Q(X), Q(Y)) X_{\ze}'.
\]
It results that $Q^* \cpb Q = \mu \cb$.

Moreover, $A([Y_{\ze}, X]) = Q(\cb(X)) = [\beta Y_{\ze}'+ \gamma X_{\ze}', Q(X)]' = \beta \cpb Q(X)$, for all $X \in
\qk$. We conclude that $Q \ \cb \ Q^{-1} = \beta \cpb$ and since
$Q^* \cpb Q = \mu \cb$, then $Q^* Q \cb = \beta \mu \cb$.

Set $P = \dfrac{1}{ (\mu \beta)^{\frac12}} Q$ and $\lambda = \dfrac{1}{\beta}$. It follows that $\cpb= \lambda P \cb
P^{-1}$ and $P^* P \cb = \cb$.

Conversely, assume that $\g = (\CC X_{\ze} \oplus \CC Y_{\ze}) \oplusp
\qk$ and $\g' = (\CC X_{\ze}' \oplus \CC Y_{\ze}') \oplusp \qk$ be
double extensions of $\qk$, by skew-symmetric maps $\cb$ and $\cpb$
respectively such that $\cpb= \lambda P \cb P^{-1}$ and $P^* P \cb =
\cb$ with an invertible map $P \in \Lc(\qk)$ and a nonzero $\lambda
\in \CC$. Define $A : \g \to \g'$ by $A(X_{\ze}) = \lambda X_{\ze}'$,
$A(Y_{\ze}) = \dfrac{1}{\lambda}Y_{\ze}'$ and $A(X) = P(X)$, for all
$X \in \qk$ then it is easy to check that $A$ is a Lie superalgebra isomorphism.
  
\smallskip

\item If $\g$ and $\g'$ are i-isomorphic, then the isomorphism $A$ in
  the proof of (1) is an isometry. Hence $P \in \Sp(\qk)$ and $\cpb= \lambda P \cb
P^{-1}$ gives the result.

  Conversely, define $A$ as above (the sufficiency of (1)). Then $A$ is an
  isometry and it is easy to check that $A$ is an i-isomorphism.
\end{enumerate}

\end{proof}

\begin{cor} \label{3.4.11} Let $(\g,B)$ and $(\g',B')$ be double
  extensions of $(\qk, \overline{B})$ and $(\qk', \overline{B'})$
  respectively where $\overline{B} = B|_{\qk \times \qk}$ and
  $\overline{B'} = B'|_{\qk' \times \qk'}$. Write $\g = (\CC X_{\ze}
  \oplus \CC Y_{\ze}) \oplusp \qk$ and $\g' = (\CC X_{\ze}' \oplus \CC Y_{\ze}')
  \oplusp \qk'$. Then:

\begin{enumerate}

\item there exists an i-isomorphism between $\g$ and $\g'$ if and only
  if there exists an isometry $\overline{A} : \qk \to \qk'$ such that
  $\cpb = \lambda \ \overline{A} \ \cb \ \overline{A}{}^{-1}$, for
  some nonzero $\lambda \in \CC$.

\smallskip

\item there exists a Lie superalgebra isomorphism between $\g$ and $\g'$
  if and only if there exist invertible maps $\overline{Q} : \qk \to
  \qk'$ and $\overline{P} \in \Lc(\qk)$ such that 

\begin{itemize}

\item[(i)] $\cpb = \lambda \ \overline{Q} \ \cb \ \overline{Q}^{-1}$
  for some nonzero $\lambda \in \CC$,

\item[(ii)] $\overline{P}^* \ \overline{P} \ \cb = \cb$ and

\item[(iii)] $\overline{Q} \ \overline{P}^{-1}$ is an isometry from
  $\qk$ onto $\qk'$.

\end{itemize}

\end{enumerate}

\end{cor}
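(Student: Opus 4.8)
The plan is to deduce everything from Proposition \ref{3.4.10}, which compares two double extensions of \emph{the same} symplectic vector space, by first transporting $(\qk',\overline{B'})$ onto $(\qk,\overline{B})$ along a symplectic isometry. If $\dim\qk\neq\dim\qk'$, then $\g$ and $\g'$ have different dimensions, so there is neither a Lie superalgebra isomorphism $\g\to\g'$ nor a bijective map $\qk\to\qk'$, and both equivalences hold vacuously. So assume $\dim\qk=\dim\qk'$ and fix a symplectic isometry $\psi\colon(\qk',\overline{B'})\to(\qk,\overline{B})$, which exists since symplectic vector spaces of equal (even) dimension are isometric. Let $\g''$ be the double extension of $(\qk,\overline{B})$ by $\cpb'':=\psi\,\cpb\,\psi^{-1}$. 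The first step is to check that $\g'\iiso\g''$: define $\Psi\colon\g'\to\g''$ to send the canonical generators $X_\ze',Y_\ze'$ of the $\tk$-part of $\g'$ to those of $\g''$ and to equal $\psi$ on $\qk'$; then $\Psi$ is an isometry by construction, and since $\psi$ intertwines $\cpb$ with $\cpb''$ and preserves the forms, a direct check against Definition \ref{3.4.6} shows $\Psi$ respects the brackets. Consequently $\g$ is i-isomorphic (resp.\ isomorphic) to $\g'$ if and only if it is i-isomorphic (resp.\ isomorphic) to $\g''$, and $\g$ and $\g''$ are now double extensions of the single symplectic space $(\qk,\overline{B})$.

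For (1), I would apply Proposition \ref{3.4.10}(2) to $\g$ and $\g''$: they are i-isomorphic if and only if $\cpb''=\psi\cpb\psi^{-1}$ lies in the $\Sp(\qk)$-adjoint orbit through $\lambda\cb$ for some nonzero $\lambda$, i.e.\ $\psi\cpb\psi^{-1}=\lambda R\cb R^{-1}$ for some $R\in\Sp(\qk)$. Conjugating by $\psi^{-1}$, this is exactly $\cpb=\lambda\,\overline{A}\,\cb\,\overline{A}^{-1}$ with $\overline{A}:=\psi^{-1}R\colon\qk\to\qk'$, which is an isometry from $(\qk,\overline{B})$ to $(\qk',\overline{B'})$ because it is the composite of $R\in\Sp(\qk)$ with the isometry $\psi^{-1}$. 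Conversely, any isometry $\overline{A}\colon\qk\to\qk'$ with $\cpb=\lambda\,\overline{A}\,\cb\,\overline{A}^{-1}$ yields $R:=\psi\overline{A}\in\Sp(\qk)$ with $\cpb''=\lambda R\cb R^{-1}$, so $\g\iiso\g''\iiso\g'$.

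For (2), I would apply Proposition \ref{3.4.10}(1) to $\g$ and $\g''$: they are isomorphic as Lie superalgebras if and only if there are an invertible $P\in\Lc(\qk)$ and a nonzero $\lambda$ with $\cpb''=\lambda P\cb P^{-1}$ and $P^*P\cb=\cb$, where $P^*$ is the $\overline{B}$-adjoint of $P$. Given such a $P$, set $\overline{Q}:=\psi^{-1}P\colon\qk\to\qk'$ and $\overline{P}:=P$: then (i) holds after conjugating $\cpb''=\lambda P\cb P^{-1}$ by $\psi^{-1}$, (ii) is $P^*P\cb=\cb$, and (iii) holds because $\overline{Q}\,\overline{P}^{-1}=\psi^{-1}$ is an isometry. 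Conversely, given $\overline{Q},\overline{P},\lambda$ satisfying (i)--(iii), set $P:=\psi\overline{Q}\in\Lc(\qk)$: then $\cpb''=\psi\cpb\psi^{-1}=\lambda P\cb P^{-1}$ by (i), and using $\psi^*=\psi^{-1}$ one computes $P^*P=\overline{Q}^*\overline{Q}$; writing $R:=\overline{Q}\,\overline{P}^{-1}$, which by (iii) is an isometry so that $R^*R=\Id$, and $\overline{Q}=R\overline{P}$, one gets $\overline{Q}^*\overline{Q}=\overline{P}^*\overline{P}$, hence $P^*P\cb=\overline{P}^*\overline{P}\cb=\cb$ by (ii). Thus Proposition \ref{3.4.10}(1) gives $\g\cong\g''\cong\g'$.

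The routine parts are the verification that $\Psi$ is an i-isomorphism and the transport of structure along $\psi$. I expect the only real friction to be the bookkeeping of adjoints in (2): $P^*$, $\overline{P}^*$, $\overline{Q}^*$ and $\psi^*$ are formed with respect to different bilinear forms ($\overline{B}$ on the source and target of $P$ and $\overline{P}$, but $\overline{B'}$ on $\qk'$), and one must use that a symplectic isometry $\psi$ satisfies $\psi^*=\psi^{-1}$ so that the composites telescope correctly and the conditions (ii)--(iii) translate exactly into $P^*P\cb=\cb$.
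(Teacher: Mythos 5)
Your proposal is correct and follows essentially the same route as the paper: transport the structure of $\g'$ onto a double extension $\g''$ of $(\qk,\overline{B})$ along an isometry $\qk'\to\qk$, observe that $\g'\iiso\g''$, and then apply Proposition \ref{3.4.10} to $\g$ and $\g''$. Your explicit bookkeeping of the adjoints in part (2) (in particular $P^*P=\overline{Q}^*\overline{Q}=\overline{P}^*\overline{P}$ via $\psi^*=\psi^{-1}$ and $R^*R=\Id$) correctly fills in details the paper leaves to the reference \cite{DPU10}.
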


\begin{proof} The proof is completely similar to Corollary 4.6 in
  \cite{DPU10}. It is sketched as follows. First we can assume
  $\dim(\g)=\dim(\g')$ and define then a map $F:\g'\rightarrow\g$ by
  $F(X_\ze')=X_\ze$, $F(Y_\ze')=Y_\ze$ and $\bar{F}=F|_{\qk'}$ is an
  isometry from $\qk'$ onto $\qk$. We define a new Lie bracket on $\g$
  by\[[X,Y]'' = F \left( [F^{-1}(X), F^{-1}(Y)]' \right), \ \forall X,
  Y \in \g.\] and denote by $(\g'', [\cdot, \cdot]'')$ this new Lie
  superalgebra. So $F$ is an i-isomorphism from $\g'$ onto $\g''$ and
  by Proposition \ref{3.4.8} (3) $\g'' = (\CC X_1 \oplus \CC Y_1)
  \oplusp \qk$ is the double extension of $\qk$ by $\overline{C''}$
  with $\overline{C''}= \overline{F}
  \ \overline{C'}\ \overline{F}^{-1}$. We have that $\g$ and $\g'$ are
  isomorphic (resp. i-isomorphic) if and only if $\g$ and $\g''$ are
  isomorphic (resp. i-isomorphic). Finally, by applying Proposition
  \ref{3.4.11} to the Lie superalgebras $\g$ and $\g''$ we obtain the
  corollary.
	\end{proof}

It results that quadratic Lie superalgebra
structures on the quadratic $\ZZ_2$-graded vector space $\CC^{2} \underset{\ZZ_2}{\oplus} \CC^{2n}$ can be classified up to i-isomorphism  in terms of $\Sp(2n)$-orbits in $\ps(\spk(2n))$. This work is like what have been done in \cite{DPU10}. We need the following lemma:

\begin{lem} \label{3.4.12} Let $V$ be a quadratic $\ZZ_2$-graded vector space such that
  its even part is 2-dimensional. We write $V = (\CC X_{\ze}' \oplus \CC Y_{\ze}') \oplusp \qk'$ with $X_{\ze}'$, $Y_{\ze}'$
  isotropic elements in $V_{\ze}$ and $B(X_{\ze}', Y_{\ze}') = 1$. Let $\g$ be a quadratic Lie superalgebra with $\dim(\g_{\ze} ) = \dim(V_{\ze})$ and $\dim(\g) = \dim(V)$. Then,
  there exists a skew-symmetric map $\cpb : \qk' \to \qk'$ such that
  $V$ is considered as the double extension of $\qk'$ by $\cpb$ that is
  i-isomorphic to $\g$.
\end{lem}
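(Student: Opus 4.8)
The plan is to present $\g$ itself as a double extension via Proposition \ref{3.4.8}, and then transport that structure onto $V$ through an isometry between the symplectic parts. I would first dispose of the case where $\g$ is Abelian: taking $\cpb=0$, the double extension of $\qk'$ by $\cpb$ is the Abelian quadratic Lie superalgebra with underlying space $(\CC X_\ze'\oplus\CC Y_\ze')\oplusp\qk'=V$, and since a non-degenerate symmetric form on a $2$-dimensional complex space and a symplectic form on $\qk'$ are each unique up to isometry, this algebra is i-isomorphic to $\g$. So I may assume $\g$ is non-Abelian.

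Next, since $\dim(\g_\ze)=\dim(V_\ze)=2$, Proposition \ref{3.4.1} shows $\g$ is singular of type $\Sb_1$, and Proposition \ref{3.4.8} (1) (together with Proposition \ref{3.4.4}) provides isotropic vectors $X_\ze,Y_\ze\in\g_\ze$ with $B(X_\ze,Y_\ze)=1$ such that $\g$ is the double extension of the symplectic vector space $\qk:=(\CC X_\ze\oplus\CC Y_\ze)^\perp=\g_\un$ by the skew-symmetric map $\cb:=\ad(Y_\ze)|_\qk$, with $\cb\neq0$ since $\g$ is non-Abelian. I would then observe that $(\qk,B|_{\qk\times\qk})$ and $(\qk',B|_{\qk'\times\qk'})$ are symplectic vector spaces of the same dimension $\dim(\g)-2=\dim(V)-2$, hence isometric; fix an isometry $\overline{A}:\qk'\to\qk$ and set $\cpb:=\overline{A}^{-1}\circ\cb\circ\overline{A}\in\Lc(\qk')$. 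A short computation, using that $\overline{A}$ is an isometry and $\cb$ is skew-symmetric, shows that $\cpb$ is skew-symmetric with respect to $B|_{\qk'\times\qk'}$; and by Definition \ref{3.4.6} the double extension $\g'$ of $\qk'$ by $\cpb$ has underlying quadratic $\ZZ_2$-graded vector space exactly $(\CC X_\ze'\oplus\CC Y_\ze')\oplusp\qk'=V$, so $\cpb$ does endow $V$ with the claimed structure.

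Finally, to conclude $\g'\iiso\g$ I would apply Corollary \ref{3.4.11} (1): the isometry $\overline{A}^{-1}:\qk\to\qk'$ satisfies $\cpb=\overline{A}^{-1}\,\cb\,(\overline{A}^{-1})^{-1}$, which is condition (1) of that corollary with $\lambda=1$, giving an i-isomorphism between $\g$ and $\g'$. (One could equally check directly that the even extension $A:\g'\to\g$ defined by $A(X_\ze')=X_\ze$, $A(Y_\ze')=Y_\ze$, $A|_{\qk'}=\overline{A}$ is an isometry and, by the bracket formula of Lemma \ref{3.4.7} (1), a Lie superalgebra homomorphism.) I do not expect a serious obstacle here: the only delicate point is to keep the prescribed hyperbolic pair $\{X_\ze',Y_\ze'\}$ fixed throughout, so that the resulting double extension genuinely has $V$ as its underlying space; everything else is a transport of structure along an isometry and invokes results already established.
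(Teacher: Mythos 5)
Your proposal is correct and follows essentially the same route as the paper: write $\g$ as a double extension of its odd part via Proposition \ref{3.4.8}, choose an isometry between the symplectic parts, and transport the structure by conjugating $\cb$, concluding i-isomorphism via Corollary \ref{3.4.11}. Your separate treatment of the Abelian case is a small extra care the paper omits (its appeal to Proposition \ref{3.4.8} formally requires $\g$ non-Abelian), but the argument is otherwise the same.
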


\begin{proof}
  By Proposition \ref{3.4.8}, $\g$ is a double
  extension. Let us write $\g = (\CC X_{\ze} \oplus \CC Y_{\ze}) \oplusp \qk$
  and $\cb = \ad(Y_{\ze})|_\qk$. Define $A : \g \to V$ by $A(X_{\ze}) = X_{\ze}'$,
  $A(Y_{\ze}) = Y_{\ze}'$ and $\overline{A} = A |_\qk$ any isometry from $\qk
  \to \qk'$. It is clear that $A$ is an isometry from $\g$ to
  $V$. Now, define the Lie super-bracket on $V$ by: \[ [X,Y] = A \left( [
    A^{-1} (X), A^{-1}(Y)] \right), \ \forall \ X, Y \in V.\] Then $V$
  is a quadratic Lie superalgebra, that is i-isomorphic to $\g$. Moreover, $V$ is obviously the double extension of $\qk'$
  by $\cpb = \overline{A} \ \cb \ \overline{A}{}^{-1}$.
\end{proof}

Proposition \ref{3.4.8}, Proposition \ref{3.4.10}, Corollary
\ref{3.4.11} and Lemma \ref{3.4.12} are enough for us to apply the
classification method in \cite{DPU10} for the set $\Ss(2+2n)$ of
quadratic Lie superalgebra structures on the quadratic $\ZZ_2$-graded
vector space $\CC^{2} \underset{\ZZ_2}{\oplus} \CC^{2n}$ by only
replacing the isometry group $\OO(m)$ by $\Sp(2n)$ and $\ok(m)$ by
$\spk(2n)$ to obtain completely similar results. One has the first
characterization of the set $\Ss(2+2n)$:
\begin{prop}\label{3.4.13}
Let $\g$ and $\g'$ be elements in $\Ss(2+2n)$. Then $\g$ and $\g'$ are i-isomorphic if and only if they are isomorphic.
\end{prop}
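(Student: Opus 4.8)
The plan is to prove only the nontrivial implication, since an i-isomorphism is in particular a Lie superalgebra isomorphism; so I suppose $\g,\g'\in\Ss(2+2n)$ are isomorphic and produce an i-isomorphism. One may assume the structures are non-Abelian, the Abelian case being immediate.

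First I would bring both algebras to a common double-extension normal form. By Proposition \ref{3.4.8}(1), write $\g=(\CC X_\ze\oplus\CC Y_\ze)\oplusp\qk$ as the double extension of the symplectic space $\qk=\g_\un$ by $\cb=\ad(Y_\ze)|_\qk$. Viewing the underlying quadratic $\ZZ_2$-graded vector space of $\g$, with this isotropic splitting of the even part, as the space ``$V$'' of Lemma \ref{3.4.12}, the lemma applied to $\g'$ furnishes a skew-symmetric map $\cpb\colon\qk\to\qk$ such that the double extension $\tilde\g'$ of $\qk$ by $\cpb$ is i-isomorphic to $\g'$. In particular $\tilde\g'$ is isomorphic to $\g$, so Proposition \ref{3.4.10}(1) yields an invertible $P\in\Lc(\qk)$ and a nonzero $\lambda\in\CC$ with $\cpb=\lambda\,P\cb P^{-1}$ and $P^{*}P\cb=\cb$, where $P^{*}$ is the adjoint with respect to $B$.

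The core of the argument is to replace $P$ by a symplectic map. Put $S=P^{*}P$; then $S$ is invertible and self-adjoint, $S^{*}=S$, and $S\cb=\cb$, whence, taking adjoints and using $\cb^{*}=-\cb$, also $\cb S=\cb$. Since $0$ is not in the spectrum of $S$, the holomorphic functional calculus gives a square root $R=S^{1/2}$ that is a polynomial in $S$; hence $R$ is invertible, self-adjoint, and $R^{2}=S$, and --- because $S^{k}\cb=\cb=\cb S^{k}$ for all $k\ge 0$ --- every polynomial in $S$ applied to $\cb$ on either side is a scalar multiple of $\cb$, so $R$ and $R^{-1}$ commute with $\cb$. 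Setting $Q=PR^{-1}$ we get $Q^{*}Q=R^{-1}SR^{-1}=\Id$, so $Q\in\Sp(\qk)$, while $Q\cb Q^{-1}=P\cb P^{-1}$; therefore $\cpb=\lambda\,Q\cb Q^{-1}$ lies in the $\Sp(\qk)$-adjoint orbit through $\lambda\cb$. By Proposition \ref{3.4.10}(2), $\g$ and $\tilde\g'$ are i-isomorphic, and composing with $\tilde\g'\iiso\g'$ gives $\g\iiso\g'$.

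The one genuinely delicate point I expect is the construction of the square root $R$: over $\CC$ there is neither positivity nor a spectral theorem, so the argument must go through the holomorphic functional calculus for the invertible operator $S$, and it is essential to observe that, being a polynomial in $S$, the operator $R$ automatically inherits self-adjointness with respect to $B$ and commutation with $\cb$. This is precisely the symplectic counterpart of the corresponding step for solvable singular quadratic Lie algebras in \cite{DPU10}, so once this point is in place the remainder is formal bookkeeping with Propositions \ref{3.4.8}, \ref{3.4.10} and Lemma \ref{3.4.12}.
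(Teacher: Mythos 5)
Your argument is correct, and the delicate point you flag is handled properly: since $S=P^{*}P$ satisfies $S^{*}=S$ and $S\cb=\cb=\cb S$, any polynomial $p(S)$ is self-adjoint and satisfies $p(S)\cb=p(1)\cb=\cb\,p(S)$, so the functional-calculus square root $R$ (which exists for any invertible complex operator and is a polynomial in $S$) indeed commutes with $\cb$ and yields $Q=PR^{-1}\in\Sp(\qk)$ with $Q\cb Q^{-1}=P\cb P^{-1}$; Proposition \ref{3.4.10}(2) then closes the loop. Note, however, that this is not the route the paper intends. The paper gives no written proof of Proposition \ref{3.4.13}: it defers to the ``classification method'' of \cite{DPU10}, which --- as one sees from how the same statement is proved in Section 5 (Propositions \ref{3.5.12} and \ref{3.5.15}) --- proceeds by splitting $\cb$ into its Fitting components: on the invertible component the relation $P^{*}P\cb=\cb$ forces $P^{*}P=\Id$ outright, so $P$ is already symplectic there, while on the nilpotent component one invokes the parametrization of nilpotent $\Sp(2n)$-orbits by partitions (so $\GL$-conjugate nilpotents in $\spk(2n)$ are automatically $\Sp(2n)$-conjugate) together with the scaling Lemma \ref{3.5.11} to absorb $\lambda$. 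Your square-root argument replaces that case analysis and the appeal to the orbit classification by a single uniform computation, which is arguably cleaner for this proposition; the paper's route has the advantage of reusing machinery (Gerstenhaber's parametrization, the Fitting decomposition) that is needed anyway for Proposition \ref{3.4.14} and for the general singular case in Section 5, where the even component $\cb_0\in\ok(\qk_\ze)$ also enters and the same decomposition does double duty.
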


By using the notion of double extension, we call the Lie superalgebra $\g\in\Ss(2+2n)$ {\em diagonalizable} (resp. {\em invertible}) if it is a double extension by a diagonalizable (resp. invertible) map. Denote the subsets of nilpotent elements, diagonalizable elements and invertible elements in $\Ss(2+2n)$, respectively by $\Ns(2+2n)$, $\Ds(2+2n)$ and by $\Si(2+2n)$. Denote by $\Nsh(2+2n)$, $\Dsh(2+2n)$, $\Sih(2+2n)$ the sets of isomorphism classes in $\Ns(2+2n)$, $\Ds(2+2n)$, $\Si(2+2n)$, respectively and $\Dsh_\mathrm{red}(2+2n)$ the subset of $\Dsh(2+2n)$ including reduced ones. Applying Appendix, we have the classification result of these sets as follows:  

\begin{prop}\label{3.4.14}\hfill
  
\begin{enumerate}
\item There is a bijection between $\ \Nsh(2+2n)$ and the set of
  nilpotent $\ \Sp(2n)$-adjoint orbits of $\ \spk(2n)$ that induces a
  bijection between $\ \Nsh(2+2n)$ and the set of partitions $\
  \Pc_{-1}(2n)$.
\item There is a bijection between $\ \Dsh(2+2n)$ and the set of
  semisimple $\ \Sp(2n)$-orbits of $\ \ps(\spk(2n))$ that induces a
  bijection between $\ \Dsh(2+2n)$ and $\Lambda_n/H_n$ where $H_n$ is
  the group obtained from the group $G_n$ by adding maps $(\lambda_1,
  \dots, \lambda_n)\mapsto\lambda(\lambda_1, \dots, \lambda_n), \
  \forall \lambda\in\CC,\ \lambda\neq 0$. In the reduced case,
  $\Dsh_\mathrm{red}(2+2n)$ is bijective to $\Lambda_n^+/H_n$ with
  $\Lambda_n^+ = \{ (\lambda_1, \dots, \lambda_n) \mid\ \lambda_1,
  \dots, \lambda_n \in \CC,\ \lambda_i \neq 0, \ \forall i \}$.
\item There is a bijection between $\Sih(2+2n)$ and the set of
  invertible $\ \Sp(2n)$-orbits of $\ \ps(\spk(2n))$ that induces a
  bijection between $\ \Sih(2+2n)$ and $\ \Jc_n/\CC^*$.
\item There is a bijection between $\ \widehat{\Ss}(2+2n)$ and the set
  of $\ \Sp(2n)$-orbits of $\ \ps(\spk(2n))$ that induces a bijection
  between $\ \widehat{\Ss}(2+2n)$ and $\ \Dc(2n)/\CC^*$.
\end{enumerate}
\end{prop}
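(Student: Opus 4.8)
The plan is to reduce everything to the classification of $\Sp(2n)$-adjoint orbits in $\spk(2n)$ carried out in the Appendix. By Proposition \ref{3.4.8} every structure in $\Ss(2+2n)$ is the double extension of the symplectic space $\CC^{2n}$ by some skew-symmetric $\cb\in\spk(2n)$ and, conversely, every such double extension lies in $\Ss(2+2n)$; Lemma \ref{3.4.12} lets us always put the underlying graded vector space in the fixed form $\CC^2\oplus\CC^{2n}$. Proposition \ref{3.4.10}(2) then identifies i-isomorphism of the double extensions by $\cb$ and $\cpb$ with the condition that $\cpb$ lie in the $\Sp(2n)$-adjoint orbit of $\lambda\cb$ for some $\lambda\in\CC^*$, and Proposition \ref{3.4.13} says that on $\Ss(2+2n)$ isomorphism and i-isomorphism coincide. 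Consequently $\widehat{\Ss}(2+2n)$ is in bijection with the set of nonzero elements of $\spk(2n)$ modulo the equivalence relation generated by $\Sp(2n)$-conjugation and multiplication by nonzero scalars; since the $\Sp(2n)$-action commutes with scalar multiplication, this is exactly the set of $\Sp(2n)$-orbits in $\ps(\spk(2n))$. The only point needing a word of care is the excluded map $\cb=0$, which gives the abelian structure and corresponds to no point of $\ps(\spk(2n))$.

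Next I would split according to whether $\cb$ is nilpotent, semisimple, or invertible. Each of these properties of a nonzero skew-symmetric map is preserved both by $\Sp(2n)$-conjugation and by nonzero scaling, hence is well defined on $\ps(\spk(2n))$, and $\Ns(2+2n)$, $\Ds(2+2n)$, $\Si(2+2n)$ correspond under the above bijection to the nilpotent, semisimple and invertible loci. In the nilpotent case a nonzero nilpotent $\cb$ and $\lambda\cb$ have the same Jordan partition and hence already lie in the same $\Sp(2n)$-orbit, so passing to $\ps(\spk(2n))$ identifies nothing new; thus $\Nsh(2+2n)$ is in bijection with the nilpotent $\Sp(2n)$-orbits of $\spk(2n)$, which by the Appendix are indexed by the partitions $\Pc_{-1}(2n)$ of $2n$ whose odd parts occur with even multiplicity. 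In the semisimple and invertible cases scaling is genuinely nontrivial, and here I would quote the Appendix's explicit parametrizations of the semisimple and invertible $\Sp(2n)$-orbits of $\spk(2n)$, obtaining the bijections with $\Lambda_n/H_n$ and with $\Jc_n/\CC^*$; for the reduced subcase one uses Proposition \ref{3.4.8}(2)(i), namely that a diagonalizable $\cb$ produces a reduced algebra iff $\ker(\cb)\subset\im(\cb)$, i.e. iff $\cb$ is invertible, which identifies $\Dsh_{\mathrm{red}}(2+2n)$ with $\Lambda_n^+/H_n$.

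For the general statement (4) the structural translation is the Fitting decomposition: writing $\CC^{2n}=\ker(\cb^{2n})\oplus\im(\cb^{2n})$, both summands are $\cb$-stable and non-degenerate for the symplectic form (because $\cb$ is skew-symmetric), $\cb$ restricts to a nilpotent map on the first and an invertible map on the second, so $\g$ is the amalgamated product of a nilpotent double extension and an invertible double extension. Matching this with the Appendix's classification of arbitrary $\Sp(2n)$-orbits of $\spk(2n)$ (in bijection with $\Dc(2n)/\CC^*$) gives (4); alternatively (4) is already contained in the first paragraph's bijection together with that same Appendix classification, the Fitting decomposition merely being its structural shadow on the superalgebra side.

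The remaining verifications — that the assignments above are mutually inverse, and that one is really counting isomorphism classes, not merely i-isomorphism classes — are immediate from Propositions \ref{3.4.10}, \ref{3.4.13} and Corollary \ref{3.4.11}, exactly as in the quadratic Lie algebra case of \cite{DPU10}. The genuine work lies in the Appendix, namely classifying the semisimple and, above all, the invertible $\Sp(2n)$-adjoint orbits of $\spk(2n)$ and then reassembling the general case via Fitting; I expect the invertible-orbit classification to be the main obstacle, just as it is for the orthogonal algebras $\ok(m)$ in \cite{DPU10}.
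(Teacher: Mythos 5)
Your proposal is correct and follows essentially the same route as the paper, which gives no separate argument for Proposition \ref{3.4.14} but simply invokes Propositions \ref{3.4.8}, \ref{3.4.10}, \ref{3.4.13}, Corollary \ref{3.4.11} and Lemma \ref{3.4.12} to transport the classification method of \cite{DPU10} (with $\OO(m)$, $\ok(m)$ replaced by $\Sp(2n)$, $\spk(2n)$) and then reads off the four cases from the Appendix's parametrizations of nilpotent, semisimple, invertible and general $\Sp(2n)$-adjoint orbits. Your handling of the individual points --- scale-invariance of nilpotent orbits, the reduction of the reduced diagonalizable case to $\ker(\cb)=\{0\}$ via Proposition \ref{3.4.8}(2)(i), and the Fitting decomposition for (4) --- is exactly what the paper intends.
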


Next, we will describe the sets $\Ns(2+2n)$, $\Ds_\mathrm{red}(2+2n)$ the subset of $\Ds(2+2n)$ including reduced ones, and $\Si(2+2n)$ in term of amalgamated product in Definition \ref{3.4.6}. Remark that except for the nilpotent case, the amalgamated product may have a bad behavior with respect to isomorphisms.
  
\begin{defn} Keep the notation $J_p$ for the Jordan block of size $p$ as in Appendix and define two types of double extension as follows:

\begin{itemize}

\item for $p \geq 2$, we consider the symplectic vector space $\qk = \CC^{2p}$ equipped with its
  canonical bilinear form $\overline{B}$ and the map $\cb_{2p}^J$ having
  matrix \[ \begin{pmatrix} J_p & 0 \\ 0 & -{}^t J_p\end{pmatrix}\] in
  a Darboux basis. Then $\cb_{2p}^J \in \spk(2p)$ and we denote by
  $\jk_{2p}$ the double extension of $\qk $ by $\cb_{2p}^J$. So
  $\jk_{2p} \in \Ns(2+2p)$.

\item for $p \geq 1$, we consider the symplectic vector space $\qk = \CC^{2p}$ equipped with its
  canonical bilinear form $\overline{B}$ and the map $\cb_{p+p}^J$
  with matrix \[ \begin{pmatrix} J_{p} & M \\ 0 & -{}^t
    J_p\end{pmatrix}\] in a Darboux basis where $M=(m_{ij})$
  denotes the $p \times p$-matrix with $m_{p,p} = 1$ and
  $m_{ij} = 0$ otherwise. Then $\cb_{p+p}^J \in \spk(2p)$ and we
  denote by $\jk_{p+p}$ the double extension of $\qk $ by
  $\cb_{p+p}^J$. So $\jk_{p+p} \in \Ns(2+2p)$.

\end{itemize}

\medskip

The Lie superalgebras $\jk_{2p}$ or $\jk_{p+p}$ will be called {\em nilpotent
  Jordan-type} Lie superalgebras.

\end{defn}

Keep the notations as in Appendix. For $n \in \NN$, $n \neq 0$, each $[d]\in \Pc_{-1}(2n)$ can be written as 
\[
[d]=(p_1, p_1, p_2, p_2, \dots, p_k, p_k, 2
  q_1, \dots 2 q_\ell),\]
with all $p_i$ odd, $p_1 \geq p_2 \geq
  \dots \geq p_k$ and $q_1 \geq q_2 \geq \dots \geq q_\ell$.

We associate the partition $[d]$ with the map $\cb_{[d]}
\in \spk(2n)$ having matrix
\[ \diag_{k + \ell}(\cb^J_{2p_1}, \cb^J_{2p_2}, \dots, \cb^J_{2p_k},
\cb^J_{q_1+q_1}, \dots, \cb^J_{q_\ell+q_\ell} )\] in a Darboux basis
of $\CC^{2n}$ and denote by $\g_{[d]}$ the double extension of $\CC^{2n}$ by
$\cb_{[d]}$. Then $\g_{[d]} \in \Ns(2+2n)$ and $\g_{[d]}$ is an
amalgamated product of nilpotent Jordan-type Lie superalgebras. More
precisely, \[ \g_{[d]} = \jk_{2p_1} \ta \jk_{2p_2} \ta \dots \ta
\jk_{2p_k} \ta \jk_{q_1+q_1} \ta \dots \ta \jk_{q_\ell+q_\ell }.\]

\begin{prop}
Each $\g\in \Ns(2+2n)$ is i-isomorphic to a unique amalgamated product $\g_{[d]},\ [d]\in \Pc_{-1}(2n)$, of nilpotent Jordan-type Lie superalgebras.
\end{prop}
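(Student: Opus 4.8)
The plan is to deduce the statement from Proposition \ref{3.4.14}(1): that proposition already identifies $\Nsh(2+2n)$ with the set of nilpotent $\Sp(2n)$-adjoint orbits of $\spk(2n)$, and hence with $\Pc_{-1}(2n)$, so the only thing left is to recognize that $\g_{[d]}$ is the representative of the $[d]$-class and that it has the asserted amalgamated-product form. First I would recall the dictionary underlying Proposition \ref{3.4.14}(1). By Proposition \ref{3.4.8}(2), every $\g\in\Ns(2+2n)$ is the double extension of the symplectic space $\qk=\CC^{2n}$ by some nilpotent $\cb\in\spk(2n)$, and conversely; by Corollary \ref{3.4.11}(1), two such double extensions are i-isomorphic precisely when the two maps are $\Sp(2n)$-conjugate up to a nonzero scalar. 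For a nilpotent map the scalar is immaterial: embedding $\cb$ in an $\slk_2$-triple inside $\spk(2n)$ by Jacobson--Morozov and conjugating by the one-parameter subgroup generated by the neutral element of the triple shows that $\lambda\cb$ is $\Sp(2n)$-conjugate to $\cb$ for every nonzero $\lambda$. Hence $\g\mapsto(\Sp(2n)\text{-orbit of }\cb)$ yields a bijection from $\Nsh(2+2n)$ onto the set of nilpotent $\Sp(2n)$-orbits of $\spk(2n)$, which by the Appendix are labelled by their Jordan type, i.e. by a partition of $2n$ in which odd parts occur with even multiplicity, that is, by an element of $\Pc_{-1}(2n)$.

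It then remains to compute the Jordan type of the explicit representative $\cb_{[d]}$ and to read off the amalgamated-product decomposition. Each block $\cb^J_{2p}=\diag(J_p,-{}^tJ_p)$ lies in $\spk(2p)$ for the chosen Darboux basis, is nilpotent, and has Jordan type $(p,p)$, since $-{}^tJ_p$ is conjugate to $J_p$. Each block $\cb^J_{q+q}=\left(\begin{smallmatrix} J_q & M \\ 0 & -{}^tJ_q\end{smallmatrix}\right)$, with $M$ supported on its lower-right corner, also lies in $\spk(2q)$ and is a single regular nilpotent Jordan block of size $2q$; I would prove this by exhibiting a cyclic vector for $\cb^J_{q+q}$ (or by computing $\rank((\cb^J_{q+q})^k)$ for $k=1,\dots,2q$). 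Since $\cb_{[d]}$ is block-diagonal with blocks $\cb^J_{2p_1},\dots,\cb^J_{2p_k},\cb^J_{q_1+q_1},\dots,\cb^J_{q_\ell+q_\ell}$, its Jordan type is the concatenation $(p_1,p_1,\dots,p_k,p_k,2q_1,\dots,2q_\ell)=[d]$. Thus $\cb_{[d]}$ represents the nilpotent $\Sp(2n)$-orbit labelled by $[d]$, so $\g_{[d]}$, the double extension by $\cb_{[d]}$, represents the corresponding i-isomorphism class. The decomposition $\g_{[d]}=\jk_{2p_1}\ta\cdots\ta\jk_{2p_k}\ta\jk_{q_1+q_1}\ta\cdots\ta\jk_{q_\ell+q_\ell}$ is then immediate from Definition \ref{3.4.6}(2), because a double extension by a map that is block-diagonal with respect to an orthogonal (for $B$) decomposition of the symplectic space is by definition the amalgamated product of the double extensions of the individual blocks, and those are exactly the nilpotent Jordan-type superalgebras $\jk_{2p_i}$ and $\jk_{q_j+q_j}$. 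Uniqueness follows at once: if $\g_{[d]}\iiso\g_{[d']}$, then $\cb_{[d]}$ and $\cb_{[d']}$ are $\Sp(2n)$-conjugate (scalars being harmless), hence share the same Jordan type, so $[d]=[d']$.

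The step I expect to be the real work is checking that the auxiliary matrices $\cb^J_{2p}$ and $\cb^J_{q+q}$ genuinely belong to $\spk(2n)$ for the Darboux bases fixed in Definition \ref{3.4.6}, and, above all, that $\cb^J_{q+q}$ is a single Jordan block of size $2q$; once that computation is done, the rest is bookkeeping with the orbit correspondence of Proposition \ref{3.4.14}(1) and with the definition of the amalgamated product.
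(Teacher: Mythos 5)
Your argument is correct and follows the route the paper intends: Proposition \ref{3.4.14}(1) (via Proposition \ref{3.4.8}, Corollary \ref{3.4.11} and the irrelevance of the scalar for nilpotent maps, which the paper gets from Lemma \ref{3.5.11} rather than your Jacobson--Morozov argument) reduces everything to nilpotent $\Sp(2n)$-orbits, and the Appendix construction of $\cb_{[d]}$ together with Definition \ref{3.4.6}(2) identifies the representative of the orbit labelled by $[d]$ with the amalgamated product $\jk_{2p_1}\ta\cdots\ta\jk_{q_\ell+q_\ell}$. Your verification that $\cb^J_{q+q}$ is a single Jordan block of size $2q$ and that $\cb^J_{2p}$ has type $(p,p)$ is exactly the computation the paper leaves implicit, so the proposal matches the paper's (unwritten) proof.
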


For the reduced diagonalizable case, let $\g_{4}^s(\lambda)$ be the double
extension of $\qk = \CC^2$ by $\cb = \begin{pmatrix} \lambda & 0 \\ 0
  & - \lambda \end{pmatrix}$, $\lambda\neq 0$. By  Lemma \ref{3.4.7}, $\g_4^s(\lambda)$ is
i-isomorphic to $\g_{4}^s(1) = \g_{4,2}^s$.

\begin{prop}\label{3.4.17}
  Let $\g\in\Ds_\mathrm{red}(2+2n)$ then $\g$ is an amalgamated product of quadratic Lie superalgebras all
  i-isomorphic to $\g_{4,2}^s$.
\end{prop}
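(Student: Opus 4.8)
The plan is to reduce $\cb$ to a Jordan-block-free, diagonal form with respect to a suitable Darboux basis, and then recognize the resulting double extension as an amalgamated product. First I would note that since $\g\in\Ds_\mathrm{red}(2+2n)$, Proposition \ref{3.4.8}(1) realizes $\g$ as the double extension of $(\qk,\overline B)$, $\qk\simeq\CC^{2n}$, by a diagonalizable map $\cb=\ad(Y_\ze)|_\qk\in\spk(\qk)$, with $\ker(\cb)\subset\im(\cb)$ (the reduced condition, Proposition \ref{3.4.8}(2)(i)). Diagonalizability of $\cb$ together with $\ker(\cb)\subset\im(\cb)$ forces $\cb$ to be \emph{invertible}: indeed $\ker(\cb)=\ker(\cb^2)$ for a diagonalizable map, so $\ker(\cb)\cap\im(\cb)=\{0\}$, whence $\ker(\cb)=\{0\}$. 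So in the reduced diagonalizable case $\cb$ has no zero eigenvalue.

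Next I would use the classification of semisimple (diagonalizable) elements of $\spk(2n)$ recalled in the Appendix: an invertible diagonalizable $\cb\in\spk(\qk)$ has eigenvalues occurring in pairs $\{\lambda_i,-\lambda_i\}$ with $\lambda_i\neq 0$, and $\qk$ decomposes $B$-orthogonally as $\qk=\qk_1\oplusp\cdots\oplusp\qk_n$ where each $\qk_i$ is a $2$-dimensional symplectic subspace stable under $\cb$, with $\cb|_{\qk_i}$ having matrix $\begin{pmatrix}\lambda_i&0\\0&-\lambda_i\end{pmatrix}$ in a Darboux basis of $\qk_i$. This is exactly the $\Sp(2n)$-conjugacy normal form for semisimple invertible elements. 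Then by the very definition of the amalgamated product (Definition \ref{3.4.6}(2)) and its compatibility with orthogonal decomposition of the symplectic space and block-decomposition of the map, $\g$ is the amalgamated product $\g\iiso\g_4^s(\lambda_1)\ta\cdots\ta\g_4^s(\lambda_n)$, where $\g_4^s(\lambda_i)$ is the double extension of $\qk_i$ by $\cb|_{\qk_i}$. This last i-isomorphism is Proposition \ref{3.4.8}(3) applied to the i-isomorphism implementing the $\Sp(2n)$-conjugation, or directly Corollary \ref{3.4.11}(1).

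Finally, as noted right before the statement, each factor $\g_4^s(\lambda_i)$ with $\lambda_i\neq 0$ is i-isomorphic to $\g_4^s(1)=\g_{4,2}^s$ by Lemma \ref{3.4.7}(2) (rescaling $\cb$ by $1/\lambda_i$). Hence $\g$ is an amalgamated product of $n$ copies of quadratic Lie superalgebras each i-isomorphic to $\g_{4,2}^s$, which is the claim. The only mild subtlety — the "main obstacle" — is checking that the amalgamated-product construction genuinely commutes with replacing each factor by an i-isomorphic one; this follows because an i-isomorphism of each $(\qk_i,\cb_i)$ assembles, by orthogonal direct sum, into an isometry of $(\qk,\cb)$ intertwining the two block maps, and then Corollary \ref{3.4.11}(1) (or Proposition \ref{3.4.8}(3)) promotes it to an i-isomorphism of the corresponding double extensions. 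Everything else is the normal-form theory for $\spk(2n)$ recalled in the Appendix.
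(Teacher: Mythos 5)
Your proposal is correct and follows exactly the route the paper intends (the paper states Proposition \ref{3.4.17} without proof, relying on the remark about $\g_4^s(\lambda)\iiso\g_{4,2}^s$ and the analogue in \cite{DPU10}): reduced plus diagonalizable forces $\cb$ invertible, the semisimple normal form in $\spk(2n)$ splits $\qk$ into $\cb$-stable $2$-dimensional symplectic planes with $\cb|_{\qk_i}=\diag(\lambda_i,-\lambda_i)$, and Definition \ref{3.4.6}(2) together with Lemma \ref{3.4.7}(2) identifies $\g$ as the amalgamated product of copies of $\g_4^s(\lambda_i)\iiso\g_{4,2}^s$. No gaps.
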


Finally, for the invertible case, we recall the matrix $J_p(\lambda) =
\diag_p(\lambda, \dots, \lambda) + J_p$, $p \geq 1, \ \lambda \in \CC$ and set
 \[\cb_{2p}^J (\lambda)
= \begin{pmatrix} J_p(\lambda) & 0 \\ 0 & - {}^t
  J_p(\lambda) \end{pmatrix}\] in a Darboux basis of $\CC^{2p}$ then $\cb_{2p}^J(\lambda) \in
\spk(2p)$. Let $\jk_{2p}(\lambda)$ be the double
  extension of $\CC^{2p}$ by $\cb_{2p}^J(\lambda)$ then it is called a {\em Jordan-type} quadratic Lie superalgebra.

  When $\lambda = 0$ and $p \geq 2$, we recover the nilpotent
  Jordan-type Lie superalgebras $\jk_{2p}$. If $\lambda \neq 0$, $\jk_{2p}(\lambda)$ becomes an invertible singular
  quadratic Lie superalgebra and \[\jk_{2p}(- \lambda) \simeq
  \jk_{2p}(\lambda).\]

\begin{prop}\label{3.4.18}
  Let $\g\in \Si(2+2n)$ then $\g$ is an amalgamated
  product of quadratic Lie superalgebras all i-isomorphic to Jordan-type quadratic Lie superalgebras
  $\jk_{2p} (\lambda)$, with $\lambda \neq 0$.
\end{prop}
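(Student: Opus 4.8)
The plan is to reduce the statement to the classification of invertible $\Sp(2n)$-adjoint orbits of $\spk(2n)$ recalled in the Appendix, combined with the good behaviour of double extensions under $\Sp$-conjugation (Proposition~\ref{3.4.10}) and under amalgamation (Definition~\ref{3.4.6}).

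First, since $\g\in\Si(2+2n)$, $\g$ is by definition the double extension of the symplectic vector space $(\qk,\overline B)=\CC^{2n}$ by some invertible skew-symmetric map $\cb\in\spk(\qk)$. By the Appendix, an invertible element of $\spk(2n)$ is $\Sp(2n)$-conjugate to a block-diagonal map of the form $\diag_r(\cb_{2p_1}^J(\lambda_1),\dots,\cb_{2p_r}^J(\lambda_r))$ in a suitable Darboux basis, with all $\lambda_i\neq 0$ and $\sum_{i=1}^r p_i=n$; equivalently, there is a $\overline B$-orthogonal decomposition $\qk=\qk_1\oplusp\cdots\oplusp\qk_r$ into symplectic subspaces $\qk_i\cong\CC^{2p_i}$, each stable under $\cb$, with $\cb|_{\qk_i}=\cb_{2p_i}^J(\lambda_i)$. (Note in passing that invertibility of $\cb$ forces $\ker(\cb)=\{0\}\subset\im(\cb)$, so such a $\g$ is automatically reduced by Proposition~\ref{3.4.8}.)

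Next, by Proposition~\ref{3.4.10}(2) applied with $\lambda=1$, the double extension of $\qk$ by $\cb$ is i-isomorphic to the double extension of $\qk$ by any $\Sp(\qk)$-conjugate of $\cb$; hence $\g$ is i-isomorphic to the double extension of $\qk=\qk_1\oplus\cdots\oplus\qk_r$ by $\cb|_{\qk_1}\oplus\cdots\oplus\cb|_{\qk_r}$. By Definition~\ref{3.4.6}(2) this double extension is exactly the amalgamated product $\g_1\ta\cdots\ta\g_r$, where $\g_i$ is the double extension of $\qk_i$ by $\cb_{2p_i}^J(\lambda_i)$, i.e.\ $\g_i=\jk_{2p_i}(\lambda_i)$ is a Jordan-type quadratic Lie superalgebra with $\lambda_i\neq 0$. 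This is precisely the assertion.

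The only substantial ingredient is the $\overline B$-orthogonal block decomposition of an invertible symplectic endomorphism into indecomposable pieces of type $\cb_{2p}^J(\lambda)$, $\lambda\neq 0$; this is the symplectic counterpart of the orthogonal statement used in \cite{DPU10} and is established in the Appendix by pairing, via $\overline B$, the generalized eigenspace of $\cb$ for $\lambda$ with that for $-\lambda$ and putting the resulting restriction into Jordan normal form. Once that is available, everything else is routine bookkeeping with Definition~\ref{3.4.6} and the i-isomorphism results of Section~4, so I expect the only real work to lie in checking that the orbit representative produced in the Appendix has exactly the shape required here rather than in the proof of this proposition itself.
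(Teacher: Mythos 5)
Your proposal is correct and follows exactly the route the paper intends: the paper states Proposition \ref{3.4.18} without a written proof, relying precisely on the Appendix parametrization of invertible $\Sp(2n)$-orbits (pairing $V_\lambda$ with $V_{-\lambda}$ and putting $N_+$ into Jordan form, which yields the orthogonal block decomposition into pieces $\cb_{2p_i}^J(\lambda_i)$), together with Proposition \ref{3.4.10}(2) and the amalgamated-product construction of Definition \ref{3.4.6}(2). The one point you flag — checking that the orbit representative from Proposition \ref{1.4.4} really is $\Sp$-conjugate, after a symplectic reordering of the Darboux basis, to $\diag_r(\cb_{2p_1}^J(\lambda_1),\dots,\cb_{2p_r}^J(\lambda_r))$ — is indeed the only bookkeeping required, and it goes through.
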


\subsection{Quadratic dimension of reduced quadratic Lie superalgebras
having the 2-dimensional even part}\hfill

Let $(\g,B)$ be a quadratic Lie superalgebra. To any bilinear form $B'$ on $\g$, there is an associated map $D :\g \to \g$ satisfying 
\[ B'(X,Y) = B(D(X),Y),\ \forall \ X,Y \in
\g.\]

\begin{lem}
If $B'$ is even then $D$ is even.
\end{lem}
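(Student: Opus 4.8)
The plan is to unwind the definitions and use the evenness of $B$ together with the evenness of $B'$. Recall that $B$ being even means $B(\g_\ze,\g_\un)=0$, and by hypothesis $B'(\g_\ze,\g_\un)=0$ as well. The map $D$ is characterized by $B'(X,Y)=B(D(X),Y)$ for all $X,Y\in\g$; since $B$ is non-degenerate, $D$ is well-defined. To show $D$ is even, it suffices to prove $D(\g_\ze)\subset\g_\ze$ and $D(\g_\un)\subset\g_\un$.

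First I would take a homogeneous $X\in\g_\ze$ and write $D(X)=Z_\ze+Z_\un$ with $Z_\ze\in\g_\ze$, $Z_\un\in\g_\un$. For any $Y\in\g_\un$ we have $B(D(X),Y)=B(Z_\ze,Y)+B(Z_\un,Y)=B(Z_\un,Y)$, because $B(\g_\ze,\g_\un)=0$. On the other hand $B(D(X),Y)=B'(X,Y)=0$ since $X\in\g_\ze$, $Y\in\g_\un$ and $B'$ is even. Hence $B(Z_\un,Y)=0$ for all $Y\in\g_\un$; as $B$ is non-degenerate on $\g_\un$ (it is a symplectic form there), we get $Z_\un=0$, i.e.\ $D(X)\in\g_\ze$. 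The argument for $X\in\g_\un$ is symmetric: writing $D(X)=Z_\ze+Z_\un$ and pairing against an arbitrary $Y\in\g_\ze$ gives $B(Z_\ze,Y)=B'(X,Y)=0$, so $Z_\ze=0$ by non-degeneracy of $B$ on $\g_\ze$, and $D(X)\in\g_\un$. Therefore $D$ preserves the $\ZZ_2$-grading and is even.

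There is really no serious obstacle here; the only thing to be careful about is invoking the non-degeneracy of the restrictions $B|_{\g_\ze\times\g_\ze}$ and $B|_{\g_\un\times\g_\un}$, which is exactly the content of $B$ being a non-degenerate \emph{even} form on the $\ZZ_2$-graded space (equivalently, $\g=\g_\ze\oplusp\g_\un$ as quadratic $\ZZ_2$-graded vector spaces). One could phrase the whole proof in one line: evenness of $B$ and $B'$ forces the off-diagonal blocks of the Gram-type data to vanish, and since $D$ is obtained from $B'$ by "dividing" by the block-diagonal $B$, it is itself block-diagonal, hence even.
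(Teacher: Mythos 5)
Your proof is correct and follows exactly the same argument as the paper: decompose $D(X)$ into homogeneous components, use the evenness of $B'$ and $B$ to kill the off-grading component via non-degeneracy of the restrictions of $B$ to $\g_\ze$ and $\g_\un$. Nothing to add.
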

\begin{proof}
Let $X$ be an element in $\g_\ze$ and assume that $D(X) = Y + Z$ with $Y\in \g_\ze$ and $Z\in \g_\un$. Since $B'$ is even then $B'(X,\g_\un)= 0$. It implies that $B(D(X),\g_\un)= B(Z,\g_\un) = 0$. By the non-degeneracy of $B$ on $\g_\un$, we obtain $Z = 0$ and then $D(\g_\ze)\subset \g_\ze$. Similarly to the case $X\in \g_\un$, it concludes that $D(\g_\un)\subset \g_\un$. Thus, $D$ is even.
\end{proof}

\begin{lem}\label{3.4.20}
  Let $(\g,B)$ be a quadratic Lie superalgebra, $B'$ be an even bilinear form on
  $\g$ and $D \in \Lc(\g)$ be its associated map. Then:

\begin{enumerate}

\item $B'$ is invariant if and only if $D$ satisfies 
\[
D([X,Y]) =  [D(X), Y] = [X, D(Y)]  , \ \forall \ X,Y \in \g.
\]

\smallskip

\item $B'$ is supersymmetric if and only if $D$ satisfies 

 \[ B(D(X), Y) = B(X, D(Y))
  , \ \forall \ X,Y \in \g.\]

In this case, $D$ is called {\em symmetric}.
\smallskip

\item $B'$ is non-degenerate if and only if $D$ is invertible.

\smallskip

\end{enumerate}

\end{lem}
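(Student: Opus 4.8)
\emph{Proof plan.} The idea is to unwind each of the three equivalences directly from the defining relation $B'(X,Y)=B(D(X),Y)$, using only the non-degeneracy and invariance (resp.\ supersymmetry) of $B$ together with the fact, established in the preceding Lemma, that $D$ is even. Every identity will first be checked on homogeneous elements and then extended by bi- or tri-linearity.

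\emph{Assertion (1).} For the forward direction, assume $B'$ is invariant and take homogeneous $X,Y,Z$. Then $B'([X,Y],Z)=B(D([X,Y]),Z)$, while $B'(X,[Y,Z])=B(D(X),[Y,Z])=B([D(X),Y],Z)$ by invariance of $B$; since $Z$ is arbitrary and $B$ is non-degenerate, this yields $D([X,Y])=[D(X),Y]$. To obtain the remaining identity $[D(X),Y]=[X,D(Y)]$, I would apply the one just proved to the pair $(Y,X)$, getting $D([Y,X])=[D(Y),X]$, and then rewrite both sides using the graded antisymmetry $[Y,X]=-(-1)^{xy}[X,Y]$ together with $D(Y)\in\g_y$ (evenness of $D$); comparing the two expressions gives $D([X,Y])=[X,D(Y)]$, so all three terms coincide. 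The converse is immediate: if $D$ intertwines the bracket as stated, then $B'([X,Y],Z)=B([D(X),Y],Z)=B(D(X),[Y,Z])=B'(X,[Y,Z])$ by invariance of $B$.

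\emph{Assertion (2).} Supersymmetry of $B'$ reads $B'(X,Y)=(-1)^{xy}B'(Y,X)$ for homogeneous $X,Y$; rewriting both sides through $D$ and invoking the supersymmetry of $B$ and the evenness of $D$, the sign factors $(-1)^{xy}$ cancel and the relation becomes exactly $B(D(X),Y)=B(X,D(Y))$. The computation is reversible, which gives the equivalence (the case $x\neq y$ being vacuous on both sides since $B$ and $B'$ are even). \emph{Assertion (3).} Since $\g$ is finite-dimensional and $D$ is even, $D$ fails to be invertible precisely when $\ker D\neq\{0\}$. If $0\neq X\in\ker D$ then $B'(X,\g)=B(D(X),\g)=\{0\}$, so $B'$ is degenerate; conversely, if $D$ is invertible and $B'(X,\g)=\{0\}$, then $B(D(X),\g)=\{0\}$, hence $D(X)=0$ by non-degeneracy of $B$, hence $X=0$, so $B'$ is non-degenerate.

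The arguments are routine; the only point demanding care is the sign bookkeeping in part (1) when passing from $D([X,Y])=[D(X),Y]$ to $[D(X),Y]=[X,D(Y)]$, where the evenness of $D$ and the graded antisymmetry of the bracket must be combined correctly.
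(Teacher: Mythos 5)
Your proposal is correct and follows essentially the same route as the paper: part (1) is obtained by non-degeneracy of $B$ together with its invariance, and the second identity $[D(X),Y]=[X,D(Y)]$ is deduced exactly as in the paper by applying the first identity to the pair $(Y,X)$ and using graded antisymmetry plus the evenness of $D$; parts (2) and (3) likewise reduce to the supersymmetry and non-degeneracy of $B$. Your treatment of (3) is slightly more explicit than the paper's one-line remark, but the content is identical.
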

\begin{proof}Let $X,Y$ and $Z$ be homogeneous elements in $\g$ of degrees $x$, $y$ and $z$, respectively.
\begin{enumerate}
	\item If $B'$ is invariant then
	\[B'([X,Y],Z) = B'(X,[Y,Z]).\]
	That means $B(D([X,Y]),Z = B(D(X),[Y,Z]) = B([D(X),Y],Z)$.
	Since $B$ is non-degenerate, one has $D([X,Y]) = [D(X),Y]$. As a consequence, $D([X,Y])=-(-1)^{xy}D([Y,X]) = -(-1)^{xy}[D(Y),X] = [X,D(Y)]$ by $D$ even.
	
	Conversely, if $D$ satisfies $D([X,Y]) =[D(X), Y] = [X, D(Y)]$, for all $X,Y \in \g$, it is easy to check that $B'$ is invariant.
	\item $B'$ is supersymmetric if and only if $B'(X,Y) = (-1)^{xy}B'(Y,X)$. Therefore, $B(D(X),Y) = (-1)^{xy}B(D(Y),X) = B(X,D(Y))$ by $B$ supersymmetric.
	\item It is obvious since $B$ is non-degenerate.
\end{enumerate}
\end{proof}

\begin{defn}
Let $\g$ be a quadratic Lie superalgebra. An even and symmetric map $D\in\Lc(\g)$ satisfying Lemma \ref{3.4.20} (1) is called  a {\em centromorphism} of $\g$.
\end{defn}

By \cite{Ben03}, given a quadratic Lie superalgebra $\g$, the space of centromorphisms of $\g$ and the space generated by invertible ones are the same, denote it by $\Cs(\g)$. As a consequence, the space of even invariant supersymmetric bilinear forms on $\g$ coincides with its subspace generated by non-degenerate ones. Moreover, all those spaces have the same dimension called the {\em quadratic dimension of $\g$} and denoted by $d_q(\g)$. The following proposition gives the formula of $d_q(\g)$ for reduced quadratic Lie
  superalgebras with 2-dimensional even part.

\begin{prop} \label{3.4.22} Let $\g$ be a reduced quadratic Lie
  superalgebra with 2-dimensional even part and $D \in \Lc(\g)$ be an even symmetric map. Then:

\begin{enumerate}

\item $D$ is a centromorphism if and only if there exist $\mu \in
  \CC$ and an even symmetric map $\Zb : \g \to \Zs(\g)$ such that
  $\Zb|_{[\g,\g]} = 0$ and $D = \mu \Id + \Zb$. Moreover $D$ is
  invertible if and only if $\mu \neq 0$.

\smallskip

\item \[ d_q(\g) = 2 + \dfrac{(\dim(\Zs(\g) - 1)) (\dim(\Zs(\g) - 2)}{2}. \]

\smallskip

\end{enumerate}

\end{prop}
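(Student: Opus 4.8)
The plan is to work entirely through the double-extension description of $\g$ provided by Propositions \ref{3.4.4} and \ref{3.4.8}. I would fix the splitting $\g_\ze=\CC X_\ze\oplus\CC Y_\ze$ with $X_\ze$ isotropic and central, $B(X_\ze,Y_\ze)=1$, and set $C=\ad(Y_\ze)|_{\g_\un}\in\spk(\g_\un)$; then Proposition \ref{3.4.4}(3) gives $\Zs(\g)=\ker(C)\oplus\CC X_\ze$ and $[\g,\g]=\im(C)\oplus\CC X_\ze$, and $\g$ being reduced means precisely $\ker(C)\subset\im(C)$. Everything about centromorphisms will be extracted from $C$.

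For part (1), I would first prove the forward implication. A centromorphism $D$ sends $\Zs(\g)$ into itself, since $[D(Z),Y]=D([Z,Y])=0$ for $Z\in\Zs(\g)$; hence, $D$ being even, $D(X_\ze)\in\Zs(\g)\cap\g_\ze=\CC X_\ze$, say $D(X_\ze)=\mu X_\ze$, and writing $D(Y_\ze)=aX_\ze+bY_\ze$, the relation $B(D(X_\ze),Y_\ze)=B(X_\ze,D(Y_\ze))$ forces $b=\mu$. Feeding $X,Y\in\g_\un$ into $D([X,Y])=[D(X),Y]$, using $[X,Y]=B(C(X),Y)X_\ze$ and the non-degeneracy of $B|_{\g_\un}$, gives $C(D(X))=\mu C(X)$; feeding $Y_\ze$ and $X\in\g_\un$ gives $D(C(X))=\mu C(X)$. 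Therefore $\Zb:=D-\mu\Id$ is even and symmetric, kills $X_\ze$ and $\im(C)$ (hence all of $[\g,\g]$), and maps $Y_\ze$ into $\CC X_\ze$ and $\g_\un$ into $\ker(C)$, so $\Zb(\g)\subset\Zs(\g)$. The converse is a one-line check: if $D=\mu\Id+\Zb$ with $\Zb$ as in the statement, then $D([X,Y])=\mu[X,Y]$ since $[X,Y]\in[\g,\g]$, and $[D(X),Y]=[X,D(Y)]=\mu[X,Y]$ since $\Zb(X),\Zb(Y)\in\Zs(\g)$. For invertibility: if $\mu=0$ then $\im(D)\subset\Zs(\g)\subsetneq\g$, so $D$ is not bijective; if $\mu\neq0$, then $\g$ reduced gives $\Zb(\g)\subset\Zs(\g)\subset[\g,\g]$ and $\Zb|_{[\g,\g]}=0$, whence $\Zb^2=0$ and $D=\mu(\Id+\mu^{-1}\Zb)$ is invertible.

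For part (2), I would recall from \cite{Ben03} (summarised before the statement) that $d_q(\g)$ equals $\dim\Cs(\g)$, the dimension of the space of centromorphisms, and then count via part (1). One has $\Cs(\g)=\CC\Id\oplus W$, where $W$ is the space of even symmetric maps $\Zb\colon\g\to\Zs(\g)$ with $\Zb|_{[\g,\g]}=0$, the sum being direct because $\lambda\Id\in W$ would force $\g=\Zs(\g)$. Such a $\Zb$ is determined by the scalar encoding $\Zb(Y_\ze)\in\Zs(\g)\cap\g_\ze=\CC X_\ze$ (the symmetry conditions mixing $\g_\ze$ and $\g_\un$ being vacuous since $B(\g_\ze,\g_\un)=0$) together with $\psi:=\Zb|_{\g_\un}\colon\g_\un\to\ker(C)$, which annihilates $\im(C)$ and satisfies $B(\psi(X),Y)=B(X,\psi(Y))$. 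The step I expect to be the crux is the following: since $\g$ is reduced, $\ker(C)$ is totally isotropic and $\im(C)=\ker(C)^\perp$, so $B$ induces a perfect pairing between $\ker(C)$ and $\g_\un/\im(C)$; under this pairing $\psi$ corresponds to a bilinear form on the $r$-dimensional space $\g_\un/\im(C)$ with $r:=\dim\ker(C)$, and the symmetry of $\psi$ becomes skew-symmetry of that form, so $\psi$ ranges over a space of dimension $r(r-1)/2$. One must also verify that the cross terms of $B(\psi(X),Y)=B(X,\psi(Y))$, where one argument lies in $\im(C)$, vanish automatically, which again rests on $\ker(C)\subset\ker(C)^\perp$. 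Putting this together, $\dim W=1+r(r-1)/2$, so $d_q(\g)=1+\dim W=2+r(r-1)/2$; and since $\dim\Zs(\g)=r+1$ this is exactly $2+\tfrac{(\dim\Zs(\g)-1)(\dim\Zs(\g)-2)}{2}$.
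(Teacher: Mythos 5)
Your proof is correct and follows the route the paper itself takes (the paper defers to Proposition 7.2 of \cite{DPU10}, whose argument is reproduced within this paper as Lemma \ref{3.5.17} for the type $\Sb_1$ case): extract $\mu$ from the action of $D$ on $X_\ze$, deduce $D\circ C=C\circ D=\mu C$, set $\Zb=D-\mu\Id$, and count the admissible $\Zb$ via the perfect pairing between $\ker(C)$ and $\g_\un/\im(C)$, under which the symmetry of $\Zb|_{\g_\un}$ becomes skew-symmetry of a form on an $(\dim(\Zs(\g))-1)$-dimensional space. The only (harmless) variation is that you establish the decomposition $D=\mu\Id+\Zb$ for arbitrary centromorphisms directly, whereas the paper's template first treats invertible ones and then invokes the fact that $\Cs(\g)$ is spanned by invertible centromorphisms.
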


\begin{proof} 
 The proof goes exactly as Proposition 7.2 given in \cite{DPU10}, the
 reader may refer to it.

\end{proof}

\section{Singular quadratic Lie superalgebras of type $\Sb_1$}

Let $\g$ be a singular quadratic Lie superalgebra of type $\Sb_1$ such
that $\g_\ze$ is non-Abelian. If $[\g_\un,\g_\un] = \{0\}$ then
$[\g_\un,\g]=\{0\}$ and therefore $\g$ is an orthogonal direct sum of
a singular quadratic Lie algebra of type $\Sb_1$ and a vector
space. There is nothing to do. We can assume that $[\g_\un,\g_\un]
\neq \{0\}$. Fix $\alpha \in \Vs_I$ and choose $\Omega_0 \in
\Alt^2(\gO)$, $\Omega_1 \in \sym^2(\gI)$ such that \[ I = \alpha
\wedge \Omega_0 + \alpha \otimes \Omega_1. \]

Let $X_\ze = \phi^{-1}(\alpha)$ then $X_\ze\in \Zs(\g)$ and $B(X_\ze,X_\ze)=0$. We define linear maps $C_0:\g_\ze\rightarrow\g_\ze$,  $C_1:\g_\un\rightarrow\g_\un$ by $\Omega_0(X,Y) = B(C_0(X),Y)$ if $ X,Y\in\g_\ze$ and $\Omega_1(X,Y) = B(C_1(X),Y)$ if $X,Y\in\g_\un$. Let $C:\g\rightarrow \g$ defined by $C(X+Y) = C_0(X) +C_1(Y)$, for all $ X\in\g_\ze,\ Y\in\g_\un$.
\begin{prop}\label{3.5.1}
For all $X,Y\in\g$, the Lie super-bracket of $\g$ is defined by: 
\[[X,Y] = B(X_\ze,X)C(Y) - B(X_\ze,Y)C(X) + B(C(X),Y)X_\ze.
\]
In particular, if $X,Y\in\g_\ze$, $Z,T\in\g_\un$ then
\begin{enumerate}
	\item $[X,Y] = B(X_\ze,X)C_0(Y) - B(X_\ze,Y)C_0(X) + B(C_0(X),Y)X_\ze$,
	\item $[X,Z] = B(X_\ze,X)C_1(Z)$,
	\item $[Z,T] = B(C_1(Z),T)X_\ze$
\end{enumerate}
\end{prop}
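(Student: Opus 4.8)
The plan is to extract the bracket directly from the identity $I(X,Y,Z)=B([X,Y],Z)$ of Proposition~\ref{3.1.18} together with the non-degeneracy of $B$. Since $B$ is even and non-degenerate, for fixed $X,Y\in\g$ the vector $[X,Y]$ is the unique element of $\g$ with $B([X,Y],Z)=I(X,Y,Z)$ for all $Z\in\g$. Hence it suffices to put
\[ R(X,Y):=B(X_\ze,X)\,C(Y)-B(X_\ze,Y)\,C(X)+B(C(X),Y)\,X_\ze \]
and to check $B(R(X,Y),Z)=I(X,Y,Z)$ for all homogeneous $X,Y,Z\in\g$. A short inspection, using $B(X_\ze,\gI)=0$, $C(\gO)\subset\gO$ and $C(\gI)\subset\gI$, shows that $R(X,Y)$ lies in the homogeneous component of $\g$ of the same $\ZZ_2$-degree as $[X,Y]$; so for $Z$ of the complementary degree both sides vanish, and only the three configurations $(\ze,\ze,\ze)$, $(\ze,\un,\un)$, $(\un,\un,\ze)$ for $(X,Y,Z)$ need to be treated. (Recall also that $C_0$, $C_1$ are well defined because $B$ is non-degenerate on $\gO$ and on $\gI$, and that $C$ restricts to $C_0$ on $\gO$ and to $C_1$ on $\gI$.)

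For these I would evaluate $I=I_0+I_1$, with $I_0=\alpha\wedge\Omega_0\in\Alt^3(\gO)$ and $I_1=\alpha\otimes\Omega_1\in\Alt^1(\gO)\otimes\sym^2(\gI)$, using the realization of $\Es(\g)$ as super-antisymmetric multilinear forms (Lemma~\ref{3.1.10}, \cite{BP89}); note $I_0$ vanishes on any triple with an odd entry and $I_1$ on any triple with at most one odd entry. For $(\ze,\ze,\ze)$ only $I_0$ contributes and, $\alpha$ and $\Omega_0$ having trivial $\ZZ_2$-degree, it is the ordinary wedge on $\gO$, so $I(X,Y,Z)=\alpha(X)\Omega_0(Y,Z)-\alpha(Y)\Omega_0(X,Z)+\alpha(Z)\Omega_0(X,Y)$; substituting $\alpha=\phi(X_\ze)$ and $\Omega_0(\cdot,\cdot)=B(C_0(\cdot),\cdot)$ identifies this with $B(R(X,Y),Z)$, giving~(1). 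For $(\ze,\un,\un)$ only $I_1$ contributes and in the natural order $I(X,Y,Z)=\alpha(X)\Omega_1(Y,Z)=B(X_\ze,X)B(C_1(Y),Z)$; the last two summands of $R(X,Y)$ pair to zero against the odd argument $Z$, so this equals $B(R(X,Y),Z)$, giving~(2). For $(\un,\un,\ze)$ only $I_1$ contributes; moving the even argument to the first slot by super-antisymmetry of $I_1$ (two odd--even transpositions, total sign $+1$) gives $I(X,Y,Z)=\alpha(Z)\Omega_1(X,Y)=B\big(B(C_1(X),Y)\,X_\ze,\ Z\big)=B(R(X,Y),Z)$, giving~(3). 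Finally $I$ vanishes on $(\un,\un,\un)$, consistently with $R(X,Y)\in\gO$ and $B(\gO,\gI)=0$.

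Putting the cases together yields $[X,Y]=R(X,Y)$ for all $X,Y\in\g$, and (1)--(3) are the homogeneous specializations. This is the superalgebra counterpart of the analogous computation for singular quadratic Lie algebras in \cite{DPU10}, merged with the odd-part computation of Proposition~\ref{3.4.4}. The only genuinely fiddly step is the bookkeeping of Koszul-type signs and normalizations in the super-antisymmetric evaluation of $I_1$ for the configuration $(\un,\un,\ze)$; it comes out correctly precisely because $\Omega_1$ is symmetric, which matches the symmetry of the bracket of two odd elements, exactly as the skew-symmetry of $\Omega_0$ matches the antisymmetry in the even case. Everything else is a direct substitution of the definitions of $C_0$, $C_1$ and $X_\ze=\phi^{-1}(\alpha)$.
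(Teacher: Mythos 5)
Your proposal is correct and follows essentially the same route as the paper: read the bracket off from $I(X,Y,Z)=B([X,Y],Z)$ with $I=\alpha\wedge\Omega_0+\alpha\otimes\Omega_1$, using the non-degeneracy of $B$ and the parity constraints to reduce to the three nontrivial homogeneous configurations. The only difference is that for the purely even case the paper simply invokes Proposition \ref{3.2.7} and the corresponding computation for singular quadratic Lie algebras in \cite{DPU10}, whereas you verify it directly from $I_0=\alpha\wedge\Omega_0$; this is a harmless (and slightly more self-contained) variant.
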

\begin{proof}
By Proposition \ref{3.2.7}, $\g_\ze$ is a singular quadratic Lie algebra so the assertion (1) follows \cite{DPU10}. Given $X\in\g_\ze$, $Y,Z\in\g_\un$, one has
\[B([X,Y],Z) = \alpha\otimes\Omega_1(X,Y,Z) = \alpha(X)\Omega_1(Y,Z) = B(X_\ze,X)B(C_1(Y),Z).
\]
Hence we obtain (2) and (3).
\end{proof}

Now, we show that $\g_\ze$ is solvable. Consider the quadratic Lie algebra $\g_\ze$ with 3-form $I_0 = \alpha\wedge\Omega_0$. Write $\Omega_0 = \sum_{i<j}{a_{ij}\alpha_i\wedge\alpha_j}$, with $a_{ij}\in\CC$. Set $X_i = \phi^{-1}(\alpha_i)$ then \[C_0 = \sum_{i<j}{a_{ij}(\alpha_i\otimes X_j-\alpha_j\otimes X_i)}.\] Define the space $W_{I_0}\subset\g_\ze^*$ by:
\[\Ws_{I_0} = \{\iota_{X\wedge Y}(I_0)\ |\ X,Y\in\g_\ze\}.\] Then $\Ws_{I_0} = \phi([\g_\ze,\g_\ze])$ and that implies $\im(C_0)\subset [\g_\ze,\g_\ze]$. In Section 2, it is known that $\{\alpha,I_0\} = 0$ and then $[X_\ze,\g_\ze] = 0$. As a sequence, $B(X_\ze,[\g_\ze,\g_\ze])=0$. That deduces $B(X_\ze,\im(C_0)) =0$. Therefore $[[\g_\ze,\g_\ze],[\g_\ze,\g_\ze]] = [\im(C_0),\im(C_0)]\subset \CC X_\ze\subset \Zs(\g)$ and we conclude that $\g_\ze$ is solvable.

By $B$ non-degenerate there is an element $Y_\ze\in\g_\ze$ isotropic such that $B(X_\ze,Y_\ze) = 1$. Moreover, combined with $\g_\ze$ a solvable singular quadratic Lie algebra, we can choose $Y_\ze$ satisfying $C_0(Y_\ze)=0$ and we obtain then a straightforward consequence as follows:
\begin{cor}\label{3.5.2}\hfill
\begin{enumerate}
	\item $C = \ad(Y_\ze)$, $\ker(C) = \Zs(\g)\oplus \CC Y_\ze$ and $[\g,\g] = \im(C)\oplus\CC X_\ze$.
	\item The Lie superalgebra $\g$ is solvable. Moreover, $\g$ is nilpotent if and only if C is nilpotent.
\end{enumerate}
\end{cor}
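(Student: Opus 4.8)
The plan is to read off all the assertions from the bracket formula of Proposition~\ref{3.5.1},
\[ [X,Y] = B(X_\ze,X)\,C(Y) - B(X_\ze,Y)\,C(X) + B(C(X),Y)\,X_\ze, \qquad X,Y\in\g, \]
together with the three facts already in hand: $X_\ze\in\Zs(\g)$, $B(X_\ze,X_\ze)=0$, and $Y_\ze$ isotropic with $B(X_\ze,Y_\ze)=1$ and $C_0(Y_\ze)=0$ (hence $C(Y_\ze)=0$ since $Y_\ze\in\g_\ze$). Throughout I will use that $B$ is invariant and non-degenerate.

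For the identity $C=\ad(Y_\ze)$ of part (1), I substitute $X=Y_\ze$ into the formula: the middle term dies because $C(Y_\ze)=0$, the last because $B(C(Y_\ze),Y)=0$, and the first is $B(X_\ze,Y_\ze)\,C(Y)=C(Y)$. For $\ker(C)=\Zs(\g)\oplus\CC Y_\ze$, the inclusion $\supseteq$ is immediate ($\Zs(\g)\subseteq\ker(\ad Y_\ze)$ and $C(Y_\ze)=0$), the sum is direct because $Y_\ze\notin\Zs(\g)$ — otherwise $C=\ad(Y_\ze)=0$ and the bracket formula would make $\g$ Abelian, which is excluded — and for $\subseteq$, if $C(X)=0$ the formula collapses to $[X,Z]=B(X_\ze,X)\,C(Z)=[\,B(X_\ze,X)Y_\ze,\,Z\,]$ for every $Z$, so $X-B(X_\ze,X)Y_\ze\in\Zs(\g)$. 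For $[\g,\g]=\im(C)\oplus\CC X_\ze$: the bracket formula gives $[\g,\g]\subseteq\im(C)+\CC X_\ze$; conversely $\im(C)=[Y_\ze,\g]\subseteq[\g,\g]$, and $X_\ze\in[\g_\un,\g_\un]\subseteq[\g,\g]$ by Proposition~\ref{3.5.1}(3) together with $[\g_\un,\g_\un]\neq\{0\}$. Directness follows since $\im(C)=\im(C_0)\oplus\im(C_1)$ with $\im(C_0)\subseteq\g_\ze$ and $\im(C_1)\subseteq\g_\un$, so $X_\ze\in\im(C)$ would force $X_\ze=C_0(W)$ for some $W\in\g_\ze$; but then $B(X_\ze,Y_\ze)=B(C_0(W),Y_\ze)=-B(W,C_0(Y_\ze))=0$ by skew-symmetry of $C_0$, contradicting $B(X_\ze,Y_\ze)=1$.

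For part (2), solvability: since $B$ is invariant and $X_\ze\in\Zs(\g)$, one has $B(X_\ze,C(Y))=B([X_\ze,Y_\ze],Y)=0$, and then the bracket formula gives $[C(Y),C(Z)]=B(C^2(Y),C(Z))\,X_\ze\in\CC X_\ze$; combined with $[\g,\g]\subseteq\im(C)+\CC X_\ze$ and $[\g,X_\ze]=0$ this yields $[[\g,\g],[\g,\g]]\subseteq\CC X_\ze$, so the derived series terminates and $\g$ is solvable. For the nilpotency equivalence: if $\g$ is nilpotent then $C=\ad(Y_\ze)$ is nilpotent, since every adjoint operator on a nilpotent Lie superalgebra is nilpotent. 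Conversely, the same computation shows by induction that the lower central series satisfies $\Cc^{k+1}(\g)\subseteq\im(C^k)+\CC X_\ze$ for $k\ge 1$ (using $B(X_\ze,\im(C))=0$ and $[\g,X_\ze]=0$ at each step), so if $C^N=0$ then $\Cc^{N+1}(\g)\subseteq\CC X_\ze$ and $\Cc^{N+2}(\g)=[\g,\Cc^{N+1}(\g)]=0$, i.e.\ $\g$ is nilpotent.

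There is no serious obstacle here — the corollary is, as the authors say, a bookkeeping consequence of Proposition~\ref{3.5.1}. The one place that needs a moment's care is the directness of $[\g,\g]=\im(C)\oplus\CC X_\ze$: because $\im(C)$ now carries a (possibly nonzero) even component $\im(C_0)$, one cannot argue by parity as in the $2$-dimensional-even-part case, and must instead play the skew-symmetry of $C_0$ against the normalizations $C_0(Y_\ze)=0$ and $B(X_\ze,Y_\ze)=1$.
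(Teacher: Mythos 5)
Your proof is correct and follows exactly the route the paper intends: the paper offers no proof of this corollary (it is presented as a ``straightforward consequence'' of Proposition \ref{3.5.1} and the choice of $Y_\ze$ with $C_0(Y_\ze)=0$), and your argument is the natural expansion, reading everything off the bracket formula together with $B(X_\ze,\im(C))=0$ and the skew-symmetry of $C_0$. The details you supply — in particular the directness of $[\g,\g]=\im(C)\oplus\CC X_\ze$ via $B(C_0(W),Y_\ze)=-B(W,C_0(Y_\ze))=0$, and the induction $\Cc^{k+1}(\g)\subseteq\im(C^k)+\CC X_\ze$ for the nilpotency equivalence — are all sound.
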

\subsection{Singular quadratic Lie superalgebras of type $\Sb_1$ and double extensions}\hfill

The description of the Lie super-bracket in Proposition \ref{3.5.1}
allows us to propose a definition of double extension of a quadratic
$\ZZ_2$-graded vector space as follows:

\begin{defn}\label{3.5.3}
Let $(\qk = \qk_\ze\oplus\qk_\un,B_\qk)$ be a quadratic $\ZZ_2$-graded
vector space and $\cb$ be an even endomorphism of $\qk$. Assume that
$\cb$ is skew-supersymmetric, that is, $B(\cb(X),Y) = -B(X,\cb(Y))$,
for all $ X,Y\in\qk$. Let $(\tk = \spa\{X_\ze,Y_\ze\},B_\tk)$ be a
2-dimensional quadratic vector space with the symmetric bilinear form
$B_\tk$ defined by:
\[B_\tk(X_\ze,X_\ze) = B_\tk(Y_\ze,Y_\ze) = 0 \ \text{and } B_\tk(X_\ze,Y_\ze) = 1.\]

Consider the vector space $\g = \tk\oplusp \qk$ equipped with the
bilinear form $B = B_\tk+B_\qk$ and define on $\g$ the following
bracket:
\[ [\lambda X_{\ze} + \mu Y_{\ze} + X, \lambda' X_{\ze} + \mu' Y_{\ze} + Y] =
  \mu \cb (Y) - \mu' \cb (X) + B( \cb (X), Y) X_\ze, \] for all $X, Y
  \in \qk, \lambda, \mu, \lambda', \mu' \in \CC$. Then $(\g, B)$ is a
  quadratic solvable Lie superalgebra with $\g_{\ze} =
  \tk\oplus\qk_\ze$ and $\g_{\un} =\qk_\un$. We say that $\g$ is the
              {\em double extension of $\qk$ by $\cb$}.
\end{defn}

Note that an even skew-supersymmetric endomorphism $\cb$ on $\qk$ can be written by $\cb= \cb_0 + \cb_1$ where $\cb_0\in\ok(\qk_\ze)$ and $\cb_1\in\spk(\qk_\un)$.
\begin{cor}\label{3.5.4}
Let $\g$ be the double extension of $\qk$ by $\cb$. Denote by $C = \ad(Y_\ze)$ then one has
\begin{enumerate}
	\item $[X,Y] = B(X_\ze,X)C(Y) - B(X_\ze,Y)C(X) + B(C(X),Y)X_\ze$, for all $X,Y\in\g.$
	\item $\g$ is a singular quadratic Lie superalgebra. If
       $\cb|_{\qk_\un}$ is nonzero then $\g$ is of type $\Sb_1$.
\end{enumerate}
\end{cor}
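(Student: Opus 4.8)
The plan is to verify each assertion by direct unwinding of Definition \ref{3.5.3}, exactly as was done for the purely even symplectic situation in Lemma \ref{3.4.7} and Proposition \ref{3.4.8}(2). First I would prove (1). Writing a general element of $\g$ as $\lambda X_\ze + \mu Y_\ze + X$ with $X\in\qk$, the definition gives $C(Y) = \ad(Y_\ze)(Y) = [Y_\ze, Y] = \cb(Y)$ for $Y\in\qk$, while $C(X_\ze) = 0$ and $C(Y_\ze) = 0$ (read off from the bracket formula with $\mu = \mu' = 0$ and with both arguments equal to $Y_\ze$, respectively). Hence $C = \ad(Y_\ze)$ is the even endomorphism that kills $\tk$ and restricts to $\cb$ on $\qk$. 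Substituting $B(X_\ze,\cdot)$ for the coefficient $\mu$ and using $B = B_\tk + B_\qk$ together with $B(X_\ze,\qk) = 0$, $B(X_\ze,Y_\ze)=1$, one checks that the bracket in Definition \ref{3.5.3} is precisely $[X,Y] = B(X_\ze,X)C(Y) - B(X_\ze,Y)C(X) + B(C(X),Y)X_\ze$ for all homogeneous (hence all) $X,Y\in\g$; this is the same bookkeeping as in Lemma \ref{3.4.7}(1).

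For (2) I would first exhibit the invariant trilinear form. Set $\alpha = \phi(X_\ze)\in\g_\ze^*$, so $\alpha(X) = B(X_\ze,X)$. From the bracket formula in (1) and the invariance of $B$ (already asserted in Definition \ref{3.5.3}, where $(\g,B)$ is stated to be a quadratic Lie superalgebra), one computes $B([X,Y],Z) = B(X_\ze,X)B(C(Y),Z) - B(X_\ze,Y)B(C(X),Z) + B(C(X),Y)B(X_\ze,Z)$. Because $C$ is even and skew-supersymmetric, this is the value on $(X,Y,Z)$ of $\alpha\wedge\Omega_0 + \alpha\otimes\Omega_1$, where $\Omega_0\in\Alt^2(\qk_\ze)$ is defined by $\Omega_0(X,Y) = B(\cb_0(X),Y)$ and $\Omega_1\in\sym^2(\qk_\un)$ by $\Omega_1(X,Y) = B(\cb_1(X),Y)$ (symmetry of $\Omega_1$ follows from $\cb_1\in\spk(\qk_\un)$, skew-symmetry of $\Omega_0$ from $\cb_0\in\ok(\qk_\ze)$). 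Thus $I = \alpha\wedge\Omega_0 + \alpha\otimes\Omega_1$ lies in $\Es^{(3,\zero)}(\g)$, and since $\alpha\wedge I = \alpha\wedge\alpha\wedge\Omega_0 + \alpha\wedge\alpha\otimes\Omega_1 = 0$, we get $\dup(\g)\ge 1$, i.e. $\g$ is singular.

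It remains to see that when $\cb_1 = \cb|_{\qk_\un}$ is nonzero the $\dup$-number is exactly $1$, not $3$. By Lemma \ref{3.2.1}(2), $\dup(\g) = 3$ forces $I_1 = 0$; but here $I_1 = \alpha\otimes\Omega_1$ and $\Omega_1\neq 0$ since $\cb_1\neq 0$ and $B|_{\qk_\un\times\qk_\un}$ is non-degenerate, a contradiction. Hence $\dup(\g) = 1$ and $\g$ is of type $\Sb_1$. The only point that needs a word of care — and the one I expect to be the mildest obstacle — is checking the sign/degree factors in the super-exterior product so that $\alpha\wedge\Omega_0 + \alpha\otimes\Omega_1$ evaluated on a triple $(X,Y,Z)$ really reproduces the antisymmetrized expression for $B([X,Y],Z)$; this is routine given the conventions of Section 1 (in particular the identification $\Omega = \Omega\otimes 1$, $F = 1\otimes F$ and Lemma \ref{3.1.10}), and is entirely parallel to the computation already carried out in Propositions \ref{3.4.4} and \ref{3.5.1}.
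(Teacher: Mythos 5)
Your proposal is correct and follows essentially the same route as the paper: the paper likewise dismisses (1) as immediate from the definition, sets $\alpha=\phi(X_\ze)$, rewrites $I$ as $\alpha\wedge\Omega_0+\alpha\otimes\Omega_1$ to get $\dup(\g)\ge 1$, and rules out $\dup(\g)=3$ because $\cb|_{\qk_\un}\neq 0$ forces $\Omega_1\neq 0$ (equivalently $[\g_\un,\g_\un]\neq\{0\}$), invoking Lemma \ref{3.2.1} exactly as you do. Your write-up just supplies a bit more of the bookkeeping that the paper leaves implicit.
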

\begin{proof}
  The assertion (1) is direct from the above definition. Let $\alpha =
  \phi(X_\ze)$ and define the bilinear form $\Omega:\g\rightarrow\g$
  by $\Omega(X,Y) = B(C(X),Y)$ for all $X,Y\in\g$. By $B$ even and
  supersymmetric, $C$ even and skew-supersymmetric (with respect to
  $B$) then $\Omega = \Omega_0 + \Omega_1\in\Alt^2(\gO) \oplus
  \sym^2(\gI)$. The formula in (1) can be replaced by $I =
  \alpha\wedge\Omega_0 + \alpha\otimes\Omega_1 =
  \alpha\wedge\Omega$. Therefore, $\dup(\g)\geq 1$ and $\g$ is
  singular. If $\cb|_{\qk_\un}$ is nonzero then $\Omega_1\neq 0$. In
  this case, $[\g_\un,\g_\un]\neq \{0\}$ and thus $\dup(\g)=1$.
\end{proof}

As a consequence of Proposition \ref{3.5.1} and Definition \ref{3.5.3}, one has
\begin{lem}\label{3.5.5}
Let $(\g,B)$ be a singular quadratic Lie superalgebra of type $\Sb_1$. Keep the notations as in Proposition \ref{3.5.1} and Corollary \ref{3.5.2}. Then $(\g,B)$ is the double extension of $\qk=(\CC X_\ze\oplus\CC Y_\ze)^\bot$ by $\cb = C|_\qk$.
\end{lem}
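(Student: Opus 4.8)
The plan is to verify that the pair $(\qk,\cb)$ fulfils the hypotheses of Definition~\ref{3.5.3} and that the bracket built there reproduces the bracket of $\g$ via the formula of Proposition~\ref{3.5.1}. Set $\tk=\spa\{X_\ze,Y_\ze\}$. Since $X_\ze,Y_\ze$ are isotropic with $B(X_\ze,Y_\ze)=1$, the subspace $\tk$ is non-degenerate with $B|_{\tk\times\tk}=B_\tk$ as in Definition~\ref{3.5.3}; hence $\qk:=\tk^\bot$ is non-degenerate, $\g=\tk\oplusp\qk$, and $B=B_\tk+B_\qk$ with $B_\qk=B|_{\qk\times\qk}$. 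Because $B$ is even and $X_\ze,Y_\ze\in\g_\ze$, we have $\g_\un\subset\tk^\bot$, so $\qk_\un=\g_\un$ and $\qk_\ze=\g_\ze\cap\tk^\bot$; thus $\qk$ is a quadratic $\ZZ_2$-graded vector space with $\g_\ze=\tk\oplus\qk_\ze$ and $\g_\un=\qk_\un$, exactly as required in Definition~\ref{3.5.3}.

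Next I would check that $C$ annihilates $\tk$ and stabilises $\qk$. By Corollary~\ref{3.5.2}, $\ker(C)=\Zs(\g)\oplus\CC Y_\ze$, and $X_\ze\in\Zs(\g)$, so $\tk\subset\ker(C)$, i.e. $C|_\tk=0$. As $C$ is even, $C(\qk_\un)\subset\g_\un=\qk_\un$. Moreover, since $\Omega_0\in\Alt^2(\gO)$ and $\Omega_1\in\sym^2(\gI)$, the symmetry of $B$ on $\gO$ and the skew-symmetry of $B$ on $\gI$ give $C_0\in\ok(\gO)$ and $C_1\in\spk(\gI)$, and an immediate case check then yields $B(C(X),Y)=-B(X,C(Y))$ for all $X,Y\in\g$. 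Hence for $X\in\qk_\ze$ we get $B(C(X),X_\ze)=-B(X,C(X_\ze))=0$ and $B(C(X),Y_\ze)=-B(X,C(Y_\ze))=0$, so $C(X)\in\tk^\bot\cap\g_\ze=\qk_\ze$. Therefore $\cb:=C|_\qk$ is an even endomorphism of $\qk$, skew-supersymmetric with respect to $B_\qk$, and $\cb=\cb_0+\cb_1$ with $\cb_0=C_0|_{\qk_\ze}\in\ok(\qk_\ze)$, $\cb_1=C_1\in\spk(\qk_\un)$; so the double extension of $\qk$ by $\cb$ is defined.

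It remains to compare brackets. Write $X=\lambda X_\ze+\mu Y_\ze+X'$ and $Y=\lambda'X_\ze+\mu'Y_\ze+Y'$ with $X',Y'\in\qk$ and $\lambda,\mu,\lambda',\mu'\in\CC$. Since $B(X_\ze,\qk)=0$, $B(X_\ze,X_\ze)=0$ and $B(X_\ze,Y_\ze)=1$, we have $B(X_\ze,X)=\mu$ and $B(X_\ze,Y)=\mu'$; and since $C|_\tk=0$, $C(X)=\cb(X')$, $C(Y)=\cb(Y')$, so $B(C(X),Y)=B(\cb(X'),Y')$ because $\cb(X')\in\qk\perp\tk$. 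Substituting these into the formula of Proposition~\ref{3.5.1} gives
\[ [X,Y]=\mu\,\cb(Y')-\mu'\,\cb(X')+B(\cb(X'),Y')\,X_\ze, \]
which is precisely the bracket of the double extension of $\qk$ by $\cb$ in Definition~\ref{3.5.3}; hence $(\g,B)$ is that double extension. The argument is essentially bookkeeping, and the only step demanding a little care is confirming that $C$ maps $\qk$ into $\qk$ --- i.e. that $\cb$ is genuinely an endomorphism of $\qk$ --- which rests on $C$ killing $\tk$ together with its skew-supersymmetry; everything else is a direct substitution into Proposition~\ref{3.5.1}.
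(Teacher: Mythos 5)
Your proof is correct and follows exactly the route the paper intends: Lemma \ref{3.5.5} is stated there without proof, as an immediate consequence of Proposition \ref{3.5.1}, Corollary \ref{3.5.2} and Definition \ref{3.5.3}, and your argument simply writes out that verification (non-degeneracy of $\tk$ and hence of $\qk=\tk^\bot$, $C$ vanishing on $\tk$ and preserving $\qk$ by skew-supersymmetry, and the substitution of $B(X_\ze,X)=\mu$, $C(X)=\cb(X')$ into the bracket formula). All the details you supply check out, including the case analysis showing $\cb=\cb_0+\cb_1\in\ok(\qk_\ze)\oplus\spk(\qk_\un)$.
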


\begin{rem}
The above definition is a generalization of the definition of double extension of a quadratic vector space by a skew-symmetric map in \cite{DPU10} and Definition \ref{3.4.6}. Moreover, if let $\g = (\CC X_\ze\oplus \CC Y_\ze)\oplusp (\qk_\ze\oplus\qk_\un)$ be the double extension of $\qk=\qk_\ze\oplus\qk_\un$ by $\cb= \cb_0 + \cb_1$ then $\g_\ze$ is the double extension of $\qk_\ze$ by $\cb_0$ and the subalgebra $(\CC X_\ze\oplus \CC Y_\ze)\oplusp \qk_\un$ is the double extension of $\qk_\un$ by $\cb_1$.
\end{rem}

The proof of the proposition below is completely analogous to the
proof of Proposition \ref{3.4.10}, so we omit it.

\begin{prop}\label{3.5.7}
  Let $\g = (\CC X_\ze\oplus \CC Y_\ze)\oplusp (\qk_\ze\oplus\qk_\un)$
  and $\g' = (\CC X_\ze'\oplus \CC Y_\ze')\oplusp
  (\qk_\ze\oplus\qk_\un)$ be two double extensions of $\qk =
  \qk_\ze\oplus\qk_\un$ by $\cb =\cb_0+\cb_1$ and $\overline{C'}
  =\overline{C_0'}+\overline{C_1'}$, respectively. Assume that $\cb_1$
  is nonzero. Then

\begin{enumerate}
\item there exists a Lie superalgebra isomorphism between $\g$ and
  $\g'$ if and only if there exist invertible maps $P\in\Lc(\qk_\ze)$,
  $Q\in\Lc(\qk_\un)$ and a nonzero $\lambda\in\CC$ such that
	
\begin{itemize}
	\item[(i)] $\overline{C_0'}=\lambda P\cb_0 P^{-1}$ and $P^*P\cb_0 = \cb_0$.
	\item[(ii)] $\overline{C_1'}=\lambda Q\cb_1 Q^{-1}$ and $Q^*Q\cb_1 = \cb_1$.
\end{itemize}
where $P^*$ and $Q^*$ are the adjoint maps of $P$ and $Q$ with respect
to $B|_{\qk_\ze\times\qk_\ze}$ and $B|_{\qk_\un\times\qk_\un}$.
\item there exists an i-isomorphism between $\g$ and $\g'$ if and only
  if there is a nonzero $\lambda\in\CC$ such that $\overline{C_0'}$ is
  in the $\OO(\qk_\ze)$-adjoint orbit through $\lambda \cb_0$ and
  $\overline{C_1'}$ is in the $\Sp(\qk_\un)$-adjoint orbit through
  $\lambda \cb_1$.
\end{enumerate}
\end{prop}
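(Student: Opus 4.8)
The plan is to follow verbatim the proof of Proposition~\ref{3.4.10}, exploiting the splitting of the even part. As noted in the Remark following Lemma~\ref{3.5.5}, the even part $\g_\ze=(\CC X_\ze\oplus\CC Y_\ze)\oplusp\qk_\ze$ is the double extension of the \emph{quadratic} vector space $(\qk_\ze,B|_{\qk_\ze})$ by the skew-symmetric map $\cb_0$---the setting of \cite{DPU10}---while $(\CC X_\ze\oplus\CC Y_\ze)\oplusp\qk_\un$ is the double extension of the \emph{symplectic} vector space $(\qk_\un,B|_{\qk_\un})$ by $\cb_1$---the setting of Proposition~\ref{3.4.10}---and the two blocks are glued along the common central line $\CC X_\ze$ and the common element $Y_\ze$. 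Throughout I would use the bracket formula of Corollary~\ref{3.5.4}(1) and its consequences: $X_\ze\in\Zs(\g)$, $[\qk_\ze,\qk_\un]=\{0\}$, $[Y_\ze,Z]=\cb_1(Z)$ and $[Z,T]=B(\cb_1(Z),T)X_\ze$ for $Z,T\in\qk_\un$, and $[Y_\ze,X]=\cb_0(X)$, $[X,X']=B(\cb_0(X),X')X_\ze$ for $X,X'\in\qk_\ze$.

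For the necessity in (1), let $A:\g\to\g'$ be a Lie superalgebra isomorphism; being even, it satisfies $A(\g_\ze)=\g'_\ze$ and $A(\qk_\un)=A(\g_\un)=\g'_\un=\qk_\un$. The hypothesis $\cb_1\neq 0$ enters exactly once: it gives $[\g_\un,\g_\un]=\CC X_\ze$, so $A(\CC X_\ze)=[\g'_\un,\g'_\un]=\CC X_\ze'$, whence $\overline{C_1'}\neq 0$ and $A(X_\ze)=\mu X'_\ze$ with $\mu\neq 0$. Put $Q=A|_{\qk_\un}$ and write $A(Y_\ze)=\beta Y'_\ze+\gamma X'_\ze+W$ with $W\in\qk_\ze$. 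Comparing $A([Z,T])=\mu B(\cb_1(Z),T)X'_\ze$ with $[Q(Z),Q(T)]'=B(\overline{C_1'}Q(Z),Q(T))X'_\ze$ gives $Q^*\overline{C_1'}Q=\mu\,\cb_1$, where $Q^*$ is the $B|_{\qk_\un}$-adjoint of $Q$; comparing $A([Y_\ze,Z])=Q(\cb_1(Z))$ with $[A(Y_\ze),Q(Z)]'=\beta\,\overline{C_1'}(Q(Z))$---using that $X'_\ze$ is central and $[\qk_\ze,\qk_\un]=\{0\}$---gives $Q\,\cb_1\,Q^{-1}=\beta\,\overline{C_1'}$. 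Here $\beta\neq 0$, for otherwise $\ad_{\g'}(A(Y_\ze))$ would vanish on $\g'_\un$, contradicting $\ad_{\g'}(A(Y_\ze))=A\,\ad_\g(Y_\ze)\,A^{-1}$ and $\ad_\g(Y_\ze)|_{\g_\un}=\cb_1\neq 0$. Then $Q^*Q\,\cb_1=\mu\beta\,\cb_1$; replacing $Q$ by $(\mu\beta)^{-1/2}Q$ and putting $\lambda=1/\beta$ yields (ii). For the even block, $A|_{\g_\ze}$ is a Lie algebra isomorphism between the double extensions of $(\qk_\ze,B|_{\qk_\ze})$ by $\cb_0$ and by $\overline{C_0'}$, with $A(X_\ze)=\mu X'_\ze$ and the $Y'_\ze$-component of $A(Y_\ze)$ again equal to $\beta$; the computation of \cite{DPU10} (which, as there, allows $A$ to move $\qk_\ze$) then produces an invertible $P\in\Lc(\qk_\ze)$ with $\overline{C_0'}=\lambda P\,\cb_0\,P^{-1}$ and $P^*P\,\cb_0=\cb_0$, the scalar being forced to equal $1/\beta=\lambda$. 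When $\cb_0=0$, $\g_\ze$ is abelian, hence so is $\g'_\ze$, so $\overline{C_0'}=0$ and (i) holds with $P=\Id$.

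For the sufficiency in (1), given $P,Q,\lambda$ as in (i)--(ii) I would set $A(X_\ze)=\lambda X'_\ze$, $A(Y_\ze)=\lambda^{-1}Y'_\ze$, $A|_{\qk_\ze}=P$, $A|_{\qk_\un}=Q$ and check directly, case by case (even--even, even--odd, odd--odd) from the bracket formulas---using $P^*P\,\cb_0=\cb_0$ and $Q^*Q\,\cb_1=\cb_1$---that $A$ is a Lie superalgebra homomorphism; it is visibly bijective, hence an isomorphism. For (2), this $A$ is an isometry if and only if it preserves $B$ on each of $\CC X'_\ze\oplus\CC Y'_\ze$, $\qk_\ze$ and $\qk_\un$, that is if and only if $P\in\OO(\qk_\ze)$ and $Q\in\Sp(\qk_\un)$; in that case $P^*=P^{-1}$ and $Q^*=Q^{-1}$, so the conditions in (i)--(ii) become precisely ``$\overline{C_0'}$ lies in the $\OO(\qk_\ze)$-adjoint orbit through $\lambda\cb_0$ and $\overline{C_1'}$ lies in the $\Sp(\qk_\un)$-adjoint orbit through $\lambda\cb_1$''. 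Conversely, re-running the necessity computation for an i-isomorphism (so that $A|_{\qk_\un}$ is symplectic and the even restriction orthogonal) delivers these orbit conditions with the single scalar $\lambda=1/\beta$ governing both blocks.

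I expect the only real obstacle to be bookkeeping rather than a new idea: one must verify that the \emph{same} scalar $\lambda$ controls the even and odd blocks (which reduces to the $Y'_\ze$-component of $A(Y_\ze)$ being one number $\beta$), handle the degenerate case $\cb_0=0$ separately, and recall from \cite{DPU10} how an isomorphism that does not preserve $\qk_\ze$ is treated. Granting these points, the argument is a transcription of Proposition~\ref{3.4.10} and its Lie algebra analogue, with the quadratic-versus-symplectic distinction between $\qk_\ze$ and $\qk_\un$ as the one genuinely new feature.
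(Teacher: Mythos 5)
Your proposal is correct and follows exactly the route the paper intends: the authors omit the proof with the remark that it is ``completely analogous to the proof of Proposition \ref{3.4.10}'', and your argument is precisely that analogue, carried out blockwise on $\qk_\ze$ and $\qk_\un$ with the single scalar $\lambda=1/\beta$ and the common rescaling factor $(\mu\beta)^{-1/2}$ tying the two conditions together. The points you flag as bookkeeping (the vanishing of the $Y'_\ze$-component of $A(\qk_\ze)$ via $[\qk_\ze,\qk_\un]=\{0\}$ and $\overline{C_1'}\neq 0$, and the degenerate case $\cb_0=0$) do check out as you describe.
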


\begin{rem}\label{3.5.8}
If let $M = P + Q$ then $M^{-1} = P^{-1} + Q^{-1}$ and $M^* =
P^*+Q^*$. The formulas in Proposition \ref{3.5.7} (1) can be written:
\[\overline{C'} =\lambda M\cb  M^{-1}\ \text{and }M^*M\cb = \cb.
\]
Hence, the classification problem of singular quadratic Lie
superalgebras of type $\Sb_1$ (up to i-isomorphism) can be reduced to
the classification of $\OO(\qk_\ze)\times \Sp(\qk_\un)$- orbits of
$\ok(\qk_\ze)\oplus\spk(\qk_\un)$, where $\OO(\qk_\ze)\times
\Sp(\qk_\un)$ denotes the direct product of two groups $\OO(\qk_\ze)$
and $\Sp(\qk_\un)$.
\end{rem}

\begin{defn}
Let $\qk=\qk_\ze\oplus\qk_\un$ be a quadratic $\ZZ_2$-graded vector
space. An even isomorphism $F\in \Lc(\qk)$ is called an {\em isometry}
of $\qk$ if $F|_{\qk_\ze}$ and $F|_{\qk_\un}$ are isometries.
\end{defn}

To prove the following Corollary, it is enough to follow exactly the
same steps as in Corollary \ref{3.4.11}.
 
\begin{cor}\label{3.5.10}
 Let $(\g,B)$ and $(\g',B')$ be double extensions of $(\qk,
 \overline{B})$ and $(\qk', \overline{B'})$ by $\cb$ and
 $\overline{C'}$ respectively where $\overline{B} = B|_{\qk \times
   \qk}$ and $\overline{B'} = B'|_{\qk' \times \qk'}$. Write $\g =
 (\CC X_{\ze} \oplus \CC Y_{\ze}) \oplusp \qk$ and $\g' = (\CC
 X_{\ze}' \oplus \CC Y_{\ze}') \oplusp \qk'$. Then:

\begin{enumerate}

\item there exists an i-isomorphism between $\g$ and $\g'$ if and only
  if there exists an isometry $\overline{A} : \qk \to \qk'$ such that
  $\cpb = \lambda \ \overline{A} \ \cb \ \overline{A}{}^{-1}$, for
  some nonzero $\lambda \in \CC$.

\smallskip

\item there exists a Lie superalgebra isomorphism between $\g$ and
  $\g'$ if and only if there exist even invertible maps $\overline{Q}
  : \qk \to \qk'$ and $\overline{P} \in \Lc(\qk)$ such that

\begin{itemize}

\item[(i)] $\cpb = \lambda \ \overline{Q} \ \cb \ \overline{Q}^{-1}$
  for some nonzero $\lambda \in \CC$,

\item[(ii)] $\overline{P}^* \ \overline{P} \ \cb = \cb$ and

\item[(iii)] $\overline{Q} \ \overline{P}^{-1}$ is an isometry from
  $\qk$ onto $\qk'$.

\end{itemize}

\end{enumerate}

\end{cor}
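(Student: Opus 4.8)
The plan is to carry the proof of Corollary \ref{3.4.11} over verbatim to the $\ZZ_2$-graded setting: transport the Lie superalgebra structure of $\g'$ onto the underlying space of $\g$ by an even isometry fixing $X_\ze$ and $Y_\ze$, so that $\g$ and the transported algebra become double extensions of the \emph{same} quadratic $\ZZ_2$-graded vector space $\qk$, and then apply Proposition \ref{3.5.7}.

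First I would observe that any Lie superalgebra isomorphism $\g\to\g'$ is even and that $\g_\ze=\CC X_\ze\oplus\CC Y_\ze\oplus\qk_\ze$, $\g_\un=\qk_\un$; hence such an isomorphism — and likewise an even invertible map $\qk\to\qk'$ as on the right-hand sides — can only exist when $\dim(\qk_\ze)=\dim(\qk'_\ze)$ and $\dim(\qk_\un)=\dim(\qk'_\un)$, so we may assume these equalities (otherwise both sides of each equivalence fail). Over $\CC$, two non-degenerate symmetric bilinear spaces of equal dimension are isometric, and so are two symplectic spaces of equal dimension, so there is an even isometry $\overline F=\overline F_0\oplus\overline F_1\colon\qk'\to\qk$. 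Extend it to $F\colon\g'\to\g$ by $F(X'_\ze)=X_\ze$, $F(Y'_\ze)=Y_\ze$, $F|_{\qk'}=\overline F$; this $F$ is an even isometry. Transport the bracket, putting $[X,Y]''=F([F^{-1}(X),F^{-1}(Y)]')$ on $\g$ and writing $\g''$ for the resulting Lie superalgebra, so that $F$ is an i-isomorphism of $\g'$ onto $\g''$.

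Next I would verify, as in the graded analogue of Proposition \ref{3.4.8}(3) — which is immediate from Corollary \ref{3.5.4}(1) together with $[X'_\ze,\cdot\,]'=0$, $[Y'_\ze,X]'=\overline{C'}(X)$ and $[X,Y]'=B'(\overline{C'}(X),Y)X'_\ze$ for $X,Y\in\qk'$ — that $\g''=(\CC X_\ze\oplus\CC Y_\ze)\oplusp\qk$ is the double extension of $\qk$ by $\overline{C''}:=\overline F\,\overline{C'}\,\overline F^{-1}$, which is again even and skew-supersymmetric and satisfies $\overline{C''_i}=\overline F_i\,\overline{C'_i}\,\overline F_i^{-1}$. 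Since $F$ is an i-isomorphism, $\g$ and $\g'$ are isomorphic (resp.\ i-isomorphic) if and only if $\g$ and $\g''$ are, and $\g,\g''$ are both double extensions of the same $\qk$, by $\cb$ and $\overline{C''}$; thus Proposition \ref{3.5.7} and Remark \ref{3.5.8} apply directly. For (1): $\g$ and $\g''$ are i-isomorphic iff there is $\lambda\neq0$ with $\overline{C''_0}=g(\lambda\cb_0)g^{-1}$ and $\overline{C''_1}=h(\lambda\cb_1)h^{-1}$ for some $g\in\OO(\qk_\ze)$, $h\in\Sp(\qk_\un)$; setting $\overline A=\overline F^{-1}(g\oplus h)\colon\qk\to\qk'$ (a composite of isometries, hence an isometry) gives $\overline{C'}=\overline F^{-1}\overline{C''}\overline F=\lambda\,\overline A\,\cb\,\overline A^{-1}$, the asserted condition, and the converse is the routine check (as in Proposition \ref{3.5.7}) that $A(X_\ze)=\lambda X'_\ze$, $A(Y_\ze)=\lambda^{-1}Y'_\ze$, $A|_\qk=\overline A$ is an i-isomorphism. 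For (2) one argues identically, packaging the pair $(P,Q)$ from Proposition \ref{3.5.7} into $M=P+Q$ on $\qk$ (Remark \ref{3.5.8}) with $M^*M\cb=\cb$ and $\overline{C''}=\lambda M\cb M^{-1}$, then taking $\overline P=M$ and $\overline Q=\overline F^{-1}M\colon\qk\to\qk'$, so that $\overline{C'}=\lambda\,\overline Q\,\cb\,\overline Q^{-1}$, $\overline P^*\overline P\cb=\cb$, and $\overline Q\,\overline P^{-1}=\overline F^{-1}$ is an isometry $\qk\to\qk'$; the converse is again a direct construction.

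The one step with genuine content, rather than bookkeeping, is this reduction — checking that $\g''$ really is the double extension of $\qk$ by $\overline F\,\overline{C'}\,\overline F^{-1}$ (the graded counterpart of Proposition \ref{3.4.8}(3)) and matching the $\OO(\qk_\ze)\times\Sp(\qk_\un)$-orbit data correctly through the even isometry $\overline F$. Throughout I take $\cb_1\neq0$, as in type $\Sb_1$, so that Proposition \ref{3.5.7} is available; when $\cb_1=0$ the algebra degenerates to a double extension of a quadratic vector space and the claim reduces to Corollary \ref{3.4.11}.
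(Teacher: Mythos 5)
Your proof is correct and follows essentially the same route as the paper: transport the structure of $\g'$ onto $\g$ by an even isometry fixing $X_\ze$, $Y_\ze$ so that both algebras become double extensions of the same $\qk$, then invoke Proposition \ref{3.5.7} (repackaged via Remark \ref{3.5.8}), exactly as the paper does by mimicking Corollary \ref{3.4.11}. The only quibble is your closing aside: when $\cb_1=0$ the claim reduces to the quadratic Lie \emph{algebra} statement of \cite{DPU10}, not to Corollary \ref{3.4.11}, which concerns the 2-dimensional-even-part (symplectic) case.
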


\subsection{Fitting decomposition of a skew-supersymmetric map}\hfill

We recall the following useful result (see for instance \cite{DPU10}):

\begin{lem} \label{3.5.11}
 Let $\cb $ and $\overline{C'}$ be nilpotent elements in
 $\ok(n)$. Then $\cb$ is conjugate to $\lambda \overline{C'}$ modulo
 $\OO(n)$ for some nonzero $\lambda \in \CC$ if and only if $\cb$ is
 conjugate to $\overline{C'}$.
\end{lem}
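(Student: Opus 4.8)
The plan is to prove both directions by reducing the statement about $\OO(n)$-conjugacy to the known classification of nilpotent orbits in $\ok(n)$ recalled in the Appendix, namely that nilpotent $\OO(n)$-orbits are parametrized by partitions of $n$ in which even parts occur with even multiplicity. The nontrivial direction is: if $\cb$ is conjugate to $\lambda\overline{C'}$ modulo $\OO(n)$ for some nonzero $\lambda$, then $\cb$ is conjugate to $\overline{C'}$ (the converse is trivial, taking $\lambda=1$). So assume there is $g\in\OO(n)$ and $\lambda\in\CC^*$ with $\cb = \lambda\, g\,\overline{C'}\,g^{-1}$.

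First I would observe that the partition type of a nilpotent endomorphism is determined by the sizes of its Jordan blocks, i.e. by the numbers $\dim\ker(\cb^k)$, and that scaling a nilpotent map by a nonzero scalar does not change its Jordan block structure: $\lambda \overline{C'}$ and $\overline{C'}$ are conjugate \emph{in} $\GL(n)$. Hence $\cb$ and $\overline{C'}$ have the same Jordan type as ordinary nilpotent matrices. The key point is then that for nilpotent elements of $\ok(n)$ the $\OO(n)$-orbit is \emph{completely determined} by the $\GL(n)$-conjugacy class (equivalently, by the partition), which is precisely the content of the Appendix classification: two nilpotent elements of $\ok(n)$ lie in the same $\OO(n)$-orbit if and only if they have the same Jordan partition. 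Since $\cb$ and $\overline{C'}$ are both in $\ok(n)$ and share the same partition, they are $\OO(n)$-conjugate. This proves the forward direction, and the converse is immediate.

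The main obstacle, and the only place where care is genuinely needed, is the assertion that the $\OO(n)$-orbit of a nilpotent element of $\ok(n)$ depends only on its partition — a priori the Springer--Steinberg phenomenon says that for \emph{some} classical groups a single $\GL$-class can split into several orbits under the isometry group. For $\ok(n)$ this splitting happens exactly for the ``very even'' partitions, and it splits under $\SO(n)$ but the two pieces are exchanged by an element of $\OO(n)\setminus\SO(n)$; so at the level of $\OO(n)$ (rather than $\SO(n)$) there is no splitting and the partition is a complete invariant. I would therefore either cite the relevant statement from the Appendix / from \cite{DPU10} directly, or, if a self-contained argument is wanted, note that $\lambda\overline{C'} = D\,\overline{C'}\,D^{-1}$ for a suitable diagonalizable $D\in\GL(n)$ commuting appropriately with the form up to scalar, and combine $D$ with $g$ to land inside $\OO(n)$ after rescaling the symmetric form — but the cleanest route is simply to invoke the Appendix classification, exactly as the paper does for the analogous symplectic statement. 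This makes the lemma a one-line consequence: same scalar multiple $\Rightarrow$ same partition $\Rightarrow$ same $\OO(n)$-orbit.
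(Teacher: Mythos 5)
Your argument is correct: since $\operatorname{rank}\bigl((\lambda\cb)^k\bigr)=\operatorname{rank}(\cb^k)$ for $\lambda\neq 0$, a nonzero scalar multiple of a nilpotent element has the same Jordan partition, and Gerstenhaber's parametrization recalled in the Appendix (nilpotent $\OO(n)$-orbits in $\ok(n)$ correspond bijectively to partitions in $\Pc_1(n)$, with no splitting at the level of the full orthogonal group) then forces $\cb$ and $\overline{C'}$ to lie in the same $\OO(n)$-orbit. The paper itself gives no proof of this lemma, deferring to \cite{DPU10}, where the argument is exactly the one you give, so there is nothing further to compare.
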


Remark that the lemma remains valid if we replace $\ok(n)$ by $\spk(2n)$ and $\OO(n)$ by $\Sp(2n)$.
\begin{prop}\label{3.5.12}
Let $\g$ and $\g'$ be two nilpotent singular quadratic Lie superalgebras. Then $\g$ and $\g'$ are isomorphic if and only if they are i-isomorphic.
\end{prop}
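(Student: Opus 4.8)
The forward implication (i\nobreakdash-isomorphic $\Rightarrow$ isomorphic) is immediate, so the plan is to show that a Lie superalgebra isomorphism $A\colon\g\to\g'$ already forces an i\nobreakdash-isomorphism. Since $A$ is even it preserves the $\ZZ_2$\nobreakdash-grading, so $A(\g_\ze)=\g'_\ze$, $A(\g_\un)=\g'_\un$, $A([\g_\un,\g_\un])=[\g'_\un,\g'_\un]$, and in particular $\dim\g_\ze=\dim\g'_\ze$, $\dim\g_\un=\dim\g'_\un$; moreover $[\g_\un,\g_\un]=\{0\}$ if and only if $[\g'_\un,\g'_\un]=\{0\}$. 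I would split the argument according to whether $[\g_\un,\g_\un]$ vanishes.

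\emph{Case $[\g_\un,\g_\un]=\{0\}$.} As recorded in Sections~2 and~5, $[\g_\un,\g]=\{0\}$ here, so the non-degenerate odd part is a central ideal and $\g=\g_\ze\oplusp\g_\un$ with $\g_\un$ abelian and $\g_\ze$ a non-Abelian quadratic Lie algebra, which is singular by Proposition~\ref{3.2.7} and nilpotent because $\g$ is; the same holds for $\g'$. Then $A|_{\g_\ze}$ is a Lie algebra isomorphism $\g_\ze\to\g'_\ze$, so by the result of \cite{DPU10} that solvable (hence nilpotent) singular quadratic Lie algebras are i\nobreakdash-isomorphic whenever they are isomorphic, one gets $\g_\ze\iiso\g'_\ze$; and $\g_\un$, $\g'_\un$ are i\nobreakdash-isomorphic as symplectic vector spaces of equal dimension. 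Hence $\g\iiso\g'$.

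\emph{Case $[\g_\un,\g_\un]\neq\{0\}$.} Then $\g$ and $\g'$ are singular of type $\Sb_1$, so by Lemma~\ref{3.5.5} I would write $\g$ as the double extension of a quadratic $\ZZ_2$\nobreakdash-graded vector space $\qk=\qk_\ze\oplus\qk_\un$ by a skew-supersymmetric $\cb=\cb_0+\cb_1$ with $\cb_0\in\ok(\qk_\ze)$, $\cb_1\in\spk(\qk_\un)$ and $\cb_1\neq 0$, and $\g'$ as the double extension of $\qk'$ by $\cpb=\cpb_0+\cpb_1$ with $\cpb_1\neq 0$. Nilpotency of $\g$ makes $C=\ad(Y_\ze)$ nilpotent (Corollary~\ref{3.5.2}), hence $\cb=C|_\qk$ and both $\cb_0,\cb_1$ are nilpotent, and likewise $\cpb_0,\cpb_1$. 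Since $\dim\qk_\ze=\dim\qk'_\ze$ and $\dim\qk_\un=\dim\qk'_\un$, I would identify $\qk$ with $\qk'$ via a fixed even isometry and apply Corollary~\ref{3.5.10}(2) to $A$: there exist even invertible $\overline{Q},\overline{P}$ and $\lambda\in\CC$, $\lambda\neq 0$, with $\cpb=\lambda\,\overline{Q}\,\cb\,\overline{Q}^{-1}$. Restricting to the even and odd parts yields $\cpb_0=\lambda\,\overline{Q}_0\,\cb_0\,\overline{Q}_0^{-1}$ and $\cpb_1=\lambda\,\overline{Q}_1\,\cb_1\,\overline{Q}_1^{-1}$, so $\cpb_0$ is $\GL$\nobreakdash-conjugate to $\lambda\cb_0$ in $\ok(\qk_\ze)$ and $\cpb_1$ is $\GL$\nobreakdash-conjugate to $\lambda\cb_1$ in $\spk(\qk_\un)$.

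The decisive step is then to upgrade these $\GL$\nobreakdash-conjugacies to $\OO(\qk_\ze)$\nobreakdash- and $\Sp(\qk_\un)$\nobreakdash-conjugacies, and this is exactly where nilpotency is used: by the classification of nilpotent orbits recalled in the Appendix, two nilpotent elements of $\ok(m)$ (resp. $\spk(2n)$) are $\OO(m)$\nobreakdash- (resp. $\Sp(2n)$\nobreakdash-) conjugate if and only if they have the same Jordan type, and the Jordan type is invariant under $\GL$\nobreakdash-conjugation and under multiplication by a nonzero scalar; combined with Lemma~\ref{3.5.11} and its $\spk$\nobreakdash-analogue this gives $\cpb_0\in\OO(\qk_\ze)\cdot\cb_0$ and $\cpb_1\in\Sp(\qk_\un)\cdot\cb_1$. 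Assembling the two conjugating isometries into a single even isometry $\overline{A}\colon\qk\to\qk'$ with $\cpb=\overline{A}\,\cb\,\overline{A}^{-1}$ and feeding this into Corollary~\ref{3.5.10}(1) yields $\g\iiso\g'$. I expect this last passage, from the bare $\GL$\nobreakdash-conjugacy produced by an abstract isomorphism to the $\OO/\Sp$\nobreakdash-conjugacy needed for an isometry, to be the main obstacle: for non-nilpotent skew(-super)symmetric maps distinct isometry orbits can become conjugate under $\GL$, which is precisely why the statement is restricted to the nilpotent case; everything else is bookkeeping (matching $\qk$ with $\qk'$, and in the first case separating the abelian and Lie-algebra summands).
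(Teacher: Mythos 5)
Your proof is correct and follows essentially the same route as the paper: realize both algebras as double extensions by nilpotent skew-supersymmetric maps (Corollary~\ref{3.5.2}, Lemma~\ref{3.5.5}), transport everything to a common $\qk$ by an even isometry, extract from the abstract isomorphism the conjugacy $\overline{C''_i}=\lambda\,\overline{P}_i\,\cb_i\,\overline{P}_i^{-1}$ via Proposition~\ref{3.5.7}/Corollary~\ref{3.5.10}, and upgrade to $\OO(\qk_\ze)\times\Sp(\qk_\un)$-conjugacy using the Gerstenhaber classification of nilpotent orbits together with Lemma~\ref{3.5.11}. The only deviation is that you handle the case $[\g_\un,\g_\un]=\{0\}$ separately by reducing to the Lie algebra result of \cite{DPU10}, which the paper instead dismisses at the outset of Section~5; this is a harmless and in fact slightly more careful variant.
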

\begin{proof}
Singular quadratic Lie superalgebras $\g$ and $\g'$ are regarded as
double extensions $\g=(\CC X_\ze \oplus\CC Y_\ze)\oplusp\qk$ and $(\CC
X'_\ze \oplus\CC Y'_\ze)\oplusp\qk'$ by $\cb$ and $\overline{C'}$
where $\qk=\qk_\ze\oplus\qk_\un$ and $\qk'=\qk'_\ze\oplus\qk'_\un$. By
Corollary \ref{3.5.2}, $\cb$ and $\overline{C'}$ are
nilpotent. Rewrite $\cb=\cb_0+\cb_1$ and
$\overline{C'}=\overline{C_0'}+\overline{C_1'}$, where
$\cb_0\in\ok(\qk_\ze)$, $\overline{C_0'}\in\ok(\qk'_\ze)$,
$\cb_1\in\spk(\qk)$ and $\overline{C_1'}\in\spk(\qk')$.

If $\g$ and $\g'$ are isomorphic then $\dim(\qk_\ze)=\dim(\qk'_\ze)$
and $\dim(\qk_\un)=\dim(\qk'_\un)$. Thus, there exist isometries
$\overline{F}_0:\qk'_\ze\rightarrow \qk_\ze$ and
$\overline{F}_1:\qk'_\un\rightarrow \qk_\un$ and then we define an
isometry $\overline{F}:\qk'\rightarrow \qk$ by
$\overline{F}(X'+Y')=\overline{F}_0(X')+\overline{F}'_0(Y')$ for all
$X'\in\qk'_\ze$ and $Y'\in\qk'_\un$. We now set $F:\g'\rightarrow\g$
by $F(X'_\ze)=X_\ze$, $F(Y'_\ze)=Y_\ze$, $F|_{\qk'}=\overline{F}$ and
a new Lie super-bracket on $\g$ by:
\[[X,Y]'' = F \left( [F^{-1}(X), F^{-1}(Y)]'
  \right), \ \forall X, Y \in \g.\]

  Denote by $\g''$ this new quadratic Lie superalgebras. It is easy to
  see that $\g''=(\CC X_\ze \oplus\CC Y_\ze)\oplusp\qk$ is the double
  extension of $\qk$ by $\overline{C''}=\overline{F}\ \overline{C'}\
  \overline{F}^{-1}$ and $\g''$ is i-isomorphic to $\g'$. It need to
  prove that $\g''$ is i-isomorphic to $\g$. Write
  $\overline{C''}=\overline{C''_0}+\overline{C''_1}\in\ok(\qk_\ze)\oplus\spk(\qk_\un)$. Since
  $\g$ and $\g''$ are isomorphic then there exist invertible maps
  $P:\qk_\ze\rightarrow\qk_\ze$ and $Q:\qk_\un\rightarrow\qk_\un$ such
  that $\overline{C''_0}=\lambda\overline{P}\ \cb_0\
  \overline{P}^{-1}$ and $\overline{C''_1}=\lambda\overline{Q}\ \cb_1\
  \overline{Q}^{-1}$ for some nonzero $\lambda\in\CC$. By Lemma
  \ref{3.5.11}, $\cb_0$ and $\overline{C''_0}$ are conjugate under
  $\OO(\qk_\ze)$, $\cb_1$ and $\overline{C''_1}$ are conjugate under
  $\Sp(\qk_\un)$ and we can assume that $\lambda=1$. Therefore $\g$
  and $\g''$ are i-isomorphic. The proposition is proved.
\end{proof}

Let now $\g$ be a singular quadratic Lie superalgebra of type $\Sb_1$. Write $\g$ as a double extension of $(\qk=\qk_\ze\oplus\qk_\un,\overline{B})$ by $\cb= \cb_0 + \cb_1$ where $\cb=\ad(Y_\ze)|_\qk$, $\cb_0=\cb|_{\qk_\ze}$ and $\cb_1=\cb|_{\qk_\un}$. We consider the Fitting decomposition of $\cb_0$ on $\qk_\ze$ and $\cb_1$ on ${\qk_\un}$ by:
\[\qk_\ze=\qk_\ze^N\oplus\qk_\ze^I \ \text{ and } \ \qk_\un=\qk_\un^N\oplus \qk_\un^I\]
where $\qk_\ze^N$ and $\qk_\ze^I$ (resp. $\qk_\un^N$ and $\qk_\un^I$)
are $\cb_0$-stable (resp. $\cb_1$-stable),
$\cb_0^N=\cb_0|_{\qk_\ze^N}$ and $\cb_1^N=\cb_1|_{\qk_\un^N}$ are
nilpotent, $\cb_0^I=\cb_0|_{\qk_\ze^I}$ and
$\cb_1^I=\cb_1|_{\qk_\un^I}$ are invertible. Recall that $\cb$ is
skew-supersymmetric then $\qk_\ze^I=(\qk_\ze^N)^\bot$ in $\g_\ze$ and
$\qk_\un^I=(\qk_\un^N)^\bot$ in $\g_\un$.

Next, we set 
\[\qk_N=\qk_\ze^N\oplus\qk_\un^N \ \text{ and }\ \qk_I=\qk_\ze^I\oplus\qk_\un^I\]
As a consequence, $\cb_N=\cb|_{\qk_N}$ is nilpotent, $\cb_I=\cb|_{\qk_I}$ is invertible, $[\qk_N,\qk_I]=\{0\}$, the restrictions $\overline{B}_N=\overline{B}|_{\qk_N\times\qk_N}$ and $\overline{B}_I=\overline{B}|_{\qk_I\times\qk_I}$ are non-degenerate and supersymmetric. It is easy to check that $\cb_N = \cb_0^N+\cb_1^N$, $\cb_I = \cb_0^I+\cb_1^I$. Moreover, $\cb_N$, $\cb_I$ are skew-supersymmetric and they are Fitting components of $\cb$ in $\qk$. Let $\g_N = (\CC X_\ze \oplus \CC Y_\ze) \oplusp
\qk_N$ and $\g_I = (\CC X_\ze \oplus \CC Y_\ze) \oplusp \qk_I$. Then
$\g_N$ and $\g_I$ are subalgebras of $\g$, $\g_N$ is the double
extension of $\qk_N$ by $\cb_N$, $\g_I$ is the double extension of
$\qk_I$ by $\cb_I$ and $\g_N$ is a nilpotent singular quadratic Lie
superalgebra.
\begin{defn}
  The subalgebras $\g_N$ and $\g_I$ as above are respectively the {\em nilpotent} and {\em invertible Fitting components} of $\g$.
\end{defn}
\begin{defn}
  A double extension is called an {\em invertible quadratic Lie
    superalgebra} if the corresponding skew-supersymmetric map is invertible.
\end{defn}

It is easy to check that the dimension of an invertible quadratic Lie
superalgebra must be even. Moreover, following Corollary \ref{3.5.10},
two invertible quadratic Lie superalgebras are isomorphic if and only
if they are i-isomorphic. This property is still right for singular
quadratic Lie superalgebras of type $\Sb_1$.

\begin{prop}\label{3.5.15}
Let $\g$ and $\g'$ be singular quadratic Lie superalgebras of type $\Sb_1$ and
  $\g_N$, $\g_I$, $\g'_N$, $\g'_I$ be their Fitting components, respectively. Then

\begin{enumerate}

\item $\g \iiso \g'$ if and only if $\g_N \iiso \g'_N$ and $\g_I
  \iiso \g_I'$. The result remains valid if we replace $\iiso$ by
  $\simeq$.

\smallskip

\item $\g $ and $\g'$ are isomorphic if and only if they are i-isomorphic.

\end{enumerate}

\end{prop}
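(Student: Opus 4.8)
The plan is to realise $\g$ and $\g'$ as double extensions and to push the whole question onto the skew\nobreakdash-supersymmetric maps $\cb,\cpb$ and their Fitting components, using Corollary~\ref{3.5.10} as the dictionary between isomorphisms of double extensions and conjugacies of the defining maps. By Lemma~\ref{3.5.5} I write $\g=(\CC X_\ze\oplus\CC Y_\ze)\oplusp\qk$ and $\g'=(\CC X_\ze'\oplus\CC Y_\ze')\oplusp\qk'$ with $\cb=C|_\qk$, $\cpb=C'|_{\qk'}$, and I recall the orthogonal $\ZZ_2$-graded Fitting decompositions $\qk=\qk_N\oplusp\qk_I$, $\qk'=\qk'_N\oplusp\qk'_I$, so that $\cb=\cb_N\oplus\cb_I$, $\cpb=\cpb_N\oplus\cpb_I$ are block diagonal, $\g_N$ (resp.\ $\g_I$) is the double extension of $\qk_N$ by $\cb_N$ (resp.\ of $\qk_I$ by $\cb_I$), and similarly for $\g'$. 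The tool I use repeatedly is that the Fitting decomposition of a linear map is canonical and is unchanged if the map is multiplied by a nonzero scalar.

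For the two \emph{``only if''} implications I would begin from a conjugacy. If $\g\iiso\g'$, Corollary~\ref{3.5.10}(1) gives an isometry $\overline A\colon\qk\to\qk'$ and $\lambda\neq0$ with $\cpb=\lambda\,\overline A\,\cb\,\overline A^{-1}$; since $\overline A$ conjugates $\cb$ to $\tfrac1\lambda\cpb$, it carries the Fitting components of $\cb$ onto those of $\cpb$, and being even it respects the induced gradings, so its restrictions to $\qk_N$ and $\qk_I$ are isometries realising $\g_N\iiso\g'_N$ and $\g_I\iiso\g'_I$ (Corollary~\ref{3.5.10}(1) again). For $\g\simeq\g'$ I would use Corollary~\ref{3.5.10}(2): the map $\overline Q$ still preserves the Fitting decomposition by the scaling argument, and the auxiliary $\overline P$ may be replaced by $S^{1/2}$, a polynomial square root of $S:=\overline P^{*}\overline P$. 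Indeed $S\cb=\cb$ forces $S$ to be the identity on $\im\cb\supseteq\qk_I$, hence (being self-adjoint) to preserve $\qk_N=\qk_I^\bot$, so $S^{1/2}$ is block diagonal; and since $\overline P\,S^{-1/2}$ is an isometry of $\qk$, the triple $(\overline Q,S^{1/2},\lambda)$ still satisfies conditions (i)--(iii) of Corollary~\ref{3.5.10}(2). Restricting $\overline Q$ and $S^{1/2}$ to the two components (which both maps preserve) then yields $\g_N\simeq\g'_N$ and $\g_I\simeq\g'_I$.

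For the \emph{``if''} direction, which will be the crux, I would glue component isometries. Given $\g_N\iiso\g'_N$ and $\g_I\iiso\g'_I$, Corollary~\ref{3.5.10}(1) produces isometries $\overline A_N,\overline A_I$ and scalars $\lambda_N,\lambda_I$ with $\cpb_N=\lambda_N\overline A_N\cb_N\overline A_N^{-1}$ and $\cpb_I=\lambda_I\overline A_I\cb_I\overline A_I^{-1}$, and the obstacle is that $\lambda_N$ and $\lambda_I$ need not coincide. Here nilpotency saves the day: $\cb_N=\cb^N_0+\cb^N_1$ is nilpotent, so by Lemma~\ref{3.5.11} (and its symplectic counterpart stated just after it), applied with the roles of $\cb$ and $\overline{C'}$ suitably chosen, any two nonzero scalar multiples of $\cb^N_0$, resp.\ $\cb^N_1$, are conjugate under $\OO(\qk_\ze^N)$, resp.\ $\Sp(\qk_\un^N)$; this gives $g\in\OO(\qk_\ze^N)\times\Sp(\qk_\un^N)$ with $g\cb_Ng^{-1}=\tfrac{\lambda_I}{\lambda_N}\cb_N$, and $\overline A_N':=\overline A_N\circ g^{-1}$ is still an isometry with $\cpb_N=\lambda_I\overline A_N'\cb_N(\overline A_N')^{-1}$. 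Then $\overline A:=\overline A_N'\oplus\overline A_I$ is an isometry of $\qk$ onto $\qk'$ with $\cpb=\lambda_I\overline A\,\cb\,\overline A^{-1}$, so $\g\iiso\g'$ by Corollary~\ref{3.5.10}(1). Finally, assertion~(2) and the $\simeq$-version of ``if'' follow by upgrading: $\g_N\simeq\g'_N$ forces $\g_N\iiso\g'_N$ by Proposition~\ref{3.5.12} (both nilpotent), $\g_I\simeq\g'_I$ forces $\g_I\iiso\g'_I$ since invertible quadratic Lie superalgebras are isomorphic if and only if they are i-isomorphic (as noted above), and then the ``if'' direction for $\iiso$ gives $\g\iiso\g'$; combining this with the $\simeq$ ``only if'' of (1) proves $\g\simeq\g'\Rightarrow\g\iiso\g'$, the reverse implication being trivial. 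Degenerate components ($\qk_N=0$ or $\qk_I=0$) contribute only the $2$-dimensional abelian piece $\CC X_\ze\oplus\CC Y_\ze$ and are immediate. I expect the scalar matching via Lemma~\ref{3.5.11} and the block-diagonalisation of $\overline P$ to be the only points needing care; everything else is bookkeeping parallel to \cite{DPU10}.
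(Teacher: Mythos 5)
Your proof is correct and follows essentially the same route as the paper, which itself only sketches the argument by deferring to Proposition 6.4 of \cite{DPU10}: both rest on Corollary \ref{3.5.10} as the dictionary between (i-)isomorphisms and conjugacies, on Lemma \ref{3.5.11} to absorb the scalar mismatch on the nilpotent component, and on Proposition \ref{3.5.12} for the nilpotent Fitting components. The details you supply explicitly (the block-diagonal replacement of $\overline{P}$ by $(\overline{P}^*\overline{P})^{1/2}$ and the gluing of component isometries) are exactly the steps the paper leaves to the cited reference.
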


\begin{proof}
  The proposition is proved as Proposition 6.4 in \cite{DPU10}. It is
  sketched as follows. We assume that $\g\simeq\g'$. They are regarded
  as double extensions $\g=(\CC X_\ze \oplus\CC Y_\ze)\oplusp\qk$ and
  $(\CC X'_\ze \oplus\CC Y'_\ze)\oplusp\qk'$ by $\cb$ and
  $\overline{C'}$. By Corollary \ref{3.5.10}, there are an even
  invertible map $\overline{P} : \qk \to \qk'$ and a nonzero $\lambda
  \in \CC$ such that $\overline{C'} = \lambda \ \overline{P} \ \cb \
  \overline{P}^{-1}$, so $\qk_N' = \overline{P} (\qk_N)$ and $\qk_I' =
  \overline{P} (\qk_I)$, then $\dim(\qk'_N) = \dim(\qk_N)$ and
  $\dim(\qk'_I) = \dim(\qk_I)$. Thus, there exist isometries $F_N :
  \qk'_N \to \qk_N$ and $F_I : \qk'_I \to \qk_I$ and we can define an
  isometry $\overline{F} : \qk' \to \qk$ by $\overline{F}(X_N' + X_I')
  = F_N(X_N') + F_I(X_I')$, $\forall X_N' \in \qk_N'$ and $ X_I' \in
  \qk_I'$. We now define $F : \g' \to \g$ by $F(X_1') = X_1$, $F(Y_1')
  = Y_1$, $F|_{\qk'} = \overline{F}$ and a new Lie super-bracket on
  $\g$: \[[X,Y]'' = F\left( [F^{-1}(X), F^{-1}(Y)]' \right), \ \forall
  X, Y \in \g.\]
	
  Denote by $\g''$ this new quadratic Lie superalgebra. It is obvious
  that $\g'\iiso\g''$.  It remains to prove the assertions for two
  quadratic Lie superalgebras $\g$ and $\g''$. Those follow Corollary
  \ref{3.5.10}, Lemma \ref{3.5.11} and Proposition \ref{3.5.12}.

\end{proof}
\begin{prop}\label{3.5.16}  The $\dup$-number is invariant under Lie superalgebra isomorphisms, i.e. if $(\g,B)$ and
  $(\g',B')$ are quadratic Lie superalgebras with $\g \simeq \g'$, then
  $\dup(\g) = \dup(\g')$.
\end{prop}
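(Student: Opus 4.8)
The plan is to strip away, one at a time, everything that depends on the choice of the invariant bilinear form, until the statement becomes a claim about centromorphisms that can be settled as in \cite{DPU10} and Proposition~\ref{3.4.22}. First I would reduce to the reduced case: given a Lie superalgebra isomorphism $A\colon\g\to\g'$ and the decomposition $\g=\zk\oplusp\lk$ of Proposition~\ref{3.2.3}, one gets $\g'=\zk'\oplusp\lk'$ with $\lk\simeq\lk'$, and $\dup(\g)=\dup(\lk)$, $\dup(\g')=\dup(\lk')$ by Lemma~\ref{3.2.4}. Next I would eliminate the second form: setting $\tilde B(X,Y)=B'(A(X),A(Y))$ makes $(\g,\tilde B)$ a quadratic Lie superalgebra with $A\colon(\g,\tilde B)\iiso(\g',B')$; an i-isomorphism induces a graded algebra isomorphism of the super-exterior algebras that commutes with the super-exterior product and carries the associated invariant $I'$ of $(\g',B')$ to the associated invariant $\tilde I$ of $(\g,\tilde B)$, hence maps $\Vs_{I'}$ isomorphically onto $\Vs_{\tilde I}$. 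So it suffices to prove: for a fixed reduced non-Abelian Lie superalgebra $\g$, the number $\dup(\g,B)$ does not depend on the invariant non-degenerate even supersymmetric bilinear form $B$.

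Fix two such forms $B$ and $\tilde B$ on $\g$. By Lemma~\ref{3.4.20} there is an invertible centromorphism $D\in\Cs(\g)$ with $\tilde B=B(D\,\cdot\,,\cdot)$, so $D$ is even, symmetric and satisfies $D[X,Y]=[D(X),Y]=[X,D(Y)]$; writing $I,\tilde I$ for the associated invariants of $(\g,B)$ and $(\g,\tilde B)$, one has $\tilde I(X,Y,Z)=B(D[X,Y],Z)$. Thus the whole statement is reduced to showing that $D$ acts on $[\g,\g]$ as a nonzero scalar $\mu$, for then $\tilde I=\mu I$, whence $\Vs_{\tilde I}=\Vs_I$ and $\dup(\g,\tilde B)=\dup(\g,B)$. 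It is enough to prove this when $(\g,B)$ is singular: if $(\g,B)$ is ordinary and $(\g,\tilde B)$ were singular, applying the singular case with the roles of $B$ and $\tilde B$ exchanged would give $\Vs_I=\Vs_{\tilde I}$, so $\dup(\g,B)\ge 1$, a contradiction.

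So assume $(\g,B)$ singular, and split into cases. If $\g_\un\subseteq\Zs(\g)$ — which covers type $\Sb_3$ and type $\Sb_1$ with $I_1=0$ — then $I=I_0\in\Alt^3(\g_\ze)$ with $\g_\ze$ non-Abelian, a short computation gives $\Vs_I=\Vs_{I_0}\subseteq\g_\ze^*$, and $D$ restricts to a centromorphism of the quadratic Lie algebra $(\g_\ze,B|_{\g_\ze})$ with $\tilde B|_{\g_\ze}=B(D|_{\g_\ze}\,\cdot\,,\cdot)$ non-degenerate, so the invariance of the $\dup$-number for quadratic Lie algebras \cite{DPU10} gives $\dup(\g_\ze,B|_{\g_\ze})=\dup(\g_\ze,\tilde B|_{\g_\ze})$ and we are done. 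If $\g_\ze$ is Abelian then, being reduced, $\dim(\g_\ze)=2$ by Proposition~\ref{3.4.2}, and Proposition~\ref{3.4.1} shows that \emph{any} non-Abelian quadratic Lie superalgebra structure with $2$-dimensional even part has $\dup=1$, so there is nothing to prove. This leaves the only genuinely new case: $\g_\ze$ non-Abelian and $[\g_\un,\g_\un]\neq\{0\}$. Here $(\g,B)$ is the double extension of $\qk=\qk_\ze\oplus\qk_\un$ by $\cb=\cb_0+\cb_1$ with $\cb_0\neq 0$, $\cb_1\neq 0$ (Lemma~\ref{3.5.5}), $C=\ad(Y_\ze)$ and $[\g,\g]=\im(C)\oplus\CC X_\ze$ (Corollary~\ref{3.5.2}). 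Using that $D$ commutes with every $\ad(X)$ (from $D[X,Y]=[X,D(Y)]$), is even and symmetric, and that this double-extension structure is rigid, I would show that every centromorphism has the form $D=\mu\,\Id+\Zb$ with $\Zb$ even, symmetric, $\Zb(\g)\subseteq\Zs(\g)$ and $\Zb|_{[\g,\g]}=0$ — exactly as in Proposition~\ref{3.4.22} — so that $D|_{[\g,\g]}=\mu\,\Id$ and $\mu\neq 0$ since $D$ is invertible.

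The main obstacle is precisely this last computation: determining $\Cs(\g)$ for a singular quadratic Lie superalgebra of type $\Sb_1$ with $[\g_\un,\g_\un]\neq\{0\}$, which forces one to separate the even and odd Fitting components of $\cb$ and to control their interaction through the shared central element $X_\ze$. As a by-product of this description one also reads off the announced formula for the quadratic dimension $d_q(\g)$ of such a superalgebra.
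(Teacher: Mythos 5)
Your proposal follows essentially the same route as the paper's proof: reduce to the reduced case via Lemma \ref{3.2.4} and Proposition \ref{3.2.3}, identify the two structures so that one is comparing two invariant forms $B$ and $\tilde B$ on a single superalgebra, pass to the invertible centromorphism $D$ of Lemma \ref{3.4.20}, and show $D$ acts as a nonzero scalar on $[\g,\g]$ so that $\tilde I=\mu I$. The one substantive step you leave as a sketch --- that for a reduced singular superalgebra of type $\Sb_1$ with $[\g_\un,\g_\un]\neq\{0\}$ every invertible centromorphism is $\mu\,\Id+\Zb$ with $\Zb(\g)\subseteq\Zs(\g)$ and $\Zb|_{[\g,\g]}=0$ --- is precisely the paper's Lemma \ref{3.5.17}, and the ingredients you cite (commutation of $D$ with $\ad$, evenness, symmetry, the double-extension realization with $\ker(C)=\Zs(\g)\oplus\CC Y_\ze$) are exactly those the paper uses, so the plan is sound but that computation still has to be carried out (in particular the step $D(X_\ze)=\mu X_\ze$, which uses $[\g_\un,\g_\un]=\CC X_\ze$ and the evenness of $D$). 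The only genuine divergence is minor: you dispatch the Abelian-even-part subcase separately via Propositions \ref{3.4.2} and \ref{3.4.1} (any structure with $2$-dimensional even part has $\dup=1$), which is a clean shortcut, whereas the paper folds that situation into the centromorphism lemma.
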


\begin{proof}
  By Lemma \ref{3.2.4} we can assume that $\g$ is reduced. By Proposition \ref{3.2.3}, $\g'$ is also reduced. Since $\g\simeq \g'$ then we can identify $\g = \g'$ as a Lie superalgebra equipped with the bilinear forms $B$, $B'$ and we have two $\dup$-numbers: $\dup_B(\g)$ and $\dup_{B'}(\g)$.
  
  We start with the case $\dup_B(\g) =3$. Since $\g$ is reduced then $\g_\un = \{0\}$ and $\g$ is a reduced singular quadratic Lie algebra of type $\Sb_3$. By \cite{PU07}, $\dim([\g,\g])=3$ and then $\dup_{B'} (\g) = 3$.

  If $\dup_B(\g) = 1$, then $\g$ is of type $\Sb_1$ with respect to
  $B$. There are two cases: $[\g_\un,\g_\un]= \{0\}$ and $[\g_\un,\g_\un]\neq \{0\}$. If $[\g_\un,\g_\un]= 0$ then $\g_\un = \{0\}$ by $\g$ reduced. In this case, $\g$ is a reduced singular quadratic Lie algebra of type $\Sb_1$. By \cite{DPU10}, $\g$ is also a reduced singular quadratic Lie algebra of type $\Sb_1$ with the bilinear form $B'$, i.e. $\dup_{B'} (\g) = 1$. 
  
  Assume that $[\g_\un,\g_\un]\neq \{0\}$, we need the following lemma:

\begin{lem}\label{3.5.17} Let $\g$ be a reduced quadratic Lie superalgebras of type $\Sb_1$ such that $[\g_\un,\g_\un]\neq 0$ and $D \in \Lc(\g)$ be an even symmetric map. Then $D$ is a centromorphism if and only if there exist $\mu \in  \CC$ and an even symmetric map $\Zb : \g \to \Zs(\g)$ such that
  $\Zb|_{[\g,\g]} = 0$ and $D = \mu \Id + \Zb$. Moreover $D$ is
  invertible if and only if $\mu \neq 0$.
\end{lem} 
\begin{proof}
First, $\g$ can be realized as the double  extension $\g = (\CC X_\ze\oplus \CC Y_{\ze}) \oplusp \qk$ by $C = \ad(Y_\ze)$ and let $\cb = C|_\qk$.

  Assume that $D$ is an invertible centromorphism. The condition (1) of Lemma \ref{3.4.20} implies that $D \circ \ad(X) =
  \ad(X) \circ D$, for all $X \in \g$ and then $D C = C
  D$. Using formula (1) of Corollary \ref{3.5.4} and $CD = DC$, from $[D(X),
  Y_{\ze}] = [X, D(Y_\ze)]$ we find 
  \[D(C(X)) =\mu C(X),\ \forall\ X\in\g,\ \text{where }  \mu = B(D(X_\ze), Y_\ze).\]
 Since $D$ is invertible, one has  $\mu \neq 0$ and $C (D - \mu \Id) = 0$. Recall that $\ker(C) = \CC X_\ze
  \oplus \ker(\cb) \oplus \CC Y_\ze = \Zs(\g) \oplus \CC Y_\ze$, there
  exist a map $\Zb : \g \to \Zs(\g)$ and $\varphi \in \g^*$ such that
  \[
  D - \mu \Id = \Zb + \varphi \otimes Y_\ze.\]
  
  It needs to show that $\varphi = 0$. Indeed, $D$ maps $[\g,\g]$
  into itself and $Y_\ze\notin [\g,\g]$, so $\varphi|_{[\g,\g]} = 0$. One has $[\g,\g] = \CC X_\ze
  \oplus \im(\cb)$. If $X \in \im(\cb)$, let $X = C(Y)$. Then $D(X) =
  D(C(Y)) = \mu C(Y)$, so $D(X) = \mu X$. For $Y_\ze$, $D([Y_\ze,X]) =
  DC(X) = \mu C(X)$ for all $X \in \g$. But also, $D([Y_\ze,X]) =
  [D(Y_\ze),X] = \mu C(X) + \varphi(Y_\ze) C(X)$, hence $\varphi(Y_\ze) =
  0$. As a consequence, $D(Y_\ze) = \mu Y_\ze+ \Zb(Y_\ze)$.

  Now, we prove that $D(X_\ze) = \mu X_\ze$. Indeed, since $D$ is even and $[\g_\un,\g_\un] = \CC X_\ze$ then one has
  \[D(X_\ze)\subset D([\g_\un,\g_\un])=[D(\g_\un),\g_\un]\subset [\g_\un,\g_\un] = \CC X_\ze.
  \]
  It implies that, $D(X_\ze) = aX_\ze$. Combined with $B(D(Y_\ze),X_\ze) =  B(Y_\ze,D(X_\ze))$, we obtain $\mu = a$. 
  
  Let $X \in \qk$,
  $B(D(X_\ze), X) = \mu B(X_\ze,X) = 0$. Moreover, $B(D(X_\ze), X) = B(X_\ze,
  D(X))$, so $\varphi(X) = 0$. 
  
  Since $\Cs(\g)$ is generated by invertible centromorphisms then the necessary condition of Lemma is finished. The sufficiency is obvious.
 \smallskip
  \end{proof}
  
  Let us return now to the proposition. By the previous lemma, the bilinear
  form $B'$ defines an associated invertible centromorphism $D =
  \mu\Id + \Zb$ for some nonzero $\mu\in\CC$ and $\Zb : \g \to
  \Zs(\g)$ satisfying $\Zb|_{[\g,\g]} = 0$. For all $X,Y,Z\in\g$, one
  has:
  \[I'(X,Y,Z) = B'([X,Y],Z) = B(D([X,Y]),Z) = B([D(X),Y],Z) = \mu B([X,Y],Z).\]
  
  That means $I' = \mu I$ and then $\dup_{B'} (\g) = \dup_{B} (\g) = 1$.
    
  Finally, if $\dup_B(\g) = 0$ then $\g$
  cannot be of type $\Sb_3$ or $\Sb_1$ with respect to $B'$, so
  $\dup_{B'}(\g) = 0$.
\end{proof}

Let $\g$ be a reduced singular quadratic Lie superalgebra of type $\Sb_1$ such that $[\g_\un,\g_\un]\neq 0$. Keep the notation as in Lemma \ref{3.5.17}. We set $\Zs(\g)_\ze=\Zs(\g)\cap\g_\ze$, $\Zs(\g)_\un=\Zs(\g)\cap\g_\un$, $[\g,\g]_\ze=[\g,\g]\cap\g_\ze$ and $[\g,\g]_\un=[\g,\g]\cap\g_\un$. It is obvious that $X_\ze\in\Zs(\g)_\ze\subset [\g,\g]_\ze$ and $\Zs(\g)_\un\subset [\g,\g]_\un$. In other words, $\Zs(\g)_\ze$ and $\Zs(\g)_\un$ are totally isotropic subspaces of $\g_\ze$ and $\g_\un$, respectively. Rewrite $\Zs(\g)_\ze=\CC X_\ze\oplus \lk_\ze$. Then there exist totally isotropic subspaces $\uk_\ze\oplus \CC Y_\ze$ of $\g_\ze$ and $\uk_\un$ of $\g_\un$ such that $\g_\ze=[\g,\g]_\ze\oplus(\uk_\ze\oplus \CC Y_\ze)$, $\g_\un=[\g,\g]_\un\oplus\uk_\un$, the subspaces $\Zs(\g)_\ze\oplus(\uk_\ze\oplus \CC Y_\ze)$ and $\Zs(\g)_\un\oplus\uk_\un$ are non-degenerate. Let us define
 \[\Zb:\uk_\ze\oplus\CC Y_\ze\oplus\uk_\un \rightarrow \lk_\ze\oplus\CC X_\ze\oplus\Zs(\g)_\un\]
by: set bases $\{X_1=X_\ze,X_2,...,X_r\}$ of $\lk_\ze\oplus\CC X_\ze$, $\{Y_1,...,Y_t\}$ of $\Zs(\g)_\un$, $\{X_1'=Y_\ze,X_2',...,X_r'\}$ of $\uk_\ze\oplus\CC Y_\ze$ and $\{Y'_1,...,Y'_t\}$ of $\uk_\un$ such that $B(X_i,X'_j)=\delta_{ij}$, $B(Y_k,Y'_l)=\delta_{kl}$. Then the map $\Zb$ is completely defined by
\[\Zb \left(
    \sum_{j=1}^r x_j X_j' \right) = \sum_{i=1}^r \left( \sum_{j=1}^r
    \mu_{ij} x_j \right) X_i,\] 
\[\Zb \left(
    \sum_{j=1}^t y_j Y_j' \right) = \sum_{i=1}^t \left( \sum_{j=1}^t
    \nu_{ij} y_j \right) Y_i
\]
		with $\mu_{ij} = \mu_{ji} = B(X_i',
  \Zb(X_j'))$ and $\nu_{ij} = -\nu_{ji} = B(Y_i',
  \Zb(Y_j'))$.
	
	It results that the quadratic dimension of $\g$ can be calculated as follows:
	\[
	d_q(\g) = 1 + \dfrac{\dim(\Zs(\g)_\ze) (1+\dim(\Zs(\g)_\ze))}{2} + \dfrac{\dim(\Zs(\g)_\un) (\dim(\Zs(\g)_\un)-1)}{2}.
	\]

\section{Quasi-singular quadratic Lie algebras}
By Definition \ref{3.5.3}, it is natural to question: let $(\qk = \qk_\ze\oplus\qk_\un,B_\qk)$ be a quadratic $\ZZ_2$-graded vector space and $\cb$ be an endomorphism of $\qk$. Let $(\tk = \spa\{X_\un,Y_\un\},B_\tk)$ be a 2-dimensional {\em symplectic} vector space with $B_\tk(X_\un,Y_\un) = 1$. Is there an extension $\g = \qk\oplusp\tk$ such that $\g$ equipped with the bilinear form $B = B_\qk+B_\tk$ becomes a quadratic Lie superalgebra such that $\g_\ze = \qk_\ze$, $\g_\un = \qk_\un\oplus\tk$ and the Lie super-bracket is represented by $\cb$? In this section, we will give an affirmative answer to this question.

The dup-number and the form of the associated invariant $I$ in the previous
sections suggest that it would be also interesting to study a quadratic Lie
superalgebra $\g$ whose associated invariant $I$ has the form \[I = J\wedge p
\] where $p\in \g_{\un}^*$ is nonzero, $J\in \Alt^1(\gO)\otimes
\sym^1(\gI)$ is indecomposable. We obtain the first result as follows:

\begin{prop}$\{J,J\} = \{p,J\}=0.$
\end{prop}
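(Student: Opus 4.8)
The starting point is Proposition~\ref{3.1.17}(3): the associated invariant satisfies $\{I,I\}=0$, and here $I=J\wedge p$ with $J\in\Alt^1(\gO)\otimes\sym^1(\gI)$ and $p\in\sym^1(\gI)$, so $I\in\Alt^1(\gO)\otimes\sym^2(\gI)$. Since the exponent $2\cdot1+1\cdot1$ is odd, Proposition~\ref{3.1.5}(1) gives $\{p,J\}=\{J,p\}$, and $\{p,p\}=B(\phi^{-1}(p),\phi^{-1}(p))=0$ by Corollary~\ref{3.1.13}(2) and the skew-symmetry of $B$ on $\gI$; hence it suffices to prove $\{J,J\}=0$ and $\{J,p\}=0$. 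The plan is: expand $\{I,I\}$ in terms of $J$ and $p$ using only the Leibniz rule and graded antisymmetry of Proposition~\ref{3.1.5} together with $\{p,p\}=0$, which yields
\[
\{I,I\}=-\{J,J\}\wedge p^{2}+2\,J\wedge\{J,p\}\wedge p .
\]
Now $\{J,J\}\in\Es^{(2,\zero)}(\g)=\Alt^2(\gO)\oplus\sym^2(\gI)$, so write $\{J,J\}=\Gamma+\Sigma$ with $\Gamma\in\Alt^2(\gO)$, $\Sigma\in\sym^2(\gI)$, while $\{J,p\}\in\Es^{(1,\zero)}(\g)=\Alt^1(\gO)$. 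Then $\Sigma\,p^{2}\in\sym^4(\gI)$ whereas $\Gamma\otimes p^{2}$ and $J\wedge\{J,p\}\wedge p$ lie in $\Alt^2(\gO)\otimes\sym^2(\gI)$, so $\{I,I\}=0$ splits into $\Sigma\,p^{2}=0$ and $\Gamma\otimes p^{2}=2\,J\wedge\{J,p\}\wedge p$. As $\sym(\gI)$ is an integral domain and multiplication by $p\neq0$ is injective $\sym^1(\gI)\to\sym^2(\gI)$, this gives $\Sigma=0$ and, after cancelling a factor $p$, the identity
\[
\Gamma\otimes p=2\,J\wedge\{J,p\}. \qquad (\ast)
\]

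Next I would kill $\{J,p\}$ using indecomposability. Write $J$ in reduced form $J=\sum_{k=1}^{r}\alpha_k\otimes q_k$ with $\{\alpha_k\}\subset\gO^{*}$ and $\{q_k\}\subset\gI^{*}$ linearly independent; indecomposability means $r\ge 2$. From Definition~\ref{3.1.4} one computes $\{J,p\}=\sum_k\{q_k,p\}\,\alpha_k$ and $J\wedge\{J,p\}=-\sum_k\bigl(\alpha_k\wedge\{J,p\}\bigr)\otimes q_k$. Applying $\io_Q$ to $(\ast)$ for $Q\in\gI$ (Lemma~\ref{3.1.10}, with $\io_Q(q_k)=-q_k(Q)$ and $\io_Q(p)=-p(Q)$) yields $-p(Q)\,\Gamma=2\,\io_Q(J)\wedge\{J,p\}$, where $\io_Q(J)=\sum_k q_k(Q)\,\alpha_k$. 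Taking $Q$ in the hyperplane $\ker p\subset\gI$ gives $\io_Q(J)\wedge\{J,p\}=0$ for all such $Q$; as $Q$ ranges over $\ker p$, $\io_Q(J)$ sweeps out $J(\ker p)\subseteq\spa\{\alpha_1,\dots,\alpha_r\}$, of dimension $r$ if $p\notin\spa\{q_k\}$ and $r-1$ otherwise. Whenever that dimension is $\ge 2$ --- i.e. $r\ge 3$, or $r=2$ with $p\notin\spa\{q_k\}$ --- the relation $\xi\wedge\{J,p\}=0$ for $\xi$ ranging over a $2$-dimensional space of $1$-forms forces $\{J,p\}=0$.

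This leaves the exceptional case $r=2$, $p\in\spa\{q_1,q_2\}$. After a base change in $\gO^{*}$ one may assume $q_1=p$, so $J=\gamma\otimes p+\beta\otimes w$ with $\gamma,\beta\in\gO^{*}$ and $p,w\in\gI^{*}$ linearly independent. Here a direct computation from Definition~\ref{3.1.4} gives, on one hand, $\{J,p\}=\{w,p\}\,\beta$ and $\Gamma=-2\{p,w\}\,\gamma\wedge\beta$ (the $\Alt^2(\gO)$-part of $\{J,J\}$), and on the other hand, substituting $\{J,p\}=\{w,p\}\beta$ into $(\ast)$, $\Gamma=-2\{w,p\}\,\gamma\wedge\beta$. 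Since $\{p,w\}=-\{w,p\}$ and $\gamma\wedge\beta\neq0$, comparing the two expressions forces $\{w,p\}=0$, so $\{J,p\}=0$ here as well. With $\{J,p\}=0$ in all cases, $(\ast)$ gives $\Gamma\otimes p=0$, hence $\Gamma=0$, and together with $\Sigma=0$ we get $\{J,J\}=\Gamma+\Sigma=0$, and $\{p,J\}=\{J,p\}=0$. I expect the main obstacle to be exactly this low-rank case: the uniform contraction argument degenerates precisely when $\rank(J)=2$ and $p$ lies in the span of the odd components of $J$, and there one has no alternative to comparing the two formulas for the $\Alt^2(\gO)$-part of $\{J,J\}$ coming from its definition and from $(\ast)$; keeping the signs straight in the expansion of $\{I,I\}$ is the other delicate point.
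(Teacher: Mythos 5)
Your proof is correct. Up to the key identity it coincides with the paper's argument: both expand $\{I,I\}=\{J\wedge p,J\wedge p\}$ with the Leibniz rule and graded antisymmetry of Proposition~\ref{3.1.5}, discard $J\wedge J$ and $\{p,p\}$, and cancel a factor of $p$ to reach $\{J,J\}\wedge p=2\,J\wedge\{p,J\}$ (your splitting into the $\sym^4(\gI)$ and $\Alt^2(\gO)\otimes\sym^2(\gI)$ components, giving $\Sigma=0$ at once, is a cleaner way to justify the cancellation). From there the routes genuinely differ. The paper assumes $\{J,J\}\neq 0$, deduces $J\wedge\{p,J\}\neq 0$, and asserts that this forces $J=\alpha\otimes p$, whence $\{p,J\}=0$ and a contradiction; as written that assertion is a leap, since $\Gamma\otimes p=2J\wedge\{p,J\}$ with $\{p,J\}\in\Alt^1(\gO)$ only constrains the terms of $J$ whose even components are not proportional to $\{p,J\}$ to carry the odd factor $p$ --- it does not by itself make $J$ decomposable. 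You instead establish $\{J,p\}=0$ first, by contracting $(\ast)$ with odd vectors $Q\in\ker p$ and noting that $\io_Q(J)$ sweeps a space of $1$-forms of dimension at least $2$ except precisely when $\rank(J)=2$ and $p$ lies in the span of the odd components of $J$; that residual case you settle by computing the $\Alt^2(\gO)$-part of $\{J,J\}$ twice and extracting $\{p,w\}=-\{p,w\}=0$. Only then do you conclude $\Gamma=0$. This is longer but airtight where the paper is terse, and it correctly isolates the one configuration where the generic argument degenerates. The single cosmetic slip is calling the substitution $q_1\mapsto p$ a ``base change in $\gO^*$'': it is a change of presentation in $\gI^*$, with the induced change on the even components preserving their linear independence; this does not affect the argument.
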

\begin{proof}
Apply Proposition \ref{3.1.5} to obtain
\[\{I,I\} = \{J\wedge p, J\wedge p\} = \{J\wedge p, J\}\wedge p + J\wedge\{J\wedge p, p\}
\]
\[ = -\{J, J\}\wedge p\wedge p + 2 J\wedge\{p, J\}\wedge p - J\wedge J\wedge \{p, p\}.
\]
Since the super-exterior product is commutative then one has $J\wedge J=0$. Moreover, $\{I,I\}=0$ implies that:
\[\{J, J\}\wedge p\wedge p= 2 J\wedge\{p, J\}\wedge p.
\]
That means $\{J, J\}\wedge p = 2 J\wedge\{p, J\}$.

If $\{J, J\}\neq 0$ then $\{J, J\}\wedge p\neq 0$, so $J\wedge\{p, J\}\neq 0$. Note that $\{p, J\}\in \Alt^1(\gO)$ so $J$ must contain the factor $p$, i.e. $J = \alpha\otimes p$ where $\alpha \in\g_\ze^*$. But $\{p, J\}=\{p, \alpha\otimes p\} = -\alpha\otimes\{p, p\} = 0$ since $\{p, p\}=0$. This is a contradiction and therefore $\{J, J\}= 0$.

As a consequence, $J\wedge\{p, J\}=0$. Set $\alpha=\{p, J\}\in
\Alt^1(\gO)$ then we have $J\wedge\alpha=0$. If $\alpha\neq 0$ then $J$
must have the form $J = \alpha \otimes q$ where $q\in \sym^1(\gI)$. That is a
contradiction since $J$ is indecomposable. \end{proof}

\begin{defn}\label{3.6.2}We continue to keep the condition $I =   J\wedge p$
with $p\in \g_{\un}^*$ nonzero and $J\in   \Alt^1(\gO)\otimes
\sym^1(\gI)$ indecomposable. We can   assume that \[ J =
\sum_{i=1}^n\alpha_i\otimes p_i \] where $\alpha_i\in \Alt^1(\gO), \ i =
1, \dots,n$ are linearly independent and $p_i\in\sym^1(\gI)$. A quadratic Lie
superalgebra having such associated invariant $I$ is called a {\em quasi-
singular quadratic Lie superalgebra}.  
\end{defn}

Let $U = \spa\{\alpha_1,\dots, \alpha_n\}$ and $V = \spa\{p_1,\dots,p_n\}$, one has $\dim(U)$ and $\dim(V)$ more than $1$ by if there is a contrary then $J$ is decomposable. Using Definition \ref{3.1.4}, we have:
\[
\{J, J\} = \left\{\sum_{i=1}^n\alpha_i\otimes p_i, \sum_{i=1}^n\alpha_i\otimes p_i\right\} = -\sum_{i,j=1}^n\left( \{\alpha_i, \alpha_j\}\otimes p_ip_j + (\alpha_i\wedge\alpha_j)\otimes\{p_i, p_j\}\right).
\]

Since $\{J, J\}=0$ and $\alpha_i, \ i = 1, \dots,n$ are linearly independent then $\{p_i, p_j\}=0$, for all $i,j$. It implies that $\{p_i, J\} = 0$, for all $i$.

Moreover, since $\{p, J\}=0$ we obtain $\{p, p_i\}=0$, consequently
$\{p_i, I\}=0$, for all $i$ and $\{p, I\}=0$. By Corollary
\ref{3.1.13} (2) and Lemma \ref{3.1.20} we conclude that
$\phi^{-1}(V+\CC p)$ is a subspace of $\Zs(\g)$ and totally isotropic.

Now, let $\{q_1,\dots,q_m\}$ be a basis of $V$ then $J$ can be rewritten by
\[J = \sum_{j=1}^m\beta_j\otimes q_j\] 
where $\beta_j\in U$, for all $j$. One has:
\[
\{J, J\} = \left\{\sum_{j=1}^m\beta_j\otimes q_j, \sum_{j=1}^m\beta_j\otimes q_j\right\} = -\sum_{i,j=1}^m\left( \{\beta_i, \beta_j\}\otimes q_iq_j + (\beta_i\wedge\beta_j)\otimes\{q_i, q_j\}\right).
\]

By the linear independence of the system $\{q_iq_j\}$, we obtain $\{\beta_i, \beta_j\}=0$, for all $i,j$. It implies that $\{\beta_j, I\}=0$, equivalently $\phi^{-1}(\beta_j)\in \Zs(\g)$, for all $j$. Therefore, we always can begin with $J = \sum_{i=1}^n\limits\alpha_i\otimes p_i$ satisfying the following conditions:
\begin{itemize}
	\item[(i)] $\alpha_i, \ i = 1, \dots,n$ are linearly independent,
	\item[(ii)] $\phi^{-1}(U)$ and $\phi^{-1}(V+\CC p)$ are totally isotropic subspaces of $\Zs(\g)$ where $U = \spa\{\alpha_1,\dots, \alpha_n\}$ and $V = \spa\{p_1,\dots,p_n\}$.
\end{itemize}
\bigskip

Let $X_\ze^i = \phi^{-1}(\alpha_i)$, $X_\un^i = \phi^{-1}(p_i)$, for all $i$ and $C:\g \rightarrow \g$ defined by \[J(X,Y) = B(C(X),Y),\ \forall\ X,Y\in\g.\]
\begin{lem}\label{3.6.3}
The mapping $C$ is a skew-supersymmetric homogeneous endomorphism of odd degree and $\im(C)\subset \Zs(\g)$. Recall that if $C$ is a homogeneous endomorphism of degree $c$ of $\g$ satisfying 
\[B(C(X),Y) = -(-1)^{cx}B(X,C(Y)),\ \forall\ X\in\g_x,\ Y\in\g\] then we say $C$ 
{\em skew-supersymmetric} (with respect to $B$).
\end{lem}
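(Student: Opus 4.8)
The plan is to read off all three assertions directly from the defining relation $J(X,Y) = B(C(X),Y)$ together with the structural facts already established just before the statement, namely that $J = \sum_{i=1}^n \alpha_i \otimes p_i \in \Alt^1(\gO)\otimes\sym^1(\gI)$ with $\alpha_i \in \g_\ze^*$, $p_i \in \g_\un^*$, that $\{p_i,J\} = \{\beta_j,I\} = 0$ for all relevant indices, and hence that $\phi^{-1}(U)$ and $\phi^{-1}(V+\CC p)$ are totally isotropic subspaces of $\Zs(\g)$. First I would determine the degree of $C$: since $J \in \Es^{(2,\um)}(\g)$ (an element of $\Alt^1(\gO)\otimes\sym^1(\gI)$ has $\ZZ_2$-degree $\um$), and $B$ is even, the relation $B(C(X),Y) = J(X,Y)$ forces $C$ to swap $\g_\ze$ and $\g_\un$; concretely, writing $J = \sum_i \alpha_i \otimes p_i$ one computes $C = \sum_i \big( \phi^{-1}(p_i)\otimes \alpha_i \;\pm\; \phi^{-1}(\alpha_i)\otimes p_i\big)$ up to signs coming from the supersymmetry of $B$, so $C(\g_\ze)\subset\g_\un$ and $C(\g_\un)\subset\g_\ze$, i.e.\ $C$ is homogeneous of odd degree $c = \um$.

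Next I would check skew-supersymmetry. For homogeneous $X \in \g_x$, $Y \in \g_y$ with $x+y \equiv \um$ (the only case where $J(X,Y)$ can be nonzero, since $J$ pairs $\g_\ze$ with $\g_\un$), the super-antisymmetry of $J$ as an element of $\Es(\g)$ gives $J(X,Y) = -(-1)^{xy} J(Y,X)$; combined with $B(C(X),Y) = J(X,Y)$ and $B(C(Y),X) = J(Y,X)$ and the supersymmetry $B(C(Y),X) = (-1)^{(c+y)x} B(X,C(Y))$ of $B$, a short sign bookkeeping yields $B(C(X),Y) = -(-1)^{cx} B(X,C(Y))$, which is exactly the claimed identity; the cases $x+y\equiv\ze$ are trivial since then both sides vanish. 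This is the one step requiring care — keeping the Koszul signs straight — but it is routine rather than deep, since $c=\um$ and one only needs the two parity cases $x=\ze$ and $x=\um$.

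Finally, for $\im(C) \subset \Zs(\g)$: by construction $\im(C) \subseteq \phi^{-1}(U) + \phi^{-1}(V)$ (the image of $C$ is spanned by the $X_\ze^i = \phi^{-1}(\alpha_i)$ and $X_\un^i = \phi^{-1}(p_i)$ appearing in the formula for $C$), and we have already shown above that $\phi^{-1}(U)$ and $\phi^{-1}(V + \CC p)$ lie in $\Zs(\g)$; hence $\im(C) \subset \Zs(\g)$. Equivalently, one can argue via Lemma \ref{3.1.20}: $\iota_{C(X)}(I) = \iota_{C(X)}(J \wedge p) = 0$ for all $X$ because $\{\alpha_i, I\} = \{p_i, I\} = 0$ by the preceding discussion, so each $C(X) \in \Zs(\g)$. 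I expect the main (minor) obstacle to be the sign verification in the skew-supersymmetry step; everything else is a direct transcription of the already-established facts about $U$, $V$ and the invariant $I$.
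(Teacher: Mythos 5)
Your proposal is correct and follows essentially the same route as the paper: determine the parity of $C$ from $J(\g_\ze,\g_\ze)=J(\g_\un,\g_\un)=0$ and the evenness of $B$, derive the explicit formulas $C(X)=\sum_i B(X_\ze^i,X)X_\un^i$ and $C(Y)=-\sum_i B(X_\un^i,Y)X_\ze^i$ to get skew-supersymmetry from the antisymmetry of $J$ and the supersymmetry of $B$, and conclude $\im(C)\subset\Zs(\g)$ because the image lies in the span of the central elements $X_\ze^i,X_\un^i$. The sign bookkeeping you defer does check out in both parity cases, so there is no gap.
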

\begin{proof}
Since $J(\g_\ze,\g_\ze) = J(\g_\un,\g_\un)=0$ and $B$ is even then $C(\g_\ze)\subset \g_\un$ and $C(\g_\un)\subset \g_\ze$. That means $C$ is of odd degree. For all $X\in\g_\ze, \ Y\in\g_\un$ one has:
\[B(C(X),Y) = J(X,Y) = \sum_{i=1}^n\alpha_i\otimes p_i(X,Y)
 = \sum_{i=1}^n\alpha_i(X)p_i(Y) =  \sum_{i=1}^nB(X_\ze^i,X)B(X_\un^i,Y).
\]
By the non-degeneracy of $B$ and $J(X,Y)=-J(Y,X)$, we obtain:
\[C(X)= \sum_{i=1}^nB(X_\ze^i,X)X_\un^i \ \ \text{ and } \ \ C(Y)= -\sum_{i=1}^nB(X_\un^i,Y)X_\ze^i.
\]
Combined with $B$ supersymmetric, one has:
\[-B(Y,C(X)) = B(C(X),Y)=-B(C(Y),X)=-B(X,C(Y)).\]
It shows that $C$ is skew-supersymmetric. Finally, $\im(C)\subset \Zs(\g)$ since $X_\ze^i,\ X_\un^i\in\Zs(\g)$, for all $i$.
\end{proof}

\begin{prop} \label{3.6.4}Let $X_\un=\phi^{-1}(p)$ then for all $X\in\g_\ze, \ Y,Z\in\g_\un$ one has:
\begin{enumerate}
	\item $[X,Y] = -B(C(X),Y)X_\un - B(X_\un,Y)C(X)$,
	\item $[Y,Z] = B(X_\un,Y)C(Z) + B(X_\un,Z)C(Y)$,
	\item $X_\un\in\Zs(\g)$ and $C(X_\un) = 0$.
\end{enumerate}
\end{prop}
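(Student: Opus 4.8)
The plan is to read off each bracket from the defining identity $I(X,Y,Z)=B([X,Y],Z)$, using the factorization $I=J\wedge p$ and the interior-product calculus of Section~1; since $B$ is non-degenerate on $\gO$ and on $\gI$ separately, in each case it suffices to pair $[X,Y]$ against the appropriate parity piece. The main preliminary step is to compute $\io_W(I)$ for $W\in\gO$. As $\io_W$ is a super-derivation of degree $(-1,\zero)$ of $\Es(\g)$ that annihilates $1\otimes\sym(\gI)$ (Lemma~\ref{3.1.10}), and $\io_W(\alpha_i)=\alpha_i(W)$, one gets
\[\io_W(I)=\io_W(J)\wedge p=\Bigl(\sum_{i}\alpha_i(W)\,p_i\Bigr)\wedge p=\phi(C(W))\wedge p\in\sym^2(\gI),\]
where the last equality uses the explicit formula $C(W)=\sum_iB(X_\ze^i,W)X_\un^i$ from the proof of Lemma~\ref{3.6.3}. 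One also needs how a product of two odd $1$-forms $r,s\in\gI^*$ evaluates on a pair of odd vectors; a further application of $\io_Y$ (or the formulas (1.7) and (1.8) of \cite{BP89}) gives $(r\wedge s)(Y,Z)=-r(Y)s(Z)-r(Z)s(Y)$, the sign arising because $\io_Y(r)=-r(Y)$ for $Y\in\gI$.

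For (1), take $X\in\gO$ and $Y,Z\in\gI$; then $[X,Y]\in\gI$ and $B([X,Y],Z)=I(X,Y,Z)=\io_X(I)(Y,Z)=\bigl(\phi(C(X))\wedge p\bigr)(Y,Z)=-B(C(X),Y)\,B(X_\un,Z)-B(X_\un,Y)\,B(C(X),Z)$, using $B(X_\un,Z)=p(Z)$ and $\phi(C(X))(Y)=B(C(X),Y)$; non-degeneracy of $B$ on $\gI$ yields $[X,Y]=-B(C(X),Y)X_\un-B(X_\un,Y)C(X)$. For (2), take $Y,Z\in\gI$ and $W\in\gO$; super-antisymmetry of $I$ gives $B([Y,Z],W)=I(Y,Z,W)=I(W,Y,Z)=\io_W(I)(Y,Z)=-B(C(W),Y)\,p(Z)-p(Y)\,B(C(W),Z)$, and since $C$ is skew-supersymmetric (Lemma~\ref{3.6.3}) one has $B(C(W),Y)=-B(W,C(Y))=-B(C(Y),W)$ and likewise for $Z$, so the right-hand side equals $B\bigl(p(Z)C(Y)+p(Y)C(Z),\,W\bigr)$; non-degeneracy of $B$ on $\gO$ then gives $[Y,Z]=B(X_\un,Y)C(Z)+B(X_\un,Z)C(Y)$. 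For (3), $\io_{X_\un}(I)=\io_{\phi^{-1}(p)}(I)=\{p,I\}=0$ by Corollary~\ref{3.1.13} and the identity $\{p,I\}=0$ established before the statement, so $X_\un\in\Zs(\g)$ by Lemma~\ref{3.1.20}; and $C(X_\un)=-\sum_iB(X_\un^i,X_\un)X_\ze^i=0$ because $X_\un=\phi^{-1}(p)$ and each $X_\un^i=\phi^{-1}(p_i)$ lie in the totally isotropic subspace $\phi^{-1}(V+\CC p)$ of $\Zs(\g)$.

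The only genuine difficulty here is the sign bookkeeping: tracking the signs in the super-derivation rule for $\io_X$, in the evaluation of a product of two odd $1$-forms on odd vectors, and in the super-antisymmetry of $I$ when the even argument is moved past the two odd ones. Once these are settled the three formulas drop out from the non-degeneracy of $B$; note in particular that the computations for (1) and (2) only ever apply $\io$ to even vectors, so none of the odd-$\io$ subtleties intervene there.
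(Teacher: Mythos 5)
Your proof is correct and follows essentially the same route as the paper: evaluate $I=J\wedge p$ on a triple with one even and two odd arguments to get (1), transfer the even slot (you via super-antisymmetry of $I$, the paper via invariance of $B$ plus skew-supersymmetry of $C$) to get (2), and use $\{p,I\}=0$ together with the total isotropy of $\phi^{-1}(V+\CC p)$ for (3). The sign conventions you use for $\io_Y$ on odd vectors and for evaluating a product of two odd $1$-forms match the paper's (cf.\ Remark \ref{3.1.11} and the computation in Proposition \ref{3.3.3}), so the bookkeeping checks out.
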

\begin{proof}
Let $X\in\g_\ze, \ Y,Z\in\g_\un$ then
\begin{eqnarray*} B([X,Y],Z) = J\wedge p(X,Y,Z) = -J(X,Y)p(Z)-J(X,Z)p(Y) \\
  = -B(C(X),Y)B(X_\un,Z) - B(C(X),Z)B(X_\un,Y).
\end{eqnarray*}
By the non-degeneracy of $B$ on $\g_\un\times\g_\un$, it shows that:
\[ [X,Y] = -B(C(X),Y)X_\un - B(X_\un,Y)C(X).
\]
Combined with $B$ invariant and $C$ skew-supersymmetric, one has:
\[ [Y,Z] = B(X_\un,Y)C(Z) + B(X_\un,Z)C(Y).
\]
Since $\{p,I\}=0$ then $X_\un\in\Zs(\g)$. Moreover, $\{p,p_i\}=0$ imply $B(X_\un,X_\un^i) = 0$, for all $i$. It means $B(X_\un,\im(C)) = 0$. And since $B(C(X_\un),X) = B(X_\un,C(X)) = 0$, for all $X\in\g$ then $C(X_\un) = 0$.
\end{proof}

Let $W$ be a complementary subspace of $\spa\{X_\un^1,\dots,X_\un^n, X_\un\}$ in $\g_\un$ and $Y_\un$ be an element in $W$ such that $B(X_\un,Y_\un)=1$. Let $X_\ze = C(Y_\un)$, $\qk=(\CC X_1\oplus \CC Y_\un)^\bot$ and $B_\qk = B|_{\qk\times\qk}$ then we have the following corollary:
\begin{cor}\label{3.6.5}\hfill
\begin{enumerate}
	\item $[Y_\un,Y_\un] = 2X_\ze$, $[Y_\un,X] = C(X)-B(X,X_\ze)X_\un$ and $[X,Y] = -B(C(X),Y)X_\un$, for all $X,Y\in\qk\oplus\CC X_\un$.
	\item $[\g,\g]\subset \im(C)+\CC X_\un \subset \Zs(\g)$ so $\g$ is 2-step nilpotent. If $\g$ is reduced then $[\g,\g]= \im(C)+\CC X_\un = \Zs(\g)$.
	\item $C^2 = 0$.
\end{enumerate}
\end{cor}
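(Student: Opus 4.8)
The plan is to read off all three assertions from Proposition~\ref{3.6.4} and Lemma~\ref{3.6.3}, using two preliminary observations. First, since $I=J\wedge p$ lies in $\Alt^1(\gO)\otimes\sym^2(\gI)$ its $\Alt^3(\gO)$-component is zero, so $\gO$ is abelian. Second, since $X_\un$ is isotropic (it belongs to the totally isotropic space $\phi^{-1}(V+\CC p)$) and $\qk=(\CC X_\un\oplus\CC Y_\un)^\bot$, we get $B(X_\un,Z)=0$ for every $Z\in\qk\oplus\CC X_\un$. I will also use constantly that $C$ is skew-supersymmetric of odd degree, i.e. $B(C(U),W)=-(-1)^{|U|}B(U,C(W))$, and that by definition $C(Y_\un)=X_\ze\in\gO\subseteq\qk$.

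For assertion~(1): Proposition~\ref{3.6.4}(2) with $Y=Z=Y_\un$ gives $[Y_\un,Y_\un]=2B(X_\un,Y_\un)C(Y_\un)=2X_\ze$. For $[Y_\un,X]$ and $[X,Y]$ with $X,Y\in\qk\oplus\CC X_\un$ I split into the homogeneous cases and apply Proposition~\ref{3.6.4}(1)--(2), inserting a graded antisymmetry when an even argument sits in the second slot of a bracket. In each case the extra terms either carry a factor $B(X_\un,\,\cdot\,)$, which vanishes on $\qk\oplus\CC X_\un$ by the second preliminary fact, or are a bracket of two even elements, which vanishes since $\gO$ is abelian (and then the matching right-hand side $-B(C(X),Y)X_\un$ vanishes too); the surviving coefficient $B(C(X),Y_\un)$ is rewritten as $-B(X,C(Y_\un))=-B(X,X_\ze)$ via skew-supersymmetry of $C$. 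Every sign that appears is forced by the super(anti)symmetry of $B$ and of $C$, and I expect this parity bookkeeping to be the only delicate point of the proof.

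For assertion~(2): from~(1) one has $[Y_\un,Y_\un]\in\im(C)$, $[Y_\un,X]\in\im(C)+\CC X_\un$ and $[X,Y]\in\CC X_\un$, hence $[\g,\g]\subseteq\im(C)+\CC X_\un$. Since $\im(C)\subseteq\Zs(\g)$ (Lemma~\ref{3.6.3}) and $X_\un\in\Zs(\g)$ (Proposition~\ref{3.6.4}(3)), this gives $[\g,\g]\subseteq\Zs(\g)$, so $[[\g,\g],\g]=\{0\}$ and $\g$ is $2$-nilpotent. If $\g$ is reduced then $\Zs(\g)$ is totally isotropic, so $\Zs(\g)\subseteq\Zs(\g)^\bot=[\g,\g]$ (using $\Zs(\g)=[\g,\g]^\bot$, cf. Lemma~\ref{3.1.20}); combined with $[\g,\g]\subseteq\im(C)+\CC X_\un\subseteq\Zs(\g)$ this forces $[\g,\g]=\im(C)+\CC X_\un=\Zs(\g)$.

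For assertion~(3): the explicit formulas of Lemma~\ref{3.6.3} give $C(\gO)\subseteq\phi^{-1}(V)$ and $C(\gI)\subseteq\phi^{-1}(U)$, so $\im(C)\subseteq\phi^{-1}(U)\oplus\phi^{-1}(V)$, which is totally isotropic (each summand is, by the hypotheses on $\phi^{-1}(U)$ and $\phi^{-1}(V+\CC p)$, and $B(\gO,\gI)=0$). Hence $B(C(X),C(Y))=0$ for all $X,Y\in\g$; using skew-supersymmetry of $C$ to write this as $\pm B(X,C^2(Y))=0$ and then the non-degeneracy of $B$ yields $C^2=0$. Equivalently one computes directly $C(X_\ze^i)=\sum_j B(X_\ze^j,X_\ze^i)X_\un^j=0$ and $C(X_\un^i)=-\sum_j B(X_\un^j,X_\un^i)X_\ze^j=0$, so $C^2$ kills both $\gO$ and $\gI$. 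Assertions~(2) and~(3) are then immediate once~(1) is established, and the whole argument is elementary, the only care being the signs in~(1).
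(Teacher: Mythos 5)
Your proof is correct. Parts (1) and (2) follow essentially the same route as the paper: the paper simply declares (1) ``obvious by Proposition \ref{3.6.4}'', and your case-by-case verification --- including the observation that $I_0=0$ forces $\g_\ze$ Abelian, which is needed for the even--even bracket and is not covered by Proposition \ref{3.6.4} --- is exactly what that declaration hides; your (2) is the paper's argument verbatim. For (3) you take a genuinely different path. The paper deduces $C^2=0$ from the $2$-step nilpotency just established in (2), expanding $[Y_\un,[Y_\un,Y_\un]]=0$ and $[Y_\un,[Y_\un,X]]=0$ via the bracket formulas of (1) and reading off $C^2$ from the result. You instead argue purely linear-algebraically: $\im(C)\subseteq\phi^{-1}(U)\oplus\phi^{-1}(V)$ is totally isotropic and annihilated by $C$, equivalently $B(C(X),C(Y))=0$ for all $X,Y$, which with skew-supersymmetry and non-degeneracy of $B$ gives $C^2=0$. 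Both arguments ultimately rest on the total isotropy of $\im(C)$ --- the paper invokes this without comment inside its proof of (3), whereas you derive it explicitly from the standing normalization that $\phi^{-1}(U)$ and $\phi^{-1}(V+\CC p)$ are totally isotropic --- so your version of (3) is more self-contained and, unlike the paper's, does not depend on part (2).
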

\begin{proof}\hfill
\begin{enumerate}
	\item The assertion (1) is obvious by Proposition \ref{3.6.4}.
	\item Note that $X_\ze\in\im(C)$ so $[\g,\g]\subset \im(C)+\CC X_\un$. By Lemma \ref{3.6.3} and Proposition \ref{3.6.4}, $\im(C)+\CC X_\un\subset\Zs(\g)$. If $\g$ is reduced then $\Zs(\g)\subset[\g,\g]$ and therefore $[\g,\g]= \im(C)+\CC X_\un = \Zs(\g)$.
		\item Since $\g$ is 2-step nilpotent then
		\[0=[Y_\un,[Y_\un,Y_\un]] = [Y_\un,2X_\ze] = 2C(X_\ze)-2B(X_\ze,X_\ze)X_\un.\]
	Since $X_\ze = C(Y_\un)$ and $\im(C)$ is totally isotropic then $B(X_\ze,X_\ze)=0$ and therefore $C(X_\ze) = C^2(Y_\un) = 0$.
		
		If $X\in \qk\oplus\CC X_\un$ then $0 = [Y_\un,[Y_\un,X]] = [Y_\un,C(X)]$. By the choice of $Y_\un$, it is sure that $C(X)\in \qk\oplus\CC X_\un$. Therefore, one has:
		\[0 = [Y_\un,C(X)] = C^2(X) -B(C(X),X_\ze)X_\un = C^2(X) -B(C(X),C(Y_\un))X_\un.\]
		By $\im(C)$ totally isotropic, one has $C^2(X)=0$.
\end{enumerate}
\end{proof}

Now, we consider a special case: $X_\ze = 0$. As a consequence, $[Y_\un,Y_\un] = 0$, $[Y_\un,X] = C(X)$ and $[X,Y] = -B(C(X),Y)X_\un$, for all $X,Y\in\qk$. Let $X\in\qk$ and assume that $C(X) = C_1(X) + aX_1$ where $C_1(X)\in \qk$ then
\[0=B([Y_\un,Y_\un],X) = B(Y_\un,[Y_\un,X]) = B(Y_\un,C_1(X) + aX_1) = a.\]
It shows that $C(X)\in \qk$, for all $X\in\qk$ and therefore we have an affirmative answer of the above question as follows:

\begin{prop}\label{3.6.6}
Let $(\qk = \qk_\ze\oplus\qk_\un,B_\qk)$ be a quadratic $\ZZ_2$-graded vector space and $\cb$ be an odd endomorphism of $\qk$ such that $\cb$ is skew-supersymmetric and $\cb^2 =0$. Let $(\tk = \spa\{X_\un,Y_\un\},B_\tk)$ be a 2-dimensional symplectic vector space with $B_\tk(X_\un,Y_\un) = 1$. Consider the space $\g = \qk\oplusp\tk$ and define the product on $\g$ by:
\[[Y_\un,Y_\un] = [X_\un,\g] = 0, \ [Y_\un,X] = \cb(X) \ \text{ and } [X,Y] = -B_\qk(\cb(X),Y)X_\un\]
for all $X\in\qk$. Then $\g$ becomes a 2-nilpotent quadratic Lie superalgebra with the bilinear form $B = B_\qk+B_\tk$. Moreover, one has $\g_\ze = \qk_\ze$, $\g_\un = \qk_\un\oplus\tk$.
\end{prop}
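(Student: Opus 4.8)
The plan is to verify directly that $(\g,[\cdot,\cdot],B)$ satisfies the axioms of a quadratic Lie superalgebra, arranging matters so that the hypotheses on $\cb$ make the Jacobi identity almost automatic. First, the claimed grading is forced: $\tk$ is purely odd and $\cb$ is odd, so each of the three defining rules is homogeneous of the expected degree; in particular $[X,Y]=-B_\qk(\cb(X),Y)X_\un$ vanishes unless $X$ and $Y$ have opposite parities, and in that case it lies in the odd line $\CC X_\un$, of degree $|X|+|Y|$. The super-antisymmetry of the bracket on $\qk\times\qk$ amounts to $B_\qk(\cb(X),Y)=-(-1)^{|X||Y|}B_\qk(\cb(Y),X)$, which I would obtain by combining the skew-supersymmetry of $\cb$ with the supersymmetry of $B_\qk$; the mixed brackets $[X,Y_\un]$ ($X\in\qk$) are defined by super-antisymmetry, and $[X_\un,-]$ and $[Y_\un,Y_\un]$ vanish by construction, so the whole bracket is super-antisymmetric.

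The key structural point, which I would establish next, is that $[\g,\g]$ is central. The defining rules give $[\g,\g]\subseteq\im(\cb)+\CC X_\un$, and I claim $\im(\cb)+\CC X_\un\subseteq\Zs(\g)$: indeed $[X_\un,\g]=0$ by construction, while for $Z,W\in\qk$ one has $[Y_\un,\cb(Z)]=\cb^2(Z)=0$ and $[W,\cb(Z)]=-B_\qk(\cb(W),\cb(Z))X_\un=0$, the last equality because $B_\qk(\cb(W),\cb(Z))=\pm B_\qk(W,\cb^2(Z))=0$ by skew-supersymmetry of $\cb$ together with $\cb^2=0$. Consequently every summand of the super-Jacobi identity is a bracket of the shape $[\cdot,[\cdot,\cdot]]\in[\g,\Zs(\g)]=\{0\}$, so the Jacobi identity holds; and the same inclusion $[\g,[\g,\g]]\subseteq[\g,\Zs(\g)]=\{0\}$ shows that $\g$ is $2$-step nilpotent. (Note that $\cb^2=0$ is genuinely needed here, not merely for nilpotency: without it the triple $(Y_\un,Y_\un,X)$ already violates Jacobi.)

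It remains to treat $B=B_\qk+B_\tk$. Non-degeneracy, evenness and supersymmetry are immediate: $B_\qk$ has these properties on $\qk$, $B_\tk$ is a non-degenerate skew-symmetric form on the purely odd space $\tk$ (which is exactly supersymmetry there), and $B(\qk,\tk)=0$. For invariance I would reduce to a finite case analysis: it suffices to check $B([A,B],C)=B(A,[B,C])$ for $A,B,C$ each chosen from $\qk$, from $\{X_\un\}$, or from $\{Y_\un\}$. All but a handful of these triples give $0=0$, and the remaining ones collapse, after using $B(X_\un,Y_\un)=1=-B(Y_\un,X_\un)$ and $B(X_\un,\qk)=B(Y_\un,\qk)=0$, to the single skew-supersymmetry relation $B_\qk(\cb(X),Y)=-(-1)^{|X|}B_\qk(X,\cb(Y))$. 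Equivalently, and perhaps more transparently, one observes that the associated trilinear form $I(X,Y,Z)=B([X,Y],Z)$ is exactly $J\wedge p$, where $p=\phi(X_\un)\in\g_\un^*$ and $J\in\Alt^1(\g_\ze)\otimes\sym^1(\g_\un)$ is the tensor encoding $\cb$ via $B$; as an element of $\Es^{3}(\g)$ it is super-antisymmetric by construction, and a super-antisymmetric $I$ with $I(X,Y,Z)=B([X,Y],Z)$ forces $B$ to be invariant, exactly as at the end of the proof of Proposition \ref{3.1.18}.

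Nothing here is deep: the content is sign bookkeeping, together with a clean division of labour among the hypotheses — oddness of $\cb$ yields the $\ZZ_2$-grading, skew-supersymmetry yields both the super-antisymmetry of the bracket and the invariance of $B$, and $\cb^2=0$ yields the centrality of $[\g,\g]$ and hence the Jacobi identity and $2$-nilpotency. The one place where I expect to have to be careful is the invariance verification, which is why in the end I would prefer to phrase it through the identification $I=J\wedge p$ rather than through the brute-force case analysis.
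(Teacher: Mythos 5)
Your proposal is correct, and all the sign computations you sketch do check out (in particular, super-antisymmetry on $\qk\times\qk$ and the invariance identities reduce, as you say, to $B_\qk(\cb(X),Y)=-(-1)^{x}B_\qk(X,\cb(Y))$, and the triple $(Y_\un,Y_\un,X)$ does show that $\cb^2=0$ is forced by Jacobi, not only by $2$-nilpotency). The paper does not actually write out a proof of this proposition: it is presented as the converse of the analysis in Lemma \ref{3.6.3}--Corollary \ref{3.6.5}, with the only explicit verification being that $C$ stabilizes $\qk$ when $X_\ze=0$; the implicit route is the machinery of Section 1, namely to set $I=J\wedge p$ with $p=\phi(X_\un)$ and $J$ the tensor encoding $\cb$, check $\{I,I\}=0$, and invoke Proposition \ref{3.1.18}. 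Your argument is a genuinely more elementary and self-contained alternative: the one organizing idea --- that $[\g,\g]\subseteq\im(\cb)+\CC X_\un\subseteq\Zs(\g)$, which follows from $\cb^2=0$ together with $B_\qk(\cb(W),\cb(Z))=\pm B_\qk(W,\cb^2(Z))=0$ --- makes every summand of the super-Jacobi identity vanish and gives $2$-nilpotency in the same stroke, so no Poisson-bracket computation is needed. What the paper's route buys is uniformity with the rest of Section 6 (invariance of $B$ comes for free from Proposition \ref{3.1.18} once $I$ is identified); what yours buys is transparency about which hypothesis does what. Your hybrid choice --- direct verification of Jacobi via centrality, plus the identification $I=J\wedge p$ to dispose of invariance --- is a sound way to combine the two.
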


\begin{rem}
The method above remains valid for the elementary quadratic Lie superalgebra $\g_6^s$ with $I$ decomposable (see Section 3) as follows: let $\qk= (\CC X_{\ze} \oplus \CC Y_{\ze})\oplus (\CC Z_{\un}\oplus \CC T_{\un})$ where $\qk_\ze=\spa\{X_\ze,Y_\ze\}$, $\qk_\un=\spa\{Z_\un,T_\un\}$ and the bilinear form $B_\qk$ is defined by $B(X_{\ze}, Y_{\ze})=B(Z_{\un}, T_{\un})=1$, the other are zero. Let $C: \qk\ \rightarrow\ \qk$  be a linear map defined by:
\[\cb =  \begin{pmatrix} 0 & 0 & 0 & -1 \\ 0 & 0 & 0 & 0 \\ 0 & 1 & 0 & 0 \\ 0 & 0 & 0 & 0 \end{pmatrix}.\]
Then $C$ is odd and $C^2=0$. Set the vector space $\g = \qk\oplusp\tk$, where $(\tk = \spa\{X_\un,Y_\un\},B_\tk)$ is a 2-dimensional symplectic vector space with $B_\tk(X_\un,Y_\un) = 1$. Then $\g=\g_6^s$ with the Lie super-bracket defined as in Proposition \ref{3.6.6}.
\end{rem}

It remains to consider $X_\ze\neq 0$. The fact is that $C$ may be not
stable on $\qk$, that is, $C(X)\in \qk\oplus\CC X_\un$ if $X\in\qk$
but that we need here is an action stable on $\qk$. Therefore, we
decompose $C$ by $C(X) = \cb(X)+\varphi(X)X_\un$, for all $X\in\qk$
where $\cb:\qk\rightarrow \qk$ and $\varphi:\qk\rightarrow\CC$. Since
$B(C(Y_\un),X) = B(Y_\un,C(X)$ then $\varphi(X) = -B(X_\ze,X) =
-B(X,X_\ze)$, for all $ X\in\qk$. Moreover, $C$ is odd degree on $\g$
and skew-supersymmetric (with respect to $B$) implies that $\cb$ is
also odd on $\qk$ and skew-supersymmetric (with respect to
$B_\qk$). It is easy to see that $\cb^2 = 0$, $\cb(X_\ze) =0$ and we
have the following result:
\begin{cor}\label{3.6.7}
Keep the notations as in Corollary \ref{3.6.5} and replace $2X_\ze$ by $X_\ze$ then for all $X,Y\in\qk$, one has:
\begin{itemize}
	\item $[Y_\un,Y_\un] = X_\ze,$
	\item $[Y_\un,X] = \cb(X)-B(X,X_\ze)X_\un,$
	\item $[X,Y] = -B(\cb(X),Y)X_\un$. 
\end{itemize}
\end{cor}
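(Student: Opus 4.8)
The plan is to read off all three identities directly from Corollary~\ref{3.6.5}(1), substituting the decomposition $C(X)=\cb(X)+\varphi(X)X_\un$ (valid for $X\in\qk$, established in the paragraph preceding the statement, with $\varphi(X)=-B(X,X_\ze)$ and $X_\ze=C(Y_\un)$ the element appearing in Corollary~\ref{3.6.5}), and then performing the renaming \emph{``replace $2X_\ze$ by $X_\ze$''} prescribed in the statement. First I would note that the decomposition is legitimate: since $B(X_\un,Y_\un)=1$ we have $X_\un\notin\qk$, so $\qk\oplus\CC X_\un$ is a genuine direct sum and the projection of $C(X)$ onto $\qk$ along $X_\un$ is well defined; this projection is exactly $\cb(X)$. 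The facts $\cb^2=0$ and $\cb(X_\ze)=0$ recorded before the statement will not even be needed for the bracket formulas themselves.

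Then I would carry out the three substitutions. For the first formula, Corollary~\ref{3.6.5}(1) gives $[Y_\un,Y_\un]=2\,C(Y_\un)$, which becomes $[Y_\un,Y_\un]=X_\ze$ after the renaming. For the second, Corollary~\ref{3.6.5}(1) gives $[Y_\un,X]=C(X)-B(X,C(Y_\un))X_\un$ for $X\in\qk$; writing $C(X)=\cb(X)-B(X,C(Y_\un))X_\un$ collapses this to $\cb(X)-2B(X,C(Y_\un))X_\un$, which is $\cb(X)-B(X,X_\ze)X_\un$ once $2\,C(Y_\un)$ is renamed $X_\ze$. For the third, Corollary~\ref{3.6.5}(1) gives $[X,Y]=-B(C(X),Y)X_\un$ for $X,Y\in\qk$; since $Y\in\qk=(\CC X_\un\oplus\CC Y_\un)^\bot$ we have $B(X_\un,Y)=0$, hence $B(C(X),Y)=B(\cb(X)+\varphi(X)X_\un,Y)=B(\cb(X),Y)$, giving $[X,Y]=-B(\cb(X),Y)X_\un$.

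The only genuine point of care---and what I would call the main, though minor, obstacle---is the bookkeeping of the factor $2$: the renaming (replacing $2X_\ze$ by $X_\ze$) must be applied simultaneously in the first and second identities, so that the coefficient $2$ produced by the substitution in the second identity is absorbed consistently; and one must check it does not disturb the third identity, which it does not, since $X_\ze$ does not occur there and the $X_\un$-component of $C$ is annihilated upon pairing against $Y\in\qk$. Beyond this, the statement is a direct transcription of Corollary~\ref{3.6.5} and Proposition~\ref{3.6.4}, so no further argument is required.
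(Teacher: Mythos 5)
Your proposal is correct and matches the paper's (implicit) derivation: the paper gives no separate proof of this corollary, treating it exactly as you do, namely as a direct substitution of the decomposition $C=\cb+\varphi\otimes X_\un$ with $\varphi(X)=-B(X,X_\ze)$ into Corollary \ref{3.6.5}(1), followed by the renaming $2X_\ze\mapsto X_\ze$. Your bookkeeping of the factor $2$ in the first two identities and the observation that $B(X_\un,Y)=0$ for $Y\in\qk$ in the third are precisely the two points needed, so nothing is missing.
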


Hence, we have a more general result of Proposition \ref{3.6.6}:
\begin{prop}\label{3.6.8}
Let $(\qk = \qk_\ze\oplus\qk_\un,B_\qk)$ be a quadratic $\ZZ_2$-graded vector space and $\cb$ an odd endomorphism of $\qk$ such that $\cb$ is skew-supersymmetric and $\cb^2 =0$. Let $X_\ze$ be an isotropic element of $\qk_\ze$, $X_\ze\in\ker(\cb)$ and $(\tk = \spa\{X_\un,Y_\un\},B_\tk)$ be a 2-dimensional symplectic vector space with $B_\tk(X_\un,Y_\un) = 1$. Consider the space $\g = \qk\oplusp\tk$ and define the product on $\g$ by:
\[[Y_\un,Y_\un] = X_\ze, \ [Y_\un,X] = \cb(X)-B_\qk(X,X_\ze)X_\un \text{ and } [X,Y] = -B_\qk(\cb(X),Y)X_\un\]
for all $X\in\qk$. Then $\g$ becomes a 2-nilpotent quadratic Lie superalgebra with the bilinear form $B = B_\qk+B_\tk$. Moreover, one has $\g_\ze = \qk_\ze$, $\g_\un = \qk_\un\oplus\tk$.
\end{prop}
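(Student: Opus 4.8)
The plan is to verify the statement directly: we treat the displayed formulas, together with the convention that $X_\un$ is central and $[X,Y_\un]:=-(-1)^{x}[Y_\un,X]$ for homogeneous $X\in\qk_x$, as defining a bilinear bracket on $\g=\qk\oplusp\tk$, and then check that $(\g,B)$ with $B=B_\qk+B_\tk$ is a quadratic Lie superalgebra which is $2$-step nilpotent. First I would record the easy structural facts: $B$ is non-degenerate, even and supersymmetric because $B_\qk$ and $B_\tk$ are and $\qk\perp\tk$; and, since $\cb$ is odd, $\cb(\qk_\ze)\subseteq\qk_\un$ and $\cb(\qk_\un)\subseteq\qk_\ze$, while $B_\qk(X,X_\ze)=0$ for $X\in\qk_\un$ (as $B$ is even). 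Hence, putting $\g_\ze:=\qk_\ze$ and $\g_\un:=\qk_\un\oplus\tk$, the bracket is $\ZZ_2$-homogeneous of degree $\zero$; this already proves the last sentence of the proposition.

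The argument then has three parts. \textbf{(a) Super-antisymmetry.} For homogeneous $X\in\qk_x$, $Y\in\qk_y$ one has $[X,Y]=-B_\qk(\cb(X),Y)X_\un$, which is nonzero only when $x+y$ is odd (because $B_\qk$ pairs $\qk_\ze$ with $\qk_\ze$ and $\qk_\un$ with $\qk_\un$, whereas $\cb$ reverses parity); combining this with the skew-supersymmetry of $\cb$ gives $[X,Y]=-(-1)^{xy}[Y,X]$, and the brackets involving $Y_\un$ and $X_\un$ are antisymmetric by construction. \textbf{(b) Invariance.} By bilinearity and $\ZZ_2$-homogeneity it suffices to prove $B([a,b],c)=B(a,[b,c])$ for $a,b,c$ ranging over a basis adapted to $\g=\qk\oplus\CC X_\un\oplus\CC Y_\un$; the remaining cases reduce either to the defining identity $B_\qk(\cb(X),Y)=-(-1)^{x}B_\qk(X,\cb(Y))$ of skew-supersymmetry (via $B_\tk(X_\un,Y_\un)=1$ and $\qk\perp\tk$) or, when $a=b=Y_\un$, to $B_\qk(X_\ze,X)$ appearing on both sides. \textbf{(c) $2$-step nilpotency and the Jacobi identity.} From the formulas, $[\g,\g]\subseteq\CC X_\ze+\im(\cb)+\CC X_\un$. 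I would then show each of $X_\ze$, $\im(\cb)$, $X_\un$ lies in $\Zs(\g)$: centrality of $X_\un$ is part of the definition; $[\im(\cb),\qk]=0$ because $\cb^2=0$; $[\im(\cb),Y_\un]=0$ because $B_\qk(\cb(X),X_\ze)=\pm B_\qk(X,\cb(X_\ze))=0$; and $[X_\ze,\cdot]=0$ because $\cb(X_\ze)=0$ and $B_\qk(X_\ze,X_\ze)=0$. Once $[\g,\g]\subseteq\Zs(\g)$ is established, $\g$ is $2$-step nilpotent, and the super-Jacobi identity holds trivially: each of its three terms is a bracket of an element of $[\g,\g]$ with an element of $\g$, hence $0$. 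So $\g$ is a Lie superalgebra, and with (b) it is a $2$-nilpotent quadratic Lie superalgebra.

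The only real difficulty is sign bookkeeping in (a) and (b): since $\cb$ is \emph{odd} and $B$ is supersymmetric (symmetric on $\gO$, skew on $\gI$), the factors $(-1)^{xy}$ and $(-1)^{x}$ must be tracked, and one repeatedly uses that the $B_\qk$-pairing vanishes across the $\ZZ_2$-grading in order to discard terms. An alternative, closer to the style of Section 1, would be to set $p=\phi(X_\un)\in\g_\un^*$, build $J\in\Alt^1(\gO)\otimes\sym^1(\gI)$ from $\cb$, put $I=J\wedge p\in\Es^{(3,\zero)}(\g)$, verify $\{I,I\}=0$ just as in the proof that $\{J,J\}=\{p,J\}=0$ (using $\cb^2=0$, $\cb(X_\ze)=0$ and the isotropy of $X_\ze$ and $X_\un$), and invoke Proposition \ref{3.1.18}; but one would then still have to identify the bracket induced by $I$ with the prescribed one, which is the same computation, so I would keep the direct route. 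In either case the proposition is exactly the converse of Lemma \ref{3.6.3} and Corollaries \ref{3.6.5}--\ref{3.6.7}.
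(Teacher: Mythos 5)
Your proof is correct. It is worth noting that the paper itself gives no explicit proof of Proposition \ref{3.6.8}: the statement is presented as the converse read off from the forward analysis (Lemma \ref{3.6.3}, Proposition \ref{3.6.4} and Corollaries \ref{3.6.5}--\ref{3.6.7}), the implicit justification being that one can form $p=\phi(X_\un)$, build $J$ from $\cb$, set $I=J\wedge p$ and invoke Proposition \ref{3.1.18} after checking $\{I,I\}=0$ -- exactly the alternative route you sketch at the end. Your direct verification is therefore a genuinely different, more elementary path, and it actually supplies details the paper leaves unwritten: the sign computation showing that skew-supersymmetry of the odd map $\cb$ together with the parity constraint $x+y$ odd yields $[X,Y]=-(-1)^{xy}[Y,X]$, the case-by-case invariance check, and the observation that $[\g,\g]\subseteq\CC X_\ze+\im(\cb)+\CC X_\un\subseteq\Zs(\g)$ (using $\cb^2=0$, $\cb(X_\ze)=0$ and $B_\qk(X_\ze,X_\ze)=0$) kills the super-Jacobi identity outright. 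The one convention you must (and do) make explicit is that $X_\un$ is central, since the displayed formulas never define $[X_\un,\cdot]$; this matches Proposition \ref{3.6.6}, where $[X_\un,\g]=0$ is stated, and is forced by invariance in any case. What the invariant-based route buys is that Jacobi and invariance come for free from $\{I,I\}=0$ and Proposition \ref{3.1.18}; what your route buys is independence from the super-Poisson machinery and a proof that transparently locates where each hypothesis ($\cb$ odd, $\cb^2=0$, $X_\ze$ isotropic and in $\ker(\cb)$) is used.
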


A quadratic Lie superalgebra obtained in the above proposition is a special case of the generalized double extensions given in \cite{BBB} where the authors consider the generalized double extension of a quadratic $\ZZ_2$-graded vector space (regarded as an Abelian superalgebra) by a one-dimensional Lie superalgebra.

\section{Appendix: Adjoint orbits of $\spk(2n)$ and $\ok(m)$}

This appendix recalls a fundamental and really interesting problem in Lie
theory that is necessary for the paper: the classification of adjoint orbits
of classical Lie algebras $\spk(2n)$ and $\ok(m)$ where $m$, $n\in \NN^*$. A
brief overview can be found in \cite{Hum95} with interesting discussions. Many
results with detailed proofs can be found in \cite{CM93}.

A different point here is to use the Fitting decomposition to review this
problem. In particular, we parametrize the {\em invertible} component in the
Fitting decomposition of a skew-symmetric map and from this, we give an
explicit classification for $\Sp(2n)$-adjoint orbits of $\spk(2n)$ and
$\OO(m)$-adjoint orbits of $\ok(m)$ in the general case. In other words, we
establish a one-to-one correspondence between the set of orbits and some set
of indices. This is an rather obvious and classical result but in our
knowledge there is not a reference for that mentioned before.

Let $V$ be a $m$-dimensional complex vector space endowed with a non-
degenerate bilinear form $B_\epsilon$ where $\epsilon = \pm 1$ such that
$B_\epsilon (X,Y) =  \epsilon B_\epsilon (Y,X)$, for all $X, Y \in V$. If
$\epsilon =  1$ then the form $B_1$ is symmetric and we say $V$ a {\em
quadratic} vector space. If $\epsilon =  -1$ then $m$ must be even and we say
$V$ a {\em symplectic} vector space with symplectic form $B_{-1}$. We denote
by $\Lc(V)$ the {\em algebra of linear operators} of $V$ and by $\GL(V)$ the {\em
group of invertible operators} in $\Lc(V)$. A map $C\in \Lc(V)$ is called {\em
skew-symmetric} (with respect to $B_\epsilon$) if it satisfies the following
condition: \[ B_\epsilon (C(X), Y) = -B_\epsilon (X, C(Y)),\ \forall\ X, Y \in
V. \] We define \[I_\epsilon (V)= \{A\in \GL(V)\mid\ B_\epsilon (A(X),A(Y)) =
B_\epsilon (X,Y),\ \forall\ X, Y \in V\} \] \[ \text{and }\ \g_\epsilon (V) =
\{C\in \Lc(V)\mid\ C \text{ is skew-symmetric} \}. \] Then $I_\epsilon(V)$ is
the {\em isometry group} of the bilinear form $B_\epsilon$ and
$\g_\epsilon(V)$ is its Lie algebra. Denote by $A^* \in \Lc(V)$ the {\em
adjoint map} of an element $A \in \Lc(V)$ with respect to $B_\epsilon$, then
$A\in I_\epsilon(V)$ if and only if $A^{-1} = A^*$ and $C\in \g_\epsilon(V)$
if and only if $C^* = -C$. If $\epsilon = 1$ then $I_\epsilon(V)$ is denoted
by $\OO (V)$ and $\g_\epsilon(V)$ is denoted by $\ok(V)$. If $\epsilon = -1$
then $\Sp(V)$ stands for $I_\epsilon(V)$ and  $\spk(V)$ stands for
$\g_\epsilon(V)$.

Recall that the {\em adjoint action} $\Ad$ of $I_\epsilon(V)$ on
$\g_\epsilon(V)$ is given by  \[\Ad_U(C)= U C U^{-1}, \ \forall\ U \in
I_\epsilon(V),\ C \in \g_\epsilon(V).\] We denote by $\Oc_C =
\Ad_{I_\epsilon(V)}(C)$, the {\em adjoint orbit} of an element $C \in
\g_\epsilon(V)$ by this action.

If $V=\CC^m$, we call $B_\epsilon$ a \textit{canonical bilinear form} of
$\CC^m$. And with respect to $B_\epsilon$, we define a {\em canonical basis}
$\Bc = \{ E_1, \dots, E_m \}$ of $\CC^m$ as follows. If $m$ even, $m = 2n$,
write $\Bc = \{E_1, \dots, E_n, F_1, \dots, F_n\}$, if $m$ is odd, $m = 2n+1$,
write $\Bc = \{E_1, \dots, E_n, G, F_1, \dots, F_n \}$ and one has:

\begin{itemize}

\item if $m = 2n$ then   \[ B_1(E_i, F_j) = B_1(F_j, E_i)=\delta_{ij},\
B_1(E_i, E_j) = B_1(F_i, F_j) = 0,   \]   \[ B_{-1}(E_i, F_j) = -B_{-1}(F_j,
E_i)=\delta_{ij},\  B_{-1}(E_i, E_j) = B_{-1}(F_i, F_j) = 0,\] where $1 \leq
i,j \leq n$.

In the case of $\epsilon = -1$, $\Bc$ is also called a {\em Darboux} basis of $\CC^{2n}$.
\item if $m = 2n+1$ then $\epsilon=1$ and
  \[ \begin{cases} B_1(E_i, F_j) = \delta_{ij},\ B_1(E_i, E_j) = B_1(F_i,
    F_j) = 0, \\ B_1(E_i, G) = B_1(F_j, G) =
    0, \\ B_1(G,G) = 1\end{cases} \]
where $1 \leq i,j \leq n$.
\end{itemize}

Also, in the case $V = \CC^m$, we denote by $\GL(m)$ instead of $\GL(V)$, $\OO(m)$ stands for $\OO(V)$ and $\ok(m)$ stands for $\ok(V)$. If $m=2n$ then $\Sp(2n)$ stands for $\Sp(V)$ and $\spk(2n)$ stands for $\spk(V)$. We will also write $I_\epsilon=I_\epsilon(\CC^m)$ and $\g_\epsilon=\g_\epsilon(\CC^m)$. Our goal is classifying all of $I_\epsilon$-adjoint orbits of $\g_\epsilon$.

Finally, let $V$ is an m-dimensional vector space. If $V$ is quadratic then $V$ is isometrically isomorphic to the quadratic space $(\CC^m,B_1)$ and if $V$ is symplectic then $V$ is isometrically isomorphic to the symplectic space $(\CC^m,B_{-1})$ \cite{Bou59}.

\subsection{Nilpotent orbits}\label{S1.2}\hfill

Let $n\in\NN^*$, a {\em partition} $[d]$ of $n$ is a tuple $[d_1, ..., d_k]$ of positive integers satisfying
\[d_1\geq ...\geq d_k \text{ and } d_1+ ...+ d_k = n.
\]

Occasionally, we use the notation $[t_1^{i_1}, \dots, t_r^{i_r}]$ to replace the partition $[d_1, ..., d_k]$ where
\[ 
d_j = \left\{\begin{matrix} t_1 & 1\leq j\leq i_1 \\ t_2 & i_1 + 1\leq j\leq i_1 + i_2 \\ t_3 & i_1 + i_2 + 1\leq j\leq i_1 + i_2 + i_3 \\ \dots \end{matrix}\right.
\]
Each $i_j$ is called the {\em multiplicity} of $t_j$. Denote by $\Pc(n)$ the set of partitions of $n$.

Let $p \in \NN^*$. We denote the {\em Jordan block
    of size} $p$ by $J_1 = (0)$ and for $p \geq 2$, \[J_p :
  = \begin{pmatrix} 0 & 1 & 0 & \dots & 0 \\ 0 & 0 & 1 & \dots & 0\\
    \vdots & \vdots & \dots & \ddots & \vdots \\ 0 & 0 & \dots & 0 & 1
    \\ 0 & 0 & 0 & \dots & 0 \end{pmatrix}.\]
Then $J_p$ is a nilpotent endomorphism of $\CC^p$. Given a partition $[d] = [d_1, ..., d_k] \in \Pc(n)$ there is a nilpotent endomorphism of $\CC^n$ defined by
\[X_{[d]}: = \diag_k(J_{d_1}, ..., J_{d_k}).
\]
Moreover, $X_{[d]}$ is also a nilpotent element of $\slk(n)$ since its trace is zero. Conversely, if $C$ is a nilpotent element in $\slk(n)$ then $C$ is $\GL(n)$-conjugate to its {\em Jordan normal form} $X_{[d]}$ for some partition $[d] \in \Pc(n)$.

Given two different partitions $[d]=[d_1, ..., d_k]$ and $[d']=[d'_1, ..., d'_l]$ of $n$ then the $\GL(n)$-adjoint orbits through $X_{[d]}$ and $X_{[d']}$ respectively are disjoint by the unicity of Jordan normal form. Therefore, one has the following proposition:
\begin{prop}
There is a one-to-one correspondence between the set of nilpotent $\GL(n)$-adjoint orbits of $\slk(n)$ and the set $\Pc(n)$.
\end{prop}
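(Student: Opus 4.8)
The plan is to prove the bijection between nilpotent $\GL(n)$-adjoint orbits of $\slk(n)$ and $\Pc(n)$ by exhibiting both a surjection and an injection via Jordan normal form. First I would recall that any nilpotent $C \in \slk(n)$, viewed as a nilpotent endomorphism of $\CC^n$, admits a Jordan normal form: there is a basis in which $C$ is written as $\diag_k(J_{d_1}, \dots, J_{d_k})$ for some sequence of block sizes, and after reordering the blocks in non-increasing order this sequence is a partition $[d]$ of $n$ (the sum of sizes equals $\dim \CC^n = n$). This is the surjectivity: the map $[d] \mapsto \Oc_{X_{[d]}}$ hits every nilpotent orbit, since $C$ is $\GL(n)$-conjugate to $X_{[d]}$. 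Note that $X_{[d]}$ lies in $\slk(n)$ automatically because a nilpotent matrix has zero trace, so the construction is consistent.

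Next I would establish injectivity, i.e. that distinct partitions give distinct (disjoint) orbits. The cleanest route is to observe that the block-size multiset of a nilpotent endomorphism is a conjugacy invariant: for a nilpotent $C$, the numbers $\dim \ker(C^j) - \dim\ker(C^{j-1})$ (equivalently the ranks $\rank(C^j)$) determine, and are determined by, the partition $[d]$ — this is the standard relation between the partition and its conjugate (dual) partition. Since $\rank\bigl((UCU^{-1})^j\bigr) = \rank(C^j)$ for all $U \in \GL(n)$ and all $j$, two conjugate nilpotent elements have the same partition. Hence if $[d] \neq [d']$ then $X_{[d]}$ and $X_{[d']}$ are not $\GL(n)$-conjugate, so $\Oc_{X_{[d]}} \cap \Oc_{X_{[d']}} = \emptyset$ (orbits are either equal or disjoint). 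Combined with surjectivity, the assignment $[d] \mapsto \Oc_{X_{[d]}}$ is the desired one-to-one correspondence.

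The main obstacle, such as it is, is not conceptual but a matter of how much of the Jordan theory one wishes to reprove: the existence and uniqueness of Jordan normal form for a nilpotent operator. Since the paper explicitly cites \cite{CM93} and \cite{Hum95} for the background on adjoint orbits, and since the statement in the excerpt already asserts the Jordan normal form facts in the surrounding text ("if $C$ is a nilpotent element in $\slk(n)$ then $C$ is $\GL(n)$-conjugate to its Jordan normal form $X_{[d]}$" and "the $\GL(n)$-adjoint orbits through $X_{[d]}$ and $X_{[d']}$ respectively are disjoint by the unicity of Jordan normal form"), I would simply invoke these as known and present the proof as a two-line consequence: surjectivity from existence of Jordan form, injectivity from its uniqueness. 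In other words, the proof reduces to collecting the two facts already stated in the paragraph preceding the proposition.
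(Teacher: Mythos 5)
Your proposal is correct and follows essentially the same route as the paper, which likewise derives the bijection from the existence of the Jordan normal form (surjectivity) and its uniqueness (injectivity), as stated in the paragraph immediately preceding the proposition. Your added justification of uniqueness via the rank invariants $\rank(C^j)$ is a standard and valid elaboration of the step the paper simply cites.
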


Define the set 
\[
\Pc_\epsilon(m) = \{[d_1, ..., d_m] \in \Pc(m)|\ \sharp\{j\mid\ d_j = i\} \text{ is even for all i such that } (-1)^i = \epsilon\}.
\]
In particular, $\Pc_1(m)$ is the set of partitions of $m$ in which even parts occur with even multiplicity and $\Pc_{-1}(m)$ is the set of partitions of $m$ in which odd parts occur with even multiplicity.
\begin{prop}[Gerstenhaber]\label{1.2.10}\hfill

Nilpotent $I_\epsilon$-adjoint orbits in $\g_\epsilon$ are in one-to-one correspondence with the set of partitions in $\Pc_\epsilon(m)$. 
\end{prop}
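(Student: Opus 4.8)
The plan is to prove the Gerstenhaber classification (Proposition \ref{1.2.10}) by reducing the problem to a normal form adapted to the bilinear form $B_\epsilon$. First I would recall the basic linear-algebra fact that any nilpotent $C \in \g_\epsilon$ is, forgetting the form, $\GL(m)$-conjugate to a Jordan normal form $X_{[d]}$ for a unique partition $[d] \in \Pc(m)$; so the real content is (i) which partitions $[d]$ actually occur for skew-symmetric nilpotent maps, and (ii) that $[d]$ is a complete invariant for the \emph{refined} equivalence given by $I_\epsilon$-conjugacy rather than $\GL(m)$-conjugacy.

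For (i), the key step is the classical analysis of a single nilpotent Jordan block relative to $B_\epsilon$. A nilpotent $C$ on a space $V$ decomposes $V$ as an orthogonal (with respect to $B_\epsilon$) direct sum of $C$-indecomposable pieces. On a $C$-indecomposable piece of dimension $d$, the induced bilinear form is either nondegenerate — which forces a parity condition: a single Jordan block $J_d$ carrying a nondegenerate form on which $J_d$ is skew is possible exactly when $(-1)^{d-1} = \epsilon$, i.e. when $d$ is odd for $\epsilon = 1$ and $d$ is even for $\epsilon = -1$ — or it is degenerate, in which case the piece pairs nondegenerately with another isomorphic block $J_d$, and the two together form a "hyperbolic" pair on which a skew map of the desired type always exists. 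Summing over pieces, the blocks of the "bad" parity (even for $\epsilon=1$, odd for $\epsilon=-1$) must come in pairs, i.e. with even multiplicity; this is precisely the condition defining $\Pc_\epsilon(m)$. Conversely, one exhibits for each $[d] \in \Pc_\epsilon(m)$ an explicit skew-symmetric nilpotent map with that Jordan type by assembling the two model pieces above in a canonical basis (using the blocks $J_p$ and $-{}^tJ_p$, exactly as in the matrices $\cb^J_{2p}$ and $\cb^J_{p+p}$ already written down in the main text).

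For (ii), I would invoke the standard result that for nilpotent elements of a classical Lie algebra the $\GL(m)$-orbit meets the group $I_\epsilon$-orbit in a single $I_\epsilon$-orbit: if $C$ and $C'$ are skew-symmetric, nilpotent, and $\GL(m)$-conjugate, then they are $I_\epsilon$-conjugate. The cleanest route is via $\slk_2$-theory (Jacobson--Morozov): embed $C$ in an $\slk_2$-triple inside $\g_\epsilon$, and use that two such triples with the same Jordan type are conjugate under $I_\epsilon$; alternatively one gives a direct cohomological argument showing the form on $\ker C^{\perp}/\ker C$-type spaces is determined up to isometry by the block sizes. Either way, combined with Witt's theorem (all nondegenerate forms of a given type and dimension are isometric, as recalled from \cite{Bou59}), this yields that $[d]$ is a complete invariant. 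The main obstacle is the careful bookkeeping in the single-block analysis of (i): one must track exactly when the induced form on an indecomposable Jordan block is degenerate versus nondegenerate, and in the nondegenerate case pin down the parity $(-1)^{d-1}=\epsilon$, since a sign error here swaps the roles of $\Pc_1$ and $\Pc_{-1}$. Everything else is an assembly of these local models plus the standard conjugacy statement, so I would present (i) in detail and cite \cite{CM93} for the conjugacy part of (ii).
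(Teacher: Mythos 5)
The paper does not actually prove Proposition \ref{1.2.10}: it states it as a classical theorem attributed to Gerstenhaber, cites \cite{CM93} and \cite{Hum95}, and then only supplies explicit orbit representatives $C_{[d]}$ built from the blocks $C^J_{2p}$, $C^J_{2p+1}$, $C^J_{p+p}$. Your proposal reconstructs the standard proof behind that citation, and its substance is correct: the single-block parity computation is right (a nondegenerate $\epsilon$-form on one Jordan block $J_d$ on which $J_d$ is skew forces $B(e_i,e_{d+1-i})=(-1)^{i-1}B(e_1,e_d)$ together with $\epsilon$-symmetry, giving $\epsilon=(-1)^{d-1}$), the hyperbolic pairing of the ``bad-parity'' blocks yields exactly the multiplicity condition defining $\Pc_\epsilon(m)$, your surjectivity construction coincides with the representatives the paper writes down, and the completeness of the Jordan type as an invariant under $I_\epsilon$-conjugacy (valid here because the paper uses the full group $\OO(m)$, not $\SO(m)$, so no splitting of very even orbits arises) is legitimately delegated to \cite{CM93}, which is no worse than what the paper itself does. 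One imprecision worth fixing: you assert that $V$ decomposes as a $B_\epsilon$-orthogonal direct sum of \emph{$C$-indecomposable} pieces, but then allow the restricted form on such a piece to be degenerate --- these two statements are incompatible. The correct lemma, which is what your subsequent analysis actually uses, is that $V$ is an orthogonal direct sum of \emph{orthogonally indecomposable} $C$-stable pieces, each of which is either a single Jordan block carrying a nondegenerate form (possible only when $(-1)^{d-1}=\epsilon$) or a hyperbolic pair of two isotropic Jordan blocks of equal size in duality; this lemma requires its own (standard but nontrivial) proof. With that restatement your outline is a complete and correct route to the proposition.
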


Here, we give a construction of a nilpotent element in $\g_\epsilon$ from a partition $[d]$ of $m$ that is useful for this paper. Define maps in $\g_\epsilon$ as follows:
\begin{itemize}
	\item For $p \geq 2$, we equip the vector space $\CC^{2p}$ with its canonical bilinear form $B_\epsilon$ and the map $C_{2p}^J$ having the matrix
\[C_{2p}^J 
= \begin{pmatrix} J_p & 0 \\ 0 & - {}^t
  J_p \end{pmatrix}\] in a canonical basis where  ${}^t
  J_p$ denotes the {\em transpose} matrix of  the Jordan block $J_p$. Then $C_{2p}^J \in
\g_\epsilon(\CC^{2p})$.
\item For $p \geq 1$ we equip the vector space $\CC^{2p + 1}$ with its canonical bilinear form $B_1$ and the map $C_{2p +1}^J$ having the matrix \[C_{2p+1}^J 
= \begin{pmatrix} J_{p+1} & M \\ 0 & - {}^t
  J_p \end{pmatrix}\] in a canonical basis where $M = (m_{ij})$ denotes the $(p+1)\times p$-matrix with $m_{p+1,p} = -1$ and $m_{ij}=0$ otherwise. Then $C_{2p + 1}^J \in
\ok(2p+1)$
  \item For $p \geq 1$, we consider the vector space $\CC^{2p}$ equipped with its
  canonical bilinear form $B_{-1}$ and the map $C_{p+p}^J$
  with matrix \[ \begin{pmatrix} J_{p} & M \\ 0 & -{}^t
    J_p\end{pmatrix}\] in a canonical basis where $M=(m_{ij})$
  denotes the $p \times p$-matrix with $m_{p,p} = 1$ and
  $m_{ij} = 0$ otherwise. Then $C_{p+p}^J \in \spk(2p)$.
\end{itemize}

For each partition $[d]\in \Pc_{-1}(2n)$, $[d]$ can be written as 
\[(p_1, p_1, p_2, p_2, \dots, p_k, p_k, 2q_1, \dots, 2 q_\ell)\]
with all $p_i$ odd, $p_1 \geq p_2 \geq \dots \geq p_k$ and $q_1 \geq q_2 \geq \dots \geq q_\ell$. We associate $[d]$ to the map $C_{[d]}$ with matrix:
\[ \diag_{k + \ell}(C^J_{2p_1}, C^J_{2p_2}, \dots, C^J_{2p_k},
C^J_{q_1+q_1}, \dots, C^J_{q_\ell +q_\ell} )\] in a canonical basis
of $\CC^{2n}$ then $C_{[d]}\in \spk(2n)$. 

Similarly, let $[d]\in \Pc_{1}(m)$, $[d]$ can be written as 
\[(p_1, p_1, p_2, p_2, \dots, p_k, p_k, 2q_1+1, \dots, 2 q_\ell+1)\] with all $p_i$ even, $p_1 \geq p_2 \geq \dots \geq p_k$ and $q_1 \geq q_2 \geq \dots \geq q_\ell$. We associate $[d]$ to the map $C_{[d]}$ with matrix:
\[ \diag_{k + \ell}(C^J_{2p_1}, C^J_{2p_2}, \dots, C^J_{2p_k},
C^J_{2q_1+1}, \dots, C^J_{2q_\ell +1} ).\] in a canonical basis
of $\CC^{m}$ then $C_{[d]}\in \ok(m)$. 

By Proposition \ref{1.2.10}, it is sure that our construction is a bijection between the set $\Pc_{\epsilon}(m)$ and the set of nilpotent $I_\epsilon$-adjoint orbits in $\g_\epsilon$.

\subsection{Semisimple orbits}\label{S1.3}\hfill

We recall a well-known result \cite{CM93}:
\begin{prop}
Let $\g$ be a semisimple Lie algebra, $\hk$ be a Cartan subalgebra of $\g$ and $W$ be the associated Weyl group. Then there is a bijection between the set of semisimple orbits of $\g$ and $\hk/W$.
\end{prop}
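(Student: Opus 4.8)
The plan is to construct the bijection explicitly: to a semisimple adjoint orbit $\Oc$ one assigns the set $\Oc \cap \hk$, which I claim is a single orbit of the Weyl group $W = N_G(\hk)/Z_G(\hk)$, where $G = \mathrm{Int}(\g)$ denotes the adjoint group acting on $\g$ (so that adjoint orbits are exactly the $G$-orbits). Thus the map to be studied is $\Oc \mapsto \Oc \cap \hk \in \hk/W$, with inverse candidate $W\cdot H \mapsto \Ad_G(H)$.

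First I would check that $\Oc \cap \hk$ is nonempty and that the map is surjective: every semisimple element $X \in \g$ is contained in some Cartan subalgebra of $\g$, and since all Cartan subalgebras of $\g$ are conjugate under $G$, the orbit $\Ad_G(X)$ meets $\hk$. Hence every semisimple orbit has the form $\Ad_G(H)$ for some $H \in \hk$, so the map $\hk \to \{\text{semisimple orbits}\}$ is onto.

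The heart of the matter — and the step I expect to be the main obstacle — is the assertion that for $H, H' \in \hk$ one has $\Ad_G(H) = \Ad_G(H')$ if and only if $H' = w(H)$ for some $w \in W$. The implication $\Leftarrow$ is immediate since $W$ acts through $G$. For $\Rightarrow$, write $H' = g\cdot H$ with $g \in G$; then $\hk$ and $g^{-1}\hk$ are two Cartan subalgebras of $\g$, both contained in the centralizer $\zk_\g(H)$, which is reductive because $H$ is semisimple. By conjugacy of Cartan subalgebras inside $\zk_\g(H)$ there is an element $h \in G$ centralizing $H$ with $h(g^{-1}\hk) = \hk$; then $gh$ normalizes $\hk$ and $(gh)\cdot H = g\cdot H = H'$, so $H$ and $H'$ lie in the same $W$-orbit. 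This is the classical Chevalley argument, and I would cite \cite{CM93} rather than reproduce the structure theory of $\zk_\g(H)$.

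Combining the two facts: the assignment $\Ad_G(H) \mapsto W\cdot H$ is well defined (independent of the representative $H \in \Ad_G(H) \cap \hk$, by the key step), injective (again by the key step), and surjective (trivially, every $W$-orbit arises from some $H \in \hk$). Therefore it is a bijection between the set of semisimple adjoint orbits of $\g$ and $\hk/W$, as claimed.
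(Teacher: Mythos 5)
Your argument is correct and is the standard Chevalley-type proof: every semisimple element lies in a Cartan subalgebra, all Cartan subalgebras are conjugate, and $G$-conjugacy of two elements of $\hk$ reduces to $W$-conjugacy by passing to the reductive centralizer. The paper gives no proof at all --- the proposition is recalled as a well-known fact with a citation to \cite{CM93} --- so there is nothing to compare against; your write-up supplies precisely the argument the reference contains. One cosmetic slip: from $h(g^{-1}\hk)=\hk$ the element normalizing $\hk$ is $gh^{-1}$ (equivalently $hg^{-1}$), not $gh$; since $h$ centralizes $H$, the conclusion $(gh^{-1})\cdot H = g\cdot H = H'$ is unaffected.
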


For each $\g_\epsilon$, we choose the Cartan subalgebra $\hk$ given by the vector space of diagonal matrices of type 
\[
\diag_{2n}(\lambda_1, \dots, \lambda_n, -\lambda_1, \dots,
-\lambda_n)\]
if $\g_\epsilon = \ok(2n)$ or $\g_\epsilon = \spk(2n)$ and of type 
\[
\diag_{2n+1}(\lambda_1, \dots, \lambda_n, 0, -\lambda_1, \dots,
-\lambda_n)
\]
if $\g_\epsilon = \ok(2n+1)$.

Any diagonalizable (equivalently semisimple) $C \in \g_\epsilon$ is
conjugate to an element of $\hk$. 

If $\g_\epsilon = \spk(2n)$ then any two eigenvectors $v,w\in \CC^{2n}$ of $C\in\g_\epsilon$ with eigenvalues $\lambda,\lambda'\in\CC$ such that $\lambda + \lambda'\neq 0$ are orthogonal. Moreover, each eigenvalue pair $\lambda,-\lambda$ is corresponding to an eigenvector pair $(v,w)$ satisfying $B_\epsilon(v,w)=1$ and we can easily arrange for vectors $v,v'$ lying in a distinct pair $(v,w),(v',w')$ to be orthogonal, regardless of the eigenvalues involved. That means the associated Weyl group is of all coordinate permutations and sign changes of $(\lambda_1, \dots, \lambda_n)$. We denote it by $G_n$.

If $\g_\epsilon = \ok(2n)$, the associated Weyl group, when considered
in the action of the group $\SO(2n)$, consists all coordinate
permutations and even sign changes of $(\lambda_1, \dots,
\lambda_n)$. However, we only focus on $\OO(2n)$-adjoint orbits of
$\ok(2n)$ obtained by the action of the full orthogonal group, then
similarly to preceding analysis any sign change effects. The
corresponding group is still $G_n$. If $\g_\epsilon = \ok(2n+1)$, the
Weyl group is $G_n$ and there is nothing to add.

Now, let $\Lambda_n = \{ (\lambda_1, \dots, \lambda_n) \mid\ \lambda_1,
\dots, \lambda_n \in \CC,\ \lambda_i \neq 0 \ \text{ for some } \ i \}$.
\begin{cor}\label{5.8}
  There is a bijection between nonzero semisimple $I_\epsilon$-adjoint
  orbits of $\g_\epsilon$ and $\Lambda_n / G_n$.
\end{cor}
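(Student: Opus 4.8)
The plan is to read off the statement from the preceding Proposition on semisimple orbits, combined with the description of the relevant symmetry group carried out in the paragraphs just above. Recall that for each $\g_\epsilon$ we have fixed the Cartan subalgebra $\hk$ consisting of the diagonal matrices $\diag_{2n}(\lambda_1,\dots,\lambda_n,-\lambda_1,\dots,-\lambda_n)$ (with an extra zero entry in the case $\ok(2n+1)$), so that $C\mapsto(\lambda_1,\dots,\lambda_n)$ is a linear isomorphism $\hk\cong\CC^n$. Under this isomorphism, the action on $\hk$ of the subgroup of $I_\epsilon$ normalizing $\hk$ becomes, in every one of the three cases, the action of $G_n$ by coordinate permutations and sign changes; this is exactly what the discussion preceding the statement establishes, the point being that in the orthogonal case one uses the \emph{full} group $\OO(2n)$ (not $\SO(2n)$) so as to realize all sign changes and not merely the even ones.

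First I would record \emph{surjectivity}: every semisimple $C\in\g_\epsilon$ is $I_\epsilon$-conjugate to an element of $\hk$, which is precisely the assertion recalled above that any diagonalizable $C\in\g_\epsilon$ is conjugate to an element of the chosen Cartan subalgebra. For \emph{injectivity}, suppose $C,C'\in\hk$ lie in the same $I_\epsilon$-orbit; by the standard conjugacy of Cartan subalgebras one may take the conjugating element to normalize $\hk$, so that $C$ and $C'$ have the same image in $\CC^n/G_n$. Conversely $G_n$ is realized inside $I_\epsilon$, so elements of $\hk$ in the same $G_n$-orbit are $I_\epsilon$-conjugate. This already gives a bijection between semisimple $I_\epsilon$-adjoint orbits of $\g_\epsilon$ and $\CC^n/G_n$, which is the content of the Proposition specialised to these Lie algebras.

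It remains to pass to the nonzero orbits. The zero element $0\in\g_\epsilon$ corresponds to the origin of $\CC^n$, which is the unique $G_n$-fixed point, and a semisimple orbit is nonzero precisely when it meets $\hk\setminus\{0\}$. Since $G_n$ is finite and fixes $0$, it acts on $\CC^n\setminus\{0\}=\Lambda_n$, and the nonzero semisimple orbits are in bijection with $(\CC^n\setminus\{0\})/G_n=\Lambda_n/G_n$, as claimed. The only step requiring genuine care — and it is really contained in the preparatory discussion rather than being difficult — is the identification of the normalizer quotient acting on $\hk$ with $G_n$ in the orthogonal cases: for $\SO(2n)$ the Weyl group produces only an even number of sign changes, so one must additionally invoke an element of $\OO(2n)\setminus\SO(2n)$, such as a single coordinate reflection, to obtain the odd sign changes and hence the full group $G_n$. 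Everything else is a routine transport of the semisimple-orbit Proposition through the explicit eigenvalue coordinates on $\hk$.
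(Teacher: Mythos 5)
Your argument is correct and follows exactly the route the paper intends: specialize the general proposition on semisimple orbits (bijection with $\hk/W$) to the explicit diagonal Cartan subalgebra, identify the effective group acting on $\hk\cong\CC^n$ as $G_n$ in all three cases (using the full group $\OO(2n)$ rather than $\SO(2n)$ to obtain odd sign changes in type $D$), and delete the zero orbit, which corresponds to the unique $G_n$-fixed point $0\in\CC^n$, leaving $\Lambda_n/G_n$. The paper states this as an immediate corollary of the preceding discussion, and your write-up supplies precisely the routine details it leaves implicit.
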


\subsection{Invertible orbits}\label{S1.4}\hfill

\begin{defn}
We say that the $I_\epsilon$-orbit $\Oc_X$ is {\em invertible} if $X$ is an invertible element in $\g_\epsilon$.
\end{defn}

Keep the above notations. We say an element $X\in V$ {\em isotropic} if $B_\epsilon(X,X) = 0$ and a subset $W\subset V$ {\em totally isotropic} if $B_\epsilon(X,Y) = 0$ for all $X,Y\in W$. 

We recall the classification method given in \cite{DPU10} as follows. First, we need the lemma:

\begin{lem} \label{1.4.2} Let $V$ be an even-dimensional vector space with a non-degenerate bilinear form $B_\epsilon$. Assume that $V = V_+ \oplus V_-$ where $V_\pm$ are totally isotropic
  vector subspaces.

\begin{enumerate}

\item Let $N \in \Lc(V)$ such that $N(V_\pm) \subset V_\pm$. We define
  maps $N_\pm$ by $N_+|_{V_+} = N|_{V_+}$, $N_+|_{V_-} = 0$,
  $N_-|_{V_-} = N|_{V_-}$ and $N_-|_{V_+} = 0$. Then $N \in \g_\epsilon(V)$
  if and only if $N_- = -N_+^*$ and, in this case, $N = N_+ -
  N_+^*$.

\smallskip

\item Let $U_+ \in \Lc(V)$ such that $U_+$ is invertible, $U_+(V_+) =
  V_+$ and $U_+|_{V_-} = \Id_{V_-}$. We define $U \in \Lc(V)$ by
  $U|_{V_+} = U_+|_{V_+}$ and $U|_{V_-} = \left(U_+^{-1}\right)^*|_{V_-}$. Then $U
  \in I_\epsilon(V)$.

\smallskip

\item Let $N' \in \g_\epsilon(V)$ such that $N'$ satisfies the assumptions of
  (1). Define $N_\pm$ as in (1). Moreover, we assume that there exists
  $U_+ \in \Lc(V_+)$, $U_+$ invertible such that \[ N_+'|_{V_+} =
  \left( U_+ \ N_+ \ U_+^{-1} \right)|_{V_+}.\] We extend $U_+$ to $V$
  by $U_+|_{V_-} = \Id_{V_-}$ and define $U \in I_\epsilon(V)$ as in
  (2). Then \[ N' = U \ N \ U^{-1}.\]

\end{enumerate}

\end{lem}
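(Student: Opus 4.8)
The plan is to treat the three assertions in sequence, the first being the structural heart of the lemma and the third a formal consequence of the first two. The one fact I would pin down at the outset is that a decomposition $V = V_+ \oplus V_-$ of a non-degenerate space into two totally isotropic subspaces forces $V_+^\perp = V_+$ and $V_-^\perp = V_-$: indeed $\dim V_\pm^\perp = \dim V - \dim V_\pm = \dim V_\mp = \dim V_\pm$, while $V_\pm \subseteq V_\pm^\perp$ by total isotropy. Consequently $B_\epsilon$ restricts to a non-degenerate pairing between $V_+$ and $V_-$, and this is what legitimizes every adjoint computation below.

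For (1), I would first observe that, because $N_+$ sends $V_+$ into $V_+$ and kills $V_-$, its adjoint $N_+^*$ kills $V_+$ and sends $V_-$ into $V_-$; both facts drop out of the isotropy of $V_\pm$ and the non-degeneracy of $B_\epsilon$. Then I would write an arbitrary vector as $x_+ + x_-$ and expand $B_\epsilon(N x, y) + B_\epsilon(x, N y)$, using that $B_\epsilon$ vanishes on $V_+\times V_+$ and on $V_-\times V_-$; the expansion collapses to the single requirement $B_\epsilon(N_+ x_+, y_-) = -B_\epsilon(x_+, N_- y_-)$ for all $x_+\in V_+$, $y_-\in V_-$, which by the perfect pairing says exactly $N_-|_{V_-} = -N_+^*|_{V_-}$. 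Since both sides vanish on $V_+$, this is $N_- = -N_+^*$, and then $N = N_+ + N_- = N_+ - N_+^*$.

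For (2), I would verify $B_\epsilon(U x, U y) = B_\epsilon(x,y)$ on each of the pieces $V_+\times V_+$, $V_-\times V_-$, $V_+\times V_-$. Using $U_+(V_+)=V_+$ together with $U_+|_{V_-}=\Id_{V_-}$ and $V_-^\perp = V_-$, one checks that $(U_+^{-1})^* = (U_+^*)^{-1}$ preserves $V_-$, so $U$ maps $V_+$ into $V_+$ and $V_-$ into $V_-$; hence both sides vanish on $V_+\times V_+$ and $V_-\times V_-$ by isotropy, while on $V_+\times V_-$ the identity $B_\epsilon(U_+ x,(U_+^{-1})^* y) = B_\epsilon(U_+^{-1} U_+ x, y) = B_\epsilon(x,y)$ finishes it. Invertibility of $U$ being obvious, $U\in I_\epsilon(V)$.

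For (3), the point is that $U N U^{-1}$ is again skew-symmetric, since conjugation by an isometry preserves $\g_\epsilon(V)$, and that it preserves $V_\pm$ because each of $U$, $N$, $U^{-1}$ does. So part (1) applies to both $N'$ and $U N U^{-1}$, telling us that each is reconstructed from its restriction to $V_+$ via $M \mapsto M_+ - M_+^*$; it then suffices to match the two maps on $V_+$, where $(U N U^{-1})|_{V_+} = U_+\,(N|_{V_+})\,U_+^{-1} = N_+'|_{V_+} = N'|_{V_+}$ by hypothesis, giving $N' = U N U^{-1}$. The only part calling for care is keeping straight which subspace each adjoint lands in; once $V_\pm^\perp = V_\pm$ is recorded this is mechanical, so I anticipate no genuine obstacle.
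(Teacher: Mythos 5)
Your proof is correct and follows essentially the same route as the paper's: part (1) reduces to the identity $N_-=-N_+^*$ via the perfect pairing between $V_+$ and $V_-$ and the vanishing of $N_\pm^*$ on $V_\pm$, part (2) is the same block-by-block isometry check, and part (3) invokes (1) to reduce the equality $N'=UNU^{-1}$ to a comparison of restrictions to $V_+$. The only cosmetic difference is that in (1) you expand the invariance condition on the bilinear form directly rather than restricting the operator identity $N^*=-N$ to $V_\pm$, which amounts to the same computation.
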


\begin{proof}\hfill
  
\begin{enumerate}
	\item It is obvious that $N = N_+ + N_-$. Recall that $N \in \g_\epsilon(V)$ if	and only if $N^* = -N$ so $N_+^* + N_-^* = -N_+ - N_-$. Since $B_\epsilon(N_+^*(V_+),V) = B_\epsilon(V_+,N_+(V))=0$ then $N_+^*(V_+)=0$. Similarly, $N_-^*(V_-)=0$. Hence, $N_- = -N_+^*.$
	\item We shows that $B_\epsilon(U(X),U(Y))=B_\epsilon(X,Y)$, for all $X,Y\in V$. Indeed, let $X=X_++X_-,Y=Y_++Y_-\in V_+ \oplus V_-$, one has
	\[B_\epsilon(U(X_++X_-),U(Y_++Y_-)) = B_\epsilon(U_+(X_+) + \left(U_+^{-1}\right)^*(X_-),U_+(Y_+) + \left(U_+^{-1}\right)^*(Y_-))
	\]
	\[=B_\epsilon(U_+(X_+),\left(U_+^{-1}\right)^*(Y_-)) + B_\epsilon(\left(U_+^{-1}\right)^*(X_-),U_+(Y_+))
	\]
	\[=B_\epsilon(X_+,Y_-) + B_\epsilon(X_-,Y_+) = B_\epsilon(X,Y).\]
	\item Since $B_\epsilon(U^{-1}(V_+),V_+) =  B_\epsilon(V_+,U(V_+)) = 0$, one has $U^{-1}(V_+)= V_+$ and $U^{-1}(V_-)= V_-$. Consequently, $(U \ N \ U^{-1})(V_+) \subset V_+$ and  $(U \ N \ U^{-1})(V_-) \subset V_-$. Clearly, $U \ N \ U^{-1}\in \g_\epsilon(V)$. By (1), we only show that
	\[(U \ N \ U^{-1})|_{V_+} = N_+'
	\]
	This is obvious since $U^{-1}|_{V_+} = U_+^{-1}$.
\end{enumerate}
\end{proof}

Let us now consider $C \in \g_\epsilon$, $C$ invertible. Then, $m$ must be even (obviously, it happened if $\epsilon = -1$),
$m = 2n$. Indeed, we assume that $\epsilon = 1$ then the skew-symmetric form $\Delta_C$ on $\CC^m$ defined by $\Delta_C(v_1,v_2) = B_1(v_1,C(v_2))$ is non-degenerate. and the assertion follows. We decompose $C = S + N$ into semisimple
and nilpotent parts, $S$, $N \in \g_\epsilon$ by its Jordan decomposition. It is clear that $S$ is invertible. We have $\lambda \in
\Lambda$ if and only if $-\lambda \in \Lambda$ where
$\Lambda$ is the spectrum of $S$. Also, $m(\lambda) = m(-\lambda)$, for
all $\lambda \in \Lambda$ with the multiplicity $m(\lambda)$. Since $N$
and $S$ commute, we have $N(V_{\pm \lambda}) \subset V_{\pm \lambda}$
where $V_\lambda$ is the eigenspace of $S$ corresponding to $\lambda
\in \Lambda$. Denote by $W(\lambda)$ the direct sum \[W(\lambda) =
V_\lambda \oplus V_{-\lambda}.\]

Define the equivalence relation $\Rs$ on $\Lambda$ by: \[ \lambda \Rs
\mu \ \text{ if and only if } \ \lambda = \pm \mu.\] Then \[\CC^{2n}
= \bigoplus^\bot_{\lambda \in \Lambda / \Rs} W(\lambda),\] and each
$(W(\lambda), B_\lambda)$ is a vector space with the non-degenerate form $B_\lambda$ given by: \[B_\lambda
= B_\epsilon|_{W(\lambda) \times W(\lambda)}.\]

Fix $\lambda \in \Lambda$. We write $W(\lambda) = V_+ \oplus V_-$ with
$V_\pm = V_{\pm \lambda}$. Then, according to the notation in Lemma \ref{1.4.2},
define $N_{\pm \lambda} = N_\pm$. Since $N|_{V_-} = - N_\lambda^*$, it
is easy to verify that the matrices of $N|_{V_+}$ and $N|_{V_-}$ have
the same Jordan form. Let $(d_1(\lambda), \dots,
d_{r_\lambda}(\lambda))$ be the size of the Jordan blocks in the
Jordan decomposition of $N|_{V_+}$. This does not depend on a possible
choice between $N|_{V_+}$ or $N|_{V_-}$ since both maps have the same
Jordan type.

Next, we consider \[\Dc = \bigcup_{r \in \NN^*} \{ (d_1, \dots, d_r)
\in \NN^r \mid\ d_1 \geq d_2 \geq \dots \geq d_r \geq 1 \}. \] Define $d
: \Lambda \to \Dc$ by $d(\lambda) = (d_1(\lambda), \dots,
d_{r_\lambda}(\lambda))$. It is clear that $\Phi \circ d = m$ where
$\Phi : \Dc \to \NN$ is the map defined by $\Phi(d_1, \dots, d_r) =
\sum_{i=1}^r d_i$.

Finally, we can associate to $C \in \g_\epsilon$ a triple $(\Lambda, m, d)$
defined as above.

\begin{defn}\label{1.4.3}

  Let $\Jc_n$ be the set of all triples $(\Lambda, m, d)$ such that:

\begin{enumerate}

\item $\Lambda$ is a subset of $\CC \setminus \{0\}$ with
  $\sharp \Lambda \leq 2n$ and $\lambda \in \Lambda$ if and only if
  $-\lambda \in \Lambda$.

\item $m : \Lambda \to \NN^*$ satisfies $m(\lambda) = m(-\lambda)$,
  for all $\lambda \in \Lambda$ and $\underset{\lambda\in \Lambda}{\sum}
  m(\lambda) = 2n$.

\item $d : \Lambda \to \Dc$ satisfies $d(\lambda) = d(-\lambda)$, for
  all $\lambda \in \Lambda$ and $\Phi \circ d = m$.

\end{enumerate}

\end{defn}

Let $\Ic(2n)$ be the set of invertible elements in $\g_\epsilon$ and
$\Ich(2n)$ be the set of $I_\epsilon$-adjoint orbits of elements in
$\Ic(2n)$. By the preceding remarks, there is a map $i : \Ic(2n) \to
\Jc_n$. Then we have a parametrization of the set $\Ich(2n)$ as follows:

\begin{prop} \hfill \label{1.4.4} 

  The map $i : \Ic (2n) \to \Jc_n$ induces a bijection $\tilde{i} :
  \Ich(2n) \to \Jc_n$.
\end{prop}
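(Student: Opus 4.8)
The plan is to check the three things that together say $\tilde i$ is a well-defined bijection: that $i$ is constant on $I_\epsilon$-adjoint orbits, that the induced map is surjective onto $\Jc_n$, and that it is injective. That $i(C)$ genuinely satisfies the three conditions of Definition \ref{1.4.3} is already contained in the discussion preceding that definition, so I would only recall it. For well-definedness, if $C'=UCU^{-1}$ with $U\in I_\epsilon$, then conjugating the Jordan decomposition $C=S+N$ by $U$ produces the Jordan decomposition $C'=(USU^{-1})+(UNU^{-1})$; hence the spectrum $\Lambda$ and the multiplicity function $m$ are unchanged, $U$ carries $V_\lambda(S)$ isomorphically onto $V_\lambda(S')$, and it intertwines $N|_{V_\lambda}$ with $N'|_{V'_\lambda}$, so the Jordan-type datum $d$ is also unchanged. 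Thus $i(C')=i(C)$ and $\tilde i:\Ich(2n)\to\Jc_n$ is defined.

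For surjectivity I would build an explicit representative from a triple $(\Lambda,m,d)\in\Jc_n$. For each class $\{\lambda,-\lambda\}$ in $\Lambda/\Rs$ take $W(\lambda)=V_+\oplus V_-$ with $V_\pm\cong\CC^{m(\lambda)}$ and the nondegenerate form of sign $\epsilon$ for which $V_+$ and $V_-$ are totally isotropic and dually paired. On $V_+$ put a nilpotent map $N_+$ of Jordan type $d(\lambda)$; then $S:=\lambda\,\Id_{V_+}\oplus(-\lambda)\,\Id_{V_-}$ is skew-symmetric, and by Lemma \ref{1.4.2}(1) the map $N:=N_+-N_+^*$ is a skew-symmetric nilpotent extension of $N_+$, commuting with $S$. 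Set $C|_{W(\lambda)}=S+N$, which is invertible since $\pm\lambda\neq0$ and has $N_+$, of type $d(\lambda)$, as the nilpotent part on the $\lambda$-eigenspace. Taking the orthogonal direct sum over $\Lambda/\Rs$ yields a space of dimension $\sum_\lambda m(\lambda)=2n$ carrying a nondegenerate form of sign $\epsilon$ and a skew-symmetric invertible operator $C$; this space is isometric to $(\CC^{2n},B_\epsilon)$ (uniqueness of nondegenerate forms over $\CC$, cf. \cite{Bou59}), so transporting $C$ along an isometry puts it in $\g_\epsilon$, and by construction $i(C)=(\Lambda,m,d)$.

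For injectivity, suppose $C_1,C_2\in\Ic(2n)$ with $i(C_1)=i(C_2)=(\Lambda,m,d)$. Using the orthogonal decompositions $\CC^{2n}=\bigoplus^\bot_{\lambda\in\Lambda/\Rs}W_k(\lambda)$ attached to $C_k$ and the fact that $W_1(\lambda)$ and $W_2(\lambda)$ are nondegenerate of equal dimension and sign, it is enough to produce for each $\lambda$ an isometry $W_1(\lambda)\to W_2(\lambda)$ conjugating $C_1|_{W_1(\lambda)}$ to $C_2|_{W_2(\lambda)}$ and then form the orthogonal sum. Fixing $\lambda$, I would first align the semisimple parts: each block is a hyperbolic space $V_+\oplus V_-$ (the eigenspaces $V_{\pm\lambda}$ are totally isotropic because $2\lambda\neq0$, as recalled in the semisimple subsection) with semisimple part scalar $\lambda$ on $V_+$ and $-\lambda$ on $V_-$, and any two such models are isometric compatibly with this splitting. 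After this reduction both operators act on the same $W(\lambda)=V_+\oplus V_-$ with the same $S$, so $C_k=S+N_k$ with $N_k$ skew-symmetric nilpotent commuting with $S$, hence preserving $V_\pm$, and by Lemma \ref{1.4.2}(1) $N_k=(N_k|_{V_+})-(N_k|_{V_+})^*$. The nilpotent maps $N_1|_{V_+}$ and $N_2|_{V_+}$ of $V_+$ share the Jordan type $d(\lambda)$, so they are $\GL(V_+)$-conjugate by some invertible $U_+$; Lemma \ref{1.4.2}(2)--(3) extends $U_+$ to an isometry $U$ of $W(\lambda)$ with $N_2=U N_1 U^{-1}$, and since $U$ preserves $V_\pm$ on which $S$ is scalar, $U$ commutes with $S$, whence $C_2=UC_1U^{-1}$. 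Patching over $\lambda$ gives $U\in I_\epsilon$ with $C_2=UC_1U^{-1}$.

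The hard part will be the injectivity step, precisely the passage from the global operator to a single eigenvalue pair: before one can invoke Lemma \ref{1.4.2} one must realign the semisimple parts by an \emph{isometry} (not merely a linear isomorphism) respecting the isotropic splitting, and at every stage one has to confirm that the maps produced are genuine isometries and that they are compatible with the orthogonal block decomposition $\CC^{2n}=\bigoplus^\bot W(\lambda)$. This is exactly where Lemma \ref{1.4.2} carries the argument, and the whole proof is the symplectic/quadratic analogue of the corresponding classification in \cite{DPU10}, to which the details are entirely parallel.
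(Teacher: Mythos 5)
Your proposal is correct and follows essentially the same route as the paper: well-definedness via uniqueness of the Jordan decomposition, surjectivity by assembling block representatives on hyperbolic spaces $V_{\lambda}\oplus V_{-\lambda}$ with Lemma \ref{1.4.2}(1), and injectivity by first matching the semisimple parts with an isometry and then conjugating the nilpotent parts via Lemma \ref{1.4.2}(2)--(3), using that the correcting isometry commutes with $S$ because $S$ is scalar on each eigenspace. The only cosmetic difference is that you realign the blocks $W(\lambda)$ one at a time before patching, whereas the paper first produces a single global isometry $U$ with $S'=USU^{-1}$ and then corrects $N''=U^{-1}N'U$ blockwise; the content is identical.
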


\begin{proof}
  Let $C$ and $C' \in \Ic(2n)$ such that $C' = U \ C \ U^{-1}$ with $U
  \in I_\epsilon$. Let $S$, $S'$, $N$, $N'$ be respectively the
  semisimple and nilpotent parts of $C$ and $C'$. Write $i(C) =
  (\Lambda, m, d)$ and $i(C') = (\Lambda', m', d')$. One has
  \[S'+N' = U \ (S+N) \ U^{-1} = U \ S \ U^{-1} + U \ N \ U^{-1}.
  \]
  
  By the unicity of Jordan decomposition, $S' = U \ S \ U^{-1}$ and $N' = U \ N \ U^{-1}$. So $\Lambda' = \Lambda$ and $m' =
  m$. Also, since $U\ S = S'\ U$ one has $U\ S(V_\lambda)= S'\ U (V_\lambda)$. It implies that
  \[S'\ (U (V_\lambda)) = \lambda U (V_\lambda).
  \]That means $U(V_\lambda) = V'_\lambda$, for all $\lambda \in
  \Lambda$. Since $N' = U \ N \ U^{-1}$ then
  $N|_{V_\lambda}$ and $N'|_{V'_\lambda}$ have the same Jordan
  decomposition, so $d = d'$ and $\tilde{i}$ is well defined.

  To prove that $\tilde{i}$ is onto, we start with $\Lambda = \{
  \lambda_1, -\lambda_1, \dots, \lambda_k, -\lambda_k \}$, $m$ and $d$
  as in Definition \ref{1.4.3}. Define on the canonical basis: \[ S =
  \diag_{2n} (\overbrace{\lambda_1, \dots, \lambda_1}^{m(\lambda_1)},
  \dots, \overbrace{\lambda_k, \dots, \lambda_k}^{m(\lambda_k)},
  \overbrace{-\lambda_1, \dots, -\lambda_1}^{m(\lambda_1)}, \dots,
  \overbrace{-\lambda_k, \dots, -\lambda_k}^{m(\lambda_k)}). \] For
  all $1 \leq i \leq k$, let $d(\lambda_i) = (d_1(\lambda_i) \geq
  \dots \geq d_{r_{\lambda_i}}(\lambda_i) \geq 1)$ and define \[
  N_+(\lambda_i) = \diag_{d(\lambda_i)} \left( J_{d_1(\lambda_i)},
    J_{d_2(\lambda_i)}, \dots, J_{d_{r_{\lambda_i}}(\lambda_i)}
  \right) \] on the eigenspace $V_{\lambda_i}$ and $0$ on the
  eigenspace $V_{-\lambda_i}$ where $J_d$ is the Jordan block of size
  $d$.

  By Lemma \ref{1.4.2}, $N(\lambda_i) = N_+(\lambda_i) -
  N_+^*(\lambda_i)$ is skew-symmetric on $V_{\lambda_i} \oplus
  V_{-\lambda_i}$. Finally, \[ \CC^{2n}= \bigoplus^\bot_{1\leq i\leq k}\left(
    V_{\lambda_i} \oplus V_{-\lambda_i} \right).\] Define $N \in
  \g_\epsilon$ by $N\left(\sum_{i=1}^k v_i \right) = \sum_{i=1}^k
  N(\lambda_i) (v_i)$, $v_i \in V_{\lambda_i} \oplus V_{-\lambda_i}$
  and $C = S + N \in \g_\epsilon$. By construction, $i(C) = (\Lambda,
  m,d)$, so $\tilde{i}$ is onto.

  To prove that $\tilde{i}$ is one-to-one, assume that $C$, $C' \in
  \Ic(2n)$ and that $i(C) = i(C') = (\Lambda, m, d)$. Using the
  previous notation, since their respective semisimple parts $S$ and
  $S'$ have the same spectrum and same multiplicities, there exist $U
  \in I_\epsilon$ such that $S' = U S U^{-1}$. For $\lambda \in \Lambda$,
  we have $U(V_\lambda) = V'_\lambda$ for eigenspaces $V_\lambda$ and
  $V'_\lambda$ of $S$ and $S'$ respectively.

  Also, for $\lambda \in \Lambda$, if $N$ and $N'$ are the nilpotent
  parts of $C$ and $C'$, then $N''(V_\lambda) \subset V_\lambda$, with
  $N'' = U^{-1} N' U$. Since $i(C) = i(C')$, then $N|_{V_\lambda}$ and
  $N'|_{V'_\lambda}$ have the same Jordan type. Since $ N'' = U^{-1}
  N' U$, then $N''|_{V_\lambda}$ and $N'|_{V'_\lambda}$ have the same
  Jordan type. So $N|_{V_\lambda}$ and $N''|_{V_\lambda}$ have the
  same Jordan type. Therefore, there exists $D_+ \in \Lc(V_\lambda)$
  such that $N''|_{V_\lambda} = D_+ N|_{V_\lambda} D_+^{-1}$. By Lemma
  \ref{1.4.2}, there exists $D(\lambda) \in I_\epsilon(V_\lambda
  \oplus V_{-\lambda})$ such that \[ N''|_{V_\lambda \oplus
    V_{-\lambda}} = D(\lambda) N|_{V_\lambda \oplus V_{-\lambda}}
  D(\lambda)^{-1}.\] We define $D \in I_\epsilon$ by $D|_{V_\lambda
    \oplus V_{-\lambda}} = D(\lambda)$, for all $\lambda \in
  \Lambda$. Then $N'' = D N D^{-1}$ and $D$ commutes with $S$ since
  $S|_{V_{\pm\lambda}}$ is scalar. Then $S' = (UD) S (UD)^{-1}$ and
  $N' = (UD) N (UD)^{-1}$ and we conclude that $ C' = (UD) C
  (UD)^{-1}$.

\end{proof}

\subsection{Adjoint orbits in the general case}\label{S1.5}\hfill

Let us now classify $I_\epsilon$-adjoint orbits of $\g_\epsilon$ in the general case as follows. Let $C$ be an element in $\g_\epsilon$ and consider the Fitting decomposition of $C$
\[
\CC^m = V_N\oplus V_I,
\]
where $V_N$ and $V_I$ are stable by $C$, $C_N=C|_{V_N}$ is nilpotent and $C_I=C|_{V_I}$ is invertible. Since $C$ is skew-symmetric, $B_\epsilon(C^k(V_N),V_I) = (-1)^kB_\epsilon(V_N,C^k(V_I))$ for any $k$ then one has $V_I =
(V_N)^\perp$. Also, the restrictions $B_\epsilon^N =
B_\epsilon|_{V_N \times V_N}$ and $B_\epsilon^I = B_\epsilon|_{V_I \times V_I}$ are non-degenerate. Clearly, $C_N\in \g_\epsilon(V_N)$ and $C_I\in \g_\epsilon(V_I)$. By Subsection \ref{S1.2} and Subsection \ref{S1.4}, $C_N$ is attached with a partition $[d]\in \Pc_\epsilon(n)$ and $C_I$ corresponds to a triple $T\in\Jc_\ell$ where $n = \dim(V_N)$, $2\ell =\dim(V_I)$. Let $\Dc(m)$ be the set of all pairs $([d],T)$ such that $[d]\in \Pc_\epsilon(n)$ and $T\in\Jc_\ell$ satisfying $n+2\ell = m$. By the preceding remarks, there exists a map $p: \g_\epsilon \rightarrow \Dc(m)$ . Denote by $\Oc(\g_\epsilon)$ the set of $I_\epsilon$-adjoint orbits of $\g_\epsilon$ then we obtain the classification of $\Oc(\g_\epsilon)$ as follows:

\begin{prop}\label{1.5.1}
The map $p: \g_\epsilon \rightarrow \Dc(m)$ induces a bijection $\widetilde{p}: \Oc(\g_\epsilon) \rightarrow \Dc(m)$.
\end{prop}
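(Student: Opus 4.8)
The plan is to prove that $\widetilde p$ is well defined on orbits, surjective, and injective, in each case by transporting the corresponding facts about the nilpotent and invertible pieces through the Fitting decomposition. Recall from the discussion just before the statement that, because $C\in\g_\epsilon$ is skew-symmetric, its Fitting decomposition $\CC^m=V_N\oplus V_I$ is in fact orthogonal, $V_I=(V_N)^\perp$, with both induced forms $B_\epsilon^N=B_\epsilon|_{V_N\times V_N}$ and $B_\epsilon^I=B_\epsilon|_{V_I\times V_I}$ non-degenerate; moreover $C_N=C|_{V_N}\in\g_\epsilon(V_N)$ is nilpotent, hence attached by Proposition \ref{1.2.10} to a partition $[d]\in\Pc_\epsilon(\dim V_N)$, and $C_I=C|_{V_I}\in\g_\epsilon(V_I)$ is invertible, hence attached by Proposition \ref{1.4.4} to a triple $T\in\Jc_\ell$ with $2\ell=\dim V_I$, so that $p(C)=([d],T)\in\Dc(m)$.

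For well-definedness, suppose $C'=U\,C\,U^{-1}$ with $U\in I_\epsilon$. The Fitting components are intrinsically attached to $C$ (being $\ker(C^m)$ and $\im(C^m)$), so $U(V_N)=V_N'$ and $U(V_I)=V_I'$; thus $U|_{V_N}$ is an isometry $V_N\to V_N'$ intertwining $C_N$ and $C_N'$, and by Proposition \ref{1.2.10} these two nilpotent maps give the same partition $[d]$, while $U|_{V_I}$ is an isometry intertwining $C_I$ and $C_I'$, and by Proposition \ref{1.4.4} these give the same triple $T$. Hence $p(C)=p(C')$ and $\widetilde p:\Oc(\g_\epsilon)\to\Dc(m)$ is well defined. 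For surjectivity, given $([d],T)\in\Dc(m)$ with $[d]\in\Pc_\epsilon(n)$, $T\in\Jc_\ell$ and $n+2\ell=m$, produce $C_N\in\g_\epsilon(\CC^n)$ realizing the partition $[d]$ by the construction of Subsection \ref{S1.2}, and $C_I\in\g_\epsilon(\CC^{2\ell})$, invertible, realizing the triple $T$ by the construction in the proof of Proposition \ref{1.4.4}. Writing $\CC^m=\CC^n\oplusp\CC^{2\ell}$ as an orthogonal sum of $B_\epsilon$-spaces (again isometric to $(\CC^m,B_\epsilon)$ by \cite{Bou59}) and setting $C=C_N\oplus C_I\in\g_\epsilon$, one checks at once that its Fitting decomposition is exactly this $\CC^n\oplus\CC^{2\ell}$ with nilpotent part $C_N$ and invertible part $C_I$, so $p(C)=([d],T)$.

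For injectivity, let $C,C'\in\g_\epsilon$ with $p(C)=p(C')=([d],T)$. Then $\dim V_N=\dim V_N'$ and $\dim V_I=\dim V_I'$; since $V_N,V_N'$ are non-degenerate subspaces of the same dimension and type they are isometric, and likewise $V_I,V_I'$. Composing such isometries with the conjugations supplied by Proposition \ref{1.2.10} (for the nilpotent parts, which share the partition $[d]$) and Proposition \ref{1.4.4} (for the invertible parts, which share the triple $T$), we obtain isometries $U_N:V_N\to V_N'$ and $U_I:V_I\to V_I'$ with $C_N'=U_N C_N U_N^{-1}$ and $C_I'=U_I C_I U_I^{-1}$. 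Because $\CC^m=V_N\oplusp V_I=V_N'\oplusp V_I'$ are orthogonal decompositions, $U=U_N\oplus U_I$ belongs to $I_\epsilon$ and satisfies $C'=U\,C\,U^{-1}$, so $C$ and $C'$ lie in the same $I_\epsilon$-orbit; hence $\widetilde p$ is injective. The only non-routine ingredients are the orthogonality $V_I=(V_N)^\perp$ of the Fitting decomposition (already recorded before the statement) and the Witt-type statements from \cite{Bou59} used to identify equidimensional non-degenerate subspaces and to assemble block-diagonal isometries; I expect the only place requiring a little care to be the bookkeeping verification that conjugation by $I_\epsilon$ preserves Fitting components and that a block sum of isometries on orthogonal summands is again an isometry.
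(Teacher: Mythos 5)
Your proposal is correct and follows essentially the same route as the paper: show $\widetilde p$ is well defined by noting that an isometry conjugating $C$ to $C'$ must carry Fitting components to Fitting components and hence restricts to isometries intertwining the nilpotent and invertible parts; show surjectivity by block-summing the model maps from Subsections \ref{S1.2} and \ref{S1.4}; and show injectivity by assembling the isometries supplied by Propositions \ref{1.2.10} and \ref{1.4.4} on the two orthogonal Fitting summands. The only cosmetic difference is that you identify the Fitting components as $\ker(C^m)$ and $\im(C^m)$ where the paper argues via $C'^kP=PC^k$, which amounts to the same thing.
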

\begin{proof}
Let $C$ and $C'$ be two elements in $\g_\epsilon$. Assume that $C$ and $C'$ lie in the same $I_\epsilon$-adjoint orbit. It means that there exists an isometry $P$ such that $C' = PCP^{-1}$. So $C'^k\ P = P\ C^k$ for any $k$ in $\NN$. As a consequence, $P(V_N)\subset V'_N$ and $P(V_I)\subset V'_I$. However, $P$ is an isometry then $V'_N = P(V_N)$ and  $V'_I = P(V_I)$. Therefore, one has
\[C'_N = P_N\ C_N P^{-1}_N \text{ and } C'_I = P_I\ C_I P_I^{-1},
\]
where $P_N = P:V_N\rightarrow V'_N$ and $P_I = P:V_I\rightarrow V'_I$ are isometries. It implies that $C_N$, $C'_N$ have the same partition and $C_I$, $C'_I$ have the same triple. Hence, the map $\widetilde{p}$ is well defined.

For a pair $([d],T)\in \Dc(m)$ with $[d]\in \Pc_\epsilon(n)$ and $T\in\Jc_\ell$, we set a nilpotent map $C_N\in\g_\epsilon(V_N)$ corresponding to $[d]$ as in Section \ref{S1.2} and an invertible map $C_I\in\g_\epsilon(V_I)$ as in Proposition \ref{1.4.4} where $\dim(V_N) = n$ and $\dim(V_I) = 2\ell$. Define $C\in\g_\epsilon$ by $C(X_N+X_I) = C_N(X_N) + C_I(X_I)$, for all $X_N\in V_N,\ X_I\in V_I$. By construction, $p(C) = ([d],T)$ and $\widetilde{p}$ is onto.

To prove $\widetilde{p}$ is one-to-one, let $C, C'\in \g_\epsilon$ such that $p(C) = p(C') = ([d],T)$. Keep the above notations, since $C_N$ and $C'_N$ have the same partition then there exists an isometry $P_N: V_N\rightarrow V'_N$ such that $C'_N = P_N\ C_N\ P_N^{-1}$. Similarly $C_I$ and $C'_I$ have the same triple and then there exists an isometry $P_I: V_I\rightarrow V'_I$ such that $C'_I = P_I\ C_I\ P_I^{-1}$. Define $P: V\rightarrow V$ by $P(X_N + X_I) = P_N(X_N) + P_I(X_I)$, for all $X_N\in V_N,X_I\in V_I$ then $P$ is an isometry and $C' = P\ C\ P^{-1}$. Therefore, $\widetilde{p}$ is one-to-one.

\end{proof}

\bibliographystyle{alpha}

\begin{bibdiv}

\begin{biblist}

\bib{BB97}{article}{
   author={Bajo, I.},
   author={Benayadi, S.},
   title={Lie algebras admitting a unique quadratic structure},
   journal={Communications in Algebra},
   volume={25},
   number={9},
   date={1997},
   pages={2795 -- 2805},

} 
 
\bib{BBB}{article}{
	 author={Bajo, I.},
   author={Benayadi, S.},
   author={ Bordemann, M.},
   title={Generalized double extension and descriptions of quadratic Lie superalgebras},
   journal={arXiv:0712.0228v1},

} 

\bib{BB99}{article}{
	 author={Benamor, H.},
   author={Benayadi, S.},
   title={Double extension of quadratic Lie superalgebras},
   journal={Comm. in Algebra},
   fjournal={Communications in Algebra},
   volume={27},
   number={1},
   date={1999},
   pages={67 -- 88},

} 

\bib{BP89}{article}{
	 author={Benamor, H.},
   author={Pinczon, G.},
   title={The graded Lie algebra structure of Lie superalgebra deformation theory},
   journal={Lett. Math. Phys.},
   fjournal={Letters in Mathematical Physics},
   volume={18},
   number={4},
   date={1989},
   pages={307 -- 313},

} 

\bib{Ben03}{article}{
   author={Benayadi, S.},
   title={Socle and some invariants of quadratic Lie superalgebras},
   journal={J. of Algebra},
	fjournal={Journal of Algebra},
   volume={261},
   number={2},
   date={2003},
   pages={245 -- 291},

} 

\bib{Bor97}{article}{
	 author={Bordemann, M.},
	 title={Nondegenerate invariant bilinear forms on nonassociative algebras},
   journal={Acta Math. Univ. Comenianae},
   fjournal={Acta Mathematica Universitatis Comenianae},
   volume={LXVI},
   number={2},
   date={1997},
   pages={151 -- 201},
}

\bib{Bou58}{book}{
   author={Bourbaki, N.},
   title={\'El\'ements de Math\'ematiques. Alg\`ebre, Alg\`ebre Multilin\'eaire},
   volume={Fasc. VII, Livre II},
   publisher={Hermann},
   place={Paris},
   date={1958},
   pages={},
}

\bib{Bou59}{book}{
   author={Bourbaki, N.},
   title={\'El\'ements de Math\'ematiques. Alg\`ebre, Formes sesquilin\'eaires et formes quadratiques},
   volume={Fasc. XXIV, Livre II},
   publisher={Hermann},
   place={Paris},
   date={1959},
   pages={},
}

\bib{Bou71}{book}{
   author={Bourbaki, N.},
   title={El\'ements de Math\'ematiques. Groupes et Alg\`ebres de Lie},
   volume={Chapitre I, Alg\`ebres de Lie},
   publisher={Hermann},
   place={Paris},
   date={1971},
   pages={},
}

\bib{CM93}{book}{
   author={Collingwood, D. H.},
   author={McGovern, W. M.},
   title={Nilpotent Orbits in Semisimple Lie Algebras},
   publisher={Van Nostrand Reihnhold Mathematics Series},
   place={New York},
   date={1993},
   pages={186},
}

\bib{DPU10}{article}{
   author={Duong, M.T.},
   author={Pinczon, G.},
   author={Ushirobira, R.},
   title={A new invariant of quadratic Lie algebras},
   journal={Alg. Rep. Theory, DOI: 10.1007/s10468-011-9284-4},
	fjournal={Journal of Algebras and Representation Theory},
   volume={},
	 pages={41 pages},
} 
\bib{FS87}{article}{
   author={Favre, G.},
   author={Santharoubane, L.J.},
   title={Symmetric, invariant, non-degenerate bilinear form on a Lie algebra},
   journal={J. Algebra},
   fjournal={Journal of Algebra},
   volume={105},
   date={1987},
   pages={451--464},

} 

\bib{Gie04}{book}{
   author={Gi\'e, P. A.},
   title={Nouvelles structures de Nambu et super-th\'eor\`eme d'Amitsur-Levizki},
   publisher={Th\`ese de l'Universit\'e de Bourgogne, tel-00008876},
   date={2004},
   pages={153},
	 volume={}
} 

\bib{Hum95}{book}{
   author={Humphreys, J.},
   title={Conjugacy classes in semisimple algebraic groups},
   publisher={American Mathematical Society},
   series={Mathematical Surveys and Monographs},
   volume={43},
   date={1995},
   pages={xviii + 196 pp},

}

\bib{Kac85}{book}{
   author={Kac, V.},
   title={Infinite-dimensional Lie algebras},
   publisher={Cambrigde University Press},
   place={New York},
   date={1985},
   pages={xvii + 280 pp},

}

\bib{Kos50}{article}{
  author={Koszul, J-L.}, 
  title={Homologie et cohomologie des alg\`ebres de Lie}, 
  journal={Bulletin de la S. M. F}, 
  fjournal={Bulletin de la Soci\'et\'e Math\'ematique de France}, 
  volume={78}, 
  date={1950},
  pages={65--127},
  }

\bib{MR85}{article}{ 
  author={Medina, A.}, 
  author={Revoy, P.},
  title={Alg\`ebres de Lie et produit scalaire invariant}, 
  journal={Ann. Sci. \'Ecole Norm. Sup.}, 
  fjournal={Annales Scientifiques de l'\'Ecole Normale Sup\'erieure}, 
  volume={4}, 
  date={1985},
  pages={553--561},

}

\bib{MPU09}{article}{
  author={Musson, I. A.}, 
  author={Pinczon, G.},
  author={Ushirobira, R.},
  title={Hochschild Cohomology and Deformations of Clifford-Weyl Algebras}, 
  journal={SIGMA},
  fjournal={Symmetry, Integrability and Geometry: Methods and Applications},
  volume={5}, 
  date={2009},
  pages={27 pp}
  }

\bib{NR66}{article}{ 
  author={Nijenhuis, A.}, 
  author={Richardson, R. W.},
  title={Cohomology and deformations in graded Lie algebras}, 
  journal={Bull. Amer. Math. Soc.}, 
  fjournal={Bulletin of the American Mathematical Society},
  volume={72}, 
  date={1966},
  pages={1--29},

} 

\bib{PU07}{article}{
   author={Pinczon, Georges},
   author={Ushirobira, Rosane},
   title={New Applications of Graded Lie Algebras to Lie Algebras, Generalized Lie Algebras, and Cohomology},
   journal={J. Lie Theory},
   fjournal={Journal of Lie Theory},
   volume={17},
   date={2007},
   number={3},
   pages={633 -- 668},

} 

\bib{Sam80}{book}{
   author={Samelson, H.},
   title={Notes on Lie algebras},
   series={Universitext},
   publisher={Springer-Verlag},
   place={},
   date={1980},
   pages={},

}

\bib{Sch79}{book}{
   author={Scheunert, M.},
   title={The Theory of Lie Superalgebras},
   series={Lecture Notes in Mathematics},
   volume={716},
   publisher={Springer-Verlag},
   place={Berlin},
   date={1979},
   pages={x + 271pp},

}

\end{biblist}
\end{bibdiv}

\end{document}